    \newcolumntype{L}{>{\raggedright\arraybackslash}X}
\newlength{\NOTskip}
\def\ba#1\ea{\begin{align}#1\end{align}} 
\newtheorem{definition}{Definition}[section]
\newtheorem{remark}{Remark}[section]
\newtheorem{corollary}{Corollary}[section]
\newcommand{\mpw}[1]{{\color{blue}#1}}
\newcommand{\vv}[1]{{\color{red} #1}}
\newcommand\mpwS[1]{{\let\helpcmd\sout\parhelp#1\par\relax\relax} }
\long\def\parhelp#1\par#2\relax{%
	\helpcmd{#1}\ifx\relax#2\else\par\parhelp#2\relax\fi%
}
\newcommand\mpwSC[1]{{\color{blue}{\mpwS{#1}}}}
\newcommand{\nocontentsline}[3]{}
\newcommand{\tocless}[2]{\bgroup\let\addcontentsline=\nocontentsline#1{#2}\egroup}
\newcommand{\ketbra}[2]{\ket{#1}\!\!\bra{#2}}
\newcommand{\proj}[1]{\ketbra{#1}{#1}}
\newcommand{\tr}{\textup{tr}}
\newcommand{\idd}[1]{1}%
\newcommand{\id}{{1}} 
\newcommand{\tb}[1]{$\overline{\mbox{#1}}$}
\newcommand{\mb}[1]{\overline{#1}}
\newcommand{\Ss}{\textup{S}}
\newcommand{\A}{\textup{A}}
\newcommand{\M}{\textup{M}}
\newcommand{\F}{\textup{F}}
\newcommand{\B}{\textup{B}}
\newcommand{\R}{\textup{R}}
\newcommand{\W}{\textup{W}}
\newcommand{\U}{\textup{U}}
\newcommand{\Lab}{\textup{L}}
\newcommand{\Lbar}{\textup{\tb {L}}}
\def\p@paragraph{\thesection\,}
\DeclareMathAlphabet{\matholdcal}{OMS}{cmsy}{m}{n}
\newcommand{\aaa}{\mathcal{a}}
\newcommand{\bbb}{\mathcal{b}}
\newcommand{\xxx}{\xi}
\newcommand{\kkk}{\mathcal{k}}
\renewcommand\onecolumngrid{
\do@columngrid{one}{\@ne}%
\def\set@footnotewidth{\onecolumngrid}
\def\footnoterule{\kern-6pt\hrule width 1.5in\kern6pt}%
}
\renewcommand\twocolumngrid{
        \def\footnoterule{
        \dimen@\skip\footins\divide\dimen@\thr@@
        \kern-\dimen@\hrule width.5in\kern\dimen@}
        \do@columngrid{mlt}{\tw@}
}%
\begin{document}
\setlength{\parskip}{0pt}
	\title{A general quantum circuit framework for Extended Wigner's Friend Scenarios: logically and causally consistent reasoning without absolute measurement events}
	\begin{abstract}
Extended Wigner's Friend Scenarios (EWFSs) go beyond the standard usage of quantum theory where agents are treated classically, and model agents as unitary evolving quantum systems. This has been the subject of several no-go results: Frauchiger and Renner (FR) suggested that quantum agents
reasoning using quantum theory will arrive at logical paradoxes, while other works, e.g. the Local-Friendliness theorem, highlight challenges for having an objective notion of measurement events and for causal reasoning in EWFSs. This raises the question: Is it possible to reliably make and test scientific predictions, and
consistently reason about the world when applying quantum theory universally, and without assuming that observed measurement outcomes are absolute? We give a positive answer by developing
a general quantum circuit framework for EWFSs. We
formalise the concept of Heisenberg cuts by mapping them to distinct channels in a quantum circuit,
and prove that FR-type paradoxes can be fully resolved by making explicit the conditioning on the
quantum channels that are used in the reasoning process. We also provide concrete rules by which
quantum agents can reason and make predictions in a logically and causally consistent manner.  Our framework describes all perspectives and predictions of an EWFS within a single, well-defined causal structure, although it allows
events to be fundamentally subjective. Moreover, we show that an objective notion of measurement
events nevertheless emerges in real-world experiments. Our work demonstrates the possibility of a
relational yet operational framework overcoming challenges to scientific reasoning in
EWFSs, without modifying the Born rule, quantum unitarity or the axioms of classical logic and
probability theory applied to measurement outcomes. This enables analysis and comparison of different EWFS arguments and
yields a formal platform to extend existing quantum information methods and studies consistently
to the domain of Wigner’s Friend Scenarios.
	\end{abstract}	
	
	\author{V. Vilasini}
	\email{vilasini@inria.fr}
	\affiliation{Institute for Theoretical Physics, ETH Zurich, 8093 Z\"{u}rich, Switzerland}
\affiliation{Université Grenoble Alpes, Inria, 38000 Grenoble, France}
    \author{Mischa P. Woods}
    \email{mischa.woods@gmail.com}
    \affiliation{Institute for Theoretical Physics, ETH Zurich, 8093 Z\"{u}rich, Switzerland}
	\affiliation{Université Grenoble Alpes, Inria, 38000 Grenoble, France}
	\affiliation{ENS Lyon, Inria, LIP, 69342 Lyon, France}
\maketitle


\tableofcontents

\section{Introduction}


Quantum theory is among the most successful physical theories, accurately describing microscopic phenomena. In recent years, efforts have been underway to observe quantum phenomena in larger systems, both for scalable quantum computing and for fundamental tests of physical laws. Hence, it is crucial to consider the implications if quantum theory were universally valid. It is natural to expect a complete and universally valid theory of the physical world to be able to consistently model observers or agents as physical systems of the theory, at least in principle.

Wigner was among the first to concretely consider this question back in the 1960s, through an intriguing thought experiment \cite{Wigner1967}. Wigner’s thought experiment highlights challenges in applying quantum theory to agents: an agent (the Friend) measures a quantum system and observes a classical outcome, which is at odds with the view of an outside agent (Wigner, the ``super-agent") who models the Friend’s lab as a closed quantum system evolving unitarily, and can perform quantum operations on the lab’s quantum superposition state, including ``undoing’’ the unitary evolution of the Friend’s measurement. This tension lies at the core of the quantum measurement problem, and Wigner’s scenario beautifully highlights how this can have empirical consequences for observers in quantum theory.

More recent works (e.g., \cite{Frauchiger2018, Brukner2018, Bong2020}) explore Extended Wigner’s Friend Scenarios (EWFSs) involving additional agents, suggesting even more radical implications for physics. Frauchiger and Renner (FR) suggested that in EWFSs, agents who model each other as quantum systems and, at the same time, reason about each other's knowledge, would arrive at logical contradictions \cite{Frauchiger2018}. Specifically, they claimed that the following assumptions cannot all hold simultaneously: the physical predictions follow the quantum Born rule\footnote{Strictly speaking, FR require a weaker version of the Born rule, restricted to 0 or 1 probabilities.} ($\textup{Q}$), agents (reasoning using the same theory) can inherit each other's conclusions ($\textup{C}$), and measurements yield single non-contradictory outcomes in each run ($\textup{S}$). In \cite{Nurgalieva2018}, additional assumptions involved in the FR argument were made explicit: agents' labs can evolve unitarily ($\textup{U}$) and the distributive axiom of logic holds for classical outcomes ($\textup{D}$). Thus, FR’s result suggests that $\textup{Q}$, $\textup{U}$, $\textup{C}$, and $\textup{D}$ together imply a violation of $\textup{S}$, i.e., a paradox where a measurement yields contradictory outcomes.

Other no-go results, starting from the work by Brukner \cite{Brukner2018} and including the more recent Local-Friendliness (LF) theorem \cite{Bong2020}, address more ontological aspects of quantum theory through EWFSs. These suggest that observed measurement events cannot be regarded as absolute and objective (under certain compelling metaphysical assumptions relating to causality and free choice). This challenges the notion that objective events, for instance ``the light is on in this room’’, may only hold true relative to something (such as an agent), complicating causal reasoning in EWFSs \cite{Cavalcanti2021,Ying2023}.

FR’s claim raises concerns about the consistent usability of quantum theory in EWF scenarios where agents are modelled as quantum systems. The non-absoluteness of events in EWFSs raises concerns about whether the results of scientific experiment can be considered objective facts about the world. More generally, it therefore becomes imperative to formally address the following question: \\

\noindent$\mathcal{Q}$:  How can we continue to reliably do science, that is, consistently reason about the world, make and test physical predictions, if unitary quantum theory were valid at the level of agents who have full quantum control over each others’ labs and where one does not assume an absolute notion of measurement events? \\


This issue is pertinent as a sufficiently large quantum computer could act as an agent in these EWFS arguments, necessitating a clear resolution.

These EWFS no-go theorems have been met with several responses, with different interpretations of quantum theory suggesting distinct resolutions, especially for FR’s paradox. Responses range from conceptual discussions on the implications of FR’s results for physics (e.g., \cite{Pusey2018, Nurgalieva2018, Nurgalieva2020}), suggestions for additional reasoning rules to avoid the paradox in FR’s scenarios (e.g., \cite{Renes2021, Losada2019, Alexios2022}), to arguments against the validity of FR’s theorem due to implicit assumptions (e.g., \cite{ScottAronson, Healey2018, Araujo, Sudbery2019}).

However, there is no concrete framework for identifying which implicit assumptions (if any) are necessary to recover the apparent FR paradox. In the context of previous reasoning rules, it is often suggested (e.g., \cite{Narasimhachar2020, Renes2021, Alexios2022}) that the validity of an agent Alice’s prediction or the ability to reason about Alice’s outcome should depend on whether a super-agent undoes Alice’s measurement in the future. While this can ensure logical consistency, it raises concerns for causality principles and the efficiency of reasoning. How many possible future operations must be tracked when reasoning about present measurement outcomes? If Bob is space-like separated from Alice, how does the possibility of a super-agent acting on Alice’s lab affect the validity of Bob’s conclusions about Alice?

Causality issues are also central to LF-type no-go arguments. Challenges for causality from LF indicate that current quantum causal modeling frameworks cannot account for non-absolute measurement events \cite{Cavalcanti2021, Ying2023}. These formalisms impose absoluteness of events by assuming the existence of a single global probability distribution over the outcomes of all agents in a given scenario. In \cite{Ying2023}, the authors show that causal models in any theory assuming absoluteness of events, relativistic causality, and free choice cannot explain LF inequalities' violations. This holds even when allowing cyclic causal structures, raising the challenge of whether a consistent causal modeling formalism exists in quantum theory gives up absoluteness of events, while preserving the other fundamental relativistic and operational principles.

FR and LF-type EWFS arguments are often studied separately, with FR seen as agent-centric and LF as more metaphysical. Despite extensive literature, these often focus on the specific 4-agent EWFS of FR and LF, and there is a lack of a comprehensive and unified framework that addresses both relational aspects from non-absoluteness and operational aspects of agents’ reasoning in general EWFSs.

We observe that at the core of all the no-go arguments lies the ambiguity in 
how a measurement is to be modelled: as an irreversible evolution leading to classical records or a reversible unitary evolution of a closed system (the agents' lab). In quantum theory, this can also be understood as a choice of Heisenberg cut, that distinguishes which parts of an experiment are modelled as classical vs quantum. So far, this has remained more of a philosophical concept that does not formally appear in formalisms for quantum theory. However, Wigner's thought experiment highlights the need to take this seriously and to make explicit how measurements are modelled.

{\bf Desiderata for a consistent formalism for EWFSs} Keeping these discussions in mind, we motivate some important desiderata for a formalism to address such questions. 

Firstly, the formalism must clearly formulate the predictions that quantum theory implies in a Wigner's Friend Scenario, as both FR and LF-type arguments stem from such predictions. It must do so while formally and explicitly specifying the assumptions about how measurements are modelled and what knowledge about measurement outcomes is known, when deriving said predictions.

Secondly, it must be applicable to general EWFSs and must allow for the possibility of modeling agents' labs as unitarily evolving systems on which another agent can have full quantum control. Since this allows a measurement to be ``undone'' by reversing its unitary evolution, this means that an absolute notion of measurement events and records is not assumed by the formalism.

Thirdly, the formalism must guarantee consistency of the predictions as well as logical statements agents can make using said predictions, within a clear causal semantics: the rules of logical reasoning must be compliant with causality principles such as the impossibility of signaling faster than light.

Fourthly, it is crucial for such a framework which does not assume absolute events to explain how objectivity of measurement outcomes emerges in existing real-world experiments, and reproduce the observable predictions of quantum theory in such experiments.

Finally, it is desirable to have an interpretation-independent formalism, such that the identification of relevant assumptions, and resolution of paradoxes will apply across interpretations of quantum theory, facilitating agreement on the matter.

Within the standard usage of quantum theory, where agents are not regarded as quantum systems, the quantum circuit framework satisfies several of these desiderata. It provides a clear causal semantics, an unambiguous manner to compute predictions using the Born rule while ensuring that these predictions are well-defined and form the basis of consistent reasoning. The circuit implies an information-theoretic causal structure that tells us about the flow of information between different systems, which is compatible with the direction of time. Moreover, choices of future quantum operations do not influence outcomes of earlier measurements. Furthermore, a subscriber of any interpretation of quantum theory can at the least, use such a circuit description as a tool to make and test empirical predictions and reason about the world, as this is independent of whether they believe the circuit to be a true representation of an ``ontological state of reality''.

Here, we develop a framework incorporating all the desiderata motivated above, by consistently generalizing the quantum circuit formalism to general EWFSs where agents' labs/memories are explicitly included as wires in the circuit. In particular, this yields a concrete and constructive solution to the question $\mathcal{Q}$, among several other results.

{\bf Overview of contributions} 
We provide an overview of the main contributions of this work below. 

\begin{itemize}[left=0pt]
    \item {\bf Quantum circuit framework for EWFSs:} In \Cref{sec:background}, we review Wigner’s thought experiment, then  in \Cref{sec:gen_framework}, we build a comprehensive circuit framework for all EWFSs in quantum theory, accommodating any number and configuration of agents and super-agents. This framework formalises Heisenberg cuts by mapping them to different channels, labeled by a parameter (the setting) that distinguishes whether one refers to the classical outcome of a measurement vs whether one regards it as a purely quantum, unitary evolution.  Thus we show that every EWFS in quantum theory can be represented in terms of a single \emph{augmented circuit} which allows to compute well-defined normalised probabilities, relative to a choice of settings.

    \item {\bf Completeness, consistency, and causality:} In \Cref{sec: consistency_pred}, we prove three key properties of the augmented circuit: completeness (all quantum predictions in EWFSs can be recovered within the single augmented circuit), consistency (no contradictory predictions can arise in EWFSs using the augmented circuit), and causality (outcomes depend only on past choices of Heisenberg cuts relative to the causal order of the protocol). These results hold without assuming absolute measurement events or the existence of a unique joint probability distribution for all agents’ outcomes.

    \item {\bf Resolution and root of EWF reasoning paradoxes:} In \Cref{sec: consistency_reasoning}, we apply our formalism to agents’ reasoning, demonstrating that the augmented circuit allows agents in any EWFS to consistently reason while simultaneously using the quantum Born rule, unitary evolution of closed quantum systems, classical logic and probability theory applied to observed outcomes.  In \Cref{sec: I_assump}, we prove that any apparent inconsistencies can only arise in a scenario when an additional assumption \hyperref[def: I_assump]{$\mathbf{I}$}, which provably fails in that scenario, is imposed. \hyperref[def: I_assump]{$\mathbf{I}$} captures that physical predictions are independent of Heisenberg cuts (or how a measurement is modelled), and the result concretely identifies  the failure of this assumption as a core reason for apparent EWFS paradoxes. 

    \item {\bf Detailed analysis of the FR scenario and claims:} Our above results establishing the general consistency of quantum theory and logical reasoning, are contrary to FR’s claim that ``Quantum theory cannot consistently justify the use of itself’’ \cite{Frauchiger2018}. We apply our framework to analyse the FR arguments in full detail, considering the entanglement version of their scenario in \Cref{sec: resolution_entanglement} (and the original prepare and measure version in \Cref{ssec: resolution_prep}). We show that our formalism yields a simple resolution of the FR paradox even though it reproduces, in an explicit form, every statement made in FR’s arguments, and without placing any restrictions on agents’ reasoning.
In \Cref{sec:The choices of Heisenberg cuts do matter in FR}, we provide a more refined understanding of the message of FR’s result, by discussing the role of the \hyperref[def: I_assump]{$\mathbf{I}$} assumption in FR’s scenario, and more generally in relation to the meta-physical concept of absoluteness of events considered in other EWFS no-go theorems (e.g., \cite{Brukner2018, Bong2020}).


    \item {\bf Emergence of objective measurement events and role of causality:} In \Cref{sec: standard_QT}, we address how subjective events in EWFS reconcile with objective measurement results in real-world experiments. We distinguish between standard and Wigner's Friend type experiments by identifying concrete criteria for ``super-agency’’ based on the causal structure. We prove that in standard experiments (where agents do not measure each other's memories/labs in a non-trivial manner), predictions become Heisenberg-cut independent, and objectivity of measurement events emerges.

\item {\bf Discussions on multi-agent reasoning and physical interpretations}  In \Cref{sec: reasoning_rules}, we discuss classical multi-agent scenarios that can lead to inconsistencies when agents overlook common knowledge and implicit assumptions. This helps us contrast the genuinely quantum aspects of EWFS arguments from classical ones. Based on these insights, we outline a general paradigm for scientific reasoning to ensure consistency in multi-agent contexts, showing how this is incorporated into our formalism. We then comment on the generalisation of our approach to ensure logically and causally consistent, and efficient reasoning in scenarios where agents only have partial knowledge of the protocol. In \Cref{sec: setting_interpretation}, we discuss the physical interpretations of the concept of settings introduced in our work. In particular, we outline how setting choices in reasoning can be updated over time, similar to Bayesian updates, in light of new observations. Finally, in \Cref{sec: interpret_indep}, we discuss the interpretation-independence of our results.

\item {\bf A more unified picture:} Our framework provides a more unified platform for several aspects of EWFSs while shedding light on their relations. 
Specifically, in~\Cref{appendix: Hardy}, we discuss the links between FR's argument and Hardy's logical proof of Bell non-locality, highlighting the relations between Heisenberg cuts (given by our settings) and measurement contexts. In \Cref{appendix:relation_prev_works} we provide a more unified view of the relations between our framework and different classes of previous responses to FR’s results, and we also discuss how different interpretations of quantum theory could apply our formalism consistently.  Although we have focused more on the application of our general framework to resolving FR-type reasoning paradoxes, the framework can also describe the LF scenario. In forthcoming work \cite{LF_Vilasini_Woods}, we apply the same formalism to analyse the LF scenario and derive further insights on the (non-)absoluteness of observed events.

\end{itemize}

\section{Background information}
\label{sec:background}
In~\Cref{sec:recap probs}, we start with a brief overview on the role of conditional probabilities in ensuring consistency in theories described in terms of circuits. We then review Wigner's original argument in~\Cref{section: WF_original_review}. In both cases, we highlight the salient features and insights which are core to understanding our general framework and solution to EWFS paradoxes. 

\subsection{Conditional probabilities in operational theories}\label{sec:recap probs}

Operational procedures in any theory involve state preparations, transformations, and measurements on physical systems. These procedures are often represented through circuit diagrams and can be formalised within several existing frameworks such as generalised probabilistic theories and process theories (e.g., \cite{barrett2006, Chiribella_2010, coecke2016}), the details of which are not pertinent here. The common feature across all these frameworks is that they provide rules for computing the probabilities of measurement outcomes.

These probabilities are conditioned on the relevant preparations, transformations, and measurements in the circuit. For example, if a system $S$ is prepared in state $\rho$, evolved using transformation $\matholdcal{U}$, and measured with $\matholdcal{M}$ to obtain outcome $a$, the probability of $a$ taking value $\aaa$ is $P(a=\aaa|\rho,\matholdcal{U},\matholdcal{M})$. This probability generally depends on the specific choices of $\rho$, $\matholdcal{U}$, and $\matholdcal{M}$; for instance, $P(a=\aaa|\rho,\matholdcal{U},\matholdcal{M}) \neq P(a=\aaa|\rho,\matholdcal{U}',\matholdcal{M})$ if $\matholdcal{U}$ and $\matholdcal{U}'$ are different transformations.

Consider a classical scenario where $S$ is a bit, $\rho=0$, and $\matholdcal{M}$ measures the bit's value, yielding $a \in \{0,1\}$. Let $\matholdcal{U}$ be the identity transformation and $\matholdcal{U}'$ be a bit flip. Then (1) $P(a=1|\rho,\matholdcal{U},\matholdcal{M})=0$ while (2) $P(a=1|\rho,\matholdcal{U}',\matholdcal{M})=1$. Ignoring the conditioning on $\matholdcal{U}$ and $\matholdcal{U}'$ would lead to a paradox where $a=1$ both never occurs (according to (1)) and certainly occurs (according to (2)). This demonstrates the necessity of considering the conditioning information to avoid contradictions, even in simple classical scenarios.

\begin{remark}

It is not always necessary to condition on all the information in a circuit when computing probabilities. If some preparations or transformations are fixed and unchanging during the analysis, conditioning on them can be safely omitted without causing inconsistencies. In a multi-agent protocol, this corresponds to fixed elements that are common knowledge.

Here and in the rest of \Cref{sec:background}, we use the names of the preparations, transformations, and measurements in the conditional probabilities. Later, when developing our framework, we will simplify this by labeling certain transformations with binary labels, $x \in \{0,1\}$, as only two choices will be relevant in the scenarios of interest.

\end{remark}

\subsection{Wigner's friend scenarios: a tale of two evolutions}
\label{section: WF_original_review}


The postulates of quantum theory propose two types of evolutions: unitary evolution of closed quantum systems and the projection postulate for the evolution associated with observing a measurement outcome. However, the theory does not specify when to apply each type of evolution. While this ambiguity does not affect our ability to apply quantum theory successfully in real world experiments (see \Cref{sec: standard_QT} for further discussion on this point), Wigner's 1967 thought experiment highlights beautifully why we should be concerned about this ambiguity. 

The thought experiment assumes quantum theory is universally applicable to measurement devices, agents performing the measurements, and their laboratories. Notably, an agent in this context need not be a conscious human (although Wigner speculated about this aspect in his work); a sufficiently advanced quantum computer capable of measuring another system, storing the outcome in quantum memory, and performing basic computations can serve as an agent in Wigner's Friend no-go theorems and arguments.

Suppose Alice (Wigner's friend) measures a quantum system $\Ss$ prepared in the state $\ket{\psi}=\alpha\ket{0}_\Ss+\beta\ket{1}_\Ss$ in the computational basis, obtaining a classical outcome $a$ with values $\aaa \in {0,1}$. She stores this outcome in her memory $\A$, initialised to the state $\ket{0}_\A$. Treating $\A$ as representing the rest of Alice's lab, $\Ss\A$ represents Alice's entire lab. If Alice's lab is a closed quantum system, it evolves unitarily according to the initial premise of the universal validity of quantum theory. Thus, according to the unitarity postulate, we would describe the evolution of Alice's lab as

\begin{equation}
\label{eq:M_unitary}
    \matholdcal{M}^{\A}_{unitary}: \ket{\psi}_{\Ss}\otimes \ket{0}_\A \mapsto \alpha \ket{00}_{\Ss\A}+\beta \ket{11}_{\Ss\A},
\end{equation}

where $\ket{\aaa\aaa}_{\Ss\A}$ represents the state of Alice's lab observing the outcome $a=\aaa$. When the memory is initialised to $\ket{0}_\A$, the unitary evolution $\matholdcal{M}^\A_{unitary}$ for the computational basis measurement on $\Ss$ (storing the outcome value in memory) is a CNOT gate with $\Ss$ as control and $\A$ as target.

Alternatively, applying the projection postulate, when Alice obtains outcome $a=0$, her lab's evolution is $\ket{\psi}_\Ss\ket{0}_\A \mapsto \ket{0}_\Ss\ket{0}_\A$, and for $a=1$, it is $\ket{\psi}_\Ss\ket{0}_\A \mapsto \ket{1}_\Ss\ket{1}_\A$. These are trace-decreasing evolutions, but summing over all possible outcomes gives the trace-preserving evolution:

\begin{align}
\label{eq:M_mixture}
\begin{split}
\matholdcal{M}^{\A}_{projection}: &\proj{\psi}_{\Ss}\otimes \proj{0}_\A \mapsto \\ &|\alpha|^2 \ketbra{00}{00}_{\Ss\A}+|\beta|^2 \ketbra{11}{11}_{\Ss\A}
\end{split}	
\end{align}

This describes the evolution of Alice's lab when using the projection postulate without considering specific outcomes, such as if Alice forgets the outcome after measurement.

This highlights that depending on whether a measurement is regarded as producing classical records (as in the projection postulate) or as a purely unitary evolution of quantum systems, one would describe it through distinct evolutions of the same initial state. 
Wigner's thought experiment shows that this ambiguity can have observable consequences if quantum theory is universally valid.

Consider an outside agent, Wigner, who has full quantum control over $\Ss\A$ (Alice's lab). Such a Wigner, a ``superagent'' can perform arbitrary quantum operations on $\Ss\A$. Since these evolutions result in distinct states on $\Ss\A$, Wigner can operationally distinguish them by measuring $\Ss\A$ in a suitable basis. That is, the probability of Wigner's measurement outcome $w$, will generally depend on how the friend, Alice's measurement is modelled: using the unitarity (\Cref{eq:M_unitary}) or the projection postulate (\Cref{eq:M_mixture}). 

Deutsch's version of the thought experiment \cite{Deutsch1985} adds a twist. Suppose Alice leaves a note saying ``I observed a definite outcome'' without specifying the value. This note is then unentangled from Alice's memory which stores the outcome value, and remains unaffected after Wigner measures Alice's lab. If Wigner's measurement confirms the superposition state of Alice's system and memory, it is at conflict with Alice's note that confirms a definite outcome was observed.

Frauchiger-Renner's work \cite{Frauchiger2018} elevated this ambiguity into an apparent logical paradox, by extending Wigner's original set-up to include two friends and two Wigners (or superagents). They propose a no-go theorem which claims that when agents model each others' labs as quantum mechanical systems and reason about each others' knowledge of measurement outcomes using classical logic, they arrive at logical contradictions. They demonstrate this through a particular thought experiment with four agents, where agents reasoning in this manner arrive at an apparent paradox. A review FR's claimed theorem, the associated assumptions and scenario can be found in \Cref{appendix:FR_review}.

On the other hand, the Local-Friendliness theorem \cite{Bong2020} extends Wigner's original thought-experiment in a similar manner with four agents, but to address a different aspect than agents' reasoning, proving  that a set of seemingly reasonable metaphysical assumptions about a physical theory cannot mutually hold. One of these assumptions is about the absoluteness of observed events, originally discussed in \cite{Brukner2018}.

\section{A general circuit framework for EWFS}
\label{sec:gen_framework}

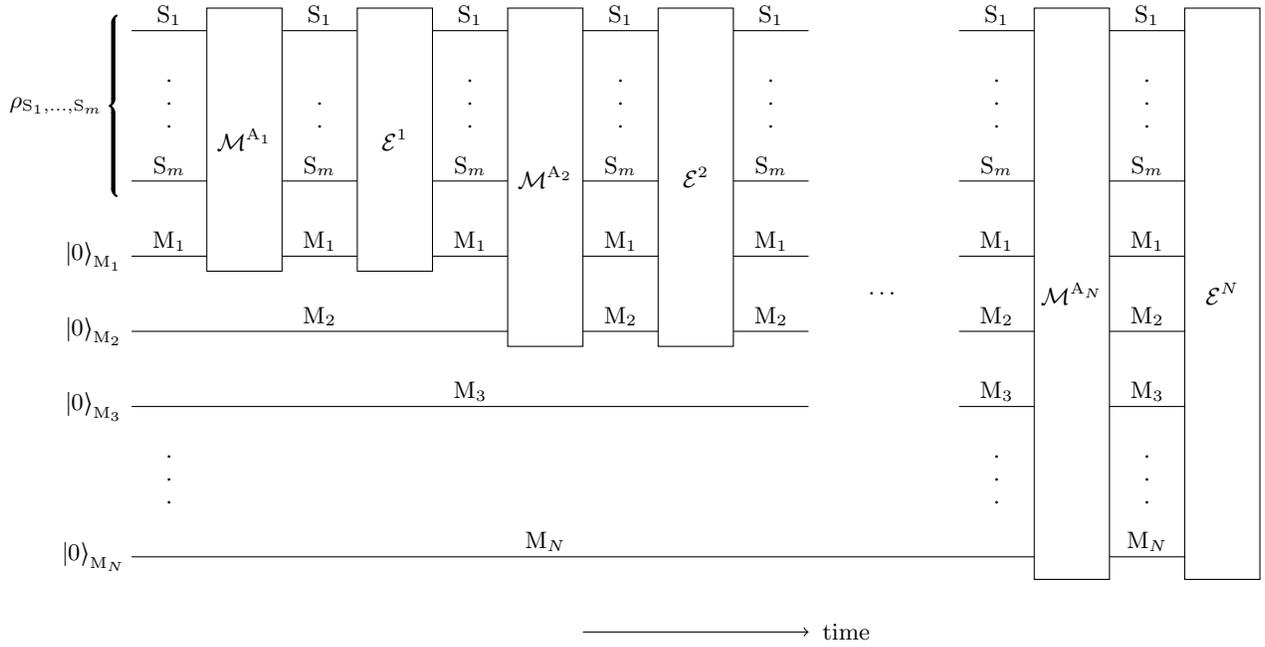
\begin{figure*}[t!]
    \centering
    \begin{tikzpicture}
    \draw[->] (6,-8)--(9,-8); \node at (9.5,-8) {time};
    
\draw [ultra thick,decorate,
    decoration = {calligraphic brace}] (-0.2,-2.2)--(-0.2,0.2); \node at (-1,-1) {$\rho_{\Ss_1,\ldots,\Ss_m}$}; \node at (-0.5,-3) {$\ket{0}_{\M_1}$};  \node at (-0.5,-4) {$\ket{0}_{\M_2}$};  \node at (-0.5,-5) {$\ket{0}_{\M_3}$};  \node at (-0.5,-7) {$\ket{0}_{\M_N}$};

    \draw (0,0)--(1,0); \node at (0.5,0.2) {$\Ss_1$}; \draw (0,-2)--(1,-2); \node at (0.5,-1.8) {$\Ss_m$}; \node at (0.5,-1) {$\cdot$}; \node at (0.5,-1.3) {$\cdot$}; \node at (0.5,-0.7) {$\cdot$}; \draw (1,0.3) rectangle node{$\matholdcal{M}^{\A_1}$} (2,-3.2);
     \draw (0,-3)--(1,-3); \node at (0.5,-2.8) {$\M_1$};
     
     \draw (2,0)--(3,0); \node at (2.5,0.2) {$\Ss_1$}; \draw (2,-2)--(3,-2); \node at (2.5,-1.8) {$\Ss_m$}; \node at (2.5,-1) {$\cdot$}; \node at (2.5,-1.3) {$\cdot$};
      \draw (2,-3)--(3,-3); \node at (2.5,-2.8) {$\M_1$}; \draw (3,0.3) rectangle node{$\matholdcal{E}^1$} (4,-3.2);
      
       \draw (4,0)--(5,0); \node at (4.5,0.2) {$\Ss_1$}; \draw (4,-2)--(5,-2); \node at (4.5,-1.8) {$\Ss_m$}; \node at (4.5,-1) {$\cdot$}; \node at (4.5,-1.3) {$\cdot$};
      \node at (4.5,-0.7) {$\cdot$}; \draw (4,-3)--(5,-3); \node at (4.5,-2.8) {$\M_1$}; 
      \draw (0,-4)--(5,-4); \node at (2.5,-3.8) {$\M_2$}; \draw (5,0.3) rectangle node{$\matholdcal{M}^{\A_2}$} (6,-4.2);
      
       \draw (6,0)--(7,0); \node at (6.5,0.2) {$\Ss_1$}; \draw (6,-2)--(7,-2); \node at (6.5,-1.8) {$\Ss_m$}; \node at (6.5,-1) {$\cdot$}; \node at (6.5,-1.3) {$\cdot$}; \node at (6.5,-0.7) {$\cdot$};
      \draw (6,-3)--(7,-3); \node at (6.5,-2.8) {$\M_1$};  \draw (6,-4)--(7,-4); \node at (6.5,-3.8) {$\M_2$}; \draw (7,0.3) rectangle node{$\matholdcal{E}^2$} (8,-4.2);
      
       \draw (8,0)--(9,0); \node at (8.5,0.2) {$\Ss_1$}; \draw (8,-2)--(9,-2); \node at (8.5,-1.8) {$\Ss_m$}; \node at (8.5,-1) {$\cdot$}; \node at (8.5,-1.3) {$\cdot$}; \node at (8.5,-0.7) {$\cdot$};
      \draw (8,-3)--(9,-3); \node at (8.5,-2.8) {$\M_1$};  \draw (8,-4)--(9,-4); \node at (8.5,-3.8) {$\M_2$}; \draw (0,-5)--(9,-5);\node at (4.5,-4.8) {$\M_3$};

       \draw (11,0)--(12,0); \node at (11.5,0.2) {$\Ss_1$}; \draw (11,-2)--(12,-2); \node at (11.5,-1.8) {$\Ss_m$}; \node at (11.5,-1) {$\cdot$}; \node at (11.5,-1.3) {$\cdot$}; \node at (11.5,-0.7) {$\cdot$};
      \draw (11,-3)--(12,-3); \node at (11.5,-2.8) {$\M_1$};  \draw (11,-4)--(12,-4); \node at (11.5,-3.8) {$\M_2$}; \draw (11,-5)--(12,-5); \node at (11.5,-4.8) {$\M_3$}; 
      
     \node at (10,-3.5) {$\dots$};
      
      \node at (0.5,-6) {$\cdot$}; \node at (0.5,-6.3) {$\cdot$}; \node at (0.5,-5.7) {$\cdot$};
      \draw (0,-7)--(12,-7); \node at (5.5,-6.8) {$\M_N$};  \node at (11.5,-6) {$\cdot$}; \node at (11.5,-6.3) {$\cdot$}; \node at (11.5,-5.7) {$\cdot$};
      \draw (12,0.3) rectangle node{$\matholdcal{M}^{\A_N}$}(13,-7.3);
      
       \draw (13,0)--(14,0); \node at (13.5,0.2) {$\Ss_1$}; \draw (13,-2)--(14,-2); \node at (13.5,-1.8) {$\Ss_m$}; \node at (13.5,-1) {$\cdot$}; \node at (13.5,-1.3) {$\cdot$}; \node at (13.5,-0.7) {$\cdot$};
      \draw (13,-3)--(14,-3); \node at (13.5,-2.8) {$\M_1$};  \draw (13,-4)--(14,-4); \node at (13.5,-3.8) {$\M_2$}; \draw (13,-5)--(14,-5); \node at (13.5,-4.8) {$\M_3$}; 
      \node at (13.5,-6) {$\cdot$}; \node at (13.5,-6.3) {$\cdot$}; \node at (13.5,-5.7) {$\cdot$}; \draw (13,-7)--(14,-7); \node at (13.5,-6.8) {$\M_N$};  \draw (14,0.3) rectangle node{$\matholdcal{E}^N$}(15,-7.3);
    \end{tikzpicture}
    \caption{General form of an EWFS involving $N$ agents and $m$ systems described in the main text. All agents agree on the initial state $\rho_{\Ss_1,\ldots,\Ss_m}$ of the systems and initialise their memories to $\ket{0}$. Each agent $\A_i$ performs a measurement on some subset of the systems and on a subset of the memories of all agents who acted before, and they store the outcome of the measurement in their own memory $\M_i$. After the measurement, each agent may perform a fixed transformation $\matholdcal{E}^i$ (that is previously agreed upon by all agents) before the next agent measures. However, the agents need not necessarily agree on how each measurement is modelled, depending on their perspective some agents may describe a measurement as a purely unitary evolution as in \Cref{eq:M_unitary_gen} while others may describe the same measurement as a decoherent process by assigning projectors as in \Cref{eq:M_mixture_gen_proj}. Due to this ambiguity how a measurement is modelled, this diagram is not a fully specified quantum circuit.}
        \label{fig: genform_circuit}
\end{figure*}


We now develop our general circuit framework for EWFS that meets all the desiderata motivated in the introduction. We do so by carefully distinguishing between, and also connecting the predictions of quantum theory, agents' knowledge and statements, while defining what it means for a theoretical model to make consistent predictions in an EWFS. 



\subsection{Extended Wigner's friend scenarios and quantum predictions}




The term Extended Wigner's Friend Scenario (EWFS) is commonly used in the literature to describe specific scenarios such as the Frauchiger-Renner and Local-Friendliness scenarios. However, a general definition has been lacking. We propose a general definition of EWFSs within quantum theory, encompassing all finite multi-agent quantum protocols where agents' memories (in which they store the measurement outcome) or equivalently agents' labs are modelled as quantum systems, and where one agent can have full quantum control over the labs of other agents in the scenario. The formal definition is provided below, and its generality is justified in \Cref{appendix: generality_EWFS}.

\begin{definition}[Extended Wigner's Friend Scenario (EWFS)]
\label{def:LWFS}
    An EWFS is a quantum protocol that consists of
    \begin{enumerate}
        \item A finite set $\mathtt{S}:=\{\Ss_1,...,\Ss_m\}$ of systems,
        \item A finite set $\mathtt{A}:=\{\A_1,...,\A_N\}$ of agents,
        \item A set $\mathtt{M}:=\{\M_1,...,\M_N\}$ of memory systems, one for each agent $\A_i$ where they store the outcome of their measurement.
        \item For each agent $\A_i$, a subset $\mathtt{S}_i\subseteq \mathtt{S}\cup \mathtt{M} \backslash \{\M_i\}$ of systems (which can include the memories of other agents) that they measure at time $t_i$ according to a measurement $\matholdcal{M}^{\A_i}$, obtaining a measurement outcome $a_i$ that they store in their memory system $\M_i$. Moreover, $t_i<t_j$ for $i<j$.
        \item For each agent $\A_i$, a finite set $\mathtt{O}_i:=\{0,1,...,d_{\mathtt{S}_i}-1\}$ in which the value $\aaa_i$ of their outcome $a_i$ belongs.
        \item For each agent $\A_i$, a fixed operation $\matholdcal{E}^i$ acting on their measured system and memory $\mathtt{S}_i\cup \{\M_i\}$ that captures the possibility of performing a further operation on the system $\mathtt{S}_i$ after the measurement, depending on the measurement outcome. 
        \item A joint initial state $\rho_{\Ss_1,...,\Ss_m}$ of all the systems $\mathtt{S}$ at time $t_0<t_i$ for all $i\in\{1,...,N\}$
    \end{enumerate}
   The protocol takes the form of \Cref{fig: genform_circuit}, and we can consider projective measurements $\matholdcal{M}^{\A_i}=\{\pi_{\aaa_i}^{\mathtt{S}_i}=\proj{\aaa_i}_{\mathtt{S}_i}\}_{\aaa_i\in\mathtt{O}_i}$ without loss of generality.
\end{definition}

The general form of an EWFS is illustrated in \Cref{fig: genform_circuit}. It is useful to note that we can assume without loss of generality that each agent performs only one measurement. Any scenario where an agent performs multiple measurements can be transformed into this form by modeling it as multiple agents, each performing a single measurement. Similarly, scenarios where an agent can choose between multiple measurements can be modelled as a single measurement by encoding the measurement choice in an initial state, as we show explicitly in an upcoming work \cite{LF_Vilasini_Woods} for the LF scenario, which includes different measurement choices.

In an EWFS, the unitary description $\matholdcal{M}^{\A_i}_{unitary}$ of a measurement $\matholdcal{M}^{\A_i}$ corresponds to a CNOT from $\mathtt{S}_i$ to the memory $\M_i$ (chosen to be of appropriate dimensions) in the basis $\{\ket{\aaa_i}_{\mathtt{S}_i}\}_{\aaa_i \in \mathtt{O}_i}$ of the measurement. For any state $\ket{\psi}_{\mathtt{S}_i} = \sum_{\aaa_i} c_{\aaa_i} \ket{\aaa_i}_{\mathtt{S}_i}$ expressed in this basis, we have\footnote{In the following equation, we have used a pure initial state $\ket{\psi}_{\mathtt{S}_i}$ to make the equations more concise, but it is easy to verify that analogous equations and the same arguments hold for mixed initial states $\rho_{\mathtt{S}_i}$.}
\begin{equation}
\label{eq:M_unitary_gen}
    \text{$\matholdcal{M}$}^{\A_i}_{unitary}: \ket{\psi}_{\text{$\mathtt{S}_i$}}\otimes \ket{0}_{\M_i} \mapsto \sum_{\aaa_i} c_{\aaa_i}\ket{\aaa_i\aaa_i}_{\mathtt{S}_i \M_i}.
\end{equation}


As highlighted by Wigner's thought experiment, in Wigner's Friend Scenarios (WFS), the ambiguity in modeling a measurement—whether through unitarity or the projection postulate—leads to observably different predictions. Quantum theory itself does not offer a clear set of rules to resolve this ambiguity or determine the correct predictions in an EWFS. Consequently, various sets of rules can be proposed for making predictions or statements about observed outcomes in a given EWFS. Different interpretations of quantum theory yield different predictions for observable correlations in an EWFS, even though they agree on predictions for currently realisable quantum experiments.

Given this ambiguity and the potential for multiple approaches to making predictions and reasoning in an EWFS, we provide general definitions of predictions and statements. These definitions will offer a unified framework for discussing the wide range of previous results and responses to EWFS arguments, and their relation to our main results. In the following, whenever we have a subset $\matholdcal{M}^{\A_{j_1}}$,...,$\matholdcal{M}^{\A_{j_p}}$ of measurements, it will be useful to denote the corresponding set of outcomes and set of values using vectors
\begin{align}
    \begin{split}
       \vec{a}_j&:=a_{j_1},...,a_{j_p}\\
        \vec{\aaa}_j&:=\aaa_{j_1},...,\aaa_{j_p}.
    \end{split}
\end{align}

Then a value assignment to a set of outcomes, such as $a_{j_1}=\aaa_{j_1}$,...,$a_{j_p}=\aaa_{j_p}$ becomes $\vec{a}_j=\vec{\aaa}_j$.

\begin{definition}[Prediction and scenario parameters]
\label{definition: prediction}

Consider an EWFS along with a set $\matholdcal{R}$ of rules for computing predictions in the scenario. A prediction in the EWFS is conditional probability $P(\vec{a}_j=\vec{\aaa}_j|\vec{a}_l=\vec{\aaa}_l,k=\kkk)$, where $\vec{a}_j$ and $\vec{a}_l$ represent the set of outcomes associated with any two disjoint subsets $\matholdcal{M}^{\A_{j_1}}$,...,$\matholdcal{M}^{\A_{j_p}}$ and $\matholdcal{M}^{\A_{l_1}}$,...,$\matholdcal{M}^{\A_{l_q}}$ (latter possibly empty) of measurements in the EWFS, and $k$ is a (possibly empty) set of random variables whose values $\kkk$ encode additional information about the scenario (the exact description of which is to be specified by the rules $\matholdcal{R}$). 
Whenever $P(\vec{a}_j=\vec{\aaa}_j|\vec{a}_l=\vec{\aaa}_l,k=\kkk)$ takes values in $\{0,1\}$, we refer to it as a \emph{logical prediction}.
\end{definition}


    We have already seen an example of the $k$ parameters in~\Cref{sec:recap probs}, where they represented the states, transformations and measurements we were conditioning on. We will see more examples later.

\begin{remark}
In probability theory, a conditional probability is by definition ``well-defined''. This means it is uniquely determined by the event space and is normalised, ensuring no contradictions arise when applying them. In the above definition, however, we do not require conditional probabilities to be uniquely defined based on the rules for calculating them (e.g., if the rules are ambiguous) or to sum to one (e.g., if the rules are inconsistent). We merely assume they are numbers in $[0,1]$, which can lead to ``paradoxes" under rules that do not yield correctly normalised probabilities. We will see examples where the rules produce both well-defined and not well-defined conditional probabilities.
\end{remark}

\begin{definition}[Statements associated with predictions]
\label{definition:statement_prediction}

Every prediction $P(\vec{a}_j=\vec{\aaa}_j|\vec{a}_l=\vec{\aaa}_l,k=\kkk)$ can be associated with a corresponding statement.\\

``If the outcomes $\vec{a}_l$ take values $\vec{\aaa}_l$ and the additional parameters of the scenario take the value $k=\kkk$, then the outcomes $\vec{a}_l$ take values $\vec{\aaa}_l$ with a probability $P(\vec{a}_j=\vec{\aaa}_j|\vec{a}_l=\vec{\aaa}_l,k=\kkk)$.''\\

 The set of all statements associated with predictions made using a set of reasoning rules in a given EWFS will be denoted as $\Sigma$ (when the scenario and rules are evident from context).

\end{definition}

\begin{definition}[Logical statements]
\label{def: logic_statement_gen}
Statements associated with logical predictions are called \emph{logical statements}.
When we have a logical prediction $P(\vec{a}_j=\vec{\aaa}_j|\vec{a}_l=\vec{\aaa}_l,k=\kkk)=1$ (or $=0$), the corresponding statement has the same conditional part as above, and the latter part of the statement becomes ``...,then it is certain that the outcomes $\vec{a}_l$ (do not) take values $\vec{\aaa}_l$.''
For logical predictions, we can also express the statements using logical operators $\land$ (and), $\Rightarrow$ (implies) and $\neg$ (negation). Specifically, the statements associated with $P(\vec{a}_j=\vec{\aaa}_j|\vec{a}_l=\vec{\aaa}_l,k=\kkk)=0$ and $P(\vec{a}_j=\vec{\aaa}_j|\vec{a}_l=\vec{\aaa}_l,k=\kkk)=1$ would respectively be
\begin{align}
\label{eq: logic_statement}
   \begin{split}
       \vec{a}_l=\vec{\aaa}_l \land k&=\kkk \Rightarrow \neg (\vec{a}_j=\vec{\aaa}_j),\\
       \vec{a}_l=\vec{\aaa}_l \land k&=\kkk \Rightarrow \vec{a}_j=\vec{\aaa}_j.
   \end{split}
\end{align}   
The set $\Sigma_L$ of all such logical statements of an EWFS is a subset of $\Sigma$.
\end{definition}

In examples, we may have $\vec{a}_l$ and $k$ being empty sets, for instance we can have a prediction $P(a_i=\aaa_i)=1$. Then the associated logical statement is simply $a_i=\aaa_i$ or in words ``it is certain that the outcome $a_i$ takes the value $\aaa_i$.''

We then have the following definition of consistency for any set of predictive statements.

\begin{definition}[Consistency for $\Sigma$]
\label{def: consistency_pred}
A set $\Sigma$ of statements obtained in an EWFS from a set of reasoning rules is said to be consistent iff $\Sigma$ contains no pairs of statements $S$ and $S'$ associated with predictions $P(\vec{a}_j=\vec{\aaa}_j|\vec{a}_l=\vec{\aaa}_l,k=\kkk)$ and $P'(\vec{a}_j=\vec{\aaa}_j|\vec{a}_l=\vec{\aaa}_l,k=\kkk)$ respectively where $P\neq P'$. 
    
\end{definition}

\begin{restatable}{lemma}{ConsistencyBasic}
 If $\Sigma$  is a set of consistent predictive statements, then 
 \begin{equation}
     S\in \Sigma \quad \Rightarrow \quad \neg S\cap \Sigma =\emptyset,
 \end{equation}
 where $\neg S$ denotes the negation of the statement $S$.
\end{restatable}

At this point, it is important to note that there are two distinct ways in which one can be certain of an outcome value, as illustrated by the following simple example.
\begin{itemize}
    \item {\bf Scenario 1:} Alice has a fair coin yielding outcome $c$ taking values in $\{heads,tails\}$. She flips the coin and observes $c=heads$ in a round and is then certain that $c=heads$. 
    \item {\bf Scenario 2:} Alice has a biased coin that reads heads on both sides, $P(c=heads)=1$ and she is certain that $c=heads$ in every round (without observing the outcome). 

\end{itemize}

Clearly, these scenarios are different. To ensure reliability and consistency when communicating with other agents, it is necessary to distinguish these cases. Our previous definition of statements associated with predictions only covers Scenario 2. To cover Scenario 1, we need to include a set of statements associated with observations that agents make in a given experimental run, denoted as $\Sigma_{obs}$. We define this case in \Cref{appendix: observational_statements} and also discuss a more refined definition of consistency that accounts for both predictive and observational statements there. For all the main results and discussions, it will be sufficient to consider the predictive statements $\Sigma$ and the generalisation of the results to include $\Sigma_{obs}$ is presented in the appendix. This is because a theory only permits observation of certain outcomes when its predictions assign a non-zero probability to that outcome, thus the main features of the theory can be understood by studying its predictions and associated statements as defined here.

\subsection{Formulating an EWFS within a single quantum circuit}
\label{sec: mapping_augcircuit}

As stressed before, in EWFS, there is ambiguity in applying the postulates of quantum theory to compute probabilities. The postulates can be applied in different ways (using chosen ``rules'') to obtain different probabilities for the same outcomes, while distinct rules could still lead to the same probabilities in certain cases. In previous literature on EWFS, there is a common pattern in the probabilities considered used in EWFS arguments, such as in the FR and LF results. However, the rules for arriving at these probabilities are often not explicitly specified.

We consider the probabilities conventionally used in previous literature when they refer to ``predictions of quantum theory"  in the context of EWFS, where it is assumed that unitary evolutions can be applied to agents' labs.\footnote{Collapse theories, for instance, would violate this assumption and prescribe different probabilities and rules for computing them.} These probabilities typically do not condition on additional variables $k$ or channels in the circuit, their expressions only refer to the outcomes, e.g., $P(a=0,b=1)$ denotes the probability of the measurement outcomes $a=0$ and $b=1$ (which can be associated with agents Alice and Bob) in a given scenario. Below we formalise one way to arrive at these type of probability expressions in an EWFS, which recovers the probabilities used in the FR and LF scenarios.


\begin{definition}[Conventional predictions in an EWFS]
\label{def: conv_prediction}
A conventional prediction in an EWFS is a conditional probability $P_{conv}(\vec{a}_j=\vec{\aaa}_j|\vec{a}_l=\vec{\aaa}_l)$ evaluated by applying the following rules.
\begin{enumerate}
    \item In the given EWFS (see also \Cref{fig: genform_circuit}), all the measurements $\matholdcal{M}^{\A_i}\not\in \{\matholdcal{M}^{\A_{j_1}},...,\matholdcal{M}^{\A_{j_p}}\} \cup \{\matholdcal{M}^{\A_{l_1}},...,\matholdcal{M}^{\A_{l_q}}\}$ (those not appearing in the prediction) are modelled as unitary evolutions of their labs (according to the channel \Cref{eq:M_unitary_gen}). 
    \item For each  $\matholdcal{M}^{\A_i}\in \{\matholdcal{M}^{\A_{j_1}},...,\matholdcal{M}^{\A_{j_p}}\} \cup \{\matholdcal{M}^{\A_{l_1}},...,\matholdcal{M}^{\A_{l_q}}\}$, the projective measurement $\{\pi_{\aaa_i}^{\mathtt{S}_i}=\proj{\aaa_i}_{\mathtt{S}_i}\}_{\aaa_i\in\mathtt{O}_i}$ is applied on the corresponding system $\mathtt{S}_i$, followed by a CNOT in the same basis from the system $\mathtt{S}_i$ to the memory $\M_i$ which models the procedure of making the measurement on the system and storing the outcome in the memory.
    \item This fixes all the channels in the circuit, and the joint probability $P_{conv}(\vec{a}_j=\vec{\aaa}_j,\vec{a}_l=\vec{\aaa}_l)$ is then calculated by applying the Born rule to this circuit, using the measurement projectors given above.\footnote{One can formally write down the probability expressions obtained in steps 3 and 4 through the Born rule. However, in the interests of conciseness, we will decline from doing so at this point, as we will show that these conventional predictions since are recovered as a special case of our to-be-described and rigorously defined general formalism (\Cref{theorem: main}). }
    \item The prediction $P_{conv}(\vec{a}_j=\vec{\aaa}_j|\vec{a}_l=\vec{\aaa}_l)$ is then obtained through the usual conditional probability rule.
\end{enumerate}

\end{definition}


A conventional prediction corresponds to the case where $k$ is the empty set, meaning the prediction is not conditioned on any additional information about the scenario (such as states and channels used). Such predictions often lead to FR-type paradoxes, highlighting that the above rules for computing probabilities in EWFS are not generally consistent and do not yield valid joint probabilities for measurement outcomes in EWFSs.

In standard quantum theory outside the context of WFS, where agents' labs are not typically modelled as quantum systems in a circuit, all predictions for a given experiment are defined relative to a single circuit, ensuring consistent and well-defined joint probabilities. In EWFS, it is initially unclear which circuit is associated with a given scenario due to the ambiguity in the channels modelling the measurements (as discussed in Section \ref{section: WF_original_review}), which has observable consequences in these scenarios. We now explain how every EWFS can be mapped to a single quantum circuit which we call the \emph{augmented circuit} of the EWFS, from which all the predictions of the EWFS can be computed and used consistently, while allowing measurements to be modelled as unitary evolutions of agents' labs and also preserving the validity of the Born rule.

For this, recall that for a measurement $\text{$\matholdcal{M}$}^{\A_i}$ in a basis defined by the projectors $\{\pi_{\aaa_i}^{\mathtt{S}_i}=\proj{\aaa_i}_{\mathtt{S}_i}\}_{\aaa_i\in\mathtt{O}_i}$ acting on a state $\ket{\psi}_{\mathtt{S}_i}=\sum_{\aaa_i} c_{\aaa_i} \ket{\aaa_i}_{\mathtt{S}_i}$, the associated unitary evolution $\matholdcal{M}^{\A_i}_{unitary}$ is as given in \Cref{eq:M_unitary_gen}. On the other hand, applying the projection postulate for the measurement on $\Ss_i$ and the fact that the outcome is copied to the memory $\A_i$, one would obtain the final state $\proj{\aaa_i\aaa_i}_{\mathtt{S}_i\M_i}$ when the outcome is $\aaa_i$. If we consider the view that such a classical outcome is obtained but one lacks knowledge of its value, we would obtain the following trace-preserving evolution.

\begin{equation}
\label{eq:M_mixture_gen}
 \text{$\matholdcal{M}$}^{\A_i}_{projection}(\proj{\psi}_{\mathtt{S}_i}\otimes \proj{0}_{\M_i})=\sum_{a_i} |c_{\aaa_i}|^2 \proj{\aaa_i\aaa_i}_{\mathtt{S}_i\M_i}.
\end{equation}

Notice that $\text{$\matholdcal{M}$}^{\A_i}_{projection}(\proj{\psi}_{\mathtt{S}_i}\otimes \proj{0}_{\M_i})$ above can be equivalently expressed as
\begin{equation}
\label{eq:M_mixture_gen_proj}
 \sum_{a_i} \pi_{\aaa_i}^{\mathtt{S}_i \M_i}  \Big( \text{$\matholdcal{M}$}^{\A_i}_{unitary}\big(\proj{\psi}_{\mathtt{S}_i}\otimes \proj{0}_{\M_i}\big)  \text{$\matholdcal{M}$}^{\A_i \dag}_{unitary}\Big) \pi_{\aaa_i}^{\mathtt{S}_i \M_i},
\end{equation}
where $\{\pi_{\aaa_i}^{\mathtt{S}_i \M_i}:=\proj{\aaa_i\aaa_i}_{\mathtt{S}_i \M_i}\}_{\aaa_i\in\mathtt{O}_i}$, and we still have a one-to-one correspondence between the measurement outcomes $\aaa_i$ and an element from the above set of projectors. Therefore the two possible (trace-preserving) evolutions associated with a measurement $\text{$\matholdcal{M}$}^{\A_i}$ are now given by \Cref{eq:M_mixture_gen_proj} and~\Cref{eq:M_unitary_gen}. They differ is whether or not one views the measurement as having produced classical records.

We explicitly account for this choice by introducing a classical binary variable for each measurement that takes values $x_i\in\{0,1\}$ (which we call the \emph{setting}), such that $x_i=0$ and $x_i=1$ correspond to the evolutions \Cref{eq:M_unitary_gen} and \Cref{eq:M_mixture_gen_proj} respectively.

This is captured through the following enlarged set of projectors, obtained by appending a trivial outcome value $\perp$ to the outcome set $\mathtt{O}_i$.
\begin{equation}
\label{eq: proj_settings}
\mathtt{\Pi}^{\A_i}_{x_i}:=\begin{cases}
\{\pi_{\aaa_i,x_i}^{\A_i}\}_{\aaa_i\in\{ \perp\}}= \{\pi_{\perp,0}^{\A_i}:=\id_{\mathtt{S}_i \M_i}\}  &\text{ if } x_i=0\\
\{\pi_{\aaa_i,x_i}^{\A_i}\}_{\aaa_i\in\mathtt{O}_i}= \{\pi_{\aaa_i,1}^{\A_i}:=\pi_{\aaa_i}^{\mathtt{S}_i \M_i}\}_{\aaa_i\in\mathtt{O}_i} &\text{ if } x_i=1.
\end{cases}
\end{equation}
That is, for $x_i=0$ the value of the outcome $a_i$ is fixed uniquely to $a_i=\perp$ and the corresponding projector is the identity operator $\id_{\mathtt{S}_i \M_i}$.
Meanwhile, for $x_i=1$, $a_i$ takes values $\aaa_i$ in the set $\mathtt{O}_i$ (reflecting the different possible classical outcomes of the measurement, see \Cref{def:LWFS}). We will only use the terminology \emph{outcome}, to refer to the value of $a_i$ when $x_i=1$. In the case $x_i=0$, $a_i=\perp$ and we will not regard this as an outcome (we will sometimes refer to this case as the \emph{trivial outcome}).

 Then we can model each measurement $\text{$\matholdcal{M}$}^{\A_i}$ as the corresponding unitary $\text{$\matholdcal{M}$}^{\A_i}_{unitary}$, followed by a (setting-dependent) projective measurement $\mathtt{\Pi}_{x_i}^{\A_i}$. The pure unitary picture is always recovered by setting $x_i=0$, in which case only $\text{$\matholdcal{M}$}^{\A_i}_{unitary}$ \phantom{$\text{$\matholdcal{M}$}^{{\A_i}^A}_{{unitary}_y}$ }\hspace{-17mm}unitary is applied. The set of choices of settings for all $N$ agents can be represented by a vector $\vec{x}=(x_1,\ldots,x_N)$. With this explicit model, we can transform any EWFS into a standard temporally ordered quantum circuit parametrised by the settings $\vec x$ such that all the predictions made by different agents having different perspectives can be derived from the single circuit by fixing the settings $\vec x$ (as shown in \Cref{theorem: main}). We call this the \emph{augmented circuit} of the EWFS. This is illustrated in \Cref{fig: genform_circuit_settings}, and summarised in the following definition.


\begin{definition}[Augmented circuit of an EWFS]
\label{def:LWFS_aug}
Given an EWFS of the form of \Cref{fig: genform_circuit}, we can associate with it an augmented circuit of the form of \Cref{fig: genform_circuit_settings} which is obtained from the EWFS by replacing each measurement $\text{$\matholdcal{M}$}^{\A_i}$ by the corresponding unitary description $\text{$\matholdcal{M}$}^{\A_i}_{unitary}$  (\Cref{eq:M_unitary_gen}), followed by the setting-dependent enlarged set of projectors $\mathtt{\Pi}^{\A_i}_{x_i}$ (\Cref{eq: proj_settings}). The setting takes binary values $x_i\in \{0,1\}$, where we have only the trivial outcome $\perp$ whenever $x_i=0$ and the non-trivial measurement outcome $\aaa_i\in \mathtt{O}_i$ whenever $x_i=1$. We will refer to this as an augmented EWFS, in short. 
\end{definition}

We now make a crucial observation. While we have so far given full freedom in choosing how to model the measurements, with both $x_i=0$ (pure unitary evolution) and $x_i=1$ (also assigning non-trivial projectors) being allowed, we notice that the form of the prediction being computed using the Born rule fixes some of these choices. For example, consider a prediction $P(a_i=\aaa_i|\vec{x})$. To compute this using the Born rule, we must apply the projector associated with the outcome $a_i=\aaa_i$, which is needed to identify that outcome. Therefore for this prediction, $x_i=1$. The other components of the setting vector $\vec{x}$ can generally vary according to the rules of reasoning being applied. 

When we say ``applying a projector'', this is not to be conflated with ``collapsing'' the state. Even when we do not explicitly refer to the post-measurement state of an agents' measurement, when reasoning about that agents' outcome using the Born rule, we apply knowledge of the measurement basis (given here by the projectors which identify the outcome). This is further discussed in \Cref{remark: no_collapse}, where we show how our framework can also be applied to EWF scenarios where the Born is rule is applied without need for a particular state update rule. The main observation above is that given a prediction, the measurements whose outcomes appear in that prediction are always regarded as producing classical records (associated with $x_i=1$ in our framework) when computing the prediction.


Applying this simple observation about quantum probabilities to agents' reasoning, we can consider an example. If $\A_i$ reasons about $\A_j$'s outcome based on their own outcome, they set $x_i=x_j=1$ in order to identify the classical outcomes they are reasoning about, but they may choose $x_k=0$ for all $k\neq i,j$ and model all other agents unitarily. 
How these remaining settings are to be chosen will need to be specified by a set of reasoning rules $\matholdcal{R}$, which we do not yet fix. This generality will become relevant when discussing the different responses to FR's no-go theorem (\Cref{appendix:relation_prev_works}).


 
Having defined the augmented circuit, we explain how one can compute predictions here. This can be done by applying the Born rule along with the conditional probability rule, as long as we are given a choice of initial values for the setting vector $\vec{x}$ (as this is required for the full specification of the circuit). 

Different interpretations of quantum theory would generally propose different rules $\matholdcal{R}$ to fully specify $\vec{x}$ and can arrive at different predictions for the same EWFS. As we will see in \Cref{theorem: main}, formalising conventional predictions (\Cref{def: conv_prediction}) in our augmented circuit yields an explicit rule for choosing the settings which will be relative to a subset of systems of measurements being considered, and can therefore allow subjective choices of settings when we consider reasoning agents (see also \Cref{sec: reasoning_rules}).

Until then, all our results regarding the augmented circuit apply to all possible rules for choosing the settings  and hence apply to several interpretations of quantum theory (both relational and non-relational ones).


\begin{definition}[Setting-conditioned prediction]
\label{definition: setting_prediction}
Given an EWFS, a setting-conditioned prediction is a prediction $P(\vec{a}_j=\vec{\aaa}_j|\vec{a}_l=\vec{\aaa}_l,\vec{x}=\vec{\xxx})$ (\Cref{definition: prediction}) where the setting vector $\vec{x}$ represents the set $k$ of scenario parameters. A setting-conditioned prediction is computed by mapping the EWFS to an augmented circuit, then applying the quantum Born rule together with the rule for conditional probabilities for the given choice $\vec{x}=\vec{\xxx}$ of the settings. This is explicitly shown in \Cref{appendix: prob_rule_general}. Whenever $P(\vec{a}_j=\vec{\aaa}_j|\vec{a}_l=\vec{\aaa}_l,\vec{x}=\vec{\xxx})\in \{0,1\}$, we will refer to it as a \emph{logical setting-conditioned prediction}.


\end{definition}


\begin{definition}[Statements associated with setting-conditioned predictions]
\label{definition:logical_setting_prediction}

Every setting-conditioned prediction $P(\vec{a}_j=\vec{\aaa}_j|\vec{a}_l=\vec{\aaa}_l,\vec{x}=\vec{\xxx})$ can be associated with a corresponding statement, in the same way that \Cref{definition:statement_prediction} assigns statements to predictions $P(\vec{a}_j=\vec{\aaa}_j|\vec{a}_l=\vec{\aaa}_l,k=\kkk)$, but for setting-conditioned predictions, the scenario parameter values $k=\kkk$ are replaced with the setting values $\vec{x}=\vec{\xxx}$. Statements associated with logical setting-conditioned prediction can be expressed using logical operators as given in \Cref{def: logic_statement_gen} but with  $k=\kkk$ replaced by $\vec{x}=\vec{\xxx}$.

\end{definition}

\begin{definition}[Set of all statements in an augmented EWFS]
\label{def: all_aug_statements}
Consider the set of all setting-conditioned predictions that can be made in an augmented circuit of an EWFS, obtained under all possible setting choices. The set of all statements $\Sigma^{aug}$ in that EWFS is obtained by mapping each setting-conditioned prediction to a corresponding statement as per \Cref{definition:logical_setting_prediction}. Similarly, restricting to the set of all logical predictions, we have the corresponding set of all logical statements of the scenario which will be denoted as $\Sigma_L^{aug}\subseteq \Sigma^{aug}$.
\end{definition}
\begin{remark}
In setting-conditioned predictions, the setting variables are fixed to some specific value deterministically. More generally, we can consider arbitrary prior distribution over the settings $\vec{x}$ in our framework. This allows to compute a probability distribution such as $P(\vec{a}_j=\vec{\aaa}_j)$ which does not explicitly feature any settings, which is obtained by choosing some prior $P(\vec{x})$ and averaging over the settings, $P(\vec{a}_j=\vec{\aaa}_j)=\sum_{\vec{\xxx}} P(\vec{a}_j=\vec{\aaa}_j|\vec{x}=\vec{\xxx})P(\vec{x}=\vec{\xxx})$.

We have established that for statements made using setting-conditioned predictions, the specific setting value assumed must be explicitly specified. Analogously, our framework and results can be generalised to accommodate arbitrary priors by ensuring that when using such priors to compute outcome probabilities or predictions, the corresponding statements about the measurement outcomes specify the chosen prior over the settings, as this is an additional choice that an agent must make in the reasoning process. However, we will not delve into the case of general prior distributions further, as it is not pertinent to the main results and insights of our work. 
\end{remark}

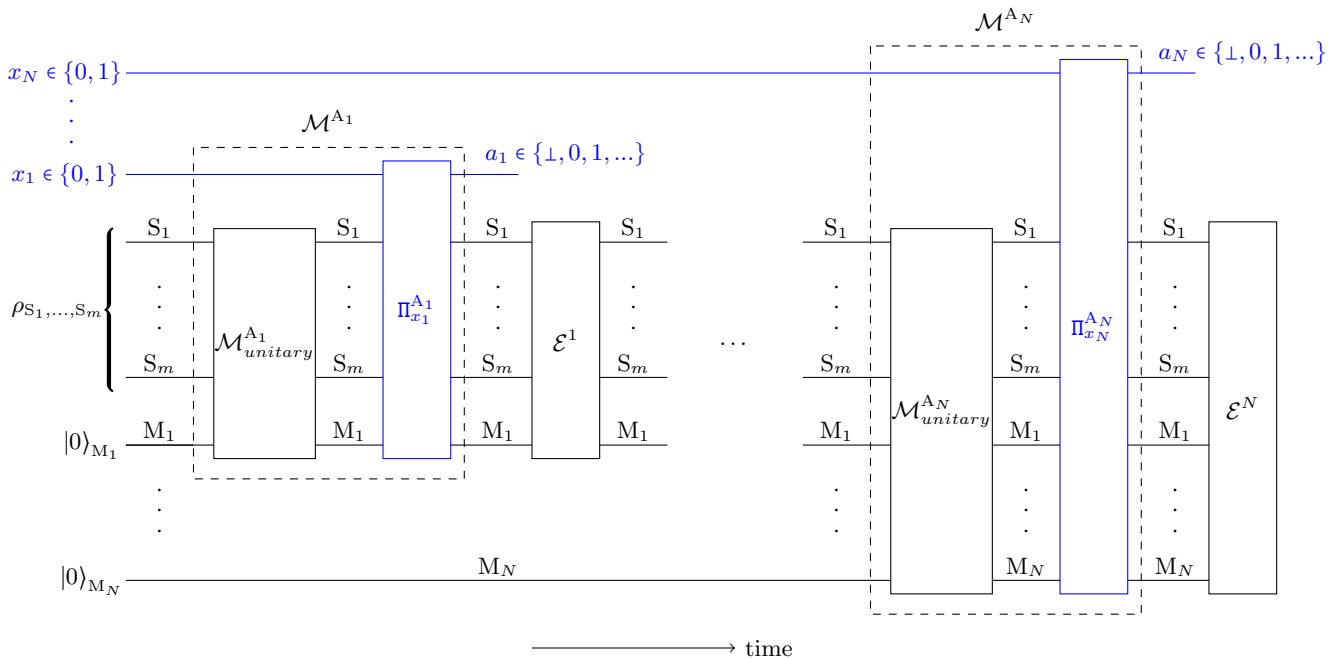
\begin{figure*}[t!]
    \centering
    \begin{tikzpicture}[scale=0.9]
    \draw[->] (6,-6)--(9,-6); \node at (9.5,-6) {time};
    
\draw [ultra thick,decorate,
    decoration = {calligraphic brace}] (-0.2,-2.2)--(-0.2,0.2); \node at (-1,-1) {$\rho_{\Ss_1,\ldots,\Ss_m}$}; \node at (-0.5,-3) {$\ket{0}_{\M_1}$};  \node at (-0.5,-5) {$\ket{0}_{\M_N}$};  \node at (0.5,-4) {$\cdot$}; \node at (0.5,-4.3) {$\cdot$}; \node at (0.5,-3.7) {$\cdot$};

    \draw (0,0)--(1.3,0); \node at (0.5,0.2) {$\Ss_1$}; \draw (0,-2)--(1.3,-2); \node at (0.5,-1.8) {$\Ss_m$}; \node at (0.5,-1) {$\cdot$}; \node at (0.5,-1.3) {$\cdot$}; \node at (0.5,-0.7) {$\cdot$};  \draw (0,-3)--(1.3,-3);\node at (0.5,-2.8) {$\M_1$};
    \draw[dashed] (1,1.4) rectangle  (5,-3.5); \draw (1.3,0.2) rectangle node{$\matholdcal{M}^{\A_1}_{unitary}$} (2.8,-3.2);
     \draw (0,-3)--(1,-3); \draw[blue] (3.8,1.2) rectangle node{$\mathtt{\Pi}^{\A_1}_{x_1}$} (4.8,-3.2); \draw[blue] (0,1)--(3.8,1); \node[blue] at (-0.9,1) {$x_1\in\{0,1\}$}; \draw[blue] (4.8,1)--(5.8,1); \node[blue] at (6.5,1.3) {$a_1\in \mathtt{O}_1 \cup \perp$};
    \node at (3,1.8) {$\matholdcal{M}^{\A_1}$};

     \draw (2.8,0)--(3.8,0); \node at (3.3,0.2) {$\Ss_1$}; \draw (2.8,-2)--(3.8,-2); \node at (3.3,-1.8) {$\Ss_m$}; \node at (3.3,-1) {$\cdot$}; \node at (3.3,-1.3) {$\cdot$};    \node at (3.3,-0.7) {$\cdot$}; \draw (2.8,-3)--(3.8,-3); \node at (3.3,-2.8) {$\M_1$};

       \draw (4.8,0)--(6,0); \node at (5.5,0.2) {$\Ss_1$}; \draw (4.8,-2)--(6,-2); \node at (5.5,-1.8) {$\Ss_m$}; \node at (5.5,-1) {$\cdot$}; \node at (5.5,-1.3) {$\cdot$};
      \node at (5.5,-0.7) {$\cdot$}; \draw (4.8,-3)--(6,-3); \node at (5.5,-2.8) {$\M_1$}; \draw (6,0.3) rectangle node{$\matholdcal{E}^1$} (7,-3.2);
      
      \draw (7,0)--(8,0); \node at (7.5,0.2) {$\Ss_1$}; \draw (7,-2)--(8,-2); \node at (7.5,-1.8) {$\Ss_m$}; \node at (7.5,-1) {$\cdot$}; \node at (7.5,-1.3) {$\cdot$};
      \node at (7.5,-0.7) {$\cdot$}; \draw (7,-3)--(8,-3); \node at (7.5,-2.8) {$\M_1$};

    \node at (9,-1.5) {$\dots$};  

\draw (0,-5)--(11.3,-5); \node at (5.5,-4.8) {$\M_N$};  \draw (10,0)--(11.3,0); \node at (10.5,0.2) {$\Ss_1$}; \draw (10,-2)--(11.3,-2); \node at (10.5,-1.8) {$\Ss_m$}; \node at (10.5,-1) {$\cdot$}; \node at (10.5,-1.3) {$\cdot$};
      \node at (10.5,-0.7) {$\cdot$}; \draw (10,-3)--(11.3,-3); \node at (10.5,-2.8) {$\M_1$}; \node at (10.5,-4.3) {$\cdot$}; \node at (10.5,-3.7) {$\cdot$}; \node at (10.5,-4) {$\cdot$};

    \draw[dashed] (11,2.9) rectangle  (15,-5.5); \draw (11.3,0.2) rectangle node{$\matholdcal{M}^{\A_N}_{unitary}$} (12.8,-5.2);
\draw[blue] (13.8,2.7) rectangle node{$\mathtt{\Pi}^{\A_N}_{x_N}$} (14.8,-5.2); \draw[blue] (0,2.5)--(13.8,2.5); \node[blue] at (-0.9,2.5) {$x_N\in\{0,1\}$}; \draw[blue] (14.8,2.5)--(15.8,2.5); \node[blue] at (16.5,2.8) {$a_N\in \mathtt{O}_N \cup \perp$};
    \node at (13,3.3) {$\matholdcal{M}^{\A_N}$};
    
      \draw (12.8,0)--(13.8,0); \node at (13.3,0.2) {$\Ss_1$}; \draw (12.8,-2)--(13.8,-2); \node at (13.3,-1.8) {$\Ss_m$}; \node at (13.3,-1) {$\cdot$}; \node at (13.3,-1.3) {$\cdot$}; \node at (13.3,-4.3) {$\cdot$}; \node at (13.3,-3.7) {$\cdot$}; \node at (13.3,-4) {$\cdot$};

      \node at (13.3,-0.7) {$\cdot$}; \draw (12.8,-3)--(13.8,-3); \node at (13.3,-2.8) {$\M_1$}; \draw (12.8,-5)--(13.8,-5); \node at (13.3,-4.8) {$\M_N$};
      
       \draw (14.8,0)--(16,0); \node at (15.5,0.2) {$\Ss_1$}; \draw (14.8,-2)--(16,-2); \node at (15.5,-1.8) {$\Ss_m$}; \node at (15.5,-1) {$\cdot$}; \node at (15.5,-1.3) {$\cdot$}; \node at (15.5,-4.3) {$\cdot$}; \node at (15.5,-3.7) {$\cdot$}; \node at (15.5,-4) {$\cdot$};
      \node at (15.5,-0.7) {$\cdot$}; \draw (14.8,-3)--(16,-3); \node at (15.5,-2.8) {$\M_1$}; \draw (14.8,-5)--(16,-5); \node at (15.5,-4.8) {$\M_N$}; \draw (16,0.3) rectangle node{$\matholdcal{E}^N$} (17,-5.2);
    
    \node[blue] at (-0.8,1.75) {$\cdot$};  \node[blue] at (-0.8,2.05) {$\cdot$};  \node[blue] at (-0.8,1.45) {$\cdot$};
    \end{tikzpicture}
    \caption{Augmented circuit for the general form of an EWFS illustrated in \Cref{fig: genform_circuit}
    that makes explicit the implicit setting choices needed to model the measurement of each agent. This circuit makes it clear that each agent has a choice in how they describe each measurement in the scenario, when the setting $x_i=0$, the corresponding measurement $\text{$\matholdcal{M}$}^{\A_i}$ is modelled as a unitary evolution $\text{$\matholdcal{M}$}^{\A_i}_{unitary}$ (in this case the projector in the blue box implements an identity operation and $a_i=\perp$ deterministically) and when $x_i=1$, the same measurement is modelled as the unitary $\text{$\matholdcal{M}$}^{\A_i}_{unitary}$ followed by non-trivial projectors that identify the classical measurement outcome $a_i\in\{0,1,...d_{\mathtt{S}_i}-1\}:=\mathtt{O}_i$. In order for any agent to reason about the measurement outcome $a_i$ of an agent $\A_i$, they must necessarily choose $x_i=1$ in order to identify and calculate probabilities for the classical outcome that they are reasoning about. They may however choose to model all other agents $\A_j$ unitarily by choosing $x_j=0$.}
      \label{fig: genform_circuit_settings}
\end{figure*}

\section{Completeness, consistency and causality without absolute events}
\label{ssec: general_results}

In this section, we formalise and prove several key properties of our framework that are pertinent to logical and causal reasoning in EWFS, without imposing an absolute notion of measurement events.

\subsection{Properties of general quantum predictions in EWFS}
\label{sec: consistency_pred}

We begin by noting that our framework does not assume an absolute and objective notion of measurement events, unlike the majority of existing frameworks for describing quantum information protocols. It does not require the existence of a single objective joint probability distribution $P(a_1,..,a_N)$ over the (non-trivial) outcomes $a_i\in \mathtt{O}_i$ of all measurements in the scenario. Rather, by introducing settings $\vec{x}$, which as we will discuss later \Cref{sec: setting_interpretation}, model choices of Heisenberg cuts, the framework allows outcome probabilities to be fundamentally relational.

For example, suppose we have an EWFS with two measurements $\matholdcal{M}^{\A_1}$ and $\matholdcal{M}^{\A_2}$. If we treat the joint system $\Ss_1\M_1$ (modelling the lab of the first agent) as storing classical outcome records after the measurement while regarding the $\Ss_2\M_2$ as a purely unitarily evolving quantum system, then $x_1=1$ and $x_2=0$ and we can only compute probabilities involving the non-trivial measurement outcomes $a_1\in \mathtt{O}_1$. If we treat both measurements as being associated with classical records, then we have $x_1=x_2=1$ and can compute joint probabilities of $a_1$ and $a_2$, the probability for $a_1$ in the two cases need not generally agree since we use a different setting $x_2$ in computing this probability in the two cases.

We now show that even though our augmented circuit formalism does not a-priori impose absoluteness of events, it provides a complete representation of all predictions that can be made in an EWFS in a way that is consistent and respects causality principles.   Before stating the formal theorem, we clarify what we mean by causality principles.  
To formalise this, we begin by defining a directed acyclic graph (DAG)\footnote{In simple terms, a DAG is a graph with additional structure: 1) The edges have a direction associated with them. 2) By ``following the edges in the indicated direction" one can never get back to the starting point, i.e. no ``directed loops''.} that corresponds to every EWFS. This DAG captures the potential information flow within the EWFS protocol, adhering to the time-ordered sequence of operations. We will refer to this as the causal structure of the EWFS.

Recall that by definition, an EWFS with $N$ agents is modelled as having $2N$ operations w.l.o.g., as each agent performs a measurement $\matholdcal{M}^{\A_i}$ followed by some quantum channel $\matholdcal{E}_i$, both associated with a time step $t_i$. Let $\matholdcal{O}$ denote any one of these $2N$ operations we will denote by $\mathtt{S}_{\matholdcal{O}}$ the subset of all systems and memories $\mathtt{S}\cup\mathtt{M}$ such that the operation $\matholdcal{O}$ acts non-trivially on all systems in $\mathtt{S}_{\matholdcal{O}}$ and as the identity on the rest of $\mathtt{S}\cup\mathtt{M}$.

\begin{definition}[Causal structure of an EWFS]
\label{definition: causal_str}
The causal structure of an EWFS with $N$ agents is a directed acyclic graph (DAG) $G$ with the following properties
\begin{enumerate}
    \item $G$ has $2N$ vertices given by the set Vert$(G):=\{V_i^{\matholdcal{M}},V_i^{\matholdcal{E}}\}_{i=1}^N$, where $V_i^{\matholdcal{M}}$ and $V_i^{\matholdcal{E}}$ are respectively associated with the operations $\matholdcal{M}^{\A_i}$ and $\matholdcal{E}_i$, and both pairs of such vertices are associated with the time step $t_i$, for each $i$.
    \item $G$ contains a directed edge $V\rightarrow V’$ for $V,V’\in$ Vert$(G)$ whenever $t_V<t_{V’}$ and $\mathtt{S}_{\matholdcal{O}_V}\cap\mathtt{S}_{\matholdcal{O}_{V’}}\neq \emptyset$, where $\matholdcal{O}_{V}$ and $\matholdcal{O}_{V’}$ are the operations, and $t_V$ and $t_{V’}$ are the time steps associated with the vertices $V$ and $V’$ respectively.

\end{enumerate}

\end{definition}

\begin{definition}[Directed paths and partial order]
\label{def: dir_path}
The DAG $G$ associated with an EWFS in \Cref{definition: causal_str} defines a partial order relation $\prec$ on agents in the EWFS, with $\A_i\prec \A_j$ if and only if there is a directed path from the measurement vertex $V_i^{\matholdcal{M}}$ of $\A_i$ to the measurement vertex $V_j^{\matholdcal{M}}$ of $\A_j$. Furthermore, we will write $\A_i\prec^S \A_j$ whenever there is such a directed path and the system $S$ is included in at least one of the set intersections $\mathtt{S}_{\matholdcal{O}_V}\cap\mathtt{S}_{\matholdcal{O}_{V’}}$ involved in that path. We use $\A_i\not\prec \A_j$ and $\A_i\not\prec^{S} \A_j$ to denote the absence of directed paths with the above-defined properties.
\end{definition}

Crucially, notice that the above graph $G$ and induced partial order $\prec$ represent objective properties inherent to the EWFS, independent of the specific settings chosen to model the measurement $\matholdcal{M}^{\A_i}$ in the augmented EWFS. This is because the definition only relies on the time order of operations and the set of systems on which the operations act. These aspects are included in the description of the original EWFS, \Cref{def:LWFS}, and the settings do not affect them.

Given this definition, a natural causality principle is that an outcome $\aaa_j$ should not depend on a setting $\xxx_i$ of a measurement whenever $\A_i\not\prec \A_j$. As our circuits are acyclic and operations therein have a clear time ordering, we have $\A_i\prec \A_j$ implies $t_i<t_j$, and it is easy to see that this ensures that $G$ is indeed a directed acyclic graph and that $\prec$ is a partial order relation.
Moreover, this causality principle ensures that there is no retrocausal dependence of outcomes on future settings, despite settings in our formalism representing Heisenberg cuts rather than actual experimental measurement choices (further elaborated in \Cref{sec: setting_interpretation}).

Finally we note that $t_i<t_j$  does not necessarily imply $\A_i\not\prec\A_j$ as it is possible to have two measurements acting at different times but on non-overlapping sets of systems, then $G$ will not contain any directed paths between the measurements of these agents. Thinking of different subsystems as being embedded at different ``spatial locations'',  we can regard such disjoint sets of systems as being space-like separated. Therefore, if we consider the circuit as being embedded in a spacetime, with an output of one operation connected to the input of another only if the former is in the past light-cone of the latter, then $\prec$ is compatible with the causal structure of the spacetime \cite{VilasiniRennerPRA, VilasiniRennerPRL}.

\begin{restatable}[]{theorem}{MainTheorem}
\label{theorem: main}\phantom{.}
    \begin{enumerate}
        \item Completeness: In any given EWFS, all conventional predictions in that EWFS can be derived within the single augmented circuit of that EWFS. More explicitly, each conventional prediction $P_{conv}(\vec{a}_j=\vec{\aaa}_j|\vec{a}_l=\vec{\aaa}_l)$ in the EWFS equals a particular setting conditioned prediction $P(\vec{a}_j=\vec{\aaa}_j|\vec{a}_l=\vec{\aaa}_l,\vec{x}=\vec{\xxx}^*)$ of the augmented circuit where the setting choice $\vec{x}=\vec{\xxx}^*$ is such that $x_i=1$ for all $i\in\{j_1,...,j_p,l_1,...,l_q\}$ and $x_i=0$ for all $i\not\in\{j_1,...,j_p,l_1,...,l_q\}$.
        \item Consistency: For any EWFS, the set of all statements $\Sigma^{aug}$ obtained in the corresponding augmented circuit (\Cref{def: all_aug_statements}) are consistent according to \Cref{def: consistency_pred}. 
        \item Causality: For every setting-conditioned prediction $P(\vec{a}_j=\vec{\aaa}_j|\vec{a}_l=\vec{\aaa}_l,\vec{x}=\vec{\xxx})$, and every $i$ such that $\A_i\not\prec \A_k$ for all $k\in \{j_1,...,j_p,l_1,...,l_q\}$, the prediction is independent of the setting $x_i$. That is, for all such $i$, we have the following, where we denote $P(a=\aaa)$ as $P(\aaa)$ for short and note that $\vec{x}=(x_1,...,x_N)$.
\begin{align}
\label{eq: causality_indep}
 \begin{split}
\forall \xxx_i, \xxx'_i,\quad      &P(\vec{\aaa}_j|\vec{\aaa}_l,(\xxx_1,...\xxx_i,...,\xxx_N)) =\\ &P(\vec{\aaa}_j|\vec{\aaa}_l,(\xxx_1,...\xxx'_i,...,\xxx_N)). 
 \end{split}  
\end{align}

    \end{enumerate}
\end{restatable}

Whenever a setting-conditioned prediction is independent of a setting $x_i$, i.e., satisfies \Cref{eq: causality_indep}, we will simply drop $x_i$ from that prediction and denote it as follows.
\begin{align}
    \begin{split}
   &P(\vec{\aaa}_j|\vec{\aaa}_l,(\xxx_1,...,\xxx_N))=\\   &P(\vec{\aaa}_j|\vec{\aaa}_l,(\xxx_1,...\xxx_{i-1},\xxx_{i+1}...,\xxx_N))  
    \end{split}
\end{align}

\subsection{Application to agents' reasoning}
\label{sec: consistency_reasoning}

So far our results have been about the properties of predictions (probabilities) that can be computed in an EWFS and showing that our formalism yields a complete, logical and causally consistent way to make predictive statements in such scenarios. We now apply our general formalism to the subject of agents' reasoning in EWFS. One immediate corollary of our general consistency result of \Cref{theorem: main}, for agents' reasoning is the following.
\begin{restatable}[]{corollary}{SameChoicesSamePredictions}
\label{lemma: samechoices_samepredictions}
If any two agents use the same choice of settings $\vec x$ for all measurements $\{\text{$\matholdcal{M}$}^{\A_i}\}_i$ in an augmented EWFS then they make all the same predictions in that scenario.
\end{restatable}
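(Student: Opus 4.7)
\emph{Proof plan.} The core observation is that, by the construction of the augmented circuit (Definition of the augmented LWFS, \cref{fig: genform_circuit_settings}), once the setting vector $\vec x$ is fixed, the entire protocol reduces to a single, standard quantum circuit whose ingredients, namely the initial state $\rho_{\Ss_1,\ldots,\Ss_m}$, the memory initialisations $\ket{0}_{\M_i}$, the unitaries $\mathcal{M}^{\A_i}_{unitary}$, the interstitial channels $\mathcal{E}^i$, and the setting-dependent projectors $\mathtt{\Pi}^{\A_i}_{x_i}$, are all specified as part of the LWFS itself and therefore agreed upon by every agent. The first step of the plan is to make this observation explicit and appeal to the general probability rule derived in \cref{appendix: prob_rule_general}, which yields a unique joint distribution $P(a_1,\ldots,a_N \mid \vec x)$ on the outcome vector as a direct application of the Born rule to this fixed circuit.

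Next I would argue that every setting-conditioned prediction $P(a_{j_1},\ldots,a_{j_p}\mid a_{l_1},\ldots,a_{l_q},\vec x)$, as introduced in \cref{definition: setting_prediction}, is obtained from $P(a_1,\ldots,a_N \mid \vec x)$ by marginalising over the outcomes not appearing in the prediction and then applying the conditional probability rule. Since marginalisation and conditioning are purely mathematical operations on a fixed distribution, two agents who both use this same $\vec x$ necessarily obtain identical values for every setting-conditioned prediction, establishing the claim in the deterministic case.

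For the second part of the lemma, given a common prior $P(\vec x)$ over settings, I would define the joint $P(a_1,\ldots,a_N,\vec x) = P(a_1,\ldots,a_N\mid\vec x)\,P(\vec x)$ using the circuit distribution just established, and then note that the general predictions $P(a_{j_1},\ldots,a_{j_p}\mid a_{l_1},\ldots,a_{l_q})$ discussed after \cref{definition:logical_setting_prediction} are obtained by marginalising over the remaining settings and outcomes and then conditioning. Since both agents share the same prior and the same circuit distribution, the intermediate joint and hence all downstream predictions coincide.

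The proof is essentially bookkeeping; the only subtlety, and the step that deserves care, is checking that nothing in the agents' ``perspective'' enters the probability computation beyond the choice of $\vec x$ (or $P(\vec x)$). In particular, I would emphasise that whether an agent interprets a measurement $\mathcal{M}^{\A_i}$ as unitary or decoherent is \emph{entirely} encoded by the value $x_i$ in the augmented circuit (via \cref{eq: proj_settings}), so two agents sharing $\vec x$ share every such modelling choice; no hidden agent-dependent parameter remains, and the conclusion follows.
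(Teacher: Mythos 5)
Your proposal is correct and follows essentially the same route as the paper's proof: both arguments rest on the observation that fixing $\vec x$ (or the prior $P(\vec x)$) uniquely determines the augmented circuit, hence the Born-rule distribution of \cref{appendix: prob_rule_general}, from which all predictions follow by marginalisation and conditioning. Your closing remark that the unitary-versus-decoherent modelling choice is entirely absorbed into $x_i$ is exactly the point the paper's proof makes when it notes that agents who agree on the settings "agree on the circuit and therefore assign the same joint state to all systems at each time step."
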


This corollary concerns a rather restricted case where all agents have a fixed and common choice of Heisenberg cut (here formalised through the settings). More generally, agents can choose different settings depending on their perspective or prediction they wish to compute. 
Below, we show that our formalism enables agents in an EWFS to reason consistently even when they apply typical axioms of classical logic to observed classical outcomes, and can model each others' labs as unitarily evolving quantum systems and have fundamentally subjective perspectives. In particular, we will consider the logical axioms used in FR's argument, which relate to the inheritance of knowledge (of other trusted agents) and the distributivity of logical statements. We first formalise the relevant logic axioms and the quantum theory dependent assumptions and in the context of our framework.

One of the assumptions (or logical axioms) used in the FR argument is of the form ``If Alice is certain that Bob is certain that the outcome $a=1$'', then ``Alice is certain that $a=1$''. This is called the $\textup{C}$ assumption there, and is about the ability of Alice to inherit Bob's knowledge. In \cite{Nurgalieva2018}, this assumption $\textup{C}$ was formalised within the framework of modal logic. 
Here $K_{\A_i}(S)$ is used to denote that ``agent $\A_i$ knows that the statement $S$ is true'', where $K^{\A_i}$ is known as a knowledge operator (see \cite{Nurgalieva2018} for a formal definition of the knowledge operators in terms of Kripke structures in modal logic). The assumption $\textup{C}$ can then be succinctly expressed as an inference of the form 
\begin{equation}
    \label{eq: C_logic}
    K_{\A_i}K_{\A_j}(S) \Rightarrow K_{\A_i}(S),
\end{equation}

More specifically, it is pointed out in \cite{Nurgalieva2018} that such an inference only needs to be made when the agent $\A_i$ \emph{trusts} the agent $\A_j$ (for otherwise $\A_i$ may not believe in everything that $\A_j$ claims to know, and may not want to inherit $\A_j$'s knowledge). In Wigner's Friend scenarios such a that of FR, \cite{Nurgalieva2018} instantiate the trust structure by considering pairs of agents performing compatible measurements, and only apply $\textup{C}$ for such pairs. 
The trust structure will not be relevant for the general solution that we propose here  because, as we will show (c.f. \Cref{theorem: main}), the inclusion of the settings in our framework ensures the general validity of \Cref{eq: C_logic} in an augmented EWFS independently of the trust structure.

\begin{definition}[Assumption $\mathbf{C}$]
\label{def: assump_C}
For any two agents $\A_i$ and $\A_j$, \Cref{eq: C_logic} holds for all statements $S\in \Sigma^{aug}$.
\end{definition}

As noted in \cite{Nurgalieva2018}, the FR argument also uses the distributive axiom of logic. We instantiate this in our formalism below.

\begin{definition}[Assumption $\mathbf{D}$]
\label{def: assump_D}
For any set of agents reasoning using the augmented circuit, if $S_1\in \Sigma^{aug}_L$, and $S_1\Rightarrow S_2$, then $S_2\in \Sigma^{aug}_L$ holds and we have
\begin{equation}
    \label{eq: logic_axiom1}
    K_{\A_i}(S_1\land (S_1\Rightarrow S_2))\Rightarrow K_{\A_i}(S_2),
\end{equation}
that is if an agent $\A_i$ knows $S_1$ and also that $S_1$ implies $S_2$, then the agent knows $S_2$.
\end{definition}

Finally, we consider the $\textup{S}$ assumption of FR which states that an agent cannot be certain of two opposite values|say $a=0$ and $a=1$|of a measurement outcome. In our framework, this is a weaker version of our general consistency condition of \Cref{def: consistency_pred} as the latter applies to all predictions while the former only to logical predictions.

\begin{definition}[Assumption $\mathbf{S}$]
\label{def: assump_S}
For any subset $\vec{a}_j$ of measurement outcomes, if $P(\vec{a}_j=\vec{\aaa}_j)=1$, then it is impossible to have $P(\vec{a}_j=\vec{\aaa}'_j)=1$ for any outcome values $\vec{\aaa}_j\neq \vec{\aaa}'_j$.
\end{definition}

Now, for the quantum theory dependent assumptions which independently capture the validity of the Born rule and of unitary evolution of closed quantum systems (including agents' labs).

\begin{definition}[Assumption $\mathbf{Q}$]
\label{def: Q_assump}
Consider a statement $S:=$``If the outcomes $\vec{a}_l$ take values $\vec{\aaa}_l$ and the settings take the value $\vec{x}=\xxx$, then the outcomes $\vec{a}_l$ take values $\vec{\aaa}_l$ with a probability $P$.'' Agents in an EWFS can regard such a statement $S$ as true if and only if the corresponding setting-conditioned prediction $P(\vec{a}_j=\vec{\aaa}_j|\vec{a}_l=\vec{\aaa}_l,\vec{x}=\vec{\xxx})=P$ can be derived by applying the Born rule to the EWFS (as detailed in \Cref{definition: setting_prediction}). 
\end{definition}

We note that by construction, the set of all statements $\Sigma^{aug}$ constructed from setting-conditioned predictions in the augmented circuit has this property. In our formalisation of the \hyperref[def: U_assump]{$\mathbf{U}$} assumption, we make explicit another implicit assumption in the previous literature, which relates to quantum control over other agents' labs.

\begin{definition}[Assumption $\mathbf{U}$]
\label{def: U_assump}
 Agents can choose the setting $x_i=0$ (i.e., pure unitary description) for the measurement of any agent $\A_i$ whose outcome they are not logically reasoning about i.e., for every setting-conditioned prediction $P(\vec{a}_j=\vec{\aaa}_j|\vec{a}_l=\vec{\aaa}_l,\vec{x}=\vec{\xxx})$, both choices $x_i\in\{0,1\}$ are allowed for all $i\not\in \{j_1,...,j_p,l_1,...,l_q\}$. Moreover, agents can have full quantum control over the labs of other agents, i.e., the measurement of an agent $\A_i$ can act non-trivially on the total system $\mathtt{S}_j\M_j$ comprising the lab of another agent $\A_j$.
\end{definition}

We then obtain the following corollary, which follows by construction of our framework along with the general consistency result of \Cref{theorem: main}. We nevertheless include a proof in \Cref{appendix: proofs} for completeness.
\begin{restatable}[]{corollary}{QUCDSassumptions}
\label{corollary: QUCDSassumptions}
    If agents in an EWFS reason
about each other’s knowledge using the augmented
circuit for the scenario, then they can never arrive at a logical contradiction even if they reason
using all five assumptions \hyperref[def: Q_assump]{$\mathbf{Q}$}, \hyperref[def: U_assump]{$\mathbf{U}$}, \hyperref[def: assump_C]{$\mathbf{C}$}, \hyperref[def: assump_D]{$\mathbf{D}$} and \hyperref[def: assump_S]{$\mathbf{S}$}.
\end{restatable}

We have given a formalisation \hyperref[def: Q_assump]{$\mathbf{Q}$}, \hyperref[def: U_assump]{$\mathbf{U}$}, \hyperref[def: assump_C]{$\mathbf{C}$}, \hyperref[def: assump_D]{$\mathbf{D}$}, \hyperref[def: assump_S]{$\mathbf{S}$} of the FR assumptions $\textup{Q}$, $\textup{U}$, $\textup{C}$, $\textup{D}$ and $\textup{S}$ within our framework, showing our version of the 5 assumptions to be perfectly consistent in an EWFS, despite FR's claim that the original version of these assumptions lead to contradictions. FR's assumptions, although motivated as capturing ``quantum theory'' and ``logical axioms'' do not appear to be fully and rigorously formalised, allowing room for interpretation  (see also \Cref{appendix:relation_prev_works} for references to previous responses to FR's arguments). 
See \Cref{sec:The choices of Heisenberg cuts do matter in FR} for further discussion on the interpretation of FR's claims in light of our results.


Physically, our assumptions \hyperref[def: Q_assump]{$\mathbf{Q}$}, \hyperref[def: U_assump]{$\mathbf{U}$}, \hyperref[def: assump_C]{$\mathbf{C}$}, \hyperref[def: assump_D]{$\mathbf{D}$}, \hyperref[def: assump_S]{$\mathbf{S}$} still encompass the same essential physical requirements highlighted by FR: the universal applicability of quantum theory (Born rule + unitarity), the inheritance of agents' knowledge, and the validity of classical logic applied to knowledge of measurement outcomes. However, mathematically, within the Kripke structure of modal logic, the set of statements $\Sigma$ to which our assumptions apply differs from that considered in the original modal logic formulation of FR's result (as given in \cite{Nurgalieva2018}), due to the additional structure provided by the setting labels in our framework.

In the forthcoming section, we identify an additional assumption within our framework necessary to reproduce apparent inconsistencies akin FR's. This underscores that, even at a physical level, an implicit assumption concerning the independence of predictions from the choice of Heisenberg cuts is required to establish an FR-type no-go theorem in quantum theory.

\subsection{Reason for apparent inconsistencies}
\label{sec: I_assump}

Despite satisfying a formal version of each of the FR assumptions (\Cref{theorem: main} and \Cref{corollary: QUCDSassumptions}), our framework remains logically consistent. This raises the question: what additional assumption is needed to reproduce logical contradictions such as the apparent FR paradox within our framework?

The main difference between our framework and previous analyses of EWFSs is in the explicit introduction of the settings. Here we say ``explicit'' since the settings are indeed present in the conventional computations of quantum predictions, as a choice about how measurements are modelled has to be made when computing the probabilities. The difference between these conventional predictions and statements, and ours is that the former do not specify this in the probability expressions or at the level of statements being made while our formalism does so. Dropping this choice corresponds to an assumption regarding the ability to ignore the setting choice or the Heisenberg cut. We formalise this assumption below within our framework, and show that this is necessary to recover a paradox. This in turn yields a more precise and refined interpretation of FR type apparent paradoxes (as discussed in \Cref{sec:The choices of Heisenberg cuts do matter in FR} after analysing the example of the FR scenario).

\begin{definition}[Assumption $\mathbf{I}$: Independence]\label{def: I_assump}
A setting-conditioned prediction $P(\vec{a}_j=\vec{\aaa}_j|\vec{a}_l=\vec{\aaa}_l,\vec{x}=\vec{\xxx})$ is said to be setting-independent if $P(\vec{a}_j=\vec{\aaa}_j|\vec{a}_l=\vec{\aaa}_l,\vec{x}=\vec{\xxx})=P(\vec{a}_j=\vec{\aaa}_j|\vec{a}_l=\vec{\aaa}_l,\vec{x}=\vec{\xxx'})$ for all allowed values $\xxx$ and $\xxx'$ of the settings.\footnote{Recall that in our framework, the settings $x_i=1$ for all outcomes $a_i$ belonging to the outcome sets $\vec{a}_j$ or $vec{a}_l$ appearing in the given prediction $P(\vec{a}_j=\vec{\aaa}_j|\vec{a}_l=\vec{\aaa}_l,\vec{x}=\vec{\xxx})$, the remaining components of $\vec{x}$ can be varied.} Then the prediction can be consistently represented by dropping the settings, as $P(\vec{a}_j=\vec{\aaa}_j|\vec{a}_l=\vec{\aaa}_l,\vec{x}=\vec{\xxx})=P(\vec{a}_j=\vec{\aaa}_j|\vec{a}_l=\vec{\aaa}_l)$.
\end{definition}

We can then immediately obtain the following corollary of our results. 
\begin{corollary}
\label{corollary:setting_independence}
    In order to obtain an apparent logical contradiction (i.e., a violation of \hyperref[def: assump_S]{$\mathbf{S}$}) in an augmented EWFS where agents reason using assumptions \hyperref[def: Q_assump]{$\mathbf{Q}$}, \hyperref[def: U_assump]{$\mathbf{U}$}, \hyperref[def: assump_C]{$\mathbf{C}$} and \hyperref[def: assump_D]{$\mathbf{D}$}, it is necessary to assume \hyperref[def: I_assump]{$\mathbf{I}$} on at least one logical setting-conditioned prediction that is not setting-independent.
\end{corollary}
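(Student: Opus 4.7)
The plan is to obtain the corollary as a direct consequence of the logical-consistency part of \cref{theorem: main}, combined with the observation that imposing $\mathbf{I}$ on a prediction that is genuinely setting-independent is vacuous and hence adds no extra deductive power.

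First, I would appeal to item 2 of \cref{theorem: main}, which asserts that the five assumptions $\{\mathbf{Q}, \mathbf{U}, \mathbf{C}, \mathbf{D}, \mathbf{S}\}$ are jointly consistent in any augmented LWFS. Taking the contrapositive, no violation of $\mathbf{S}$ can be derived from the remaining four assumptions $\{\mathbf{Q}, \mathbf{U}, \mathbf{C}, \mathbf{D}\}$ alone, because any such derivation would, when conjoined with $\mathbf{S}$, immediately produce an inconsistency in the full system, contradicting the theorem. Hence, to derive a violation of $\mathbf{S}$ one must invoke at least one further hypothesis, and within our formalism the only such hypothesis is $\mathbf{I}$.

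Next, I would show that invoking $\mathbf{I}$ only on predictions that happen to be setting-independent cannot help either. Suppose a prediction satisfies the equality in \cref{Def:assumption I}, namely $P(a_{j_1},...,a_{j_p}\mid a_{l_1},...,a_{l_q}, \vec{x}) = P(a_{j_1},...,a_{j_p}\mid a_{l_1},...,a_{l_q})$, as a genuine numerical identity computable from the Born rule applied to the augmented circuit (with the relevant prior over $\vec{x}$). Then the content asserted by $\mathbf{I}$ is already a consequence of $\mathbf{Q}$ and does not extend the set of derivable statements. Consequently, a derivation that uses $\mathbf{Q}, \mathbf{U}, \mathbf{C}, \mathbf{D}$ together with $\mathbf{I}$ restricted to setting-independent predictions is logically equivalent to one using just $\mathbf{Q}, \mathbf{U}, \mathbf{C}, \mathbf{D}$, which by the previous step cannot reach a violation of $\mathbf{S}$. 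Combining both steps yields the corollary.

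The only delicate point I foresee is making rigorous the claim that $\mathbf{I}$ on a genuinely setting-independent prediction contributes nothing new. I would handle this by observing that in that case the equation of \cref{Def:assumption I} is a true identity between two Born-rule-computable numbers in the augmented circuit, so asserting it as a premise cannot enlarge the set of statements derivable from $\mathbf{Q}$. Beyond this observation, the corollary is essentially a contrapositive repackaging of the consistency theorem and should require no further technical work.
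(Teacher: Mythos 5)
Your argument is correct and follows essentially the same route as the paper: the heavy lifting is done by the logical-consistency part of \cref{theorem: main}, and the remaining step is the observation that an application of $\mathbf{I}$ only contributes something beyond $\mathbf{Q}$--$\mathbf{D}$ when the prediction in question genuinely depends on the settings. The paper phrases this last step via \cref{lemma: samechoices_samepredictions} (any apparent violation of $\mathbf{S}$ must arise from identifying two genuinely distinct setting-conditioned predictions), whereas you phrase it as the vacuousness of $\mathbf{I}$ on setting-independent predictions, but these are two sides of the same observation.
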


This corollary follows because firstly, using \Cref{corollary: QUCDSassumptions} we have that the augmented EWFS is perfectly consistent with all five assumptions \hyperref[def: Q_assump]{$\mathbf{Q}$}, \hyperref[def: U_assump]{$\mathbf{U}$}, \hyperref[def: assump_C]{$\mathbf{C}$}, \hyperref[def: assump_D]{$\mathbf{D}$} and \hyperref[def: assump_S]{$\mathbf{S}$}. Further, the consistency of our framework captured by \Cref{theorem: main} and \Cref{lemma: samechoices_samepredictions} implies that any apparent violation of \hyperref[def: assump_S]{$\mathbf{S}$} (i.e., $P(\vec{a}_j=\vec{\aaa}_j)=1$ and $P(\vec{a}_j=\vec{\aaa}'_j)=1$ for $\vec{\aaa}'_j\neq \vec{\aaa}_j$) that might be obtained in an EWFS (for instance, the violation obtained by FR \cite{Frauchiger2018}), necessarily arises by computing the probability of the outcomes under two distinct choices of settings and then ignoring this choice by identifying the two predictions as the same (which is equivalent to applying \hyperref[def: I_assump]{$\mathbf{I}$} to the two setting-conditioned predictions).

In other words, an apparent violation of \hyperref[def: assump_S]{$\mathbf{S}$} such as $P(\vec{a}_j=\vec{\aaa}_j)=1$ and $P(\vec{a}_j=\vec{\aaa}'_j)=1$ when formulated explicitly within our framework, always translates to $P(\vec{a}_j=\vec{\aaa}_j|\vec{x}=\vec{\xxx})=1$ and $P(\vec{a}_j=\vec{\aaa}'_j|\vec{x}=\vec{\xxx'})=1$ (where $\vec{\xxx}$ and $\vec{\xxx}'$ are two distinct setting values), which is not paradoxical as it refers to two distinct conditional probability distributions. Note however that this necessarily violates \hyperref[def: I_assump]{$\mathbf{I}$} since the predictions are indeed dependent on the setting, which is why the apparent paradox is recovered when \hyperref[def: I_assump]{$\mathbf{I}$} is imposed.

\section{A simple resolution to the FR apparent paradox}
\label{sec: resolution_entanglement}

Having developed a fully general framework, here we apply it to an example to show it in action. In particular, we provide a simple resolution to the FR paradox. We focus here on the entanglement version of the FR protocol illustrated in \Cref{fig: FR_ent_circuit} (and reviewed in detail in \Cref{sec: FR_ent_review}), describing the main idea behind the resolution. This version of FR's protocol was proposed by Luis Masanes and Matthew Pusey in their talks. In \Cref{appendix: ent_FR}, a detailed analysis of the entanglement version of the FR scenario can be found, which explicitly shows all the calculations backing the main points. 

Furthermore, while many consider the entanglement version to be equivalent to the original prepare and measure version, it has been suggested by the authors of the FR paper that the entanglement version misses important subtleties regarding the timing information involved in the reasoning process, to which a lot of care has been given in the original FR formulation of the experiment. Our results are fully general and resolve (in particular) the apparent paradoxes arising in both these situations. In \Cref{ssec: resolution_prep}, we resolve the original prepare and measure version of the FR paradox, while giving a statement-by-statement analysis and comparison to show how the paradox completely disappears in our framework even though all the agents can freely reason using unitary quantum theory, the standard Born rule and classical logic, and all the individual statements of FR's original argument can be reproduced.

\begin{figure*}[t]
	\centering
	\begin{tikzpicture}
		\draw (-1.5,0)--(4,0);  \draw (-1.5,-1)--(4,-1);  \draw (-1.5,-3)--(4,-3); \draw (-1.5,-4)--(4,-4); \draw [ultra thick,decorate,
		decoration = {calligraphic brace}] (-1.7,-3.2)--(-1.7,-0.8); 
		
		\node at (-2,0) {$\ket{0}_\A$};
		\node at (-2.5,-2) {$
			\ket{\psi}_\textup{RS}$};
		\node at (-2,-4) {$\ket{0}_\B$};
		
		\fill (1.5,-1) circle [radius=3pt]; \draw (1.5,0) circle (3mm); \draw (1.5,-1)--(1.5,0.3);
		\fill (1.5,-3) circle [radius=3pt]; \draw (1.5,-4) circle (3mm); \draw (1.5,-3)--(1.5,-4.3);
		
		\draw[dashed] (1,-1.5) rectangle (2,0.5); \node at (1.5,1.2) {\shortstack{Alice's \\ measurement\\ of $\textup{R}$}};
		
		\draw[dashed] (1,-4.5) rectangle (2,-2.5); \node at (1.5,-5.2) {\shortstack{Bob's \\ measurement \\ of $\textup{S}$}};
		
		\draw (4,-1.5) rectangle node{\shortstack{Ursula's\\measurement}} (6,0.5); \draw[->] (6,-0.5)--(7,-0.5); \node at (8.2,-0.5) {$u\in \{\textup{ok},\textup{fail}\}$};
		\draw (4,-4.5) rectangle node{\shortstack{Wigner's\\measurement}} (6,-2.5);  \draw[->] (6,-3.5)--(7,-3.5); \node at (8.2,-3.5) {$w\in \{\textup{ok},\textup{fail}\}$};
		
		\node at (-2,-6.5) {$t=1$};  \node at (1.5,-6.5) {$t=2$};  \node at (5,-6.5) {$t=3$};
	\end{tikzpicture}
	\caption{Circuit that describes the entanglement version of the FR protocol, from the view of the superagents Ursula and Wigner who describe the measurement of Alice and Bob as unitary evolutions. The protocol proceeds as follows: Alice and Bob share a bipartite state $\ket{\psi}_{\R\Ss}=\frac{1}{\sqrt{3}}(\ket{00}+\ket{10}+\ket{11})_{\R\Ss}$, Alice measures $\R$ and Bob measures $\Ss$, both in the computational basis, and the agents store the outcome of the measurement in their memories $\A$ and $\B$ respectively (the unitary description of these measurements is a CNOT). Ursula then measures $\R\A$ (Alice's lab) and Wigner measures $\Ss\B$ (Bob's lab), both agents measure in the $\{\ket{ok/fail}:=\frac{1}{\sqrt{2}}(\ket{00}\mp \ket{11}))\}$ basis
 to obtain the outcomes $u,w\in\{\textup{ok},\textup{fail}\}$. Note that this circuit alone does not allow the superagents to reason about the classical measurement outcome of Alice and Bob as these are modelled purely unitarily, and hence no classical measurement outcomes $a$ and $b$ are identified. }
	\label{fig: FR_ent_circuit}
\end{figure*}

\subsection{Augmented circuit}
We first formulate the entanglement version of the FR protocol within our framework by giving its augmented circuit. A typical circuit associated with this scenario, also found in the previous literature, is shown in \Cref{fig: FR_ent_circuit}. The caption of the figure gives a quick recap of the protocol sufficient to follow this discussion. It is one which models the measurements of the agents Alice and Bob purely unitarily. As these are computational basis measurements, the corresponding unitary is the CNOT. However, this circuit alone does not allow us to calculate the probabilities of Alice and Bob's classical outcomes. In the previous literature (see for instance \cite{Nurgalieva2020}), multiple different circuits are considered for calculating the probabilities from the perspective of each agent. In our framework, we have shown that all such reasoning can in fact be captured with a single circuit|the augmented circuit. This circuit includes a setting $x_i$ for each agent $\A_i$, which when set to $x_i=0$ models their measurement as the unitary $\text{$\matholdcal{M}$}^{\A_i}_{unitary}$ and when set to $x_i=1$ models the measurement as the unitary $\text{$\matholdcal{M}$}^{\A_i}_{unitary}$ followed by a set of projectors associated with the measurement outcome $a_i$.

In the present case, we have four agents. In the full augmented circuit, we would therefore have four setting variables, one for each agent. Moreover, in the FR protocol, Ursula and Wigner announce their classical outcomes $u$ and $w$ in each run of the protocol and halt when they obtain $u=w=\textup{ok}$. Therefore in the full augmented circuit this communication channel between Ursula and Wigner which captures this announcement will also be present. However, for the purpose of our discussion here and to see the resolution of the paradox, it suffices to work with a much simplified augmented circuit, where we fix Ursula's and Wigner's settings to 1 (as they are effectively classical from the perspective of all agents involved)\footnote{We need not consider the case where Ursula and Wigner's measurements are modelled unitarily (setting 0) as this is only the case if there were further super-super agents who measured Ursula and Wigner's labs.} and ignore the explicit communication channel for the announcement as this is a post-processing. Then we only need to assign settings $x_1$ and $x_2$ to the agents Alice and Bob.


We can now write the enlarged set of projectors $\mathtt{\Pi}^A_{x_1}$, and $\mathtt{\Pi}^B_{x_2}$ (c.f. \Cref{eq: proj_settings}) as
\begin{align}
	\label{eq: FR_aug_proj}
	\begin{split}
		\mathtt{\Pi}^A_{0}&:=\{\pi_{0,\perp}^A=\id_{\R\A}\},\\
		\mathtt{\Pi}^A_{1}&:=\{\pi_{1,0}^A=\proj{00}_{\R\A},\pi_{1,1}^A=\proj{11}_{\R\A}\}\\
		\mathtt{\Pi}^B_{0}&:=\{\pi_{0,\perp}^B=\id_{\Ss\B}\},\\
		\mathtt{\Pi}^B_{1}&:=\{\pi_{1,0}^B=\proj{00}_{\Ss\B},\pi_{1,1}^B=\proj{11}_{\Ss\B}\}\\
	\end{split}
\end{align}
These capture the fact that when Alice's setting $x_1=0$, her measurement is modelled as a unitary evolution and the $a$ is set deterministically to the trivial value $\perp$. When $x_1=1$, Alice's measurement is first modelled unitarily and then her measurement outcomes $a=0$ and $a=1$ are identified by the projectors $\pi_{1,0}^A$ and $\pi_{1,1}^A$. The case for Bob is similar.

With this we obtain the (simplified) augmented circuit for the entanglement version of the FR protocol which is illustrated in \Cref{fig: FR_ent_circuit_aug}. Note that here Alice and Bob act at the same time $t=2$ and Ursula and Wigner act at $t=3$, but we could always have the agents' operations occur at different times without affecting the circuit structure, such that this circuit fits within the general form of \Cref{fig: genform_circuit_settings}. We do not do this here for simplicity, but it is easy to see that this transformation would not affect any of the arguments.

\begin{figure*}[t!]
	\centering
	\begin{tikzpicture}
		\draw (-1.5,0)--(4,0);  \draw (-1.5,-1)--(4,-1);  \draw (-1.5,-3)--(4,-3); \draw (-1.5,-4)--(4,-4); \draw [ultra thick,decorate,
		decoration = {calligraphic brace}] (-1.7,-3.2)--(-1.7,-0.8); 
		
		\node at (-2,0) {$\ket{0}_\A$};
		\node at (-2.5,-2) {$
			\ket{\psi}_\textup{RS}$};
		\node at (-2,-4) {$\ket{0}_\B$};
		
		\node at (-2,-6.5) {$t=1$};  \node at (2.5,-6.5) {$t=2$};  \node at (7,-6.5) {$t=3$};
		
		\fill (1.5,-1) circle [radius=3pt]; \draw (1.5,0) circle (3mm); \draw (1.5,-1)--(1.5,0.3);
		\fill (1.5,-3) circle [radius=3pt]; \draw (1.5,-4) circle (3mm); \draw (1.5,-3)--(1.5,-4.3);
		
		\draw[dashed] (1,-1.5) rectangle (2,0.5); \node at (1.5,1.2) {\shortstack{Alice's \\ measurement \\ of $\textup{R}$}};
		
		\draw[dashed] (1,-4.5) rectangle (2,-2.5); \node at (1.5,-5.2) {\shortstack{Bob's\\ measurement \\ of $\textup{S}$}};
		
		\draw (6,-1.5) rectangle node{\shortstack{Ursula's\\measurement}} (8,0.5); \draw[->] (8,-0.5)--(9,-0.5); \node at (10.2,-0.5) {$u\in \{\textup{ok},\textup{fail}\}$};
		\draw (6,-4.5) rectangle node{\shortstack{Wigner's\\measurement}} (8,-2.5);  \draw[->] (8,-3.5)--(9,-3.5); \node at (10.2,-3.5) {$w\in \{\textup{ok},\textup{fail}\}$};

		\draw[blue] (4,-1.5) rectangle node{$\mathtt{\Pi}^A_{x_1}$} (5,1);
		\draw[->, blue] (3.5,1)--(4,0.5);  \draw[->, blue] (5,0.5)--(5.5,1);\node[blue] at (3.3,1.3) {$x_1\in \{0,1\}$}; \node[blue] at (5.7,1.3) {$a\in \{\perp,0,1\}$};

		\draw[blue] (4,-5) rectangle node{$\mathtt{\Pi}^B_{x_2}$} (5,-2.5);
		\draw[->, blue] (3.5,-5)--(4,-4.5);  \draw[->, blue] (5,-4.5)--(5.5,-5);\node[blue] at (3.3,-5.3) {$x_2\in \{0,1\}$}; \node[blue] at (5.7,-5.3) {$b\in \{\perp,0,1\}$};
		
		\draw (5,0)--(6,0);  \draw (5,-1)--(6,-1);  \draw (5,-3)--(6,-3);  \draw (5,-4)--(6,-4);
	\end{tikzpicture}
	
	\caption{Augmented circuit for the entanglement version of the FR protocol. Unlike the circuit of \Cref{fig: FR_ent_circuit}, this circuit allows agents to model the measurements of Alice and Bob as unitary evolutions and at the same time also reason about their classical measurement outcomes. To model Alice unitarily, one must set $x_1=0$, in which case the trivial outcome $a=\perp$ is obtained deterministically and the blue box acts as an identity operation. To reason about Alice's classical outcome, one must set $x_2=1$, in which case the blue box applies the projectors corresponding to the classical outcome $a\in\{0,1\}$. All possible reasoning in this FR set-up can be derived from this circuit under different choices of settings, but these statements can no longer be combined to yield a paradox as explained in the main text. }
	\label{fig: FR_ent_circuit_aug}
\end{figure*}
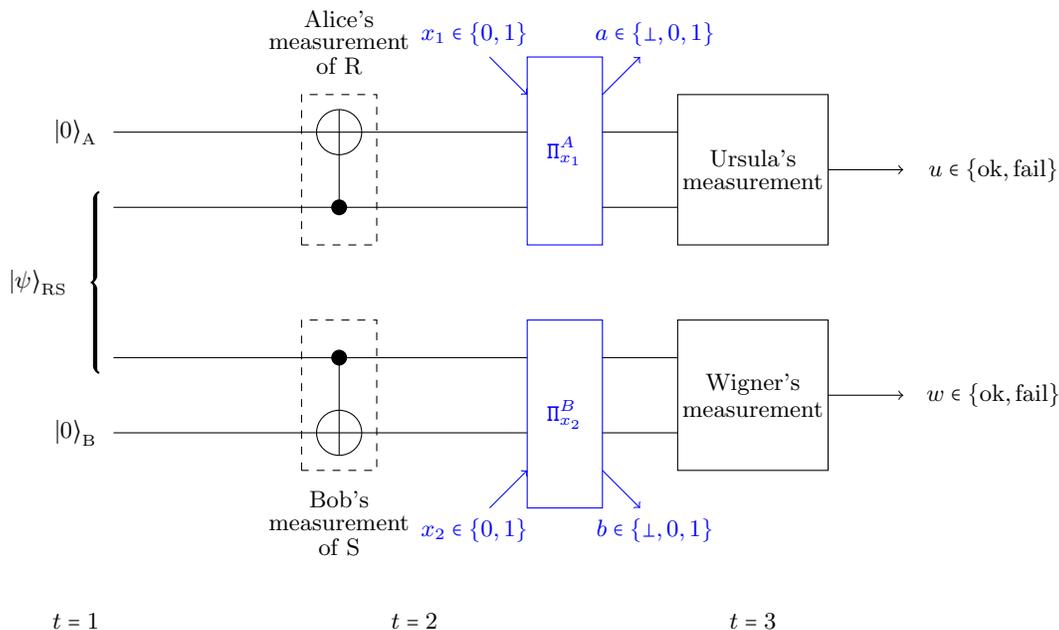

\subsection{Explicit version of the statements that resolve the paradox}
As reviewed in \Cref{sec: FR_ent_review}, the four logical statements involved in the entanglement version of the FR paradox (see \Cref{fig: FR_ent_circuit} for a quick recap of the protocol) are as follows.

\begin{align}
\label{eq: FR_statements}
    \begin{split}
       u=ok &\land w=ok \\
       u=ok &\Rightarrow b=1\\
       b=1 &\Rightarrow a=1\\
       a=1 &\Rightarrow w=fail
    \end{split}
\end{align}
The paradox ensues since the last three statements can be combined using classical logic to yield $u=ok\Rightarrow w=fail$, which contradicts the first statement. 
These statements follow due the following four conventional predictions (\Cref{def: conv_prediction}) of the 4-agent EWFS specified by this protocol, while post-selecting on an experimental run where $u=w=ok$ is obtained by the superagents Ursula and Wigner. The first prediction in the list below guarantees this post-selection will eventually succeed.

\begin{align}
\label{eq: FR_conv_pred}
    \begin{split}
       P_{conv}(u=ok, w=ok)&=\frac{1}{12} \\
       P_{conv}(b=1|u=ok)&=1\\
       P_{conv}(a=1|b=1)&=1\\
       P_{conv}(w=fail|a=1)&=1
    \end{split}
\end{align}

From our main theorem, \Cref{theorem: main}, it follows that these conventional predictions are equivalent to the following four setting conditioned predictions (that can be computed from the augmented circuit of \Cref{fig: FR_ent_circuit_aug}). 

\begin{align}
\label{eq: FR_setting_cond_pred}
    \begin{split}
       P(u=ok, w=ok| (x_1,x_2)=(0,0))&=\frac{1}{12} \\
       P(b=1|u=ok, (x_1,x_2)=(0,1))&=1\\
       P(a=1|b=1, (x_1,x_2)=(1,1))&=1\\
       P(w=fail|a=1, (x_1,x_2)=(0,1))&=1
    \end{split}
\end{align}

The 3 logical statements associated with the last three setting-conditioned predictions, together with the corresponding original 3 logical statements of FR are given in \Cref{table: resolution_ent} for comparison. 

\begin{table}[h!]
	\centering
	\begin{tabular}{c|c}
		original  formulation &  our explicit formulation \\
		\hline
		$u=\textup{ok}\Rightarrow b=1$     &  $(x_1,x_2)=(0,1) \land u=\textup{ok} \Rightarrow b=1$\\
		$b=1\Rightarrow a=1$     &  $(x_1,x_2)=(1,1) \land b=1 \Rightarrow a=1$\\
		$a=1\Rightarrow w=\textup{fail}$     &  $(x_1,x_2)=(1,0) \land a=1 \Rightarrow w=\textup{fail}$\\
	\end{tabular}
	\caption{Explicit versions of the FR statements as given in our framework, which provides a logically consistent resolution to the FR apparent paradox without giving up any of the original assumptions of FR. While the original statements can be chained together to yield $u=\textup{ok} \Rightarrow w=\textup{fail}$ which yields a contradiction along with the fact that $P(u=w=\textup{ok})>0$, our explicit version of the statements cannot be chained together to yield this conclusion even using the axioms of classical logic. \label{table: resolution_ent}}
\end{table}

Then we can immediately see that while the original statements can be chained together using classical logical rules to yield $u=w=\textup{ok}\Rightarrow w=\textup{fail}$ (\Cref{eq: chain4}), the explicit version of the statements obtained in our framework cannot be chained together in the same manner, even under the standard rules of classical logic.


The fact that the probabilities of \Cref{eq: FR_conv_pred} and \Cref{eq: FR_setting_cond_pred} indeed match can seen by explicitly writing out the expression of the conventional prediction using the Born rule, which we show in full detail in \Cref{appendix: ent_FR}. We illustrate this here for some of these cases. Consider the conventional prediction $P_{conv}(u=ok, w=ok)$. This is computed in the circuit of \Cref{fig: FR_ent_circuit} i.e., applying the two CNOT gates to the initial state of $\ket{\psi}_{\R\Ss}$ and memories initialised to $\ket{0}_{\A}$ and $\ket{0}_{\B}$, one obtains $\ket{\psi}_{\R\A\Ss\B}=\frac{1}{\sqrt{3}}(\ket{0000}+\ket{1100}+\ket{1111})_{\R\A\Ss\B}$. The measurements of Ursula and Wigner act on this state. Indeed our augmented circuit of \Cref{fig: FR_ent_circuit_aug} for the case $(x_1,x_2)=(0,0)$ is equivalent to the circuit of \Cref{fig: FR_ent_circuit}. Thus writing our the two predictions, we have

\begin{align}
    \begin{split}
   &P_{conv}(u=ok, w=ok)\\
   =&P(u=ok, w=ok| (x_1,x_2)=(0,0))\\
   =&|\bra{ok}_{\R\A}\otimes\bra{ok}_{\Ss\B}.\ket{\psi}_{\R\A\Ss\B}|^2
    \end{split}
\end{align}

For comparison, consider the conventional prediction $P_{conv}(a=1|b=1)$. This is fully specified by $P_{conv}(a=1,b=1)$, and it will be more illustrative to consider that. This is obtained by applying the computational basis measurement on $\R$ and on $\Ss$ to the initial state $\ket{\psi}_{\R\Ss}$, which is equivalent to applying the $\{\ket{00},\ket{11}\}$ basis measurement to $\ket{\psi}_{\R\A\Ss\B}$. This is exactly what one would get when using the settings $(x_1,x_2)=(1,1)$ in the augmented circuit of \Cref{fig: FR_ent_circuit_aug}. 
We have

\begin{align}
    \begin{split}
     &P_{conv}(a=1,b=1)\\
     =&P(a=1,b=1|(x_1,x_2)=(1,1))\\
     =&|\bra{11}_{\R\A}\otimes\bra{11}_{\Ss\B}.\ket{\psi}_{\R\A\Ss\B}|^2.   
    \end{split}
\end{align}

The equivalence of the remain predictions can be similarly shown, as detailed in \Cref{appendix: ent_FR}.

\begin{remark}[On the role of the projection postulate]
\label{remark: no_collapse}
As seen in \Cref{section: WF_original_review}, Wigner’s Friend Scenarios generally involve an interplay of three aspects of quantum theory: unitary evolution, projection postulate and the Born rule. While all three aspects are present in textbook quantum theory, one may however question the role of the projection postulate in arguments based on measurement probabilities. 

Notice that the apparent FR paradox in the entanglement version discussed in this section, arises through the combination of the logical statements in \Cref{eq: FR_statements} which are fully implied by the conventional predictions of \Cref{eq: FR_conv_pred}. Computing these predictions (which are measurement probabilities) requires applying the Born rule and unitary modelling of agents' measurements, but does not rely on the projection postulate which also specifies a post-measurement state. However, our resolution of the apparent paradox is also entirely at the level of measurement probabilities given by the setting-conditioned predictions \Cref{eq: FR_setting_cond_pred}. 
Therefore, the resolution given in \Cref{table: resolution_ent} also need not invoke the projection postulate.

An important but subtle point here is that even though the setting-conditioned predictions involve the setting 1 (which was described in the augmented circuit as associated with the state update of the projection postulate), one does not require this state update rule for computing the predictions. Nevertheless, the projectors associated with the setting 1 description are \emph{necessary} to identify the measurement outcome and compute its probability via the Born rule (they specify the measurement basis). 

For instance, for the prediction concerning Ursula and Bob's outcomes $u$ and $b$, Alice's measurement is modelled unitarily (setting $x_1=0$) and the basis in which we describe this unitary does not matter for the predictions, however the basis information for Bob (encoded in the projectors $\{\pi_{1,0}^B=\proj{00}_{\Ss\B},\pi_{1,1}^B=\proj{11}_{\Ss\B}\}$ of \Cref{eq: FR_aug_proj} associated with $x_2=1$) is needed for identifying the outcome $b$ and computing its probability. Therefore the setting 1 case need not be thought of as modelling an objective ``collapse'', but can be regarded as encoding knowledge about a measurement outcome and the basis needed to identify that classical record (independently of the post-measurement state). This can be perfectly consistent with a unitary description of the measurement by another agent, for whom the original measurement is regarded as an evolution of a closed system and the classical record is unknown (associated with trivial outcome $\perp$).

This highlights that even versions of the FR argument that do not invoke the projection postulate or associated state update rule can be resolved in a similar manner within our approach, without invoking these assumptions, but by being careful about conditioning on the relevant knowledge used in the reasoning. 
\end{remark}


\subsection{Setting-dependence: a refined interpretation of the FR paradox}\label{sec:The choices of Heisenberg cuts do matter in FR}

In \Cref{sec: I_assump} we identified a new assumption \hyperref[def: I_assump]{$\mathbf{I}$} (setting-independence) and showed that inconsistent quantum predictions in EWFSs only arise when assuming  \hyperref[def: I_assump]{$\mathbf{I}$}  in a situation where predictions do depend on the setting. Analysing this assumption for the FR scenario sheds light on the root cause of such FR paradoxes yielding more refined physical interpretation.

In \Cref{appendix: setting_dep_FR}, we show the setting-dependence of predictions in the FR scenario (i.e., a violation of \hyperref[def: I_assump]{$\mathbf{I}$}), by explicitly computing them. In particular, recall that we recovered the conventional predictions of the FR scenario equivalently as specific setting conditioned predictions in \Cref{eq: FR_setting_cond_pred}. For instance it follows from the detailed analysis of \Cref{appendix: ent_FR} that $P_{conv}(w=\textup{fail}|a=1)=P(w=\textup{fail}|a=1,(x_1,x_2)=(1,0))=1$.  $P(w=\textup{fail}|a=1,(x_1,x_2)=(1,0))$ is indeed setting independent, as one can verify (see \Cref{appendix: setting_dep_FR}) that $P(w=\textup{fail}|a=1,(x_1,x_2)=(1,1))=\frac{1}{2}\neq 1$. Similarly, the setting-dependence of other predictions of the FR scenario can also be verified in our framework. 

Using the general results of \Cref{ssec: general_results} and the above analysis of the FR experiment, we can readily prove the following theorem regarding our assumptions applied to the FR scenario. This applies to both the entanglement version (that was the focus here) and the prepare and measure versions (described in \Cref{ssec: resolution_prep}).

\begin{restatable}{theorem}{FRtheorem}
	\label{theorem: FR}
	There exists a consistent description of the FR protocol (both versions) that satisfies all five assumptions \hyperref[def: Q_assump]{$\mathbf{Q}$}, \hyperref[def: U_assump]{$\mathbf{U}$}, \hyperref[def: assump_C]{$\mathbf{C}$}, \hyperref[def: assump_D]{$\mathbf{D}$} and \hyperref[def: assump_S]{$\mathbf{S}$} but violates \hyperref[def: I_assump]{$\mathbf{I}$} for certain logical setting-conditioned predictions. Furthermore, when simultaneously assuming \hyperref[def: Q_assump]{$\mathbf{Q}$}, \hyperref[def: U_assump]{$\mathbf{U}$}, \hyperref[def: assump_C]{$\mathbf{C}$}, \hyperref[def: assump_D]{$\mathbf{D}$} and \hyperref[def: assump_S]{$\mathbf{S}$} in the FR protocol, additionally imposing \hyperref[def: I_assump]{$\mathbf{I}$} on at least one logical setting-conditioned prediction is a necessary condition for reproducing the apparent FR paradox, while imposing \hyperref[def: I_assump]{$\mathbf{I}$} on all logical setting-conditioned predictions is a sufficient condition for the same.
\end{restatable}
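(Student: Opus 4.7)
The theorem has three parts: (i) there is a consistent description of the FR protocol satisfying $\mathbf{Q}, \mathbf{U}, \mathbf{C}, \mathbf{D}, \mathbf{S}$ in which every logical setting-conditioned prediction used in the FR reasoning is setting-dependent (so $\mathbf{I}$ fails for each); (ii) given $\mathbf{Q}, \mathbf{U}, \mathbf{C}, \mathbf{D}, \mathbf{S}$, imposing $\mathbf{I}$ on at least one such prediction is necessary to reproduce the apparent paradox; (iii) imposing $\mathbf{I}$ on all such predictions is sufficient. The plan is to combine the general results of \cref{ssec: general_results} with the explicit computations of \cref{ssec: resolution_prep}. Consistency is immediate: the FR protocol is a particular LWFS (augmented circuit as in \cref{fig: FR_PM_aug_circuit1}), so \cref{theorem: main} furnishes an augmented description consistent under the five assumptions.

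To establish setting-dependence, I would enumerate the finitely many logical setting-conditioned predictions that appear in FR's reasoning chain and, for each, compare its value across the setting choices $(x_1,x_2)$ for which both the conditioning outcomes and the predicted outcomes are non-trivial. The calculations already performed in \cref{ssec: resolution_prep} supply these comparisons, e.g.\ $P(w=\textup{fail}\mid r=\textup{tails}, (x_1,x_2){=}(1,0))=1$ versus $P(w=\textup{fail}\mid r=\textup{tails}, (x_1,x_2){=}(1,1))=1/2$, and $P(z{=}1/2\mid \overline{w}{=}\overline{\textup{ok}}, (x_1,x_2){=}(0,1))=1$ versus $P(z{=}1/2\mid \overline{w}{=}\overline{\textup{ok}}, (x_1,x_2){=}(1,1))=1/3$, with analogous comparisons for the remaining predictions. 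The physical reason behind this uniform setting-dependence is the post-selection on $(\overline{w},w){=}(\overline{\textup{ok}},\textup{ok})$, which by \cref{eq: prob_rule generalised 2} renormalises every relevant conditional probability by a post-selection success probability that itself depends on the settings (collider bias); I would use this to argue uniformly that no relevant logical prediction is setting-independent, rather than relying solely on case-by-case calculation.

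For the necessity claim, I would invoke \cref{corollary:setting_independence} directly: since FR is an LWFS and every logical setting-conditioned prediction in the FR chain has just been shown to be setting-dependent, any derivation of a violation of $\mathbf{S}$ under $\mathbf{Q}, \mathbf{U}, \mathbf{C}, \mathbf{D}$ must invoke $\mathbf{I}$ on at least one of them. For sufficiency, I would show that if $\mathbf{I}$ is imposed on all logical setting-conditioned predictions, the setting subscripts can be dropped from each of the three correct links in FR's chain, namely $u=w=\textup{ok}\Rightarrow b=1$ (valid under $(x_1,x_2){=}(0,1)$), $b=1\Rightarrow a=1$ (valid under $(1,1)$), and $a=1\Rightarrow w=\textup{fail}$ (valid under $(1,0)$), turning them into the setting-free statements used by FR (cf.\ \cref{table: resolution_ent,table: resolution_prepmeas}). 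These can then be chained via $\mathbf{C}$ along the FR trust graph and $\mathbf{D}$ to yield $u=w=\textup{ok}\Rightarrow w=\textup{fail}$, which together with $P(u=w=\textup{ok})=1/12>0$ from \cref{eq:prob u Ok, w Ok in FR} contradicts $\mathbf{S}$, reproducing the apparent paradox. The main obstacle is the universal quantifier in (i): rather than a single counterexample to $\mathbf{I}$, one must verify setting-dependence for every logical setting-conditioned prediction occurring in FR's reasoning, which I handle by the collider-bias argument above, supplemented by the explicit post-selection calculations of \cref{ssec: resolution_prep}.
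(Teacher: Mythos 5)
Your proposal is correct and follows essentially the same route as the paper's own proof: consistency of the augmented description via \cref{theorem: main}, setting-dependence established by the explicit post-selection computations of \cref{ssec: resolution_prep} (e.g.\ $P(w=\textup{fail}\mid r=\textup{tails},(1,0))=1$ versus $1/2$ under $(1,1)$), necessity of $\mathbf{I}$ via \cref{corollary:setting_independence}, and sufficiency by dropping the setting labels so that FR's three statements can again be chained with $\mathbf{C}$ and $\mathbf{D}$ into a violation of $\mathbf{S}$ as in \cref{table: resolution_prepmeas}. The only difference is cosmetic: the paper discharges the universal quantifier in the first claim by asserting that the three predictions corresponding to $\overline{\textup{F}}^{n:02}$, $\textup{F}^{n:12}$ and $\overline{\textup{W}}^{n:22}$ are the \emph{only} logical setting-conditioned predictions in the protocol and checking each, whereas you additionally offer a uniform collider-bias argument to cover them all at once.
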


{\bf A refined interpretation} FR have claimed that their assumptions $\textup{Q}$, $\textup{U}$, $\textup{C}$, $\textup{D}$ and $\textup{S}$ lead to a contradiction in a physical theory that reproduces the quantum predictions of the FR scenario, while we have formalised a version \hyperref[def: Q_assump]{$\mathbf{Q}$}, \hyperref[def: U_assump]{$\mathbf{U}$}, \hyperref[def: assump_C]{$\mathbf{C}$}, \hyperref[def: assump_D]{$\mathbf{D}$} and \hyperref[def: assump_S]{$\mathbf{S}$} of these, showing that they can always be applied consistency even while reproducing the FR predictions. This calls for  closer examination of FR's claim to understand this apparent mismatch. FR's work suggests that their assumptions should be interpreted as capturing the validity of unitary ``quantum theory'' and ``classical logic'' applied to the knowledge of agents. However, the assumptions are not sufficiently rigorously formalised, due to ambiguities in defining ``agents' knowledge'', and especially how agents model measurements in each statement they make.

Specifically our results show that if the FR assumptions are interpreted as just capturing the validity of quantum theory and of classical logic, then FR's claimed theorem would be wrong, as we have shown rigorously the general consistency of these assumptions within our framework by developing an explicit consistent model for reasoning in quantum theory where all these assumptions are satisfied.


In order for FR's theorem to be correct,  FR's assumptions should be interpreted as imposing a version of quantum theory that ignores choices of Heisenberg cuts (as allowed by our \hyperref[def: I_assump]{$\mathbf{I}$} assumption). 
This distinction necessitates recognizing two versions of quantum theory in the context of EWFSs: (1) \emph{Heisenberg cut independent} and (2) \emph{Heisenberg cut dependent} versions.

In essence, FR's result can be interpreted as revealing a contradiction between version (1) of quantum theory and classical logic within a specific EWFS, while our findings establish the general consistency between version (2) of quantum theory and classical logic across all EWFSs. Then the two sets of results are mutually consistent.

Moreover, the violation of \hyperref[def: I_assump]{$\mathbf{I}$}—the dependence of predictions on Heisenberg cuts in EWFSs—is not unexpected once the concept of such cuts is formalised in terms of different channels describing a measurement. Not only is the \hyperref[def: I_assump]{$\mathbf{I}$} assumption violated in the FR scenario as shown here, this is also the case in Wigner's original thought experiment. Indeed, Wigner clearly points to this effect in the original paper where the thought experiment was introduced, although evidently not in the language of the settings we use here. As we can see from the review of Wigner's experiment in \Cref{section: WF_original_review}, the core message is that the ambiguity in how a measurement is modelled, in light of the unitarity vs projection postulates (which in our framework is labeled by the settings), does have empirical consequences in Wigner's scenario.

Explicitly, recall that the two evolutions of an initial state $\sqrt{\frac{1}{2}}(\ket{0}+\ket{1})_{\Ss}$ measured by an agent Alice in the computational basis, lead to the following final states of her system $\Ss$ and memory $\A$. Here we label the two cases with the corresponding settings of our framework, where $x$ denotes the setting of Alice's measurement.

\begin{align}
	\begin{split}
		\sqrt{\frac{1}{2}}(\ket{0}+\ket{1})_{\Ss} &\xrightarrow[]{x=0}  \sqrt{\frac{1}{2}}(\ket{00}+\ket{11})_{\Ss\M},\\
		\sqrt{\frac{1}{2}}(\ket{0}+\ket{1})_{\Ss} &\xrightarrow[]{x=1}  \frac{1}{2}(\ket{00}\bra{00}+\ket{11}\bra{11})_{\Ss\M}.
	\end{split}
\end{align}

If the superagent Wigner now measures $\Ss\A$ in the $\ket{\textup{ok}\backslash \textup{fail}}:=\{\frac{1}{\sqrt{2}}(\ket{00}\mp \ket{11})\}$ basis to obtain the outcome $w$, clearly, we have $P(w=\textup{ok}|x=0)=0$ and $P(b=\textup{ok}|x=1)>0$. Therefore \hyperref[def: I_assump]{$\mathbf{I}$} is violated, and if we nevertheless ignore the settings, we obtain an apparent paradox with $P(w=\textup{ok})=0$ and $P(b=\textup{ok})>0$.

This highlights that in EWFSs, when assuming the universal validity of unitary quantum theory, it is natural to consider version (2) of quantum theory, which incorporates the Heisenberg cut dependence of predictions. And we have shown that this version of quantum theory, appropriately formalised, is perfectly consistent in all EWFSs. Both Wigner's original result as well as FR's result can be regarded as a cautionary note on the dangers of insisting to use version (1) of quantum theory in such EWFSs. However, FR's version arrives at the apparent contradiction while only requiring agents performing compatible measurements to reason about each other in each statement, while the apparent paradox obtained as above in Wigner's experiment requires a super-agent (Bob) to issue statements about an agent (Alice) who performs an incompatible measurement.

{\bf Comment on absoluteness of events} Although FR’s arguments center of agents’ reasoning, our analysis here takes a step back and focuses on the more fundamental aspect of the predictions of quantum theory for the scenario. This allows us the draw insights on the failure of an absolute notion of events here, and its relation to the settings we have introduced in this work. 

Absoluteness of observed events (AoE) entails that the predictions of the scenario can be derived from a single joint probability distribution on the observed (non-trivial) outcomes of all agents. 
However, the quantum predictions of the FR scenario given in \Cref{eq: FR_conv_pred} are not compatible with a single joint distribution $P(u,w,a,b)$ on the observed outcomes of all four agents.

We have shown that these conventional predictions of \Cref{eq: FR_conv_pred} are equivalent to the setting-conditioned predictions of \Cref{eq: FR_setting_cond_pred}. The setting choices are present but not made explicit in the conventional representation.

Once we account for the setting-dependence of these predictions using their explicit form given in \Cref{eq: FR_setting_cond_pred}, their incompatibility with a single, well-defined joint distribution $P(u,w,a,b)$ independent of any settings, becomes immediate. This is because those predictions arise from different conditional probability distributions but by ignoring the conditioning information (on which the prediction depends), therefore whenever there is such setting-dependence in a scenario (violation of \hyperref[def: I_assump]{$\mathbf{I}$}), we cannot expect AoE to hold for the corresponding conventional predictions of that scenario. 

Due to space considerations, we leave a detailed discussion of other no-go results for EWFSs relating to the absoluteness of events \cite{Brukner2018, Bong2020}, particularly the Local-Friendliness (LF) theorem \cite{Bong2020}, and the relation between AoE and \hyperref[def: I_assump]{$\mathbf{I}$}  to a follow-up paper \cite{LF_Vilasini_Woods}. However, we note that our \hyperref[def: I_assump]{$\mathbf{I}$} assumption plays a very different role in the FR vs LF scenarios. While we have shown that the violation of \hyperref[def: I_assump]{$\mathbf{I}$} is sufficient to fully evade the conclusion of FR's paper that ``Quantum theory cannot consistently justify the use of itself'', the violation of \hyperref[def: I_assump]{$\mathbf{I}$} in the LF scenario cannot be used to evade their conclusions, rather it sheds deeper light on the structure and meaning of their absoluteness of events assumption.

\section{Emergence of absolute measurement events}
\label{sec: standard_QT}

In this section, we address how our formalism recovers the perceived objectivity of measurement outcomes and the standard predictions of quantum theory in present real-world experiments, even though the formalism can be used in a fundamentally relational manner for general EWFSs where absoluteness of events may not hold.

In Wigner's Friend Scenarios, relational approaches typically propose to avoid paradoxes by demanding that measurement outcomes are to be defined relative to an agent, a context, a world or some other new concept introduced within the framework (see also the discussions in \cite{Frauchiger2018, Nurgalieva2018, Nurgalieva2020}). However, this leaves open the crucial question of how one can recover predictions of realistic quantum experiments where we perceive measurement outcomes and probabilities to be non-relational and objective. 
To address this question, we develop criteria to distinguish genuinely Wigner’s Friend type experiments from standard quantum scenarios, when agents do not measure each other's memories/labs in a non-trivial manner (or more colloquially, when they do not ``Hadamard each others' brains''). 

Recall that the memory $\M_i$ of each agent $\A_i$ plays the role of ``their entire lab except the system that they measure'' i.e., the quantum system $\mathtt{S}_i$ which $\A_i$ measures together with their memory $\M_i$ constitute an idealised model of $\A_i$'s lab. We now proceed to formally define what is meant by ``acting non-trivially on an agent's memory''. 

Here it is important to note that even in everyday scenarios, agents do act on the memories of each other in the sense that an agent can consult their memory, which stores a classical outcome, and communicate it to another agent. This may influence the operations and reasoning process of the second agent. Therefore, we need a definition that is not too restrictive to forbid this kind of ``trivial'' or ``standard'' way of acting on each others' memories, while still strong enough to identify ``non-trivial'' or ``non-standard'' ways in which the operation performed by one agent in a Wigner's Friend Scenario can act on the memory/lab of another agent.

We provide such a formal definition, and then show that indeed the settings of the augmented circuit (which are the only perspectival/relational part of our general framework) can be safely dropped in standard quantum experiments while preserving the predictions of the augmented circuit. 


\subsection{Causal criteria for superagency: distinguishing standard and genuinely Wigner’s Friend scenarios}
\label{sec:superagent}

We apply the notions of causal structure and directed paths introduced in \Cref{definition: causal_str} and \Cref{def: dir_path} to establish a criterion distinguishing when one agent does not act as a superagent to another in an EWFS. This criterion helps differentiate parts of a general EWFS as standard quantum sectors as opposed to genuine Wigner's Friend experiments.

We term this concept a \emph{non-superagent structure} ($n\matholdcal{SA}$). The main idea is that for an agent $\A_j$ not to act non-trivially on the memory of another agent $\A_i$, it suffices to know either that $\A_j$ does not act after $\A_i$ in the causal structure (in which case they do not act on $\A_i$'s memory at all\footnote{Keep in mind that each agent in our formalism is associated with a single time step, and a physical agent acting at many time steps would be modelled as multiple agents acting here, thus one agent has to act later in time than another in order to act on the latter's memory.}), or that all operations in the augmented circuit occurring after $\A_i$'s measurement, including $\A_j$'s operations, can be simulated by equivalent operations that act trivially on $\A_i$'s memory. 

A simple example illustrates this concept, showing that it does not exclude scenarios where agents may communicate different information based on the outcomes stored in their memories. The post-measurement state of an agent's memory $M_i$ and measured system $S_i$ after a measurement $\matholdcal{M}^{\A_i}$ is symmetric in the exchange of $M_i$ and $S_i$ for both setting choices $x_i=0$ and $x_i=1$, as shown in \Cref{eq:M_unitary_gen} and \Cref{eq:M_mixture_gen}. This symmetry exists because the memory is perfectly correlated with the system in the measurement basis, obtained by coherently or incoherently copying (depending on the setting) the system state in that basis. Therefore, any scenario where an agent prepares a new state for another agent based on the state stored in their memory can be perfectly mimicked by an equivalent operation that prepares the new state based on the state of the system, acting trivially (as an identity) on the memory.

However, a scenario where Wigner performs a Hadamard operation on a Friend's brain, or undoes a Friend's measurement, involves a non-trivial joint operation on the memory and system that cannot be emulated by an operation acting on the system alone. Therefore, when $\A_i$ acts before $\A_j$ in the causal order, the key property distinguishing whether or not $\A_j$ acts as a superagent to $\A_i$ is whether $\A_j$'s operations necessarily act jointly on $\A_i$'s lab (system and memory). With these physical intuitions in mind, we provide the following technical definitions.

\begin{definition}[Operationally equivalent EWFSs]
\label{definition: operational_equivalence}
We say that two $N$-agent EWFSs involving sets $\{\A_1',...,\A_N'\}$ and $\{\A_1,...,\A_N\}$ of agents are operational equivalent if and only if the following hold
\begin{itemize}
\item There is one-to-one identification between the systems $\mathtt{S}'=\{\Ss'_1,...,\Ss'_m\}$ and $\mathtt{S}=\{\Ss_1,...,\Ss_m\}$, agents $\{\A_1',...,\A_N'\}$ and $\{\A_1,...,\A_N\}$, memories $\{\M_1',...,\M_N'\}$ and $\{\M_1,...,\M_N\}$, measurements $\{\text{$\matholdcal{M}$}^{\A_1'},...,\text{$\matholdcal{M}$}^{\A_N'}\}$ and $\{\text{$\matholdcal{M}$}^{\A_1},...,\text{$\matholdcal{M}$}^{\A_N}\}$, subsets $\mathtt{S}'_i\subseteq \mathtt{S}'\cup \mathtt{M}'\backslash \{\M'_i\}$ and $\mathtt{S}_i\subseteq \mathtt{S}'\cup \mathtt{M}\backslash \{\M_i\}$ of systems on which each measurement acts non-trivially, and sets of operations $\{\matholdcal{E}_1',...,\matholdcal{E}_N\}$ and $\{\matholdcal{E}_1,...,\matholdcal{E}_N\}$, with outcome sets $\mathtt{O}_i'$ and $\mathtt{O}_i$ being equivalent.
\item The augmented circuits of the two EWFS yield the same setting-conditioned predictions, i.e., for all disjoint subsets $\vec{a}_j$ and $\vec{a}_l$ of outcomes in one EWFS and corresponding subsets $\vec{a}'_j$ and $\vec{a}'_l$ of outcomes in the other, as well as all settings $\vec{x}$ in one and corresponding settings $\vec{x}'$ in the other, we have

\end{itemize}
\begin{align}
    \begin{split}
    \label{eq: operational_equivalence}   &P(\vec{a}_j=\vec{\aaa_j}|\vec{a}_l=\vec{\aaa}_l,\vec{x}=\vec{\xxx})\\=&P(\vec{a}'_j=\vec{\aaa_j}|\vec{a}'_l=\vec{\aaa}_l,\vec{x}'=\vec{\xxx}).
    \end{split}
\end{align}    

   
\end{definition}

\begin{definition}[Non-superagent structure]
\label{definition: nSA}

We say that an EWFS involving a set $\{A_1,...,A_N\}$ of agents respects a non-superagent structure $n\matholdcal{SA}$ whose elements are pairs $(\A_i,\A_j)$ of agents, if the given EWFS is operationally equivalent to another EWFS involving a set $\{A'_1,...,A'_N\}$ of agents, such that for all  $(\A_i,\A_j)\in n\matholdcal{SA}$ the following conditions hold.
\begin{enumerate}
    \item If $i=j$, then $\matholdcal{E}'_i$ acts trivially on the memory $M'_i$.
    \item If $i\neq j$, then one of the following holds
    \begin{itemize}
        \item $\A_i'\not\prec \A_j'$
        \item If $\A_i'\prec \A_j'$, then $\matholdcal{E}'_i$ acts trivially on the memory $M_i'$ and $\A_i'\not\prec^{\M'_i}\A_j'$
    \end{itemize}
If $(\A_i,\A_j)\in n\matholdcal{SA}$, we say that $\A_j$ does not act on the memory of $\A_i$ or does not act as a superagent to $\A_i$.
\end{enumerate} 
\end{definition}

Note that the $n\matholdcal{SA}$ of an EWFS is also a physical, objective and perspective-independent property of the protocol. Although the notion of operational equivalence refers to settings, the property is in fact independent of settings as it must hold for all settings.

For instance, in Wigner's original experiment, where the agent Wigner $\W$ can measure the lab of the Friend $\F$ in the Bell basis while the Friend only acts on some quantum system (that does not include Wigner's memory/lab), applying this definition, we find that $(\F,\W)\not\in n\matholdcal{SA}$ and $(\W,\F)\in n\matholdcal{SA}$. This indicates that $\W$ can act as a superagent to $\F$ but not vice-versa. More generally, if another agent Ursula $\U$ can ``ask'' Wigner about his observed measurement outcomes but does not have non-trivial quantum control over Wigner's full lab, we can simulate Ursula's operation of ``asking Wigner about his outcome" as an operation that acts directly on Wigner's system (Friend's lab) and not on his memory (as discussed earlier in this section). This simulation would produce the same setting-conditioned predictions in they augmented circuit as the original scenario.

Therefore $(\W,\U), (\U,\W)$ $\in n\matholdcal{SA}$ while $(\F,\U)\not\in n\matholdcal{SA}$ and $(\W,\U)\in n\matholdcal{SA}$, and of course $(\F,\F), (\W,\W), (\U,\U) \in n\matholdcal{SA}$ since no agent acts on their own memory through the channel $\matholdcal{E}_i$ that they implement after their measurement.

In the (entanglement version of the) FR protocol reviewed in \Cref{sec: FR_ent_review}, we have $n\matholdcal{SA}=\{(\A,\B),$ $(\B,\A),$ $(\U,\W),$ $(\W,\U),$ $(\U,\B),$ $(\B,\U),$ $(\W,\A),$ $(\A,\W),$ $(\U,\A),$ $(\W,\B),$ $(\A,\A),$ $(\B,\B),$ $(\U,\U),$ $(\W,\W)\}$, but this is not a standard quantum experiment as the pairs $(\A,\U)$ and $(\B,\W)$ don't appear in $n\matholdcal{SA}$, the latter agent acts as a superagent to the former. The $n\matholdcal{SA}$ of the prepare and measure version of the FR protocol is identical.

We are now ready to formally define what we mean by a standard quantum experiment.


\begin{definition}[Standard quantum scenario]
\label{def:std_q_exp}
    An EWFS with $N$ agents $\{\A_1,...,\A_N\}$ is said to correspond to a standard quantum scenario if $(\A_i,\A_j)\in n\matholdcal{SA}$ for all $\A_i$, $\A_j \in \{\A_1,...,\A_N\}$. In case this holds for a subset of the first $k$ agents $\{\A_1,...,\A_k\}$ (who act at time steps $t_1<...<t_k$) in the context of a larger $N$-agent EWFS we call $\{\A_1,...,\A_k\}$ a standard quantum sector of the EWFS. Otherwise we call it a non-standard scenario or sector accordingly.
\end{definition}

For instance, in FR's protocol, the agents Alice and Bob who act first form a standard quantum sector, but once we include the superagents Ursula and Wigner, the sector is no longer standard.

{\bf Non-standardness as a signature of genuine Wigner’s Friend-ness?} We have provided a concrete definition of standard quantum scenarios among a general class of EWFSs, based on the operational causal structure of the scenario. An interesting question is whether scenarios that are non-standard according to this definition can be considered as exhibiting a genuinely Wigner's Friend-type aspect. One intuition in favour of this is that by definition, non-standardness captures that there is at least one pair of agents such that one (say $\A_j$) acts on the memory of another (say $\A_i$) in a non-trivial manner that cannot be regarded as $\A_j$ simply ``asks'' $\A_i$ their outcome (for the latter can be simulated in an operationally equivalent scenario with trivial action on memory, as shown before). However, this point needs to be further investigated both at a conceptual and technical level to make conclusive statements, after all there can be different equally well-motivated criteria for ``genuineness'' of a non-classical resource, as the vast literature on quantum non-locality and entanglement highlights. 

Our formalism provides a first consistent, general and fully formal platform for formulating and investigating such questions for EWFS. This gives the potential to understand from causal principles, the quantum resource associated with Wigner's Friend type no-go results that fundamentally distinguish them from existing quantum no-go results where agents are not treated quantum mechanically. We leave this for future work.

\subsection{Recovering Heisenberg cut independence in standard quantum experiments}

In the previous sections of the paper, we have shown that the choice of settings (which formalise Heisenberg cuts in our framework) do affect the empirical predictions in Wigner's original scenario as well as its extensions such as FR and LF. That is, the \hyperref[def: I_assump]{$\mathbf{I}$} assumption (setting-independence) is violated here. This differs from our standard intuitions and usage of quantum theory to describe realistic experiments, where we do not have to consider any such settings or Heisenberg cuts, and where observed outcomes appear to be objective records independent of any such concepts.

In order to show that our framework correctly reproduces the known predictions and observations of standard quantum experiments conducted so far (i.e., where agents do not have full quantum control over each other's labs/memories), we must show that the setting-variables can be safely dropped from the predictions and the augmented circuit when analysing such experiments. The following results formalise this intuition concretely using our definition of standard quantum scenarios. At a foundational level, this will shed light on how the perceived objectivity or non-relationalism of observed measurement events emerges within standard quantum sectors.


\begin{restatable}[Non-action on memory and setting-independence]{theorem}{SettingIndep}
\label{theorem: setting_independence}
Consider an EWFS and a subset $\A_{\matholdcal{K}}$ of agents therein. Suppose that $\A_i\not\in\A_{\matholdcal{K}}$  is another agent in the EWFS such that no agent in $\A_{\matholdcal{K}}$ acts as a superagent to $\A_i$  i.e., $(\A_i,\A_k)\in n\matholdcal{SA}$  $\forall A_k\in A_{\matholdcal{K}}$. Then for every partition $A_{\matholdcal{K}}=\{\A_{j_1},...,\A_{j_p}\}\cup \{\A_{l_1},...,\A_{l_q}\}$ of $\A_{\matholdcal{K}}$, the setting-conditioned prediction $P(\vec{a}_j=\vec{\aaa}_j|\vec{a}_l=\vec{\aaa}_l,\vec{x}=\vec{\xxx})$ is independent of the setting $x_i$ that is,

\begin{align}
    \begin{split}
    \label{eq: setting_indep}
   &P(\vec{\aaa}_j|\vec{\aaa}_l,(\xxx_1,...,\xxx_N))=\\   &P(\vec{\aaa}_j|\vec{\aaa}_l,(\xxx_1,...\xxx_{i-1},\xxx_{i+1}...,\xxx_N))  .
    \end{split}
\end{align}

Recall that this expression is equivalent to the conditional independence given in \Cref{eq: causality_indep}.
\end{restatable}

A proof of the above theorem can be found in \Cref{appendix: proofs}.

Then, as a corollary of \Cref{theorem: setting_independence}, we can immediately recover the full setting-independence of predictions in standard quantum experiments.
\begin{corollary}[Full setting-independence in standard quantum theory]
\label{corollary: std_QT_setting_indep}
    In any EWFS corresponding to a standard quantum scenario, every non-trivial setting-conditioned prediction, is setting indepndent. Formally, for any disjoint sets $\vec{a}_j$ and $\vec{a}_l$ of outcomes in the EWFS, $P(\vec{a}_j=\vec{\aaa}_j|\vec{a}_l=\vec{\aaa}_l, \vec{x}=\vec{\xxx})$  is independent of settings $x_i$ for all $i\not\in\{j_1,\dots,j_p,l_1,\dots,l_q\}:= \matholdcal{JL}$ i.e., \Cref{eq: setting_indep} holds for all such $i$. 
   Specifically, in such standard scenarios, all non-trivial predictions i.e., those where $a_i\neq \perp$ for all $i\in \matholdcal{JL}$, can equivalently be expressed in a fully setting-independent manner as given below. 
    \begin{align}
    \label{eq: std_QT_corollary}
        \begin{split}
&P(\vec{a}_j=\vec{\aaa}_j|\vec{a}_l=\vec{\aaa}_l, \vec{x}=\vec{\xxx}) \\=& P(\vec{a}_j=\vec{\aaa}_j|\vec{a}_l=\vec{\aaa}_l, x_i=1, \forall i\in \matholdcal{JL})\\:=&P(\vec{a}_j=\vec{\aaa}_j|\vec{a}_l=\vec{\aaa}_l). 
        \end{split}
    \end{align}

\end{corollary}

Going beyond predictions, and to the underlying circuit representation, we expect that standard quantum experiments involving some measurements and channels, can be represented in terms of quantum circuits with no ambiguity in how a measurement is modelled (i.e., circuits with no setting variables for measurements). Here, two equivalent types of circuit representations are possible, which are both commonly used within the standard quantum computing paradigm: we can either model each measurement as acting only on the measured system and yielding a (non-trivial) classical outcome at the time of the measurement, or we can model each measurement as a unitary interaction between a system and ancilla at the time of measurement and probabilities can be extracted by measuring all the ancillas at a later time at the end of the experiment. 

 We leave a formal definition of these to \Cref{appendix: standard_circuits}, where we prove that for EWFSs corresponding to standard quantum scenarios, the augmented circuit can equivalently be reduced to either of these expected forms. These two types of standard quantum circuits are illustrated in \Cref{fig:circuit_sys_form} and \Cref{fig:circuit_sys+anc_form} in the same appendix.

\section{Discussions}

\subsection{Sound scientific reasoning in EWFS and analogies to classical multi-agent reasoning}
\label{sec: reasoning_rules}

The results of \Cref{sec:gen_framework} guarantee that our formalism allows quantum agents to make predictions and reason consistently about physical experiments and each other's knowledge, even when unitary quantum theory is universally valid.

Here, we illustrate a general paradigm for scientific reasoning indicated by our results and discuss their wider scope, contrasting the quantum and classical aspects. Specifically, while there are potentially genuine quantum aspects related to agents' reasoning in EWFS, by formalising EWFSs as done here, all considerations regarding the consistency of agents' reasoning can be reduced to analogous issues arising in classical multi-agent reasoning. This allows to extend the scope of our proposal to more general multi-agent scenarios by considering how analogous generalisations would work in the classical case.

{\bf Genuinely quantum and relational aspects} We have seen a concrete rule for choosing settings to make predictions in universal quantum theory (\Cref{theorem: main}), equivalent to conventional quantum predictions in the EWFS literature. This default rule models the maximum number of measurements as pure unitaries. It is implied by the choice of prediction one wishes to compute: if we wish to compute the probability of measurement outcomes $a_1$ and $a_2$, the settings $x_1$ and $x_2$ of the corresponding measurements are $x_1=x_2=1$, while the settings for all other measurements are set to 0. This does not allude to agents or depend on them. However, when agents incorporate this rule while reasoning about each other's knowledge, the setting choices depend on the agent who is reasoning and the agent being reasoned about.

For example, if Alice reasons about Bob's outcome $b$ based on her outcome $a$, she will use a prediction $P(b|a,\vec{x})$ choosing $x_A=x_B=1$ and $x_C=0$ for the measurement of Charlie. If Alice reasons about Charlie instead of Bob then the corresponding prediction $P(c|a,\vec{x})$ will have $x_A=x_C=1$ and $x_B=0$. Within the quantum formalism, we can interpret our settings as choices of Heisenberg cuts (this interpretation need not hold in hidden variable models for reproducing quantum predictions, see \Cref{appendix:relation_prev_works}). Then, the default rule captures subjective choices of Heisenberg cuts: each agent places themselves and the agent whose classical outcome they are reasoning about on the ``classical side" of the cut, and everyone else on the ``quantum side" of the cut. This allows the notion of an observed event (classical outcome) to be subjective and relative to a choice of cut.


This type of relationalism, non-absoluteness of observed events and the general setting-dependence (or Heisenberg-cut dependence) of predictions in EWFS are arguably non-classical aspects. This is because classical theories lack a non-trivial concept of Heisenberg cuts, generally have no ambiguities in how a measurement has to be fundamentally modelled. 

Moreover, for one agent $\A_j$ to act as a super-agent to another agent $\A_i$ (i.e., $(\A_i,\A_j)\not\in n\matholdcal{SA}$), $\A_j$ must perform a non-trivial joint operation on the lab of $\A_i$. If both agents measure in the same basis, we have already seen in \cref{sec: standard_QT} that $(\A_i,\A_j)\in n\matholdcal{SA}$ as the operation can be simulated by acting on part of the lab, but this is not the case when $\A_j$ measures $\A_i$'s lab in a complementary superposition basis. This suggests that some notion of measurement complementarity (not typically a classical feature) may be necessary for having a non-trivial WF-like scenario. There is scope for future work on identifying and characterising non-classical resources in EWFS as discussed in \Cref{sec:superagent} and \Cref{sec:Discussion}, which would be needed for making these observations fully rigorous.

{\bf Aspects reducible to classical issues} Our formalisation of Heisenberg cuts in quantum theory as different choices of channels in a circuit implies, based on classical probability theory and logic, that one must generally condition on these choices in their reasoning unless (1) it is known that all agents in the scenario employ the same fixed choices, or (2) it has been established that these choices do not matter.

Suppose Alice and Bob who are reasoning about the output of a physical classical channel acting on an input state $\rho$ that they previously agreed upon, but they assume different noise models $N_A$ and $N_B$ for the channel. Then they will generally arrive at distinct output states/probabilities, and their conclusions can seem inconsistent if they do not communicate the conditioning on the assumed noise model. One can obtain logical paradoxes akin to FR in such classical scenarios (see \Cref{appendix: classical_example} for an explicit example). Note that in this example, the agents need not communicate the initial state $\rho$ as it is common knowledge. 

If Alice and Bob perform their analysis with the same noise model but one of them before and the other after lunch, there would be no inconsistencies in their conclusions even if they forget to mention whether they had eaten, as the predictions are independent of this parameter. Finally, if Alice observes that the channel's output differs from her prediction, this can falsify her assumption regarding the noise model $N_A$. If she believes the experiment was performed correctly, she would update her knowledge of the noise model based on a closer examination of the experimental results.

{\bf General reasoning paradigm} These classical examples highlight a general paradigm for sound scientific reasoning that ensures agents do not arrive at inconsistencies:
\begin{enumerate}
    \item Identify fixed common knowledge vs variable parameters
    \item Drop redundant parameters
    \item Fix a choice of remaining parameters and condition on them
    \item Check if choices are falsified by observed data and update them if needed
\end{enumerate}

Such careful conditioning on all relevant variable parameters ensures that the set of statements is consistent according to \Cref{def: consistency_pred}, ensuring logical and probabilistic consistency.

All these aspects are incorporated in our formalism and results. We take the protocol description of an EWFS (\Cref{def:LWFS}) to be the common knowledge of all agents and construct the augmented circuit from this knowledge alone. The only variable parameters are the settings $\vec{x}$. Our results (\Cref{theorem: main} and \Cref{theorem: setting_independence}) provide concrete criteria for identifying redundant settings based on the operational causal structure and non-super agent structure, both of which are objective properties of the protocol. Our default rule then fixes a choice of the remaining settings for every prediction/statement and explicitly conditions on them.

Finally, the discussions throughout our paper (see also \Cref{sec: setting_interpretation} and \Cref{appendix:relation_prev_works}) highlight how one can falsify setting conditioned predictions through experimental data. Our default rule is based on the premise of universal validity of quantum theory, but a yet-undiscovered physical mechanism for objective collapse may falsify these predictions (obtained through the default rule) in a future experiment and one would then have to update the setting choices given by this rule (changing $x_i=0$ to $x_i=1$) for certain settings, if the experimental demonstration of the falsification is deemed “loophole free” and sufficiently convincing. 

{\bf Efficiency of reasoning} Our results guarantee that the reasoning rules we propose for quantum agents remain consistent and respect causality principles in general EWFSs. We now discuss the computational efficiency of this reasoning process, aiming to illustrate that the complexity of applying our reasoning rules is comparable to standard quantum or classical reasoning under similar assumptions.

First, computing setting-conditioned predictions  in any EWFS involves applying the Born rule to a circuit with the same number of gates as the physical operations in the protocol, making it as complex as standard quantum reasoning in standard quantum scenarios. The additional rules for processing and assigning settings do not introduce any inefficiencies, as they simply involve reading off independences from the protocol's causal structure and non-super agent structure, which can be largely pre-computed.\footnote{The irrelevance of certain settings for specific predictions is derived from the protocol's structural properties alone and can naturally be included in the protocol description initially given to agents.}

The only potential extra resource cost is in communicating and storing non-redundant settings in EWFS. However, this cost is minimal since the vector of non-redundant settings has binary entries and dimensions typically smaller than $N$, the number of agents. Moreover, as illustrated before, even in classical examples, under similar ambiguities in how a (classical) channel is modelled, consistency necessitates that agents communicate and keep track of the information that removes this ambiguity. Further, we have also shown that the no additional rules are needed for combining predictions/statements about measurement outcomes in quantum EWFS once the settings are accounted for, just classical probability theory and axioms of classical logic. 

Finally, we have shown that reasoning in our framework respects the \hyperref[def: assump_C]{$\mathbf{C}$} assumption which formalised FR’s $\textup{C}$ rule for setting-conditioned predictions. This means that when a setting-conditioned prediction or statement is communicated to agent $\A_i$ by another agent $\A_j$, $\A_i$ does not need to recalculate the prediction by simulating $\A_j$’s reasoning process, and can directly inherit this knowledge. Notice that despite this lack of agent-labels on predictions, the agent-dependence and relationalism are incorporated through the fact that our default rule allows different agents to choose different settings in their reasoning.

{\bf Scope of our resolution}
In this work, we have taken the protocol description to be in the common knowledge of all the agents for simplicity and to illustrate the core features of Wigner's Friend scenarios that extend beyond more standard experiments. However, situating our framework within the broader reasoning paradigm described above, generalizing our consistency results to partial knowledge scenarios is entirely analogous to generalizing a classical circuit framework (for reasoning in a classical world) from full knowledge to partial knowledge scenarios.

In both classical and quantum cases, allowing agents only partial knowledge about the protocol increases the number of variable parameters that need to be conditioned on to avoid inconsistencies, as common knowledge is reduced. In the earlier classical example, if Alice and Bob had not agreed on the input state $\rho$ to the channel, or if they are unsure whether the other person knows this state, they would need to condition on their choice of state in their communicated statements in order to avoid paradoxes.

Our quantum reasoning rules for the complete knowledge case can be straightforwardly generalised to partial knowledge scenarios using the ideas described here. We consider this generalisation to be entirely analogous to how it would be implemented in a purely classical theory of multi-agent reasoning, and therefore do not detail it here. 

{\bf Status of proposed reasoning challenges} With these contributions, we believe that challenges related to the logical reasoning of quantum agents in EWFS (including partial knowledge scenarios) are fully addressed to the same extent that analogous challenges are considered resolved in purely classical theories.

For example, Renner and del Rio \cite{Renner_Challenge} have recently proposed a challenge for agents’ reasoning in quantum theory, based on the FR paradox. This challenge includes scenarios where agents have partial knowledge about a protocol. However, due to ambiguities—similar to those in the original FR paper—regarding assumptions about the Heisenberg cut, the formal modeling of measurement channels, and conditioning on agents' knowledge, we believe it is currently unclear whether the challenge is well-defined. Inconsistencies in agents' predictions can also arise in purely classical theories when similar assumptions are not carefully accounted for, especially in scenarios where agents may have incomplete information about the protocol. As a result, we believe it remains an open question whether such a challenge admits solutions even in classical theories.

Therefore, we consider the challenges for quantum theory raised by FR's original paper \cite{Frauchiger2018} and the recent article \cite{Renner_Challenge} as effectively addressed by the framework and results proposed here, in the sense explained above (while noting the subtleties surrounding classical multi-agent inconsistencies). A further point supporting our conclusion is that we have demonstrated in detail, the consistent resolution of the FR paradox—on which such challenges are based—through our formalism (\Cref{sec: resolution_entanglement} for the entanglement version and \Cref{ssec: resolution_prep} for the original prepare-and-measure version). Moreover, to our knowledge, no example of a quantum EWFS (including those involving partial knowledge) exists in which our approach fails to ensure consistent reasoning or violates any fundamental physical principle, such as causality.

While these consistency issues for reasoning quantum agents appear to be resolved, several intriguing open questions remain regarding EWFSs, including both agents' reasoning and meta-physical aspects such as the absoluteness of observed events. Our work offers a solid foundation with which these questions can be formally explored. We discuss in \Cref{sec:Discussion}.

\subsection{Interpretations of quantum theory}
\label{sec: interpret_indep}

Our results are presented in three levels of generality that allow to distinguish different conceptual points. Firstly there is a general formalism for defining EWFSs, which allows for different rules that one may prescribe to compute probabilities $P(\vec{a}_j|\vec{a}_l,k)$ of some outcomes $\vec{a}_j$, $\vec{a}_l$ given certain parameters $k$ describing the assumptions made about the scenario (such as its states, channels etc). Even when specialising these $k$ parameters to the settings $\vec{x}$ of our augmented circuit in the second step, there is freedom is choosing the values of these settings. Different interpretations of quantum theory can suggest different sets of parameters $k$ to be considered (e.g., certain interpretations such as Bohmian mechanics may require the description of additional hidden variables $\lambda$ to be included in $k$), and can also assign different values to the settings (e.g., collapse theories would suggest $x_i=1$ modelling the projection postulate, for all measurements, while a many-worlds type interpretation would suggest $x_i=0$ or unitary evolution for measurements). 

Our consistency results  of \Cref{theorem: main} apply to all choices of settings in the augmented circuit, and therefore show that logical and probabilistic inconsistencies can be avoided across interpretations by applying our formalism. In \Cref{appendix:relation_prev_works} we discuss in detail how different interpretations of quantum mechanics (such as many-worlds, collapse theories, hidden variable theories, relational quantum mechanics and QBism) could apply our framework consistently to resolve all apparent FR-type paradoxes. In this sense, our general formalism is interpretation independent.

In the third step, we prescribe reasoning rules for universal quantum theory, without assuming absolute events and while maintaining causality principles. The assumption of universal quantum theory as well as adherence to causality principles excludes certain interpretations such as collapse theories or retrocausal interpretations, this part of our results is still applicable with different interpretations such as many-worlds, QBism or relational quantum mechanics as the applicability of the reasoning rules do not depend on whether the states and channels of the augmented circuit are interpreted in an ontological or epistemic manner.

Finally, we note that there are certain previous works that, at a rather high level, may appear similar to some of our results, especially with regards to our resolution of the FR paradox. This includes the consistent histories interpretation of quantum mechanics \cite{Griffiths1984,GriffithsCH}, works on quantum decoherence \cite{Zukowski2021} and another reasoning rule proposed for quantum theory to avoid FR paradoxes \cite{Renes2021}.  We discuss the relation to these previous proposals in \Cref{appendix:relation_prev_works}.

\subsection{Physical interpretation of settings}
\label{sec: setting_interpretation}

Before discussing interpretations of settings, it is important to remind the reader of the role they already play in the existing literature of EWFSs.

A setting choice is \emph{required} for a prediction to be made in EWFS experiments. Settings specify how a measurement is modelled (as purely quantum unitary evolution or as being associated with classical records identified by projectors), and this specification is necessary for computing any prediction in quantum theory using the Born rule\footnote{Note that this is the case even in scenarios where one does not apply the projection postulate to specify the post-measurement state, see \Cref{remark: no_collapse}.}. We have shown that conventional predictions considered in the literature when referring to ``predictions of quantum theory'' in an EWFS are in fact equivalent to specific setting-conditioned predictions (\Cref{theorem: main}). This highlights that particular setting choices are already present in EWFS arguments. This applies to FR's arguments related to consistent reasoning as well as arguments related to the absoluteness of events, as discussed in \Cref{sec: resolution_entanglement}.\footnote{Although the setting-dependence can have different consequences for the conclusions drawn from these types of arguments.} Our framework makes the assumptions about the settings explicit and highlights that this information cannot generally be neglected in EWFSs.

This serves as an important prelude to the discussion of the question about physical interpretations of the settings. Due to the interpretation-independence aspect of our formalism, there isn’t a unique answer to this. The interpretation of the settings depend on the assumptions we make about the completeness of quantum theory and our beliefs about how far quantum theory might extend to the macroscopic domain.

\paragraph{\bf Heisenberg cuts and classical records}

Let us assume that quantum theory is complete (i.e., that measurement outcomes are not described by hidden variables; see~\Cref{sec: interpret_indep,sec: interpret_indep} for when this does not hold). Then, from a more physical perspective, the settings can also be associated with the Heisenberg cut. Setting $x_i=0$ for a measurement $\text{$\matholdcal{M}$}^{\A_i}$ implies that the memory $\M_i$ of the agent $\A_i$ (or more generally their lab) is inside the Heisenberg cut and is treated as a quantum system, while setting $x_i=1$ implies that the agent's memory $\M_i$ can be treated as a classical database (in the measurement basis) that is outside the cut. 

This choice of Heisenberg cut or settings can generally be dependent on the prediction being considered, or the perspective of an agent. For instance, when considering a prediction about Alice's outcome $a$, the measurement producing the outcome $a$ is modelled with setting 1 as we refer to its classical record. 
This would also be consistent with the perspective of Alice who would perceive her own memory (which stores the classical outcome $a$ that she observes) as being outside the Heisenberg cut and would thus assign setting 1 to her own measurement in her reasoning process. This perspective and setting choice is still consistent with the universal validity of unitary quantum theory because the perceived classicality is only relative to the basis in which the agent (here, Alice) accesses their memory. However, one agent can still model the measurements of other agents as unitary evolutions of the respective labs and therefore assign setting 0 to these, as they may not have access to these classical records and may be able to perform arbitrary quantum operations on those labs that can destroy the associated classical measurement records.

Crucially, we note that if we assume the universal validity of unitary quantum theory, the setting choice $x_i=1$ of our framework does not correspond to a projector that was ``actually performed'' or an objective ``wave function collapse'', as no such objective account may exist. Rather, the projectors associated with $x_i=1$ are only required for calculating the probability of the outcome of the agent $\A_i$ through the Born rule, which is necessary when one wishes to reason about said outcome.

\paragraph{\bf Falsifying setting choices through experiments}

We have shown that every EWFS can be equivalently formulated in terms of a unique quantum circuit, which we called an augmented EWFS. It is important to note that these settings do not correspond to different choices of operations that are actually performed in the protocol. Rather, they can capture choices made by agents when scientifically reasoning about the protocol (as discussed in the previous paragraph), or the fundamental dynamics imposed by different extensions of quantum theory to the macroscopic domain. The latter becomes relevant when we do not a priori assume that unitary quantum theory is universally valid.

Different interpretations of quantum theory can generally predict different types of fundamental dynamics for macroscopic quantum systems such as agents’ labs, and can thus assign different settings. Consider Wigner’s original experiment, with Alice being the friend and Wigner the super-agent. Objective collapse models defy unitary quantum theory beyond a certain scale; if agents and measurement devices involved are larger than this scale, Alice’s measurement would fundamentally correspond to $x_A=1$ evolution in this case, such that the associated projectors are “actually implemented” if the physical world followed such a theory. The prediction for Wigner would then be $P(w|x_A=1)$. In many-worlds type interpretations, one would believe that fundamental dynamics is generally unitary and assign $x_A=0$ to Alice’s measurement, computing the prediction $P(w|x_A=0)$ for Wigner’s outcome.

As it is still an open question whether or not quantum theory is universally applicable, we do not know which of these predictions will be confirmed by a hypothetical future experiment of this Wigner’s Friend Scenario. If there is a collapse mechanism that breaks unitary quantum theory, then the data of the experiment could falsify the prediction $P(w|x_A=0)$, and if unitary quantum theory prevails, then the prediction $P(w|x_A=1)$ could be falsified by experimental data.

Different setting choices lead to different predictions, and as such, one may be inclined to believe that there is one setting choice for each measurement which is actually the ``correct'' one. This would be the case for models that objectively fix the Heisenberg cut for all measurements, such as the objective collapse case. This is definitely a valid interpretation of the settings, but it is not the only one. As we have seen, our formulation does not impose the absoluteness of events and permits a relational interpretation where there is no ``one correct'' or ``one absolute'' setting assignment.

This discussion highlights that even though setting choices are linked to agents’ reasoning and beliefs rather than their choices of different physical operations, they do still have physical and empirical consequences in EWFSs.

\paragraph{\bf Time dependence of setting choices and knowledge update}

The falsification of predictions goes hand-in-hand with the updating of knowledge. If one makes a prediction that is falsified by experimental data, one is forced to question the assumptions under which said prediction was made.

Given an EWFS, if a prediction made under a certain setting choice is not consistent with observed experimental outcomes in a physical realisation of the EWFS, one can update one's knowledge about the settings. Such a situation can only arise in a scenario that has setting-dependence, as setting-independent predictions would not allow us to infer anything about the setting choices.

For example, in FR, predictions concerning Alice's outcome are independent of Bob’s setting and vice-versa\footnote{In the entanglement version this follows from the causal structure, and in the prepare and measure version, this follows from the non-super agent structure, even though the causal structure permits a dependence due to communication.}, and hence Alice’s outcomes cannot be used to falsify Bob’s setting choice and vice-versa. However, we have seen that at later times, when super-agents Ursula and Wigner perform their measurements, the joint predictions for the outcomes of the two super-observers are dependent on Alice's and Bob's settings. In particular, there are certain experimental outcomes which are inconsistent with Alice's and Bob's settings being 1, observing which can falsify this setting choice.

Note that updating knowledge in light of new data is also common in classical theories. However, the difference is that in such classical theories, one updates their knowledge about something that exists and to which there is an ``absolute fact of the matter’’ regardless of whether it is measured. On the other hand, in our formalism for EWFSs, the knowledge update represents our belief about the existence/nonexistence of classical records of measurements, which may no longer be an absolute, fact of the matter and which can be updated in time in light of new data. 

For example, we may safely treat all the measurements performed by experimentalists around the world as having setting 1, as we do not believe that anyone currently has access to a device that could potentially Hadamard the memories of other agents (or quantum computers that can act as measuring reasoning agents), even if some of us may believe that the world is ultimately quantum mechanical.\footnote{Notice that even if measurements are modelled as unitary evolutions of systems and memories/labs, as long as there are no ``super-agents’’ who perform non-trivial operations on another agent's whole lab, we may safely treat this as the setting 1 case as we have setting-independence (c.f. \Cref{corollary: std_QT_setting_indep}).} However, once we are sufficiently convinced in the future that such devices do exist, we would need to consider assigning setting 0 to some measurements in order to consistently explain the results of future experiments where such devices can ``undo'' or erase classical records of measurements.

This suggests yet another operational interpretation of the settings. The setting 1 case could also be interpreted as a result of unitary evolution: to do so, one simply uses the model where these incoherent measurement channels are purified via an ancillary system. Then the setting being 1 can be understood as reflecting one's belief that a superobserver does not have access to the ancillary system and thus does not have the ability to destroy the classical measurement records through a non-trivial joint operation on Alice’s lab (which now includes the ancilla). Thus, the classical records of what we presently observe persist for as long as this belief holds true, but may no longer be a matter of the fact if a powerful future super-agent gains control over sufficient quantum degrees of freedom. This also complies with a decoherence type interpretation, where the world is fundamentally unitary but classical records emerge due to decohering interactions with an environment (here, ancilla). However, such interpretations  typically do not consider the premise that the environment could be accessed by a future super-agent, and the resulting time-dependence with regard to the persistency of classical records.

\paragraph{\bf Analogy to the Maxwell's demon paradox} The broader message delivered by our resolution of the apparent FR paradox is analogous to the resolution of Maxwell's demon paradox in thermodynamics. In short, the latter is a thermodynamical thought experiment where a microscopic demon with access to knowledge of the microstates of gas molecules in a box could exploit this knowledge to apparently extract work from the gas and violate the second law of thermodynamics. However, the paradox is resolved once we are consistent in the perspective (of the microscopic demon with knowledge of microstates or a macroscopic observer without this knowledge) that is taken while calculating thermodynamic quantities such as entropies, we then find that no such violation of the second law ensues.

From the perspective of the macroscopic observer, the problem never arises in the first place and from the perspective of the demon, work must be performed in order to erase the information gained in the process which in turn ensures that the second law is not violated through the work extraction. It is only when we argue from both perspectives while ignoring the perspective that was used, that we run into an apparent paradox|the work is extracted from taking the demon's perspective and there is apparently no work performed in the process when taking the macroscopic observer's perspective. 

This is analogous to the message of \Cref{corollary:setting_independence}. The settings of our framework are related to the perspective of agents and their Heisenberg cuts as discussed above. Taking this into account allows us to make predictions that are consistent with a given perspective. We have shown that this ensures that no logical contradictions arise. It is only when we use different settings to derive a set of predictions and then ignore this setting choice through the assumption \hyperref[def: I_assump]{$\mathbf{I}$} that we obtain an apparent paradox. 

\section{Conclusions and outlook}\label{sec:Discussion}

Wigner's thought experiment \cite{Wigner1967} exposes fundamental challenges in applying quantum theory to observers or agents. Recent no-go arguments (e.g., \cite{Frauchiger2018,Brukner2018,Bong2020}) extend this experiment to multiple agents, suggesting that the universal validity of unitary quantum theory radically challenges our understanding of logic, scientific reasoning, causality, and the absoluteness of observed events. 

In this work, we have developed a comprehensive theoretical framework for Extended Wigner's Friend Scenarios (EWFSs), enabling sound scientific reasoning without assuming absolute measurement events. The main theorems establish the framework's consistency and preservation of causality, distinguishing objective (e.g., causal structure) and subjective (e.g., predictions, agents' knowledge) aspects of EWFSs. These results can be applied to ensure global consistency of scientific reasoning in EWFSs,  in a manner independent of the particular interpretation of quantum theory that one subscribes to. Further, we have discussed in \Cref{remark: no_collapse} how our solution ensures consistent reasoning in FR-like EWFSs even when measurements are modelled as unitary evolutions of agents' labs and where the Born rule is applied to reason, without invoking the projection postulate or state update rule of quantum theory.


As a key application, our framework fully resolves all FR-type logical paradoxes in quantum theory, and more generally ensures both logical and probabilistic consistency in all EWFSs. We provided a physically motivated set of reasoning rules for quantum agents that are simultaneously consistent with the universal validity of quantum theory and classical logic applied to observed measurement outcomes, and also the fundamental relationalism of agents' perspectives in EWFSs. This demonstrates in a constructive manner that quantum theory is perfectly consistent in all EWFSs that one can construct within the theory, and that there is no threat to the ability to consistently program future quantum computers that play the role of agents. This is contrary to FR's broader claim \cite{Frauchiger2018} that ``Quantum theory cannot consistently justify the use of itself'', and we have discussed a refined interpretation of FR's arguments in light of our results, such that FR's statement about the apparent paradox does not conflict with our statements about consistently.

The key insight is to make explicit how measurements are modelled in each reasoning step, as physical predictions in EWFSs do depend on whether a quantum measurement is regarded as producing a classical outcome or as a purely quantum unitary evolution (capturing the choice of Heisenberg cut associated with the measurement). This sheds light on the core reason for apparent FR paradoxes, as arising from ignoring the choices of Heisenberg cuts used in the reasoning.

Extending beyond agent reasoning, our framework explains the emergence of objective, Heisenberg-cut-independent predictions in real-world quantum experiments. This provides a concrete view on how quantum theory can naturally accommodate the non-absoluteness of events and fundamental relationalism in general EWFSs (where agents can have arbitrary quantum control over each others' labs), while remaining consistent with objective scientific observations made so far. 
We outline future research directions and some broader implications of these findings for the field of EWFSs below.

\paragraph{\bf Local-Friendliness and non absolute events} 
Building on Brukner's no-go theorem for the absoluteness of observed events (AoE) \cite{Brukner2018}, the Local-Friendliness (LF) no-go theorem \cite{Bong2020} imposes strong constraints on any physical theory satisfying AoE (and other reasonable physical assumptions relating to causality and free choice), demonstrating that such theories cannot explain quantum predictions in a specific EWFS. This EWFS is similar to the entanglement version of FR, but importantly, allows super-agents to choose different measurements to perform on the agents' labs. Although we have not discussed scenarios with physical measurement choices here due to space constraints, we apply our framework to model the LF scenario and analyse their no-go theorem in forthcoming work \cite{LF_Vilasini_Woods}, highlighting the rather distinct yet relevant role played by setting or Heisenberg-cut dependence (violation of \hyperref[def: I_assump]{$\mathbf{I}$}) in the LF theorem compared to FR's analysis. 

In future work, it would be interesting to further explore the implications of these links between the \hyperref[def: I_assump]{$\mathbf{I}$} and AoE assumptions, for characterizing novel scenarios that yield no-go theorems for AoE in combination with other fundamental assumptions and quantum phenomena. These include measurement complementarity, contextuality, indefinite causal order and relativistic causality principles. We discuss some of these possible directions below.

\paragraph{\bf Quantum and relativistic causality in EWFSs} Causal models offer a rigorous framework to connect observed data with causal explanations, widely used in classical data-driven fields \cite{Pearl2009}. Bell's theorem exposes fundamental challenges for classical causal models in explaining quantum correlations consistently with relativity, driving the development of quantum causal modelling frameworks \cite{Henson2014,Barrett2020A} within an information-theoretic paradigm.

The LF theorem \cite{Bong2020} is suggested to present more radical challenges for causality \cite{Cavalcanti2021}, even to existing quantum causal models as these assume AoE. AoE violations suggest that the causal structure may become subjective, affecting the notion of spacetime events and relativistic principles \cite{Cavalcanti2021}. This raises the need for a framework for quantum causal modelling and relativistic causality that does not assume AoE.

Our framework shows that all predictions in an EWFS, possibly subjective, can be recovered within a single objective causal structure of the protocol. The augmented circuit respects this causal structure, forming an acyclic circuit that can be appropriately embedded in spacetime to preserve relativistic causality. The settings which model the Heisenberg cut choices are explicit inputs in this circuit, and the
only part that can be subjective are the priors which specify the agents' choices of settings or Heisenberg cuts.

This paves a concrete pathway for developing a quantum causal modelling framework for EWFSs that does not assume AoE, which is relational, perspectival, and operational (considering agents' interventions and knowledge) and fully consistent with free choice and relativistic causality principles in space-time. This is the subject of a follow-up work.

The EWFSs in our framework correspond to protocols where agents’ operations occur in a fixed, acyclic order, consistent with the time direction. Quantum theory permits so-called indefinite causal order processes \cite{Hardy2005, Oreshkov2012, Chiribella2013} involving quantum superpositions of the order of agents’ operations, and cyclic generalisations of quantum causal models enable a description of these \cite{Barrett2020,VilasiniRennerPRA, VilasiniRennerPRL}. A consistent formalism for reconciling such cyclic and indefinite causal structures (which are defined through an information-theoretic notion of causality) with spacetime (which has a definite causal structure according to a relativistic definition of causality) was developed in \cite{VilasiniRennerPRA, VilasiniRennerPRL}. Combining techniques from our present formalism with \cite{VilasiniRennerPRA, VilasiniRennerPRL} offers scope for generalisation towards a unified framework for quantum agents, quantum causality and spacetime structure, providing a platform for exploring phenomena at the intersection of EWFSs, quantum processes without a definite order of operations, relativity and quantum correlations in space and time.

\paragraph{\bf EWFSs beyond quantum theory} Both arguments related to agents' reasoning (such as FR) and those related to AoE (such as LF) have been studied in broader theoretical contexts beyond quantum theory.

EWFS beyond quantum theory were first considered in \cite{Vilasini_2019}, where agents memories are modelled as physical systems of a given theory. A theory-independent analogue of the projective and unitary perspectives on quantum measurements was formalised, with the latter represented by the concept of an information-preserving memory update. Using this, it was shown that agents sharing a PR-box (a post-quantum resource) \cite{PopescuRohrlich1994} can encounter an apparent contradiction similar to the FR scenario. This shows that FR-type apparent paradoxes are not exclusive to quantum theory. Moreover, it was shown in \cite{ormrod2023} that any physical theory allowing for information-preserving memory updates, Bell non-classical correlations, and satisfying a locality principle would lead to violations of AoE akin to those witnessed in the quantum LF scenario \cite{Bong2020}. Thus, the measurement problem is also not unique to quantum theory.

Future work can explore generalizing the current framework beyond quantum theory to resolve multi-agent paradoxes in post-quantum theories and illuminate the nature of AoE violations and the measurement problem therein. Many concepts and tools developed here are amenable to such generalisation. 

\paragraph{\bf Resource-theoretic characterisation of EWF results} What are the information-theoretic resources in EWFSs responsible for FR and LF type results, and which distinguish Wigner’s Friend setups from standard quantum experiments?

While we have shown that apparent Wigner’s Friend paradoxes can always be resolved in quantum theory with careful specification of Heisenberg cuts, it is intriguing to characterise scenarios where this choice can be safely ignored without leading to inconsistencies. Our results \Cref{theorem: main} and \Cref{theorem: setting_independence} provide sufficient conditions for safely ignoring settings based on general structural properties (the causal and non-superagent structures), but there is scope to study the role of state and measurement dependence and derive tight necessary and sufficient conditions for scenarios that violate \hyperref[def: I_assump]{$\mathbf{I}$} (have setting-dependence).

Specifically, in the FR scenario, the correlations are known to be identical to those in Hardy’s proof of quantum contextuality (see \Cref{appendix: Hardy} for a discussion). Is contextuality a necessary feature for apparent Wigner’s Friend paradoxes? 
In an upcoming work  \cite{NurgalievaVilasini}, it is shown that for a large class of multi-agent paradoxes within EWFSs in general physical theories (including FR's quantum paradox and \cite{Vilasini_2019}'s PR-box based paradox, but excluding Wigner's original 2-agent quantum scenario), a
logical form of contextuality is a necessary property. Studying the necessary and sufficient conditions for different classes of multi-agent reasoning paradoxes in a theory-independent manner, comparing the structure of such apparent paradoxes in quantum vs more general theories, and relating them to physical principles and informational resources of the theory remain interesting future directions.

While we have provided a general resolution to all EWF quantum paradoxes, we have also highlighted that there remain interesting and less-explored questions relating to the apparent paradoxes initiated by the FR paper \cite{Frauchiger2018}. Our work provides a formal and consistent toolkit for exploring these other promising avenues towards understanding the structure of quantum correlations and measurements, through the study of EWFSs and Heisenberg-cut dependence of predictions.

Similar questions can also be posed for understanding the limits of AoE. In \cite{NurgalievaVilasini, walleghem2024}, it is shown that Bell non-locality is not necessary for FR type paradoxes, as contextuality without Bell non-locality suffices. Is it possible to construct no-go theorems for AoE using contextuality as a resource and is this a necessary feature? More broadly, the non-super agent structure introduced here to distinguish standard quantum scenarios from genuine Wigner’s Friend scenarios could be relevant for developing a resource theory of EWFSs (as discussed in \Cref{sec: standard_QT}).

\begin{acknowledgements}
  We thank Renato Renner, Victor Gitton, Y\`{i}l\`{e} Yīng, Marina Maciel Ansanelli, Joe Renes, Eric Cavalcanti and Nuriya Nurgalieva for insightful discussions.
  V.V. acknowledges support from an ETH Postdoctoral Fellowship. M.P.W. acknowledges funding from the Swiss National Science Foundation (AMBIZIONE Fellowship, No.~PZ00P2\_179914). Both authors acknowledge support from NCCR QSIT.
\end{acknowledgements}

\newpage
\onecolumngrid
\appendix

\begin{center}
    {\bf \large{APPENDIX}}
\end{center}
\section{Generality of the definition of EWFSs}
\label{appendix: generality_EWFS}

Here we discuss the justification for the generality of \Cref{def:LWFS} of an Extended Wigner's Friend Scenario that we have proposed in this paper. The idea is that \Cref{def:LWFS} encompasses all finite multi-agent quantum protocols where agents' memories (in which they store the measurement outcome) are modelled as quantum systems, and where one agent can have full quantum control over the labs (measured system and memory) of other agents in the scenario. Here the finiteness applies both to the Hilbert space dimensions and the number of information-processing steps. 

Generically, we can model such scenarios by considering a set of $N$ agents $\mathtt{A}=\{\A_1,\ldots,\A_N\}$ and a set of $m$ systems $\mathtt{S}=\{\Ss_1,\ldots,\Ss_m\}$ under study and a set $\mathtt{M}:=\{\M_1,\ldots,\M_N\}$ of systems, one for each agent $\A_i$ which models their memory where they store the outcomes of measurements that they perform.

For simplicity but without loss of generality, we take the lab of each agent $\A_i$ to consist of the system $\mathtt{S}_i$ that they measure, along with their memory $\M_i$. 
Each agent $\A_i\in \mathtt{A}$ performs a measurement $\matholdcal{M}^{\A_i}$ on some subset $\mathtt{S}_i\subseteq \mathtt{S}\cup \mathtt{M} \backslash \{\M_i\}$ of systems, which can include the memories of other agents, and stores the outcome of the measurement in their memory $\M_i$.\footnote{The memory $\M_i$ of an agent $\A_i$ is chosen to be a system of at least the same dimension as the system $\mathtt{S}_i$ that the agent $\A_i$ measures.} Note that $\mathtt{S}_i$ is a subset of systems and memories of other agents, which allows each agent to possibly act as a superagent to any subset of the other agents by measuring their memories. We have no loss of generality in assuming that each agent performs one measurement, because any scenario where one agent performs multiple measurements can be bought to this form by modelling them as multiple agents, each performing one measurement. 

Next, also without loss of generality, we can assume that each agent $\A_i$ acts at a distinct time step $t^i$ with $t^i<t^j$ for $i<j$, since any physical scenario where the same agent acts at different times can equivalently be modelled in terms of multiple agents each acting at distinct times (since our definition allows for communication channels that carry the relevant information between the time steps). Further, we can model each measurement $\matholdcal{M}^{\A_i}$ as acting on the whole set $\mathtt{S}$ of systems and the set of all memories $\M_1,\ldots,\M_{i-1}$ of previous agents since any operation on a subset $\mathtt{S}_i$ of systems can be trivially enlarged into an operation on all systems by appending the identity on the complementary set of systems. Agent $\A_i$ may also perform certain fixed transformations (corresponding to quantum channel $\matholdcal{E}^i$ on $\mathtt{S}\cup \mathtt{M}$) in between time steps $t_i$ and $t_{i+1}$, for instance, agent $\A_1$ may measure the system $\Ss_1\in \mathtt{S}$ and  depending on their measurement outcome, could perform a different transformation on some initial state of $\Ss_2$ which they send to agent $\A_2$ who could then measure $\Ss_2$.

The circuit of \Cref{fig: genform_circuit} illustrates this general form of an EWFS that we consider, where $\{\matholdcal{E}^i\}_{i=1}^N$ are fixed transformations that all agents agree on as part of the protocol while $\{\matholdcal{M}^{\A_i}\}_{i=1}^N$ denote the measurements, one for each agent.


Furthermore, consider that an agent $\A_i$ measures a subset $\mathtt{S}_i$ of the systems through a measurement $\matholdcal{M}^{\A_i}$. Let $a_i$ be the random variable associated with the measurement outcome, which takes values $\aaa_i$ in the set $\{0,1,...,d_{\mathtt{S}_i}-1\}$, where $d_{\mathtt{S}_i}$ is the Hilbert space dimension (assumed to be finite) of the system $\mathtt{S}_i$ measured by $\A_i$.
Without loss of generality, we can model this as a projective measurement, since any measurement on finite dimensional systems can be purified to a projective measurement involving rank 1 projectors on a larger set of systems through Neumark dilation; see e.g. \cite[Sec. 9-6]{AsherPeres}. Explicitly, we can use the projective measurement $\{\pi_{\aaa_i}^{\mathtt{S}_i}=\proj{\aaa_i}_{\mathtt{S}_i}\}_{\aaa_i\in\mathtt{O}_i}$, where $\{\ket{\aaa_i}_{\mathtt{S}_i}\}_{\aaa_i\in\mathtt{O}_i}$ forms an orthonormal basis of $\mathtt{S}_i$. While this looks like a computational basis due to the choice of outcome labels $\{0,1,...,d_{\mathtt{S}_i}-1\}$, the choice of which basis to associate with these labels is arbitrary and therefore the measurement may correspond to an arbitrary orthonormal basis.

Together with all these simplifications, we arrive at the general form of \Cref{def:LWFS}.

\section{Distinguishing predictive and observational statements: refining consistency}
\label{appendix: observational_statements}

In the main text, we used $\Sigma$ to refer to the set of all statements obtained from predictions in an EWFS, in \Cref{definition:statement_prediction}. We motivated the need to distinguish such predictive statements from observational statements through examples of simple classical scenarios. Here we define observational statements and discuss how our results immediately generalise to ensure consistency of predictive and observational statements together. To make the distinction clear, in this section, we will explicitly write $\Sigma_{pred}:=\Sigma$. 
\begin{definition}[Observational statements]
\label{def: obs_statement}
    Observational statements are statements in the set $\Sigma_{obs}:=\{$ ``Based on observation, I am certain that the outcomes $\vec{a}_j$ takes values $\vec{\aaa}_j$.'' $\}_{\vec{a}_j,\vec{\aaa}_j}$. In logical notation, we will denote elements of the set as $\vec{a}_j=\vec{\aaa}_j|_{obs}$.
\end{definition}

\begin{definition}[Consistency of observational statements]
\label{def: consistency_obs}
   A set $\Sigma_{obs}$ of observational statements is consistent iff $\vec{a}_j=\vec{\aaa}_j|_{obs}\in \Sigma_{obs}$ implies that $\vec{a}_j=\neg\vec{\aaa}_j|_{obs}\not\in \Sigma_{obs}$. 
\end{definition}
\begin{definition}[Global consistency]
\label{def: global_consistency}
    A set $\Sigma_{pred}\cup \Sigma_{obs}$ of predictive and observational statements obtained in an EWFS, is said to be globally consistent if $\Sigma_{pred}$ and $\Sigma_{obs}$ are consistent according to \Cref{def: consistency_pred} and \Cref{def: consistency_obs} respectively, \emph{and} additionally, $S:=\vec{a}_j=\vec{\aaa}_j|_{obs}\in \Sigma_{obs}$, then $S'\in \Sigma_{pred}$ where $S'$ is a statement associated with a prediction $P(\vec{a}_j=\vec{\aaa}_j|k=\kkk)>0$.
\end{definition}

Recall that within our framework the scenario parameters $k$ are instantiated by the setting vector $\vec{x}$, which give us setting-conditioned predictions e.g., $P(\vec{a}_j=\vec{\aaa}_j|\vec{x}=\vec{\xxx})$.

In any EWFS, a given choice of settings $\vec{x}=\vec{\xxx}$ allows to fix all the channels involved the scenario. Generically, the choice of how these setting values must be chosen is given by some set of reasoning rules $\matholdcal{R}$. This rule may either provide an absolute choice of setting values $\vec{x}=\vec{\xxx}$ that must be applied to every prediction in the scenario, or it may provide a different choice of setting values relative to each prediction one wishes to compute. An example of the former would be collapse theories or any interpretation that rejects universal validity of unitary quantum theory, which would require all settings to be 1. An example of the latter is our reasoning rule given by the setting choices illustrated in the completeness result of \Cref{theorem: main}: in every prediction, e.g., $P(\vec{a}_j=\vec{\aaa}_j|\vec{x}=\vec{\xxx})$, the settings for all outcomes that appear in the probability are set to 1, and the settings for the remaining measurements are set to 0. 

Now, given an EWFS together with a set of rules $\matholdcal{R}$ for choosing the settings in the augmented circuit, we can consider what happens when we design an experiment to observe the outcome referred to in a prediction. Let us do so with a simple example, referring back to Wigner's original thought experiment, where Alice was the agent and Bob the superagent. When considering the probability of Bob's outcome $b$, we have a choice for Alice's setting $x_A$, and can compute for instance $P(b|x_A=0)$. If it was possible to physically perform a Wigner’s Friend type experiment involving these agents, and if indeed unitary quantum theory were universally valid, then Bob's observations regarding $b$ would be consistent with this prediction in the sense of \Cref{def: global_consistency}: if these premises are satisfied, then Bob can observe $b=\bbb$ only if $P(b=\bbb|x_A=0)>0$. On the other hand, if unitary quantum theory was not universally valid, but there was an additional (yet undiscovered) physical mechanism for objective collapse, then Bob's physical observations would be consistent with $P(b=\bbb|x_A=1)$ (according to \Cref{def: global_consistency}), and need not be globally consistent relative to $P(b=\bbb|x_A=0)>0$. Note that in these discussions, we have omitted $x_B$, which will by default be $x_B=1$ here since these predictions refer to a non-trivial outcome of Bob. 

We have proven in our framework that (setting-conditioned) predictions obtained under any possible rule $\matholdcal{R}$ for choosing the settings are mutually consistent. Moreover, as the above example illustrates, when considering observational statements, it is important to consider the physical dynamics leading to the said observations. 

From the consistency of the predictions, and the definition of global consistency, it is straightforward to see that if the physical dynamics leading to the observations respects the channel choices specified by the reasoning rules $\matholdcal{R}$ (and assuming that the physical probabilities also respect the Born rule relative to those channel choices), then the predictions and observations will be globally consistent. On the other hand, if the physical dynamics leading to the observations differs from the $\matholdcal{R}$ used to compute the predictions (e.g., when there is physical collapse for all measurements but the rules model certain measurements as pure unitaries), then it is possible to violate global consistency. This provides a way to operationally falsify the rule $\matholdcal{R}$ (see also \Cref{sec: reasoning_rules}), assuming that the experiment was performed in a faithful way, in the sense that indeed Bob's operation acted on Alice's whole lab and there was no unexpected information leakage which is not accounted for in the scenario description.

Considering conventional predictions used in FR and LF type arguments, we have seen that these imply a particular default rule $\matholdcal{R}^{def}$ for selecting the settings \Cref{theorem: main} (as also discussed in \Cref{sec: reasoning_rules}). In this rule, only the measurements whose outcomes appear in a given prediction are assigned projectors (in order to compute said probability through the Born rule), while all other measurements are modelled as pure unitary evolutions. Therefore, if unitary quantum theory were indeed valid universally, our consistency result of \Cref{theorem: main} (which only refers to predictive statements) is sufficient to guarantee global consistency of predictive statements obtained through this rule for a given EWFS together with any observations made within a hypothetical experimental realisation of that EWFS. Therefore, we have no loss of generality in restricting only to predictive statements in the main text. 

{\bf Consistency and agents' knowledge}
This distinction between predictive and observational statements emphasised here, provides a precise understanding of what does not does not constitute an inconsistency. The consistency for predictive statements requires that one should not be able to obtain two different probability assignments $P$ and $P'$ for the same outcomes in a scenario, conditioned on the same information. This only refers to predictions i.e., probabilities associated with running the protocol for several rounds. 

It is important to note that even in a consistent theory (such as purely classical physics) agents can nevertheless assign different probabilities to events due to different knowledge based on observations made in a given round. If Alice and Bob have a fair coin at hand, where Alice knows that the outcome of the coin flip is $c=heads$ in one round, she would assign probability 1 to $c=heads$ in that round while Bob would assign a uniform probability to $c=heads$ if he does not know the outcome. This is not a contradiction because Alice's probability relates to an observation in a particular round, her certainty in this case would be captured by an observational statement $c=heads|_{obs}$. This example leads to consistent predictions since both agents would predict a uniform probability for $c=heads$ if asked what the probability of heads will be over many coin flips. 

Furthermore, Alice may wish to make a prediction about the outcome $b$ of a bet given that she observed $c=heads$ (and possibly other information $k=\kkk$ that she may know about the scenario). This would then correspond to a prediction $P(b=win|c=heads, k=\kkk)$. On the other hand, Bob who does not know the outcome of the coin flip but has the same background information $k=\kkk$ about the scenario, would make a prediction $P(b=win|k=\kkk)$ which can be different from $P(b=win|c=heads, k=\kkk)$. This is also not a contradiction, since the two predictions are conditioned on different knowledge, and this conditioning is important to ensure consistency. It is immediate to see that even in these simple classical examples, ignoring the conditioning on agents' knowledge and/or the background assumptions they make about the scenario at hand, one can obtain apparent inconsistencies quite easily.

\section{Overview of the Frauchiger-Renner apparent paradox}\label{appendix:FR_review}

\subsection{The FR no-go theorem}
\label{sec: FRreview}


Here we review the assumptions $\textup{Q}$, $\textup{C}$ and $\textup{S}$ of FR's claimed no-go theorem, as well as the additional assumptions $\textup{U}$ and $\textup{D}$ that are also relevant to the FR analysis, as pointed out in \cite{Nurgalieva2018}. The FR protocol and formal statement of their no-go theorem will be reviewed in \Cref{sec: FR_PM_review}.

Assumption ($\textup{Q}$): it asserts that an agent can be certain that a given proposition holds whenever the quantum-mechanical Born rule assigns probability 1 to it. Specifically:\\

\noindent\fbox{%
	\parbox{\linewidth}{%
		Suppose that agent A has established that
		\indent\emph{Statement} A$^\textup{(i)}$: ``System $\Ss$ is in state $\ket{\psi}_\Ss$ at time $t_0$.''
		Suppose furthermore that agent A knows that
		\indent\emph{Statement} A$^\textup{(ii)}$: ``The value $x$ is obtained by a measurement of $\Ss$ w.r.t. the family $\{\pi_x^{t_0}\}_{x\in\chi}$ of Heisenberg operators relative to time $t_0$, which is completed at time $t$.''
		If $\braket{\psi|\pi_\xi^{t_0}|\psi}=1$ for some $\xi\in\chi$, then agent A can conclude that
		\indent\emph{Statement} A$^\textup{(iii)}$: ``I am certain that $x=\xi$ at time $t$.''
	}%
}\\

Assumption ($\textup{C}$): 
It asserts that one agent can inherit the knowledge of another agent who uses the same theory as them to arrive at their conclusions.\\

\noindent\fbox{%
	\parbox{\linewidth}{%
		Suppose that agent A has established that
		\indent\emph{Statement} A$^\textup{(i)}$: ``I am certain that agent A', upon reasoning within the same theory as the one I am using, is certain that $x=\xi$ at time $t$''.
		Then agent A can conclude that
		\indent\emph{Statement} A$^\textup{(ii)}$: ``I am certain that $x =\xi$ at time $t$''.
	}%
}\\

Assumption ($\textup{S}$): from the viewpoint of an agent who carries out a particular measurement, this measurement has one single outcome. Specifically:\\

\noindent\fbox{%
	\parbox{\linewidth}{%
		Suppose that agent A has established that
		\indent\emph{Statement} A$^\textup{(i)}$: ``I am certain that $x=\xi$ at time $t$''
		The agent A must necessarily deny that
		\indent\emph{Statement} A$^\textup{(ii)}$: ``I am certain that $x\neq \xi$ at time $t$''.
	}%
}\\

Note that a violation of $\textup{S}$ can itself be interpreted as a logical paradox, as it would imply that the outcome $x$ is both $\xi$ and not $\xi$ with certainty.

In \cite{Nurgalieva2018}, the authors also identify additional assumption $\textup{U}$ (unitarity) that FR use (as part of $\textup{Q}$) but did not explicitly state in their set of assumptions.

Assumption ($\textup{U}$): \\

\noindent\fbox{%
	\parbox{\linewidth}{%
		An agent $\A$ can model measurements performed by 
		any other agent $\B$ as reversible unitary evolutions  in $\B$'s lab.
	}%
}\\

Furthermore they also note that the FR reasoning involves another basic rule of classical logical inference namely the \emph{distributive axiom}.

Assumption ($\textup{D}$): \\

\noindent\fbox{%
	\parbox{\linewidth}{%
		If an agent $\A$ knows a statement $s_1$ and also knows that $s_1$ implies another statement $s_2$ then agent $\A$ can conclude that they know $s_2$.
	}%
}\\

\noindent They formulate this in terms of a knowledge operator which formally keeps track of which agent knows which statement \cite{Nurgalieva2018}. We will also introduce said notation later. Assumption $\textup{D}$ is implicitly used in FR when the statements $s_1$ and $s_2$ correspond to measurement outcomes, e.g. $s_1$ could be ``I am certain that $x =\xi$ at time $t$', $s_2$ could be ``I am certain that $x'=\xi'$ at time $t'$''. Then if agent $\A$ who knows $s_1$ also knows ``If I am certain that $x =\xi$ at time $t$, then I am certain that $x'=\xi'$ at time $t'$'',  they would use $\textup{D}$ to conclude that agent $A$ is certain of the  statement $s_2$, ``I know that $x' =\xi'$ at time $t'$''.


In~\cite{Nurgalieva2018}, the FR argument is refined by making explicit the additional assumptions $\textup{U}$ and $\textup{D}$. Thus, they suggest the implication that the assumptions $\textup{Q}$, $\textup{U}$, $\textup{C}$, $\textup{D}$ and $\textup{S}$ cannot be simultaneously satisfied in the protocol proposed by FR. We note that this result is proven in \cite{Nurgalieva2018} by formalising these assumptions using the Kripke structure of epistemic modal logic (which is the branch of logic that refers to knowledge of agents) and we refer the reader to \cite{Nurgalieva2018} for the formal statement of this theorem in this mathematical language. We will not review the full modal logic framework here but will refer to aspects of it wherever necessary.

At a broad level, the proof proceeds by considering the FR thought experiment where agents reason about each other's knowledge using assumptions $\textup{Q}$, $\textup{U}$, $\textup{C}$ and $\textup{D}$ and claims to show that such agents would always arrive at a violation of $\textup{S}$, which as we have explained above can be interpreted as a paradox.


\subsection{Entanglement version of the FR experiment}
\label{sec: FR_ent_review}

In \Cref{sec: resolution_entanglement} and \Cref{fig: FR_ent_circuit} of the main text, we provided an overview of of the entanglement version of the FR scenario and the main arguments. Here, we review in more detail the entanglement-based version of the FR thought-experiment and apparent paradox that highlights the proof method typically employed to prove the no-go claim regarding a contradiction between the assumptions $\textup{Q}$, $\textup{U}$, $\textup{C}$ and $\textup{D}$ (reviewed in the subsection above).

The entanglement version of the FR protocol is originally attributed to Lluis Masanes (based on a talk), and was also mentioned by Matthew Pusey in \cite{Pusey2018}. It is much simpler than FR's original (prepare and measure based) protocol, but makes the important/salient features of the FR protocol more readily accessible. The resolution to EWFS paradoxes given by our work is however fully general and applies in particular to both the entanglement and to the original prepare and measure version of FR's arguments. For the interested reader, we provide a review of the original FR thought-experiment in \Cref{sec: FR_PM_review} and apply our framework to resolve it, in \Cref{ssec: resolution_prep}. 

In this section we follow the notation and agent naming conventions of \cite{Nurgalieva2018, Vilasini_2019}. Here we have two agents Alice and Bob who measure individual subsystems of a bipartite system while two superagents Ursula and Wigner measure the labs (system and memory) of Alice and Bob respectively. The protocol can be broken down into three steps: a bipartite state preparation (pre-selection) at an initial time $t=1$, intermediate local measurements by Alice and Bob on the state at time $t=2$, a final measurement and post-selection by two superagents Ursula and Wigner at time $t=3$. 
\begin{itemize}
	\item {\bf Pre-selection at time $t=1$: }An initial state $\ket{\psi^{t=1}}_{\R\Ss}:=\frac{1}{\sqrt{3}}(\ket{00}_{\R\Ss}+\ket{10}_{\R\Ss}+\ket{11}_{\R\Ss})$ is prepared and shared between Alice and Bob where $\R$ and $\Ss$ label the subsystems belonging to Alice and Bob respectively. Alice and Bob's memories $\A$ and $\B$ are initialised to $\ket{0}_\A$ and $\ket{0}_\B$. Hence the initial preparation (i.e., pre-selected state) on $\R\A\Ss\B$ is given by
	\begin{equation}
		\ket{\psi^{t=1}}_{\R\A\Ss\B}:=\frac{1}{\sqrt{3}}(\ket{0000}+\ket{1000}+\ket{1010})_{\R\A\Ss\B}
	\end{equation}
	\item {\bf Intermediate operations at time $t=2$: }Alice and Bob measure their respective systems in the computational basis $\{\ket{0},\ket{1}\}$ and store the outcomes $a$ and $b$ of their respective measurements in their memory systems. From the outside perspective, Alice's measurement is modelled as a unitary $\matholdcal{M}^{\A}_{unitary}:=(CNOT)_{\R\A}$ on the joint system $\R\A$ which performs a CNOT operation with $\R$ as control and $\A$ as target, while Bob's measurement is similarly modelled as a unitary $\matholdcal{M}^{\B}_{unitary}:=(CNOT)_{\Ss\B}$.
	\item {\bf Post-selection at time $t=3$:} The super-observers Ursula and Wigner post-select on the following final state $\ket{\phi^{t=3}}_{\R\A\Ss\B}:=\ket{\textup{ok}}_{\R\A}\ket{\textup{ok}}_{\Ss\B}$, which they achieve by measuring $\R\A$ and $\Ss\B$ respectively in the basis $\{\ket{\textup{ok}}:=\frac{1}{\sqrt{2}}(\ket{00}-\ket{11}),\ket{\textup{fail}}:=\frac{1}{\sqrt{2}}(\ket{00}+\ket{11})\}$ and halting when both of them obtain the outcome $u=\textup{ok}$, $w=\textup{ok}$ (where $u$ and $w$ denote Ursula's and Wigner's outcomes, the halting condition is checked by announcing these measurement outcomes in each run).
\end{itemize}

When modelling both measurements as unitaries, the joint state of $\R\A\Ss\B$ just after time $t=2$ is given by
\begin{align}
	\label{eq: state_main}
	\begin{split}
		\ket{\psi^{t=2}}_{\R\A\Ss\B}=(\matholdcal{M}^{\A}_{unitary}\otimes \matholdcal{M}^{\B}_{unitary})  \ket{\psi^{t=1}}_{\R\A\Ss\B}=\frac{1}{\sqrt{3}}(\ket{0000}+\ket{1100}+\ket{1111})_{\R\A\Ss\B}.
	\end{split}
\end{align}

The super-observers can then calculate the probability of success of the post-selection given the pre-selection and intermediate unitary evolution as
\begin{align}
\label{eq:FRpostselProb}
	\begin{split}	P(u=w=\textup{ok}|\psi^{t=2})=|\braket{\phi^{t=3}|\psi^{t=2}}|^2=|\bra{\phi^{t=3}}(\matholdcal{M}^{\A}_{unitary}\otimes \matholdcal{M}^{\B}_{unitary})  \ket{\psi^{t=1}}|^2=\frac{1}{12}.
	\end{split}
\end{align}

By this, they establish that they have a non-zero probability of obtaining $u=w=\textup{ok}$ and can thus repeat the protocol until they succeed. Upon successfully obtaining the desired outcomes, the protocol is halted and the agents reason about each others' knowledge as follows, where we recall that $K_{\A}(S)$ denotes that Agent $\A$ knows the statement $S$. 

\begin{itemize}
	\item Upon obtaining $u=\textup{ok}$ on measuring $\R\A$, Ursula reasons using the joint state $\ket{\psi^{t=2}}$ (\Cref{eq: state_main}) that Bob must have certainly obtained the outcome $b=1$ upon measuring $\Ss$, since  $\bra{\textup{ok}}_{\R\A}\bra{00}_{\Ss\B} \ket{\psi^{t=2}}_{\R\A\Ss\B}=0$. This gives 
	\begin{equation}
		\label{eq: chain1}
		K_U(  u=w=\textup{ok} \Rightarrow b=1)
	\end{equation}
	\item  Using the same state, Ursula knows that if Bob obtained $b=1$ on measuring $\Ss$, he would have concluded with certainty that Alice obtained $a=1$ on measuring $\R$ since $\bra{00}_{\R\A}\bra{11}_{\Ss\B}\ket{\psi^{t=2}}_{\R\A\Ss\B}=0$.
	\begin{equation}
		\label{eq: chain2}
		K_U K_B(  b=1 \Rightarrow a=1)
	\end{equation}
	\item Again using the same state, Ursula further reasons that Bob knows that if Alice had obtained $a=1$, she would have concluded with certainty that Wigner would obtain $w=\textup{fail}$, since  $\bra{11}_{\R\A}\bra{\textup{ok}}_{\Ss\B}\ket{\psi^{t=2}}_{\R\A\Ss\B}=0$. This gives 
	\begin{equation}
		\label{eq: chain3}
		K_U K_B K_A(  a=1 \Rightarrow w=\textup{fail}).
	\end{equation}
\end{itemize}

As shown in~\cite{Nurgalieva2018}, the above three statements can be combined using the assumption $\textup{C}$ of the FR paper (in the form of \Cref{eq: C_logic}) and the distributive axiom $\textup{D}$ to yield the following paradoxical chain of statements. To obtain this result, the assumption $\textup{C}$ in the form of \Cref{eq: C_logic} only needs to be used between the following pairs of agents: Alice and Bob, Alice and Wigner, Ursula and Bob, Ursula and Wigner, as other pairs of agents need not trust each other \cite{Nurgalieva2018, Vilasini_2019}.\footnote{In \cite{Nurgalieva2018}, the assumption C is replaced by what they call the trust axiom which asserts that an agent $A^i$ can inherit the knowledge of another agent $A^j$ as per \Cref{eq: C_logic} only if $A^i$ trusts $A^j$. In the FR setup, it is precisely these pairs of agents who can be said to trust each other. Other pairs such as Alice and Ursula need not trust each other as one agent Hadamard's the memory of the other.}

\begin{equation}
	\label{eq: chain4}
	K_U( u=w=\textup{ok} \Rightarrow b=1 \Rightarrow  a=1 \Rightarrow w=\textup{fail}),
\end{equation}
or in short $K_U(u=w=\textup{ok} \Rightarrow w=\textup{fail})$. This  argument aims to establish that agents reasoning using $\textup{Q}$, $\textup{U}$, $\textup{C}$ and $\textup{D}$ will arrive at a contradiction with $\textup{S}$ as they conclude through such a reasoning that $w=\textup{ok}$ and $w=\textup{fail}$ must both hold with certainty, and is therefore regarded as a proof of the FR no-go theorem (and its refinement as given in \cite{Nurgalieva2018}) regarding the incompatibility of $\textup{Q}$, $\textup{U}$, $\textup{C}$, $\textup{D}$ and $\textup{S}$.

\subsection{Prepare and measure version of the FR experiment}
\label{sec: FR_PM_review}

The scenario in \cite{Frauchiger2018} describes a protocol realised by agents F, \tb{F}, W and \tb{W}. Agents F and \tb F have their own individual labs while W and \tb W are so-called super-observers, i.e. W can perform arbitrary measurements on F and the lab of F, while \tb W can perform arbitrary measurements on \tb F and the lab of \tb F. The labs of F and \tb F are completely isolated from W and \tb W until W and \tb W measure at the end of the protocol. It is also assumed that the labs F and \tb F are initially in pure states. The lab systems of F and \tb F are denoted L and \tb L respectively. \tb L includes everything in \tb F's lab such as the agent \tb F and a random generator R that they use, but excludes a spin qubit S which will start off in the lab of \tb F and move to the lab of F during the protocol. The lab system L of F will include the spin S that arrives to F, as well as the agent F and other devices in their lab which are not explicitly specified.

The protocol is repeated $n$ times. Between each implementation, it is reset to the initial state. There is a halting condition which is examined at the end of each round. When the condition is satisfied, the protocol is stopped and the last round of the experiment is analysed. The $n^\text{th}$ round of the protocol from the perspective of W and \tb W is as follows:

Before time $n\!:\!00$, \tb F tosses a coin in her lab which gives heads with probability 1/3 and tails with probability 2/3. This coin toss is a random variable which is obtained by measuring the following quantum state in the same basis in which it is expressed.
\begin{align}
\label{eq: FR_PM_instate}
\ket{\textup{init}}_\R:=\sqrt{\frac13}\ket{\textup{heads}}_\R+\sqrt{\frac23}\ket{\textup{tails}}_\R.
\end{align}


The lab of \tb F consists of the random generator (or coin) R, a spin S and $\Lbar\backslash$ which represents the rest of the lab. In the following, $\Lbar:=\R\otimes \Lbar\backslash$ is the lab of \tb F, excluding the spin S.
Then the unitary $U_{\mb{F}}$ (from the perspective of F, W and \tb W) that describes F's coin toss (measurement of R) implements the evolution

\begin{align}
U_{\mb{F}}:\ket{\textup{init},\phi_0,S_0}_{\Lbar\Ss}\to \big(\sqrt{\frac{1}{3}}\ket{\mb h}_{\Lbar}+\sqrt{\frac{2}{3}}\ket{\mb t}_{\Lbar}\big)\ket{S_0}_{\Ss},\label{eq:FR_PM_UF}
\end{align}
where $\ket{\textup{init},\phi_0,S_0}_{\Lbar\Ss}:=\ket{\textup{init}}_\R\ket{\phi_0}_{\Lbar\backslash}\ket{S_0}_\Ss$ with $\ket{\phi_0}_{\Lbar\backslash}$ the initial state of \tb F's lab excluding the spin system and the random generator; the latter two being $\ket{S_0}_\Ss$ and $\ket{\textup{init}}_\R$ respectively, where $\ket{S_0}_\Ss$ is some initial state of $S$. The kets $\ket{\mb h}$, $\ket{\mb t}$ represent the state of \tb F's lab (excluding the spin qubit) after her measurement of the random variable ($\ket{\mb h}$ in the case her measurement revealed heads while $\ket{\mb t}$ when she obtained tails).

\begin{figure*}
    \includegraphics[scale=0.73]{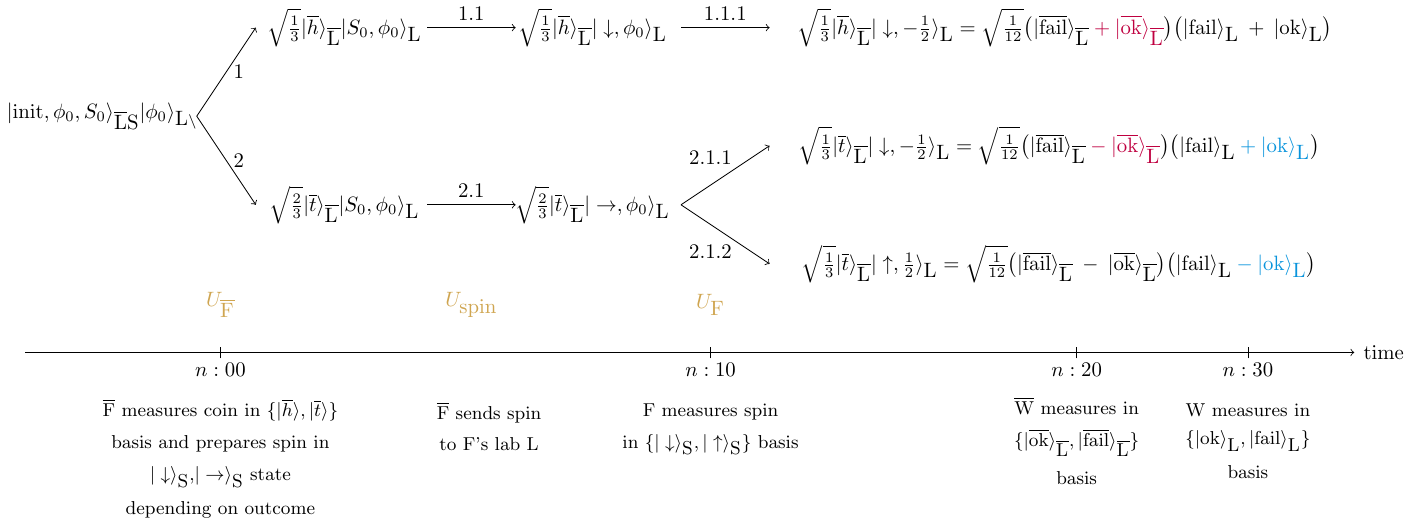}
    \caption{The FR thought experiment expressed using branching notation. The quantum state as per a unitary description from the viewpoint of W and \tb  W at any time, is the sum of the states of each branched at said time. Arrows indicate the splitting of the amplitudes due to the actions of individual observers. We refer to these as \emph{branches} and each branch has its own number above the corresponding arrow. The two purple kets on branches 1.1.1. and 2.1.1 cancel each other out when said branches are added together, while the two blue kets on branches 2.1.1. and 2.1.2. cancel when said branches are added together.}\label{fig:FR paper experiment}
\end{figure*}
Following the coin toss, \tb F prepares a spin qubit in her lab in state $\ket{\downarrow}_\Ss$ if she gets heads, while $\ket{\rightarrow}_\Ss$ if she gets tails. From the perspective of F, W and \tb W, this is a unitary process since her lab is an isolated system.
The unitary that describes this preparation of $\Ss$ is denoted as $U_{\textup{spin}}$ and together with $U_{\mb{F}}$, this implements the evolution
\begin{align}
U_{\textup{spin}}U_{\mb{F}}:\ket{\textup{init},\phi_0,S_0}_{\Lbar\Ss}\to\sqrt{\frac13}\ket{\mb h,\downarrow}_{\Lbar\Ss}+\sqrt{\frac23}\ket{\mb t,\rightarrow}_{\Lbar\Ss},\label{eq:initial to final ket}
\end{align}

Meanwhile, F waits patiently in her lab, which we denote $\ket{\phi_0}_{\Lab\backslash}$ initially, where $\Lab\backslash$ denotes F's lab without the spin qubit $S$ and we will use $\Lab:=\Ss\otimes \Lab\backslash$. Between times $n\!:\!10$ and $n\!:\!20$, \tb F sends her qubit system to F's lab and F subsequently measures the qubit in the $\ket{\uparrow}_\Ss$, $\ket{\downarrow}_\Ss$ basis, denoting her outcome $+1/2$ and $-1/2$ respectively.
From the perspective of W and \tb W, the labs of F and \tb F are subsequently modelled through the series of evolutions $U_{\mb{F}}$, followed by $U_{\textup{spin}}$ followed by $U_{\textup{F}}$ where $U_{\textup{F}}$ is the unitary evolution corresponding to F's measurement of S.

\begin{align}\label{eq:final state FR version}
\begin{split}
&U_{\textup{F}}U_{\textup{spin}}U_{\mb{F}}: \ket{\textup{init},\phi_0,S_0}_{\Lbar\Ss}\ket{\phi_0}_{\Lab\backslash}
\to\sqrt{\frac13}\Big(\ket{\mb h}_\Lbar\ket{\downarrow,-1/2}_{\Lab}
+\ket{\mb t}_\Lbar\ket{\downarrow,-1/2}_{\Lab}+ \ket{\mb t}_\Lbar\ket{\uparrow,1/2}_{\Lab}\Big),
\end{split}
\end{align}
where $\ket{\downarrow,-1/2}_{\Lab}$ denotes the spin in the down state $\ket{\downarrow}_\Ss$ and the state of the rest of lab L when F obtains measurement outcome $-1/2$; similarly for $\ket{\downarrow,-1/2}_{\Lab}$. 
Between times $n\!:\!20$ and $n\!:\!30$, \tb W measures lab \tb L in the basis $\{ \ket{\mb{\textup{ok}}}_\Lbar, \ket{\mb{\textup{fail}}}_\Lbar\}$, where $\ket{\mb{\textup{ok}}}_\Lbar:=\sqrt{\frac12} (\ket{\mb h}_\Lbar-\ket{\mb t}_\Lbar)$, $\ket{\mb{\textup{fail}}}_\Lbar:=\sqrt{\frac12} (\ket{\mb h}_\Lbar+\ket{\mb t}_\Lbar)$. After time $n=30$,  W measures lab L in the basis $\{ \ket{{\textup{ok}}}_\Lab, \ket{{\textup{fail}}}_\Lab\}$, where $\ket{{\textup{ok}}}_{\Lab}:=\sqrt{\frac12} (\ket{\downarrow,-1/2}_\Lab-\ket{\uparrow,1/2}_\Lab)$, $\ket{{\textup{fail}}}_\Lab:=\sqrt{\frac12} (\ket{\downarrow,-1/2}_\Lab+\ket{\uparrow,1/2}_\Lab$. The final state of the labs (right hand side of \Cref{eq:final state FR version}) in this measurement basis takes on the form
\begin{align}
\begin{split}
\sqrt{\frac{1}{12}} \left(\ket{\mb{\textup{fail}}}_\Lbar+\ket{\mb{\textup{ok}}}_\Lbar\right)\left(\ket{\textup{fail}}_\Lab+\ket{\textup{ok}}_\Lab\right)+\sqrt{\frac{1}{6}} \left(\ket{\mb{\textup{fail}}}_\Lbar-\ket{\mb{\textup{ok}}}_\Lbar\right)\ket{\textup{fail}}_\Lab.
\end{split}
\end{align}
In \Cref{fig:FR paper experiment}, we summarise this thought experiment using branching notation.
If the outcomes of W and \tb W are not ok, \tb {ok} respectively, then the protocol is re-set and repeated. If their outcomes are ok, \tb {ok}, then the experiment is finished and F, \tb F, W and \tb W reason about their ok, \tb {ok} measurement outcomes and what they should be able to predict about the measurement outcomes of the other agents in the last round of the protocol. The obtention of ok, \tb {ok} by W and \tb W respectively is the halting condition mentioned previously, and from the above form of the final state, it is clear that this outcome occurs with probability $\frac{1}{12}$.

While reasoning, they assume that all agents are aware of the entire experimental procedure as described above, and that they all employ the same theory. As we will see, all agents, when reasoning from \emph{their} perspective, assume unitary dynamics for all agent's measurements other than those they are reasoning about. In other words, their protocol is a special case of an EWFS. In \Cref{fig: FR_PM_aug_circuit1} we present the augmented circuit of the original FR protocol as a simplified version that makes the mapping back to the entanglement version more apparent.

 In FR's reasoning, they make the three assumptions $\textup{Q}$, $\textup{C}$, $\textup{S}$ explicitly, in addition to the assumptions $\textup{U}$ and $\textup{D}$ implicitly, as described in~\Cref{sec: FRreview}.

In particular, the authors then claim to prove the following theorem:

\emph{Theorem 1: Any theory that satisfies assumptions ($\textup{Q}$), ($\textup{C}$), and ($\textup{S}$) yields contradictory statements when applied to their thought experiment of Box 1}.
 By ``Box 1'' the authors are referring to the above protocol.

\section{Derivation of predictions in the augmented EWFS}
\label{appendix: prob_rule_general}
This section serves the purpose of deriving explicit expressions for predictions and setting-conditioned predictions in the augmented circuit of an EWFS from the Born rule and probability theory.

Our sample space $\Omega$ is chosen to be the set of all measurement outcomes under all settings, namely
\begin{align}
    \Omega=\{a_1, x_1, a_2, x_2, \ldots, a_N, x_N\}_{\{a_j,x_j\}_j},
\end{align}
where $a_j\in \perp$ for $x_j=0$ and $a_j\in  \mathtt{O}_j$ for $x_j=1$. The set of events is the power set of $\Omega$. It follows from applying the Born rule to our augmented circuit (see also \Cref{fig: genform_circuit_settings}) that if agents $\A_1$, $\A_2$, \ldots, $\A_N$ were to perform measurements under settings $\vec x$ with outcomes corresponding to projectors $\pi_{a_1,x_1}^{\A_1}, \pi_{a_2,x_2}^{\A_2}, \ldots, \pi_{a_N,x_N}^{N}$ respectively, then the probability of these elementary events is

    \begin{align}
    \begin{split}
     &P(a_{1}, a_{2},\ldots, a_{N}, \vec x)=P(a_{1}, a_{2},\ldots, a_{N}|\vec x) P(\vec x)\\
     =&\tr\Bigg[  \matholdcal{E}^N\Bigg(\pi^{\A_N}_{a_{N},x_{N}} \matholdcal{M}_\text{unitary}^{\A_N}\Bigg(\ldots\matholdcal{E}^2\bigg(\pi^{\A_2}_{a_{2},x_{2}}\matholdcal{M}_\text{unitary}^{\A_2}\Big(\matholdcal{E}^1\Big(\pi^{\A_1}_{a_{1},x_{1}}\matholdcal{M}_\text{unitary}^{\A_1}(\rho_0)  \pi^{\A_1}_{a_{1},x_{1}}\Big)\Big)\pi^{\A_2}_{a_{2},x_{2}}\bigg)\ldots \Bigg)\pi^{\A_N}_{a_{N},x_{N}}\Bigg)\Bigg]\, P(\vec x), 
     \end{split}
     \label{eq:prob of elemetary events}
    \end{align}

where $\rho_0$ is the initial ``pre-selected'' state, namely $\rho_{\Ss_1,\ldots,\Ss_m}\otimes\proj{0}^{\otimes N}$ and $P(\vec x)$ the unconditional probability distribution over settings. The other term in \Cref{eq:prob of elemetary events} is the probability of outcomes $\{a_1, a_2, \ldots, a_N\}$ conditioned on setting $\vec x$.

The probability of the other events can be derived by marginalising over this distribution. In particular, we will be interested in the probability of events corresponding to agents $\{\A_{j_1}, \A_{j_2},\ldots, \A_{j_p}\}$ obtaining outcomes $\{a_{j_1}, a_{j_2},\ldots, a_{j_p}\}$, conditioned of the setting being $\vec x$. This is given by

\begin{align}
	\begin{split}
		&P(a_{j_1}, a_{j_2},\ldots, a_{j_p}| \, \vec x) \\&=\sum_{\substack{a_1 \text{ if } x_1=1\, \&\, 1\,\notin \,\mathtt{OUT}\\ a_2 \text{ if } x_2=1\, \&\, 2\,\notin \,\mathtt{OUT} \\ \vdots \\ a_{\!N} \!\text{ if } x_N\!=1\, \&\, N\notin \,\mathtt{OUT}}} \tr\Bigg[  \matholdcal{E}^N\Bigg(\pi^{\A_N}_{a_{N},x_{N}}\!\ldots
		\matholdcal{E}^2\bigg(\pi^{\A_2}_{a_{2},x_{2}}\matholdcal{M}_\text{unitary}^{\A_2}\Big(\matholdcal{E}^1\Big(\pi^{\A_1}_{a_{1},x_{1}}\matholdcal{M}_\text{unitary}^{\A_1}(\rho_0)\pi^{\A_1}_{a_{1},x_{1}}\Big)\Big)\pi^{\A_2}_{a_{2},x_{2}}\bigg)\ldots \pi^{\A_N}_{a_{N},x_{N}}\Bigg)\Bigg],
	\end{split}\label{eq:prob of events given x}
\end{align}
$\mathtt{OUT}:=\{j_1,j_2,\ldots,j_p\}$. We have not summed over settings $x_j=0$ since the corresponding observer is modelled unitarily with a deterministic outcome, $a_i=\perp$. Note that if we have a set of projectors $\{\pi_j\}_j$ on a system $\A$ and a quantum channel $\matholdcal{E}$ also on $\A$, from the Stinespring dilation theorem and other elementary properties, it follows that there exists another system $\B$ such that for an arbitrary linear operator $\hat A$ on $\A$, we have
\begin{align}\label{eq:simplifiction}
\begin{split}
		&\sum_j \tr_\A\big[\matholdcal{E}( \pi_j \hat A \pi_j)\big]=\sum_j \tr_\A\big[ \tr_\B[ U_{\A\B} (\pi_j \hat A\pi_j)\otimes \rho_\B  U_{\A\B}^\dag] \big]\\
	&=\sum_j\tr_{\A\B}\big[ U_{\A\B} (\pi_j \hat A\otimes \rho_\B \pi_j)  U_{\A\B}^\dag \big]\\
	&=\tr_{\A\B}\big[(\sum_j \pi_j^2) \hat A\otimes \rho_\B\big]= \tr_\A[\hat A].
\end{split}
\end{align}
Applying this equality iteratively to \Cref{eq:prob of events given x} allows us to simplify it.
\begin{align}
	\begin{split}
		&P(a_{j_1}, a_{j_2},\ldots, a_{j_p}| \, \vec x) =\sum_{\substack{a_1 \text{ if } x_1=1\, \&\, 1\,\notin \,\mathtt{OUT}\\ a_2 \text{ if } x_2=1\, \&\, 2\,\notin \,\mathtt{OUT} \\ \vdots \\ a_{K\!-\!1} \!\text{ if } x_{K-\!1}=1\, \&\, K-\!1\notin \,\mathtt{OUT}}} \tr\bigg[**\bigg],\\
  & ** =\pi^{\A_{K}}_{a_{\!K},1}  \matholdcal{E}^{K-\!1}\Bigg(\pi^{\A_{K-\!1}}_{a_{\!K-\!1},x_{\!K-\!1}}\!\!\ldots\matholdcal{E}^2\bigg(\pi^{\A_2}_{a_{2},x_{2}}\matholdcal{M}_\text{unitary}^{\A_2}\bigg(\matholdcal{E}^1\Big(\pi^{\A_1}_{a_{1},x_{1}}\matholdcal{M}_\text{unitary}^{\A_1}(\rho_0)\pi^{\A_1}_{a_{1},x_{1}}\Big)\bigg)\pi^{\A_2}_{a_{2},x_{2}}\bigg)\ldots \pi^{\A_{K-\!1}}_{a_{\!K-\!1},x_{\!K-\!1}}\Bigg).
	\end{split}\label{eq:prob of events given x 2}
\end{align}

where $K:=\max (\mathtt{OUT})$. 
We see that this prediction does not depend on any channel $\matholdcal{E}^j$, setting $x_j$, nor any other property of an agent $\A_j$ for which $j>K$.

Using the definition of conditional probability, we can now derive an expression for setting-conditioned predictions (\Cref{definition: setting_prediction}), i.e. for the probability of a set of observers $\{\A_{j_1}, \A_{j_2},\ldots, \A_{j_p}\}$ obtaining outcomes $\{a_{j_1}, a_{j_2},\ldots, a_{j_p}\}$ given $\{\A_{l_1}, \A_{l_2},\ldots, \A_{l_q}\}$ measurement outcomes $\{a_{l_1}, a_{l_2},\ldots, a_{l_q}\}$ and setting $\vec x$. When using the definition of conditional probability, \Cref{eq:prob of events given x} and simplifying by means of \Cref{eq:prob of events given x 2}, we obtain

		\begin{align}
			\begin{split}
		&P\left(a_{j_1},a_{j_2},\ldots, a_{j_p}\,\big{|}\, a_{l_1}, \ldots  a_{l_q},\vec x\right):=\sum_{\substack{a_1\, \text{ if } x_1=1\, \&\,\, 1\,\notin \,\mathtt{OUT}\,\cup\,\mathtt{IN} \\ a_2\, \text{ if } x_2=1\, \&\,\, 2\,\notin \,\mathtt{OUT}\,\cup\,\mathtt{IN} \\ \vdots \\ a_{\!Q-\!1} \!\text{ if } x_{Q-\!1}=1\, \&\, {Q-\!1}\notin \,\mathtt{OUT}\,\cup\,\mathtt{IN}}} \frac{\textup{Numerator}}{\sum_{\substack{a_1\, \text{ if } x_1=1 \\ a_2\, \text{ if } x_2=1 \\ \vdots \\ a_{L-1} \!\text{ if } x_{\!L-1}\!=1}} \textup{Denominator}} ,\label{eq:general conditional prob rule simplicied 2}
	\end{split}
\end{align}
\begin{align}
&\textup{Numerator}:= \\
&\tr\left[  \pi^{\A_Q}_{a_{Q},1}\matholdcal{M}_\text{unitary}^{\A_Q} \Bigg(\matholdcal{E}^{Q-1} \Bigg( \pi^{\A_{Q-1}}_{a_{Q-1},x_{Q-1}} \!\!\ldots\matholdcal{E}^2\bigg(\pi^{\A_2}_{a_{2},x_{2}}\matholdcal{M}_\text{unitary}^{\A_2}\Big(\matholdcal{E}^1\Big(\pi^{\A_1}_{a_{1},x_{1}}\matholdcal{M}_\text{unitary}^{\A_1}(\rho_0)\pi^{\A_1}_{a_{1},x_{1}}\Big)\Big)\pi^{\A_2}_{a_{2},x_{2}}\bigg)\ldots \pi^{\A_{Q-1}}_{a_{Q-1},x_{Q-1}} \Bigg)\Bigg)\right]\\
&\textup{Denominator}:= \\
&\tr\left[  \pi^{\A_L}_{a_{L},1}\matholdcal{M}_\text{unitary}^{\A_L} \Bigg(\matholdcal{E}^{L-1} \Bigg( \pi^{\A_{L-1}}_{a_{L-1},x_{L-1}} \!\!\ldots\matholdcal{E}^2\bigg(\pi^{\A_2}_{a_{2},x_{2}}\matholdcal{M}_\text{unitary}^{\A_2}\Big(\matholdcal{E}^1\Big(\pi^{\A_1}_{a_{1},x_{1}}\matholdcal{M}_\text{unitary}^{\A_1}(\rho_0)\pi^{\A_1}_{a_{1},x_{1}}\Big)\Big)\pi^{\A_2}_{a_{2},x_{2}}\bigg)\ldots \pi^{\A_{L-1}}_{a_{L-1},x_{L-1}} \Bigg)\Bigg)\right]
\end{align}

where $Q:=\max\,( \mathtt{OUT}\cup\mathtt{IN})$, $\mathtt{IN}=\{l_1,l_2,\ldots, l_q\}$, $L:=\max\,(\mathtt{IN})$.
Furthermore, we have the constraint that  $x_{j_1}= x_{j_2}= \ldots= x_{j_p}= x_{l_1}= x_{l_2}=\ldots=x_{l_q} =1$, since we are reasoning about these measurement outcomes. Note that if we now sum $P\left(a_{j_1},a_{j_2},\ldots, a_{j_p}\,\big{|}\, a_{l_1}, \ldots  a_{l_q},\vec x\right)$ over the elements in the set $\{a_{j_1}, a_{j_2}, \ldots, a_{j_p}\}$ we obtain one, and thus the distribution is normalised. Notice also that we are not summing over outcomes $a_j$ for which $x_j=0$ since these correspond to the case said measurement is modelled unitarily. Also, as mentioned previously, were we have also pre-selected state $\rho_0$. The ``post-selected state'' is merely the post measurement state of the last measurement performed by observers $\A_j$, $j\in\mathtt{OUT}$. We can easily derive an expression for our predictions (\Cref{definition: prediction}) using \Cref{eq:general conditional prob rule simplicied 2} and our prior $P(\vec x)$:
\begin{align}
\begin{split}
        P\left(a_{j_1},a_{j_2},\ldots, a_{j_p}\,\big{|}\, a_{l_1}, \ldots  a_{l_q}\right):=\sum_{\vec x} P\left(a_{j_1},a_{j_2},\ldots, a_{j_p}\,\big{|}\, a_{l_1}, \ldots  a_{l_q},\vec x\right) P(\vec x),\label{eq:prediction eq in appendix A}
\end{split}
\end{align}
where for consistency we have defined $P\left(a_{j_1},a_{j_2},\ldots, a_{j_p}\,\big{|}\, a_{l_1}, \ldots , a_{l_q},\vec x\right)=0$ if there exists $k\in\{ j_1, ..., j_p, l_1,..., l_q\}$ s.t. $a_k\neq \perp$, $\&$ $x_k=0$.

The simplification coming from the iterative application of \Cref{eq:prob of events given x} to our conditional probability has important physical consequences, namely that the expressions \Cref{eq:general conditional prob rule simplicied 2,eq:prediction eq in appendix A} readily do not depend on the channels, measurement schemes nor settings of agents in the future of when the agents in $\{ \A_k|\, k\in\mathtt{OUT}\cup\mathtt{IN} \, \}$ perform their operations, i.e. in the future of the agents who measurement outcomes are being reasoned about. 

\section{Reduction of the augmented circuit in standard quantum scenarios}
\label{appendix: standard_circuits}

In \Cref{sec: standard_QT} we defined the subclass of EWFSs which we call standard quantum scenarios (\Cref{def:std_q_exp}), and showed that objective, setting-independent predictions emerge in this case. Here we show that in EWFSs corresponding to standard quantum scenarios, the augmented circuit of our framework reduces to a standard form quantum circuit without the ``Heisenberg-cut'' settings. We define two forms of standard quantum circuit representations below before proving this.

\begin{figure*}[t!]
    \centering
    \begin{tikzpicture}
    \draw[->] (6,-3.5)--(9,-3.5); \node at (9.5,-4) {time};
    
\draw [ultra thick,decorate,
    decoration = {calligraphic brace}] (-0.2,-2.2)--(-0.2,0.2); \node at (-1,-1) {$\rho_{\Ss_1,\ldots,\Ss_m}$};

    \draw (0,0)--(1,0); \node at (0.5,0.2) {$\Ss_1$}; \draw (0,-2)--(1,-2); \node at (0.5,-1.8) {$\Ss_m$}; \node at (0.5,-1) {$\cdot$}; \node at (0.5,-1.3) {$\cdot$}; \node at (0.5,-0.7) {$\cdot$}; \draw (1,1) rectangle node{$\matholdcal{M}^{\A_1}$} (2,-2.3); \draw[blue] (2,0.7)--(3,0.7); \node[blue] at (3.6,0.7) {$a_1\in\mathtt{O}_1$};

     \draw (2,0)--(3,0); \node at (2.5,0.2) {$\Ss_1$}; \draw (2,-2)--(3,-2); \node at (2.5,-1.8) {$\Ss_m$}; \node at (2.5,-1) {$\cdot$}; \node at (2.5,-1.3) {$\cdot$}; \node at (2.5,-0.7) {$\cdot$};
     
      \draw (3,0.3) rectangle node{$\matholdcal{E}^1$} (4,-2.3);

\begin{scope}[shift={(2,0)}]
           \draw (2,0)--(3,0); \node at (2.5,0.2) {$\Ss_1$}; \draw (2,-2)--(3,-2); \node at (2.5,-1.8) {$\Ss_m$}; \node at (2.5,-1) {$\cdot$}; \node at (2.5,-1.3) {$\cdot$}; \node at (2.5,-0.7) {$\cdot$};
\end{scope}


\node at (6,-1) {$\dots$};

\begin{scope}[shift={(5,0)}]
           \draw (2,0)--(3,0); \node at (2.5,0.2) {$\Ss_1$}; \draw (2,-2)--(3,-2); \node at (2.5,-1.8) {$\Ss_m$}; \node at (2.5,-1) {$\cdot$}; \node at (2.5,-1.3) {$\cdot$}; \node at (2.5,-0.7) {$\cdot$};
\end{scope}

 \draw (8,1) rectangle node{$\matholdcal{M}^{\A_N}$} (9,-2.3); \draw[blue] (9,0.7)--(10,0.7); \node[blue] at (10.6,0.7) {$a_N\in\mathtt{O}_N$};

 \begin{scope}[shift={(7,0)}]
           \draw (2,0)--(3,0); \node at (2.5,0.2) {$\Ss_1$}; \draw (2,-2)--(3,-2); \node at (2.5,-1.8) {$\Ss_m$}; \node at (2.5,-1) {$\cdot$}; \node at (2.5,-1.3) {$\cdot$}; \node at (2.5,-0.7) {$\cdot$};
\end{scope}
     \draw (10,0.3) rectangle node{$\matholdcal{E}^N$} (11,-2.3);
    \end{tikzpicture}
    \caption{A $\matholdcal{C}^{sys}$-form standard quantum circuit. Here $\mathtt{O}_i$ is the set of possible non-trivial values of the outcome $a_i$. Each $\text{$\matholdcal{M}$}^{\A_i}:=\{\pi^{\mathtt{S}_i}_{a_i}\}_{a_i}$ implements a projective measurement of the subset $\mathtt{S}_i\subseteq \mathtt{S}:=\{\Ss_1,...,\Ss_m\}$ of systems. }
    \label{fig:circuit_sys_form}
\end{figure*}

\begin{definition}[Standard quantum circuit representations]
\label{definition: std_circuit_forms}
 Consider a quantum protocol involving $N$ agents $\mathtt{A}:=\{\A_1,...,\A_N\}$ and $m$ systems $\mathtt{S}:=\{\Ss_1,..,\Ss_m\}$ where each agent $\A_i$ performs a projective measurement $\text{$\matholdcal{M}$}^{\A_i}:=\{\pi^{\mathtt{S}_i}_{\aaa_i}\}_{\aaa_i\in\mathtt{O}_i}$ at time $t_i$ that acts non-trivially on a subset $\mathtt{S}_i\subseteq \mathtt{S}$ of the systems, obtaining an outcome $a_i$ that can take values in a set $\mathtt{O}_i$, followed by a channel $\matholdcal{E}_i$ that may act on all $\mathtt{S}$. A standard quantum circuit representation of such a protocol corresponds to a circuit of one of the two following types, in one case all measurements are modelled as projectors on the systems $\mathtt{S}$ alone and in another case, all measurements can be equivalently purified to unitaries on the systems and some ancillas. 
 \begin{enumerate}
     \item {\bf $\matholdcal{C}^{sys}$-form quantum circuit (\Cref{fig:circuit_sys_form})} A quantum circuit acting on $\mathtt{S}$ which is defined through the composition $\matholdcal{E}_N\circ\text{$\matholdcal{M}$}^{\A_N}\circ...\circ\matholdcal{E}_1\circ \text{$\matholdcal{M}$}^{\A_1}$, where each operation is defined over all of $\mathtt{S}$ but it is given that each $\text{$\matholdcal{M}$}^{\A_i}$ acts non-trivially on some subset $\mathtt{S}_i\subseteq \mathtt{S}$. Outcome probabilities are calculated by applying the Born rule to the circuit, wit the projective measurements $\text{$\matholdcal{M}$}^{\A_i}:=\{\pi^{\mathtt{S}_i}_{\aaa_i}\}_{\aaa_i\in\mathtt{O}_i}$ on $\mathtt{S}_i$.
     
      \item {\bf $\matholdcal{C}^{sys+anc}$-form quantum circuit (\Cref{fig:circuit_sys+anc_form})} A quantum circuit acting on $\mathtt{S}\cup \{\M_1,...,\M_N\}$ where $\M_i$ denotes an ancillary quantum system corresponding to the measurement $\text{$\matholdcal{M}$}^{\A_i}$ whose state space is isomorphic to that of the systems $\mathtt{S}_i\subseteq \mathtt{S}$ on which the measurement acts non-trivially. It is defined through the composition $\matholdcal{E}_N\circ\text{$\matholdcal{M}$}^{\A_N}_{unitary}\circ...\circ\matholdcal{E}_1\circ \text{$\matholdcal{M}$}^{\A_1}_{unitary}$ where each measurement $\text{$\matholdcal{M}$}^{\A_i}$ is purified to a unitary interaction $\text{$\matholdcal{M}$}^{\A_i}_{unitary}$ acting on $\mathtt{S}\cup \M_i$ (and non-trivially on $\mathtt{S}_i\cup M_i$), where $\text{$\matholdcal{M}$}^{\A_i}_{unitary}$ corresponds to the unitary that implements a coherent copy from $\mathtt{S}_i$ to $\M_i$ in the orthonormal basis given by the measurement projectors (as defined in \Cref{eq:M_unitary_gen}). Each $\matholdcal{E}_i$ acts only on $\mathtt{S}$. Outcome probabilities are calculated by measuring the ancillas $\M_i$ (using isomorphic projectors $\{\pi_{\aaa_i}^{\M_i}:=\ket{\aaa_i}\bra{\aaa_i}_{\M_i}\}_{\aaa_i\in \mathtt{O}_i}$) at a time $t_f>t_N$ at the end of the protocol (using the Born rule).
 \end{enumerate}
\end{definition}

\begin{figure*}[t!]
    \centering
    \begin{tikzpicture}
    \draw[->] (6,-6)--(9,-6); \node at (9.5,-6.5) {time};
    
\draw [ultra thick,decorate,
    decoration = {calligraphic brace}] (-0.2,-2.2)--(-0.2,0.2); \node at (-1,-1) {$\rho_{\Ss_1,\ldots,\Ss_m}$};

    \draw (0,0)--(1,0); \node at (0.5,0.2) {$\Ss_1$}; \draw (0,-2)--(1,-2); \node at (0.5,-1.8) {$\Ss_m$}; \node at (0.5,-1) {$\cdot$}; \node at (0.5,-1.3) {$\cdot$}; \node at (0.5,-0.7) {$\cdot$}; \draw (1,3.3) rectangle node{$\matholdcal{M}^{\A_1}_{unitary}$} (3,-3.3); \draw (3,3)--(13.5,3); \node at (13,3.2) {$\M_1$}; \node at (13,2.3) {$\cdot$}; \node at (13,2) {$\cdot$}; \node at (13,1.7) {$\cdot$};
    \draw (0,-3)--(1,-3); \node at (0.5,-2.8) {$\M_1$}; \node at (-0.5,-3) {$\ket{0}_{\M_1}$};

\draw (13.5,2.5) rectangle node{$\matholdcal{M}^{\M_1}$} (14.5,3.5); \draw[blue] (14.5,3)--(14.9,3); \node[blue] at (15.5,3) {$a_1\in\mathtt{O}_1$};

\draw (13.5,0.5) rectangle node{$\matholdcal{M}^{\M_N}$} (14.5,1.5);  \draw[blue] (14.5,1)--(14.9,1); \node[blue] at (15.5,1) {$a_N\in\mathtt{O}_N$};

     \begin{scope}[shift={(1,0)}]
     \draw (2,0)--(3,0); \node at (2.5,0.2) {$\Ss_1$}; \draw (2,-2)--(3,-2); \node at (2.5,-1.8) {$\Ss_m$}; \node at (2.5,-1) {$\cdot$}; \node at (2.5,-1.3) {$\cdot$}; \node at (2.5,-0.7) {$\cdot$};
     
      \draw (3,0.3) rectangle node{$\matholdcal{E}^1$} (4,-2.3);
   \end{scope}   

\begin{scope}[shift={(3,0)}]
           \draw (2,0)--(3,0); \node at (2.5,0.2) {$\Ss_1$}; \draw (2,-2)--(3,-2); \node at (2.5,-1.8) {$\Ss_m$}; \node at (2.5,-1) {$\cdot$}; \node at (2.5,-1.3) {$\cdot$}; \node at (2.5,-0.7) {$\cdot$};
\end{scope}

\node at (7,-1) {$\dots$};

\begin{scope}[shift={(6,0)}]
           \draw (2,0)--(3,0); \node at (2.5,0.2) {$\Ss_1$}; \draw (2,-2)--(3,-2); \node at (2.5,-1.8) {$\Ss_m$}; \node at (2.5,-1) {$\cdot$}; \node at (2.5,-1.3) {$\cdot$}; \node at (2.5,-0.7) {$\cdot$};
\end{scope}

 \draw (9,1.3) rectangle node{$\matholdcal{M}^{\A_N}_{unitary}$} (11,-5.3); \draw (11,1)--(13.5,1); \node at (13,1.2) {$\M_N$}; 
 \draw (0,-5)--(9,-5); \node at (0.5,-4.8) {$\M_N$}; \node at (-0.5,-5) {$\ket{0}_{\M_N}$}; \node at (0.5,-3.5) {$\cdot$}; \node at (0.5,-3.8) {$\cdot$}; \node at (0.5,-4.1) {$\cdot$};
 \begin{scope}[shift={(9,0)}]
           \draw (2,0)--(3,0); \node at (2.5,0.2) {$\Ss_1$}; \draw (2,-2)--(3,-2); \node at (2.5,-1.8) {$\Ss_m$}; \node at (2.5,-1) {$\cdot$}; \node at (2.5,-1.3) {$\cdot$}; \node at (2.5,-0.7) {$\cdot$};
\end{scope}
     \draw (12,0.3) rectangle node{$\matholdcal{E}^N$} (13,-2.3);
    \end{tikzpicture}
    \caption{A $\matholdcal{C}^{sys+anc}$-form standard quantum circuit where each measurement in \Cref{fig:circuit_sys_form} is purified to a unitary using an ancilla. Here $\mathtt{O}_i$ is the set of possible non-trivial values of the outcome $a_i$. The measurements $\text{$\matholdcal{M}$}^{\M_i}:=\{\pi^{\M_i}_{a_i}\}_{a_i}$ applied in the global future, implement the isomorphic projective measurement of the ancilla $\M_i$, as do the measurements $\text{$\matholdcal{M}$}^{\A_i}$ (from \Cref{fig:circuit_sys_form}) on the  subset $\mathtt{S}_i\subseteq \mathtt{S}:=\{\Ss_1,...,\Ss_m\}$ of systems. It is immediate to see (and well-known) that the present circuit and that of \Cref{fig:circuit_sys_form} are operationally equivalent. }
    \label{fig:circuit_sys+anc_form}
\end{figure*}

These two forms of circuits are illustrated in \Cref{fig:circuit_sys_form} and \Cref{fig:circuit_sys+anc_form}.

\begin{restatable}[Recovering standard quantum circuits]{theorem}{StandardQT}
\label{theorem: standardQT}
If an EWFS corresponds to a standard quantum scenario (\Cref{def:std_q_exp}), then its augmented circuit can be
equivalently reduced to a standard quantum circuit, such that the same (non-trivial) predictions are obtained from the original augmented circuit, the $\matholdcal{C}^{sys}$-form standard circuit or the $\matholdcal{C}^{sys+anc}$-form standard circuit. Explicitly, for any disjoint sets $\vec{a}_j=(a_{j_1},...,a_{j_p})$ and $\vec{a}_l=(a_{l_1},...,a_{l_q})$ of outcomes, and any choice of settings $\vec{x}=\vec{\xxx}$ such that $x_i=1$ for all $i\in \{j_1,...,j_p,l_1,...,l_q\}$, we have

\begin{align}
    \begin{split}
        P_{aug}(\vec{a}_j=\vec{\aaa}_j|\vec{a}_j=\vec{\aaa}_j,
        \vec{x}=\vec{\xxx})
        =P_{std}(\vec{a}'_j=\vec{\aaa}_j|\vec{a}'_j=\vec{\aaa}_j),
    \end{split}
\end{align}
where the $P_{aug}$ refers to setting-conditioned predictions in the augmented circuit of the EWFS and $P_{std}$ refers to predictions in an equivalent  $\matholdcal{C}^{sys}$-form or  $\matholdcal{C}^{sys+anc}$-form standard quantum circuit (where no settings are involved). 
\end{restatable}
A proof of this theorem can be found in \Cref{appendix: proofs}.

\section{Detailed analysis of the FR experiment}

\subsection{Entanglement version of the FR experiment}
\label{appendix: ent_FR}

In the main text, we provided a brief overview of the entanglement version of the FR scenario as well as a simple explanation of our resolution of the paradox. Having reviewed this scenario in detail in \Cref{sec: FR_ent_review}, we now provide a more detailed analysis of the same, showing the explicit calculation for every prediction involved.

We reproduce each of the individual statements used in the reasoning of the FR scenario described in \Cref{sec: FR_ent_review}, i.e., those captured by \Cref{eq: chain1}-\Cref{eq: chain3}. Additionally, this reasoning occurs only in a round where the super-agents observe the outcomes $u=w=\textup{ok}$, which is associated with a probability $\frac{1}{12}$ in FR's arguments as shown in \Cref{eq:FRpostselProb}. We start by reproducing this probability as an explicit setting conditioning prediction in our framework. The probability of \Cref{eq:FRpostselProb} is equivalent to conventional prediction $P^{FR}_{conv}(u=w=\text{ok})$ (\Cref{def: conv_prediction}) of the FR scenario,  in computing this, FR apply the $\textup{U}$ assumption implicitly to model the measurements of the agents Alice and Bob purely unitarily (using $\matholdcal{M}^A_{unitary}$ and $\matholdcal{M}^B_{unitary}$ as seen in \Cref{eq:FRpostselProb}). This corresponds precisely to the setting choices $x_A=x_B=0$ (as expected from the general mapping from conventional to setting-conditioned predictions given in \Cref{theorem: main}). Therefore it follows immediately that 
\begin{align}
\begin{split}
    P(u=w=\textup{ok}|\ket{\psi}^{t=2}):=P^{FR}_{conv}(u=w=\text{ok})=P(u=w=\text{ok}|x_A=x_B=0)
    \end{split}
\end{align}

Note that in our notation for predictions, we don't explicitly condition on the initial state of the scenario as this is taken to be in the common knowledge of all agents. See \Cref{sec: reasoning_rules} for a discussion on how our framework and arguments can generalise to the case where one relaxes this common knowledge assumption. 

Now we proceed to analysing each of the statements that agents make when the above post-selection on $u=w=\textup{ok}$ succeeds. Consider the statement obtained in \Cref{eq: chain1}, here Ursula, upon knowing that $u=w=ok$ reasons about Bob's outcome $b$ using the state $\ket{\psi^{t=2}}_{\R\A\Ss\B}$ and concludes that $u=w=\textup{ok} \Rightarrow b=1$. This logical statement is equivalently expressed in probabilities through the conventional prediction $P_{conv}(b=1|u=w=\textup{ok})=1$.

We can readily extract the setting choices implicit in the calculation of this probability for the FR protocol. Note that $\ket{\psi^{t=2}}$ is obtained by applying $M^{\A}_{unitary}\otimes M^{\B}_{unitary}$ to the initial state $\ket{\psi^{t=1}}$ and in calculating the above-mentioned probability for $b=1$ using the Born rule, FR apply the projector $\pi^{\B}_{1,1}=\ket{11}\bra{11}_{\Ss\B}$ to $\ket{\psi^{t=2}}$. This precisely corresponds to the assigning $x_2=1$ for Bob's setting. On the other hand, they model Alice's measurement as a purely unitary evolution ($M^{\A}_{unitary}$) as seen by Ursula in this reasoning step, therefore the setting choice used for Alice in this reasoning is $x_1=0$. Making these setting choices explicit, we see that this probability calculated in the FR reasoning is equivalent to the setting-conditioned prediction $P(b=1|u=w=\textup{ok},(x_1,x_2)=(0,1))=1$ in our framework.

Indeed one can calculate this prediction from the augmented circuit of \Cref{fig: FR_ent_circuit_aug} for the FR protocol (using the Born rule and the well-known rule for conditional probabilities) and would obtain the same. We demonstrate this below, for further details on how the probability rule for setting-conditioned predictions in augmented circuits is derived, see \Cref{appendix: prob_rule_general}.

\begin{align}
	\begin{split}
		P(b=1|u=w=\textup{ok},(x_1,x_2)=(0,1)) = \frac{P(b=1,u=w=\textup{ok}|(x_1,x_2)=(0,1))}{P(u=w=\textup{ok}|(x_1,x_2)=(0,1))}.
	\end{split}
\end{align}

That this expression evaluates to unit probability is evident from the following calculation of the numerator and denominator of this expression for the FR protocol.

\begin{align}
	\begin{split}
		&P(b=1,u=w=\textup{ok}|(x_1,x_2)=(0,1))=\frac{1}{12}\\
		=&|\bra{\textup{ok}}_{\R\A}\otimes \bra{\textup{ok}}_{\Ss\B}\Big(\id_{RA}\otimes \pi^{\B}_{x_2=1,b=1}\Big) \ket{\psi^{t=2}}|^2\\
		=&|\bra{\textup{ok}}_{\R\A}\otimes \bra{\textup{ok}}_{\Ss\B}\Big(\id_{RA}\otimes \ket{11}\bra{11}_{\Ss\B}\Big) \ket{\psi^{t=2}}|^2.
	\end{split}
\end{align}

\begin{align}
	\begin{split}
		&P(u=w=\textup{ok}|(x_1,x_2)=(0,1))=\frac{1}{12}\\
		=&\sum_{b\in\{0,1\}}|\bra{\textup{ok}}_{\R\A}\otimes \bra{\textup{ok}}_{\Ss\B}\Big(\id_{RA}\otimes \pi^{\B}_{x_2=1,b}\Big) \ket{\psi^{t=2}}|^2\\
		=&\sum_{b\in\{0,1\}}|\bra{\textup{ok}}_{\R\A}\otimes \bra{\textup{ok}}_{\Ss\B}\Big(\id_{RA}\otimes \ket{bb}\bra{bb}_{\Ss\B}\Big) \ket{\psi^{t=2}}|^2.
	\end{split}
\end{align}

Having formalised FR's logical statement $u=w=\textup{ok}\Rightarrow b=1$ as the setting-conditioned prediction $P(b=1|u=w=\textup{ok},(x_1,x_2)=(0,1))=1$ in our framework, we obtain the corresponding explicit version of the statement: 
\begin{equation}
	u=w=\textup{ok} \land (x_1,x_2)=(0,1)\Rightarrow b=1.
\end{equation}

We now proceed to the next statement of the FR reasoning, given by \Cref{eq: chain2}, where Ursula reasons about Bob's reasoning of Alice through the statement $b=1\Rightarrow a=1$. This is equivalently expressed in terms of probabilities through the conventional prediction $P_{conv}(a=1|b=1)=1$.

To evaluate this probability using the Born rule as FR do, we must apply the projector $\pi^{\A}_{1,1}\otimes\pi^{\B}_{1,1}=\ket{11}\bra{11}_{\R\A}\otimes \ket{11}\bra{11}_{\Ss\B}$ to $\ket{\psi^{t=2}}_{\R\A\Ss\B}$ and it is evident that the implicit setting choices for Alice and Bob used here are $(x_1,x_2)=(1,1)$. Therefore, making this explicit, we have $P(a=1|b=1, (x_1,x_2)=(1,1))=1$. We can again verify this from the probability rule for our augmented circuit (which is simply the Born rule and standard conditional probability rule).
\begin{align}
	\begin{split}
		P(a=1|b=1, (x_1,x_2)=(1,1))=\frac{P(a=1,b=1|(x_1,x_2)=(1,1))}{P(b=1|(x_1,x_2)=(1,1))}
	\end{split}
\end{align}

That this evaluates to unit probability is immediate from the following expressions for the numerator and denominator.

\begin{align}
	\begin{split}
		&P(a=1,b=1|(x_1,x_2)=(1,1))=\frac{1}{3}\\
		=&|\Big(\pi^{\A}_{x_1=1,a=1}\otimes\pi^{\B}_{x_2=1,b=1}\Big)\ket{\psi^{t=2}}|^2\\
		=&|\Big(\ket{11}\bra{11}_{\R\A}\otimes \ket{11}\bra{11}_{\Ss\B}\Big)\ket{\psi^{t=2}}|^2.
	\end{split}
\end{align}

\begin{align}
	\begin{split}
		&P(b=1|(x_1,x_2)=(1,1))=\frac{1}{3}\\
		=&|\Big(\id_{\R\A}\otimes\pi^{\B}_{x_2=1,b=1}\Big)\ket{\psi^{t=2}}|^2\\
		=&|\Big(\id_{\R\A}\otimes \ket{11}\bra{11}_{\Ss\B}\Big)\ket{\psi^{t=2}}|^2.
	\end{split}
\end{align}

From this, as before, we can extract the explicit version of the logical statement.
\begin{equation}
	b=1 \land (x_1,x_2)=(1,1)\Rightarrow a=1.
\end{equation}

We now turn to the third statement of the FR reasoning given in \Cref{eq: chain3}, where Ursula reasons about Bob's reasoning about Alice's reasoning about Wigner, through the statement $a=1\Rightarrow w=\textup{fail}$. This equivalent probabilistic version is given by the conventional prediction $P_{conv}(w=\textup{fail}|a=1)=1$.

Analysing how FR calculate this probability using the Born rule, we see that this involves applying the projector $\pi^{\A}_{1,1}=\ket{11}\bra{11}_{\R\A}$ to the state $\ket{\psi^{t=2}}$ which gives us Alice's setting $x_1=1$. Moreover, Bob is modelled purely unitarily here, through $M^{\B}_{unitary}$ and we have $x_2=0$. Therefore, the explicit setting-conditioned prediction corresponding to this reasoning step of FR is $P(w=\textup{fail}|a=1,(x_1,x_2)=(1,0))=1$. This can be verified within our framework as follows.

\begin{align}
	\begin{split}
		P(w=\textup{fail}|a=1,(x_1,x_2)=(1,0))
		= \frac{P(w=\textup{fail},a=1|(x_1,x_2)=(1,0))}{P(a=1|(x_1,x_2)=(1,0))}
	\end{split}
\end{align}

The numerator and denominator are evaluated below, which makes it evident that the expression above evaluates to unity.

\begin{align}
	\begin{split}
		&P(w=\textup{fail},a=1|(x_1,x_2)=(1,0))=\frac{2}{3}\\
		=&  |\id_{\R\A}\otimes \bra{\textup{fail}}_{\Ss\B}\Big(\pi^{\A}_{x_1=1,a=1}\otimes\id_{\Ss\B}\Big)\ket{\psi^{t=2}}|^2\\
		=&|\id_{\R\A}\otimes \bra{\textup{fail}}_{\Ss\B}\Big(\ket{11}\bra{11}_{\R\A}\otimes\id_{\Ss\B}\Big)\ket{\psi^{t=2}}|^2
	\end{split}
\end{align}

\begin{align}
	\begin{split}
		&P(a=1|(x_1,x_2)=(1,0))=\frac{2}{3}\\
		=&  |\Big(\pi^{\A}_{x_1=1,a=1}\otimes\id_{\Ss\B}\Big)\ket{\psi^{t=2}}|^2\\
		=&|\Big(\ket{11}\bra{11}_{\R\A}\otimes\id_{\Ss\B}\Big)\ket{\psi^{t=2}}|^2
	\end{split}
\end{align}

Then the corresponding, explicit version of the logical statement is,

\begin{equation}
	a=1 \land (x_1,x_2)=(1,0)\Rightarrow w=\textup{fail}.
\end{equation}

Therefore, we have explicitly derived all the statements in \Cref{table: resolution_ent} while highlighting the setting choices implicit in each of FR's statements. As we have seen, making explicit these setting choices is sufficient to resolve the apparent paradox. 

\subsection{Prepare and measure version of the FR experiment}
\label{ssec: resolution_prep}

We have reviewed the original prepare and measure of the FR thought experiment in \Cref{sec: FR_PM_review}.
Here we show that the resolution proposed for the entanglement version is also applicable to the original (prepare and measure) version of the FR paradox, ref.~\cite{Frauchiger2018}. We will first show explicitly how original version of the FR protocol can also be modelled as an augmented circuit. Then, we will go through the reasoning of the prepare and measure version on a statement-by-statement basis of their apparent proof of their paradox. For every statement, we reveal the different settings which said statements are contingent on, but not stated by the authors of the FR paper.


%
{\bf The augmented circuit}
The augmented circuit of the original prepare and measure version of the FR protocol is given in \Cref{fig: FR_PM_aug_circuit1}. An equivalent version of this circuit  is given in \Cref{fig: FR_PM_aug_circuit2}, that makes the mapping to the entanglement version of the FR thought experiment more explicit. 

The circuits encode the states and measurements of the original protocol in the computational basis as follows.
The initial state $\sqrt{\frac13}\ket{\textup{heads}}_\R+\sqrt{\frac23}\ket{\textup{tails}}_\R$ of the coin in $\mb{F}$'s lab (\Cref{eq: FR_PM_instate}) is represented in the computational basis as $\ket{\psi}_R=\sqrt{\frac13}\ket{0}_\R+\sqrt{\frac{2}{3}}\ket{1}_\R$, all other systems ($\mb{L}\backslash$, $S$ and $L \backslash$) are initialised to $\ket{0}$. Then the measurement of $\mb{F}$ corresponds to a computational basis measurement, with the outcome $r=head$ identified with $r=0$ and $r=\textup{tails}$ identified with $r=1$. The preparation of $S$ carried out by $\mb{F}$, based on the outcome of their measurement on $R$ corresponds to a controlled Hadamard with the states $\ket{\downarrow}_{\Ss}$, $\ket{\uparrow}_{\Ss}$ of the FR scenarios represented as $\ket{0}_{\Ss}$ and $\ket{1}_{\Ss}$ here.
Similarly, the measurement of $\textup{F}$ then also becomes a computational basis measurement with the outcome $z=+\frac{1}{2}$ identified with $z=0$ and $z=-\frac{1}{2}$ identified with $z=1$. The projectors $\mathtt{\Pi}^{\mb{F}}_{x_1}$ and $\mathtt{\Pi}^{F}_{x_2}$ acting on the systems $\R\Lbar\backslash$ and $\Ss\Lab\backslash$ respectively are identical to the projectors $\mathtt{\Pi}^{A}_{x_1}$ and $\mathtt{\Pi}^{B}_{x_2}$ of \Cref{eq: FR_aug_proj} acting on the systems $\R\A$ and $\Ss\B$, and the final measurements of \=W and W are the same as the entanglement version of \Cref{sec: FR_ent_review}, i.e., $\{\ket{\textup{ok}}_{\Ss\Lab\backslash}=\frac{1}{\sqrt{2}}(\ket{00}-\ket{11})_{\Ss\Lab\backslash},\ket{\textup{fail}}_{\Ss\Lab\backslash}:=\frac{1}{\sqrt{2}}(\ket{00}+\ket{11})_{\Ss\Lab\backslash}\}$ and similarly for $\{\ket{\mb{\textup{ok}}}_{\R\Lbar\backslash},\ket{\mb{\textup{fail}}}_{\R\Lbar\backslash}\}$.

\begin{figure*}
	\includegraphics[scale=0.8]{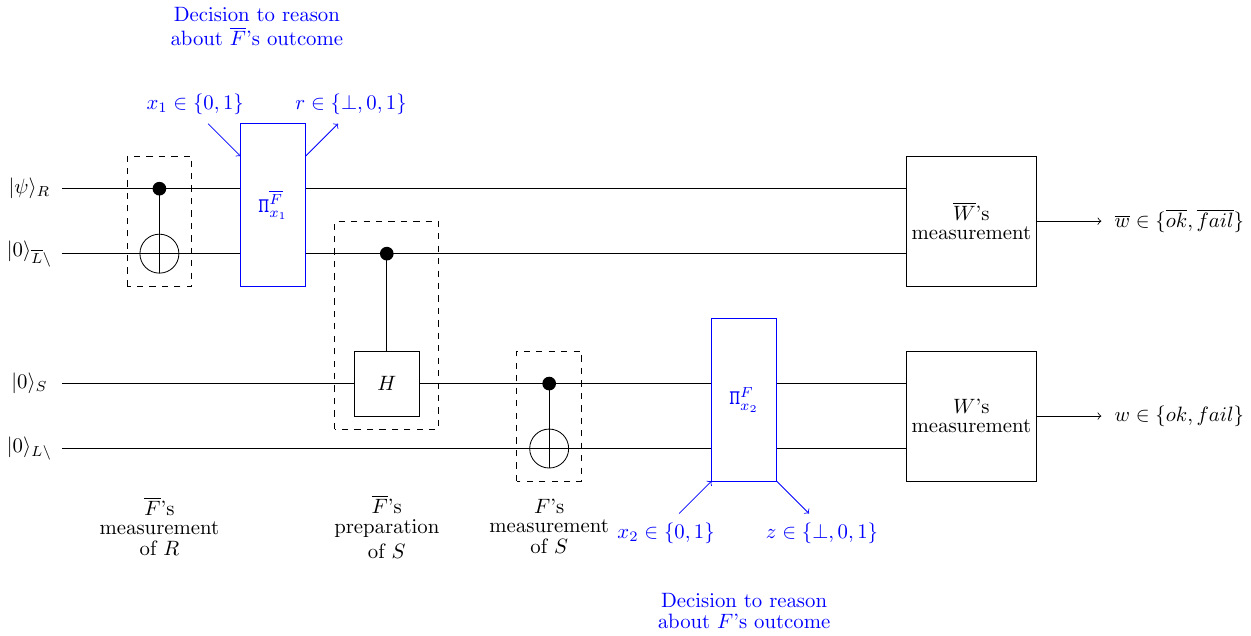}
	\caption{Augmented circuit for the prepare and measure version of the FR scenario expressed in terms of the computational basis, as explained in the main text. }
	\label{fig: FR_PM_aug_circuit1}
\end{figure*}

\begin{figure*}
	\includegraphics[scale=0.8]{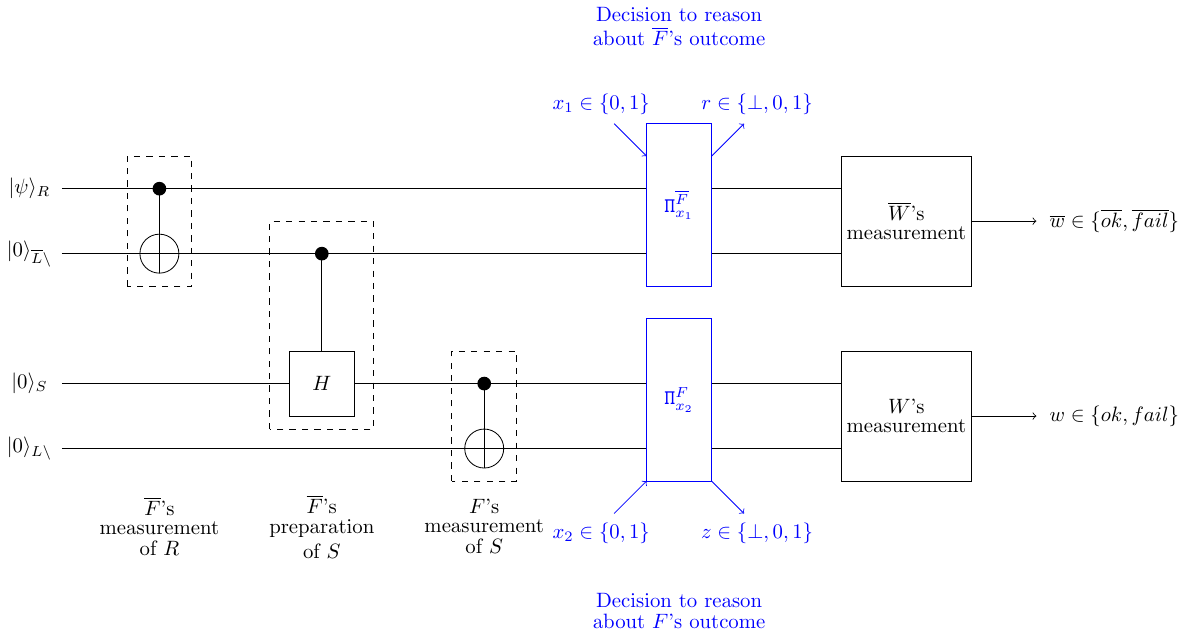}
	\caption{Equivalent version of the augmented circuit (\Cref{fig: FR_PM_aug_circuit1}) of the prepare and measure FR protocol. Note that the blue box corresponding to $\mathtt{\Pi}^{\mb{F}}_{x_1}$ commutes with the controlled Hadamard gate since the control is on the same basis as the measurements associated with the box. Then, it is easy to verify that the joint state of $\R\Lbar\backslash\Ss\Lab$ just before the blue boxes (i.e., just after F's measurement of $\Ss$) is precisely the same state as $\ket{\psi^{t=2}}_{\R\A\Ss\B}$ (\Cref{eq: state_main}) of the entanglement version  of the FR scenario, with $\Lbar\backslash$ and $\Lab\backslash$ playing the role of $\A$ and $\B$. }
	\label{fig: FR_PM_aug_circuit2}
\end{figure*}

{\bf Statement-by-statement analysis}
We will now analyse their constructive proof step by step and point out what settings are required to reproduce their statements. We will use a notation which is close to, but not identical to, that of the authors to aid comparison.

The authors use the following notation to denote the statements by specific agents at particular times:  $G^{t}$:``k'' where $G\in\{\textup{W}, \textup{\tb  W}, \textup{F}, \textup{\tb  F}\}$ denotes the agent $G$ making statement $k$ at time $t$. Here $x_1$ are for the measurements of \tb F while $x_2$ are for F's. Analogously to in the entanglement scenario, we will not need to consider settings for W nor \tb W since they are super observers. We will use italics when referring to the reasoning of the authors. What's more we will use the same notation as the authors to specify different statements, but with an additional subscript indicating the $(x_2,x_1)$ setting that said observer is using when making the statement in the corresponding augmented circuit: $G^{t}_{(x_1,x_2)}$:``k''. The setting $(0,0)$ corresponds to the case where the agent is not reasoning about F nor \tb F's measurement outcome. Note that the reasoning of a particular observer only depends on their choice of $(x_1,x_2)$ settings and not on those of the other observers. Furthermore, we will say $G^{t}_{(x_1,x_2)}= \mathtt{correct}$ if $G^{t}$ holds under settings $(x_1,x_2)$, $G^{t}_{(x_1,x_2)}= \mathtt{false}$ if statement $G^{t}$ does not hold under settings $(x_1,x_2)$, and $G^{t}_{(x_1,x_2)}=\emptyset$ if statement $G^{t}$ involves reasoning about an observer's measurement outcome when the choice $(x_1,x_2)$ does not allow observer $G$ to reason about said observer.

The first statement is made by \tb  F at time $n\!:\!00$: \emph{$\textup{\tb {F}}^{n:00}$ :``The value $w$ is obtained by a measurement of L w.r.t.  basis $\{ \pi_{w=\textup{ok}}^{n:10},\pi_{w=\textup{fail}}^{n:10} \}$, which is completed at time $n:31$''.} 
Here \phantom{$\pi_{w=\textup{ok}_P}^{n:10^A}$}\hspace{-9mm}$\pi_{w=\textup{ok}}^{n:10}$, $\pi_{w=\textup{fail}}^{n:10}$ are the projectors onto the $ \ket{{\textup{ok}}}_\Lab$, $\ket{{fail}}_\Lab$ basis. This statement is correct according to the framework of this paper for all $(x_1,x_2)$ i.e. $\textup{\tb {F}}^{n:00}_{(0,0)}=\textup{\tb {F}}^{n:00}_{(0,1)}=\textup{\tb {F}}^{n:00}_{(1,0)} = \textup{\tb {F}}^{n:00}_{(1,1)}=\mathtt{correct}$.

\emph{Then, if \tb  F got $r=$ tails in her measurement of the coin flip, she would make the statement  $\textup{\tb {F}}^{n:01}$: ``The spin S is in state $\ket{\rightarrow}_\Ss$ at time $n\!:\!10$''.} In our formalism, this statement would also hold since \tb F is only reasoning about the measurement she made, we thus have  $\textup{\tb {F}}^{n:01}_{(1,0)}=\textup{\tb {F}}^{n:01}_{(1,1)}=\mathtt{correct}$ and $\textup{\tb {F}}^{n:01}_{(0,0)}=\textup{\tb {F}}^{n:01}_{(0,1)}=\emptyset$. Any further statements made by agent \tb  F would have to be pre-selected on the spin being in state $\ket{\rightarrow}_\Ss$. In \Cref{fig:FR paper experiment} this would correspond to the selection of branch 1 and multiplying it by $\sqrt{3/2}$ to re-normalise the branch. The authors then go on to make the following claim: \emph{$\textup{\tb {F}}^{n:00}$ and $\textup{\tb {F}}^{n:01}$ inserted into Q imply $w=$ \textup{fail}} (this is claim $\textup{\tb {F}}^{n:02}$ in table 3). This statement only holds if one does not take into account F's measurement at time $n\!:\!10$. In other words, \phantom{$\textup{\tb {F}}^{n:02^A}_{(1,0)_P}$}\hspace{-10mm}$\textup{\tb {F}}^{n:02}_{(1,0)}=\mathtt{correct}$, $\textup{\tb {F}}^{n:02}_{(1,1)}=\mathtt{false}$. To see this, note from \Cref{fig:FR paper experiment} that their conclusion follows from noting the cancellation of the two blue $\ket{\textup{ok}}_{\Lab}$ kets when we pre-select on branch 2. However, branch 2 is further split into sub-branches 2.1.1. and 2.1.2. via F's measurement. This splitting causes the blue coloured kets to not cancel each other out from \tb F's perspective when reasoning under settings $(x_1,x_2)=(1,1)$. In terms of equations, under settings $(x_1,x_2)=(1,1)$ we would conclude that $\textup{\tb {F}}^{n:00}$ and $\textup{\tb {F}}^{n:01}$ imply that the probability that $w=$ fail is 

\begin{align}
\begin{split}
	&P\big(w=\textup{fail}|r=\textup{tails}, (x_1,x_2)=(1,1)\big)\\
	&=\sum_{z\in\{-1/2,1/2\}}\!\!\!\!\!\!P\big(w=\textup{fail}, z |r=\textup{tails}, (x_1,x_2)=(1,1)\big)\\
	&=\frac{\sum_{z\in\{-1/2,1/2\}}\tr\left[ \pi_\textup{fail}^{\textup{W}} \pi_z^{\textup{F}} \pi_\textup{tail}^{\textup{\tb F}}\rho_\textup{Fi}\pi_\textup{tail}^{\textup{\tb F}}\pi_z^{\textup{F}} \right]}{\sum_{\substack{z\in\{-1/2,1/2\} \\ w\in\{\textup{ok},\textup{fail}\}}}\tr\left[ \pi_w^{\textup{W}} \pi_z^{\textup{F}} \pi_\textup{tail}^{\textup{\tb F}}\rho_\textup{Fi}\pi_\textup{tail}^{\textup{\tb F}}\pi_z^{\textup{F}} \right]}\\
	&=\sum_{z\in\{-1/2,1/2\}}\frac{\tr\left[ \pi_\textup{fail}^{\textup{W}} \pi_z^{\textup{F}} \pi_\textup{tail}^{\textup{\tb F}}\rho_\textup{Fi}\pi_\textup{tail}^{\textup{\tb F}}\pi_z^{\textup{F}} \right]}{\tr\left[ \pi_\textup{tail}^{\textup{\tb F}}\rho_\textup{Fi} \right]}=\frac12,
\end{split}\label{eq:P w given a rightarrow old}
\end{align}	
where $\rho_\textup{Fi}:=  U_{F} U_{\textup{spin}} U_{\mb{F}} \rho_0 ( U_{F} U_{Spin} U_{\mb{F}} )^\dag$ with $\rho_0:=\proj{\psi_0}$, $\ket{\psi_0}$ being the initial state (i.e. l.h.s. of \Cref{eq:final state FR version}). The unitaries $U_{F}$, $U_{\textup{spin}}$, $U_{\mb{F}}$ are those of the protocol (see \Cref{fig:FR paper experiment}) and give rise to the final state $\rho_\textup{Fi}=\proj{\psi_\textup{Fi}}$ with $\ket{\psi_\textup{Fi}}$ given by the r.h.s. of \Cref{eq:final state FR version}. F's measurement outcome at time $n\!:00$ is denoted by $r$ (with $a$ denoting the corresponding random variable) and $\{\pi_z^{\textup{F}}\}_{z\in\{-1/2,1/2\}}$ is the PVM of F's measurement at time $n\!:\!10$. We note that in the notation of the general framework of \Cref{sec:gen_framework}, these projectors would be explicitly written as $\{\pi_{1,z}^{\textup{F}}\}_{z\in\{-1/2,1/2\}}$ as they correspond to the case of choosing the setting to be 1 for that measurement. In order to avoid clutter, here and in the following, we drop the setting subscript ``1'' in all such projectors as the meaning is evident from the context of the protocol at hand. In \Cref{eq:P w given a rightarrow old} we have used the fact that the unitary transformations taking us from the initial state to the final state commute with the measurement projectors, i.e. we have used the equivalence between the circuits of Figures~\ref{fig: FR_PM_aug_circuit1} and~\ref{fig: FR_PM_aug_circuit2}.

Since this probability in \Cref{eq:P w given a rightarrow old} is less than one, we can conclude that $\textup{\tb {F}}^{n:02}_{(1,1)}=\mathtt{false}$.
Now it is F's turn to make a statement. \emph{$\textup{F}^{n:10}$: ``The value $z$ is obtained by a measurement of spin S with respect to $\{\pi_{z=-1/2}^\textup{n:10},\pi_{z=1/2}^\textup{n:10}\}$, which is completed at time $n\!:\!11$''.} This statement clearly holds for all setting choices for F in which F can reason about her outcome, since it is merely stating one of the rules of the protocol: $\textup{F}^{n:10}_{(0,1)}=\textup{F}^{n:10}_{(1,1)}= \mathtt{correct}$. They now further go on to state: \emph{Suppose now that F observed $z=1/2$ in round $n$. Since $\braket{\downarrow|\pi_{z=-1/2}^{n:10}|\downarrow}=1$, it follows from Q that S was not in state $\ket{\downarrow}_\Ss$, and hence that the random value $r$ was not heads. Therefore $\textup{F}^{n:12}$:``I am certain that \tb  F knows that $r$= tails at time $n\!:\!11$''.} Here F is reasoning about both her measurement outcome and that of \tb F. Therefore, by definition, this statement only makes sense when F chooses setting $(x_1,x_2)=(1,1)$ since in the case $(x_1,x_2)=(0,1)$, F cannot reason about \tb F's measurement outcome since there is no classical outcome to assign to it. We therefore have $\textup{F}^{n:12}_{(0,1)}=\textup{F}^{n:12}_{(1,0)}=\emptyset$.
Meanwhile, the following equation verifies that $\textup{F}^{n:12}_{(1,1)}=\mathtt{correct}$. Using \Cref{eq:general conditional prob rule simplicied 2}, if we post-select on F getting $z=1/2$  in round $n$, then we find that the probability that \tb  F got tails is one:
\begin{align}
\begin{split}
	P\big(r=\textup{tails}|z=1/2, (x_1,x_2)=(1,1)\big)=\frac{\tr[ \pi_{1/2}^\textup{F} \pi_\textup{tail}^\textup{\tb  F}\rho_\textup{Fi}\pi_\textup{tail}^\textup{\tb  F}]}{\sum_{r\in\{\textup{tails},\textup{heads}\}}\tr[ \pi_{1/2}^\textup{F} \pi_r^\textup{\tb  F}\rho_\textup{Fi}\pi_r^\textup{\tb  F}]}=1,
\end{split}\label{eq:tails given a half}
\end{align}
where $\pi_{\textup{tails}}^{\textup{\tb  F}}=\proj{\mb t}_\Lbar$, $\pi_{heads}^{\textup{\tb  F}}=\proj{\mb h}_\Lbar$ are the projectors onto the lab of \tb F, corresponding to the two outcomes of the coin toss. The last equality follows from noting that $\tr[ \pi_{1/2}^\textup{F} \pi_\textup{heads}^\textup{\tb  F}\rho_\textup{Fi}\pi_\textup{heads}^\textup{\tb  F}]=0$.

Similarly, the above equation can also be concluded from \Cref{fig:FR paper experiment} by noting that if we are in branch 2.1.2 (this corresponds to F getting outcome $z=1/2$), then the only measurement outcome of \tb  F which leads to this branch is $r$ = tails.

The authors of the FR paradox then claim: \emph{Therefore from $\textup{F}^{n:12}$ and invoking Q, we conclude $\textup{F}^{n:13}$: ``I am certain that \tb  F is certain that W will observe $w$= \textup{fail} at time $n\!:\!31$.''}. Statements $\{\textup{F}^{n:13}_{(x_1,x_2)}\}_{x_1,x_2}$ do not correspond to a single statement in our framework since $\textup{F}^{n:13}$ is a concatenation of two statements ``upon observing $z=1/2$, I know with certainty that $r=\textup{tails}$'' made by F and ``upon observing $r=\textup{tails}$, I know with certainty that $w=\textup{fail}$'' made by $\mb{F}$. Note that these two statements are precisely FR's $\textup{F}^{n:12}$ and $\textup{\tb{F}}^{n:02}$ respectively and they are combined using the assumptions $\textup{C}$ and $\textup{D}$ to give $\textup{F}^{n:13}$. Moreover, as we have seen
that $\textup{\tb {F}}^{n:02}_{(1,0)}=\mathtt{correct}$, $\textup{\tb {F}}^{n:02}_{(1,1)}=\mathtt{false}$ and $\textup{F}^{n:12}_{(0,1)}=\emptyset$ and  $\textup{F}^{n:12}_{(1,1)}=\mathtt{correct}$, hence there is no common setting $(x_1,x_2)$ for which both $\textup{\tb F}^{n:02}$ and $\textup{F}^{n:12}$ are correct. Hence there are no setting choice under which $\textup{F}^{n:13}$ can be derived from $\textup{F}^{n:12}$ and $\textup{\tb{F}}^{n:02}$ as FR do. Agent F can always inherit both statements via the knowledge operator \Cref{eq: C_logic}, in a similar way to how you, the reader, is ``inheriting'' all the statements in this article when you read them. This, however, by itself poses little value due to the setting mismatch.

Alternatively, one can attempt a more direct derivation of the prediction associated with $\textup{F}^{n:13}$, which tells us something about the outcome $w=$fail based on the observation of the outcome $z=1/2$. This requires the setting $x_2=1$.
Then, from F's perspective under settings $(x_1,x_2)=(1,1)$ and after obtaining measurement outcome $z=1/2$, she would conclude that the probability of $w$= fail is only
\begin{align}
\begin{split}
	&P\big(w=\textup{fail}|z=1/2, (x_1,x_2)=(1,1)\big)\\
	&=\sum_{r\in\{\textup{heads},\textup{tails}\}}P\big(r|z=1/2, (x_1,x_2)=(1,1)\big)P\big(w=\textup{fail}|r, z=1/2, (x_1,x_2)=(1,1)\big)\\
	&=P\big(w=\textup{fail}|r=\textup{tails}\, \&\, z=1/2, (x_1,x_2)=(1,1)\big)\\
	&=\frac{\tr[\pi_\textup{fail}^{\textup{W}} \pi_{1/2}^\textup{F} \pi_\textup{tails}^\textup{\tb  F}\rho_\textup{Fi}\pi_\textup{tails}^\textup{\tb  F} \pi_{1/2}^\textup{F}]}{\sum_{w\in\{\textup{ok},\textup{fail}\}}\tr[\pi_w^{\textup{W}} \pi_{1/2}^\textup{F} \pi_\textup{tails}^\textup{\tb  F}\rho_\textup{Fi}\pi_\textup{tails}^\textup{\tb  F} \pi_{1/2}^\textup{F}]}\\
	&=\frac{\tr[\pi_\textup{fail}^{\textup{W}} \pi_{1/2}^\textup{F} \pi_\textup{tails}^\textup{\tb  F}\rho_\textup{Fi}\pi_\textup{tails}^\textup{\tb  F} \pi_{1/2}^\textup{F}]}{\tr[ \pi_{1/2}^\textup{F} \pi_\textup{tails}^\textup{\tb  F}\rho_\textup{Fi}\pi_\textup{tails}^\textup{\tb  F}]}=\frac12<1,
\end{split}
\end{align}
where we have used \Cref{eq:tails given a half} and the last inequality follows from noting that if we are on branch 2.1.2., then the possibility of W measuring fail cannot be one since he can get outcome ok too (since due to F's measurement, the blue cancellation does not take place). Note also that the same probability is obtained under the settings $(x_1,x_2)=(0,1)$.

The authors now proceed to reason from the perspective of the super-observers. The first statement is \emph{$\textup{\tb  W}^{n:21}$: ``System R is initialised to $\ket{\textup{init}}_\R$ at time $n\!:\!00$.} This statement is true for all settings, since it is a statement about the protocol, which all agents are assumed to know, so $\textup{\tb  W}^{n:21}_{(0,1)}=\textup{\tb  W}^{n:21}_{(1,0)}=\textup{\tb  W}^{n:21}_{(1,1)}=\mathtt{correct}$. The authors then point out that \emph{the state $U_{\mb{F}}\ket{\textup{init},\phi_0,S_0}$ (r.h.s of \Cref{eq:initial to final ket}) is orthogonal to $\ket{\textup{\tb  {\textup{ok}}}}_\Lbar\ket{\downarrow}_\Ss$}. This is indeed correct, as can be readily seen from \Cref{fig:FR paper experiment} by observing that $\proj{\downarrow}_\Ss$ projects onto the superposition of branches 1.1.1. and 2.1.1. and that for these branches, the purple $\ket{\textup{\tb  {\textup{ok}}}}_\Lbar$ terms cancel each other out. The authors then state this in the form of an expectation value of a projector, namely $\braket{\textup{init}| \pi_{(\mb w,z)\neq (\textup{\tb {ok}},-1/2)}^{n:00}|\textup{init}}=1$. Here $\pi_{(\mb w,z)\neq (\textup{\tb {ok}},-1/2)}^{n:00}=\id - \pi_{(\mb w,z)= (\textup{\tb {ok}},-1/2)}^{n:00}$, where $\pi_{(\mb w,z)= (\textup{\tb {ok}},-1/2)}^{n:00}$ is a Heisenberg picture projector that would first transform $\ket{\textup{init}}_{\R}$ to 
$U_{\mb{F}}\ket{\textup{init},\phi_0,S_0}$ (through the appropriate isometry that appends $\ket{\phi_0,S_0}_{\Lbar\backslash \Ss}$ and performs $U_{\mb{F}}$) and then projects onto the outcomes  $(\mb w,z)= (\textup{\tb {ok}},-1/2)$, as $\ket{\textup{\tb  {ok}}}\bra{\textup{\tb  {ok}}}_\Lbar\otimes \ket{\downarrow}\bra{\downarrow}_\Ss\otimes \id_{\Lab\backslash}$ i.e, $\pi_{(\mb w,z)\neq (\textup{\tb {ok}},-1/2)}^{n:00}$ is the Heisenberg projector onto the complement of outcomes $(\mb w,z)= (\textup{\tb {ok}},-1/2)$. They then claim \emph{Agent \tb  W, who uses Q, can hence be certain that $(\mb w,z)\neq (\textup{\tb {ok}},-1/2)$} and that this implies (when $\mb w$=\tb {ok}) the statement \emph{$\textup{\tb  W}^{n:22}$: ``I am certain that F knows that $z=1/2$ at time $n\!:\!11$''.} Now the authors are allowing \tb W to take into account the measurement outcome of F in their reasoning but not the measurement of \tb  F. In other words, they are using settings $(0,1)$ and thus assigning

\begin{align}
\begin{split}
	&P\big(z=1/2|\mb w=\textup{\tb {ok}},(x_1,x_2)=(0,1)\big)\\
	&= \sum_{w\in\{\textup{ok},\textup{fail}\}} P\big(z=1/2,  w|\mb w=\textup{\tb {ok}},(x_1,x_2)=(0,1)\big)\\
	&=\sum_{w\in\{\textup{ok},\textup{fail}\}}\frac{\tr[\pi_w^{\textup{W}}\pi_\textup{\tb  ok}^{\textup{\tb W}} \pi_{1/2}^\textup{F} \rho_\textup{Fi} \pi_{1/2}^\textup{F}]}{\sum_{\substack{z\in\{-1/2,1/2\}\\ w'\in\{\textup{ok},\textup{fail}\}}} \tr[\pi_{w'}^{\textup{W}}\pi_\textup{\tb  ok}^{\textup{\tb W}} \pi_{z}^\textup{F} \rho_\textup{Fi} \pi_{z}^\textup{F}]}\\
	&= \frac{\tr[\pi_\textup{\tb  ok}^{\textup{\tb W}} \pi_{1/2}^\textup{F} \rho_\textup{Fi} \pi_{1/2}^\textup{F}]}{\sum_{z\in\{-1/2,1/2\}} \tr[\pi_\textup{\tb  ok}^{\textup{\tb W}} \pi_{z}^\textup{F} \rho_\textup{Fi} \pi_{z}^\textup{F}]}=1,\label{eq:prob corresponding to statement W 22 (0 1)}
\end{split}
\end{align}
where the last line follows from observing that in~\Cref{fig:FR paper experiment} we have that  $ \tr[\pi_\textup{\tb  ok}^{\textup{\tb W}} \pi_{-1/2}^\textup{F} \rho_\textup{Fi} \pi_{-1/2}^\textup{F}]=0$ and thus the r.h.s. of above is one, in accordance with what the authors claim. We thus have $\textup{\tb  W}^{n:22}_{(0,1)}=\mathtt{correct}$.

While they are taking into account F's measurement while reasoning, they are not taking into account \tb F's measurement when reasoning. We can check that the statement $\textup{\tb  W}^{n:22}_{(1,1)}=\mathtt{false}$ since the following probability, which takes into account both \tb F's and F's measurements, is strictly less than one:
\begin{align}\label{eq:second prob diff from 1st}
\begin{split}
	&P\big(z=1/2|\mb w=\textup{\tb {ok}},(x_1,x_2)=(1,1)\big)\\
	&= \sum_{\substack{ w\in\{\textup{ok},\textup{fail}\} \\ r\in\{\textup{heads},\textup{tails}\}}} \!\!\! P\big(z=1/2,w,r|\mb w=\textup{\tb {ok}},(x_1,x_2)=(1,1)\big)\\
	&= \frac{\sum_{\substack{ w\in\{\textup{ok},\textup{fail}\} \\ r\in\{\textup{heads},\textup{tails}\}}}\tr[\pi_w^{\textup{W}}\pi_\textup{\tb  ok}^{\textup{\tb W}} \pi_{1/2}^\textup{F} \pi_{r}^{\textup{\tb  F}} \rho_\textup{Fi} \pi_{r}^{\textup{\tb  F}} \pi_{1/2}^\textup{F}]}{\sum_{\substack{z\in\{-1/2,1/2\}\\ w\in\{\textup{ok},\textup{fail}\}\\ r\in\{\textup{heads},\textup{tails}\}}} \!\!\! \tr[\pi_{w}^{\textup{W}}\pi_\textup{\tb  ok}^{\textup{\tb W}} \pi_{z}^\textup{F} \pi_{r}^{\textup{\tb  F}} \rho_\textup{Fi} \pi_{r}^{\textup{\tb  F}} \pi_{z}^\textup{F}]}\\
	&=\frac{\sum_{ r\in\{\textup{heads},\textup{tails}\}}\tr[\pi_\textup{\tb  ok}^{\textup{\tb W}} \pi_{1/2}^\textup{F} \pi_{r}^{\textup{\tb  F}} \rho_\textup{Fi} \pi_{r}^{\textup{\tb  F}} \pi_{1/2}^\textup{F}]}{\sum_{\substack{z\in\{-1/2,1/2\}\\ r\in\{\textup{heads},\textup{tails}\}}} \tr[\pi_\textup{\tb  ok}^{\textup{\tb W}} \pi_{z}^\textup{F} \pi_{r}^{\textup{\tb  F}} \rho_\textup{Fi} \pi_{r}^{\textup{\tb  F}} \pi_{z}^\textup{F}]}=\frac13<1,
\end{split}
\end{align}
Furthermore $\textup{\tb  W}^{n:22}_{(1,0)}=\emptyset$ since the statement $\textup{\tb  W}^{n:22}$ is about F's measurement outcome. Before we move on, observe that \Cref{eq:prob corresponding to statement W 22 (0 1),eq:second prob diff from 1st} take on different values and provides another example of collider bias. In particular, these two equations show that the probability of $\textup{F}$'s outcome does depend on $\textup{\tb  F}$'s setting $x_1$ given the knowledge that the post-selection on  $\mb w$ succeeded. This means that the probability $P\big(z=1/2|\mb w=\textup{\tb {ok}}\big)$ that FR consider, is not well-defined when the setting or the prior over the settings is not specified.

The authors then go on to claim \emph{...because agent \tb  W announces $\mb w$, agent W can be certain about \tb W's knowledge, which justifies statement $\textup{W}^{n:26}$, where $\textup{W}^{n:26}$ is (assuming \tb W announces $\mb w$ = \tb {ok} at time $n\!:\!21$.) ``I am certain that \tb W knows that $\mb w =$ \tb {ok} at time $n \!:\!21$.''}. We agree that W can be sure of any correct announcement made by \tb  W, irrespective of their choice of $(x_1,x_2)$ settings. Thus $\textup{W}^{n:26}_{(0,1)}=\textup{W}^{n:26}_{(1,0)}=\textup{W}^{n:26}_{(1,1)}=\mathtt{correct}$.

Next the authors make the claim \emph{...according to quantum mechanics, agent W can be certain that the outcome ($\mb{w},w$)= (\tb {ok},ok) occurs after finitely many rounds}. They make this based on the fact that
\begin{align} \bra{\psi_{\textup{Fi}}}\pi_{(\mb{w},w)=(\textup{\tb {ok}},\textup{ok})}\ket{\psi_{\textup{Fi}}}=\frac{1}{12},
\end{align}
where $\ket{\psi_{\textup{Fi}}}$ is the final state given on the r.h.s. of \Cref{eq:final state FR version}, $\pi_{(\mb{w},w)=(\textup{\tb {ok}},\textup{ok})}=\ket{\textup{\tb {ok}}}
\bra{\textup{\tb {ok}}}_{\Lbar}\otimes
\ket{\textup{ok}}\bra{\textup{ok}}_{\Lab}$. Note that this can also be seen as the case with $(x_1,x_2)=(0,0)$, as we will see later in \Cref{eq: prob ok ok 0 0}. 
The authors then claim \emph{$\textup{W}^{n:00}$: ``I am certain that there exists a round $n$ in which the halting condition at time $n\!:\!40$ is satisfied.''}  Noting that 
\begin{align}
\begin{split}
	&P\big(\mb{w}=\mb{\textup{ok}},w=\textup{ok}| (x_1,x_2)=(1,1)\big)\\
	&=\sum_{\substack{z\in\{-1/2,1/2\}\\r\in\{\textup{heads},\textup{tails}\}}}\tr[\pi_\textup{ok}^{\textup{W}}\pi_\textup{\tb  ok}^{\textup{\tb W}} \pi_{z}^\textup{F} \pi_{r}^{\textup{\tb  F}} \rho_\textup{Fi} \pi_{r}^{\textup{\tb  F}} \pi_{z}^\textup{F}]\\
	&=\!\!\!\sum_{\substack{z\in\{-1/2,1/2\}\\r\in\{\textup{heads},\textup{tails}\}}}\!\!\!\!\!\!\!\!\!\!\!\!\big|\! \bra{\textup{\tb {ok}}}\!\bra{\textup{ok}} \pi_z^{\textup{F}}U_{F} U_{Spin} \pi_r^{\textup{\tb  F}} U_{\mb{F}} \ket{\textup{init},\phi_0,S_0}_{\Lbar\Ss}\ket{\phi_0}_\Lab\! \big|=\frac56\label{eq: prob ok ok 1 1}
\end{split}
\end{align}
is positive, we conclude that $\textup{W}^{n:00}_{(1,1)}=\mathtt{correct}$. Similarly, we observe that 
\begin{align}
\begin{split}
	P\big(\mb{w}=\mb{\textup{ok}},w=\textup{ok}|(x_1,x_2)=(0,1)\big)=\sum_{z\in\{-1/2,1/2\}}\tr[\pi_\textup{ok}^{\textup{W}}\pi_\textup{\tb {ok}}^{\textup{\tb W}} \pi_{z}^\textup{F} \rho_\textup{Fi}  \pi_{z}^\textup{F}]=\frac12\label{eq: prob ok ok 0 1}
\end{split}
\end{align}
in positive, thus $\textup{W}^{n:00}_{(0,1)}=\mathtt{correct}$.
Likewise, there exits $r\in \{\textup{tails}, \textup{heads}\}$ such that
\begin{align}
\begin{split}
	P\big(\mb{w}=\mb{\textup{ok}},w=\textup{ok}|(x_1,x_2)=(1,0)\big)=\sum_{r\in\{\textup{heads},\textup{tails}\}}\tr[\pi_\textup{ok}^{\textup{W}}\pi_\textup{\tb  {ok}}^{\textup{\tb W}}  \pi_{r}^{\textup{\tb  F}} \rho_\textup{Fi} \pi_{r}^{\textup{\tb  F}} ] =\frac12\label{eq: prob ok ok 1 0}
\end{split}
\end{align}
is positive thus $\textup{W}^{n:00}_{(1,0)}=\mathtt{correct}$.
Finally, 
\begin{align}
\begin{split}
		P\big(\mb{w}=\mb{\textup{ok}},w=\textup{ok}|(x_1,x_2)=(0,0)\big)=\tr[\pi_\textup{ok}^{\textup{W}}\pi_\textup{\tb  {ok}}^{\textup{\tb W}}  \rho_\textup{Fi}  ]=\frac1{12},\label{eq: prob ok ok 0 0}
\end{split}
\end{align}
hence $\textup{\tb  W}^{n:00}_{(0,0)}=\mathtt{correct}$. Therefore we conclude from \Cref{eq: prob ok ok 1 1,eq: prob ok ok 0 1,eq: prob ok ok 1 0,eq: prob ok ok 0 0} that while the statement $\textup{\tb  W}^{n:00}_{(x_1,x_2)}$ is correct for all settings, the value of the corresponding probability which FR calculate is not correct for all settings.

Next, the authors state \emph{Agent F may insert agent's \tb  F's statement $\textup{\tb  F}^{n:02}$ into $\textup{F}^{n:12}$, obtaining statement $\textup{F}^{n:13}$}. As we pointed out above, said statement only holds under certain settings. The authors then go on to say \emph{By virtue of [assumption] $\textup{C}$, she [agent F] may then conclude that statement $\textup{F}^{n:14}$ holds, too.} This statement is \emph{$\textup{F}^{n:14}$: ``I am certain that W will observe $w$= \textup{fail} at time $n\!:\!31$''.} This statement is merely the same as $\textup{F}^{n:13}$ but now with the difference that F \emph{herself} is certain that W will observe $w$= \textup{fail} at time $n\!:\!31$ rather that F being merely certain that \tb F is certain. There are no settings F can select for which this statement is true since
\begin{align}
P\big(w=\textup{fail}|z=1/2, (x_1,x_2)=(x_1,1)\big)<1
\end{align}
for all settings $x_1\in\{0,1\}$. Moreover, F cannot inherit the statement ``I am certain that W will observe $w$= fail at time $n\!:\!31$'' via the transfer of knowledge operator  since this statement does not make reference to agent \tb F and thus it would require F \emph{himself} to use settings $(x_1,x_2)=(1,0)$ just after obtaining the measurement outcome $z=\frac12$; yet the latter requires setting $x_2=1$ (This is distinct to the case of $\textup{F}^{n:13}$). 

The remaining statements made by W and \tb  W are derived from the previous statements under the assumptions $\textup{Q}$, $\textup{U}$, $\textup{C}$, $\textup{D}$, and $\textup{S}$. However, they do so disregarding the setting parameters and hence reaching their apparent contradiction.

In \Cref{table: resolution_prepmeas}, we compare the original statements of the FR paper used in deriving the apparent contradiction and the explicit version of those statements obtained within our framework. This is analogous to \Cref{table: resolution_ent} of the entanglement case and it can be immediately seen that while the original statements (which ignore setting information) can be combined to yield the apparent contradiction, the explicit statements (which specify the setting choice) cannot be combined in this manner even using the standard rules of classical logic. These explicit statements can also be derived directly within the augmented circuit of the EWFS at hand, which is the prepare and measure version of the FR scenario. This augmented circuit is illustrated in \Cref{fig: FR_PM_aug_circuit1} and an equivalent version of this circuit (that makes the mapping to the entanglement formulation of the scenario more evident) is given in \Cref{fig: FR_PM_aug_circuit2}.

In summary, in this section we have shown that each of the individual statements that FR use to derive a contradiction (see \Cref{table: resolution_prepmeas}) hold under some choice of settings in our framework (note this also follows from \Cref{theorem: main} of the general framework). However, each of these statements requires a different setting choice and can no longer be combined using the FR assumptions to yield a paradox.

\begin{table*}[t]
\centering
 \begin{tabularx}{\linewidth}{|| c | L |  L | L | L | L ||} 
 \hline
Agent & Assumed observation  & Statement inferred via $\textup{Q}$ & Original implication obtained from $\textup{Q}$ & Additional implicit assumption & Explicit implication obtained from $\textup{Q}$ \\ [0.5ex] 
 \hline\hline
 $\textup{\tb{F}}$ & $r=\textup{tails}$ at $n:01$&   $\mathbf{\mb{F}^{n:02}}$: I am certain that W will observe $w=\textup{fail}$ at $n:31$ & $K_{\mb{F}}\big(r=\textup{tails} \Rightarrow w=\textup{fail} \big)$&$\textup{\tb{F}}$'s outcome is reasoned about and F's lab is modelled as a closed quantum system & $K_{\mb{F}}\big((x,y)=(1,0) \land r=\textup{tails} \Rightarrow w=\textup{fail} \big)$ \\ 
 \hline
 F & $z= +\frac{1}{2}$ at $n:11$ &  $\mathbf{F^{n:12}}$: I am certain that \tb{F} knows that $r=\textup{tails}$ at $n:01$ &$K_F\big(z=+\frac{1}{2} \Rightarrow K_{\mb{F}}(r=\textup{tails}) \big)$ & \tb{F}'s and F's outcomes are both reasoned about & $K_F\big((x,y)=(1,1) \land z=+\frac{1}{2} \Rightarrow K_{\mb{F}}(r=\textup{tails}) \big)$\\ \hline ${}^{}$
\tb{W} & $\mb{w}=\mb{\textup{ok}}$ at $n:21$ & $\mathbf{\mb{W}^{n:22}}$: I am certain that F knows $z=+\frac{1}{2}$ at $n:11$& $K_{\mb{W}}\big(\mb{w}=\mb{\textup{ok}} \Rightarrow K_F(z=+\frac{1}{2}) \big)$& F's outcome is reasoned about and \tb{F}'s lab is modelled as a closed quantum system & $K_{\mb{W}}\big((x,y)=(0,1) \land \mb{w}=\mb{\textup{ok}} \Rightarrow K_F(z=+\frac{1}{2}) \big)$\\ \hline
W & Announcement by \tb{W} that $\mb{w}=\mb{\textup{ok}}$ at $n:21$ &   $\mathbf{W^{n:26}}$: I am certain that \tb{W} knows that $\mb{w}=\mb{\textup{ok}}$ at $n:21$ & $K_WK_{\mb{W}}\big(\mb{w}=\mb{\textup{ok}}\big )$& (none) &$K_WK_{\mb{W}}\big(\mb{w}=\mb{\textup{ok}}\big )$\\ [1ex] 
 \hline
 \end{tabularx}
 \caption{Table 3 of \cite{Frauchiger2018} with the additional implicit assumptions needed to make each statement. Without the implicit assumptions, the implications drawn from $\textup{Q}$ in the original FR paper are summarised in the modal logic language in column 4. The corresponding explicit version of the same implications that take the additional implicit assumptions into account are given in the last column. While original implications can be combined using $\textup{C}$ and the distributive axiom to yield the further implications listed in Table 3 of the FR paper, the explicit version of these implications cannot be combined even using the standard rules of classical logic such as $\textup{C}$ and the distributive axiom. We therefore see that the paradox never arises even when using $\textup{Q}$, $\textup{C}$ and $\textup{S}$ and modelling agents unitarily as long as we are careful to use the explicit version of the implications. The variables $x$ and $y$ appearing in the last column are the settings appearing in the circuit of \Cref{fig: FR_PM_aug_circuit1}, these encode the additional implicit assumptions.}
 \label{table: resolution_prepmeas}
\end{table*}

\subsection{Setting-dependence in FR's experiment}
\label{appendix: setting_dep_FR}

It \Cref{appendix: ent_FR}, we mapped each FR probability ( $P_{conv}(b = 1|u = w =
\textup{ok}) = 1$, $P_{conv} (a = 1|b = 1) = 1$ and $P_{conv} (w = \textup{fail}|a = 1) = 1$) to a unique setting choice $(x_1,x_2)\in\{(1,0),(1,1),(0,1)\}$ that reproduces the same probability in our framework. It is illustrative to go beyond this analysis and consider other possible setting choices we can assign to each statement. Then we find that in contrast to the above-mentioned settings which do reproduce the exact  probabilities of the FR paper, alternate setting choices not longer give the same probabilities, and therefore do not yield the desired logical statements needed in FR's reasoning. For instance, we showed in \Cref{appendix: ent_FR} that $P_{conv}(b = 1|u = w =\textup{ok}) =P(b = 1|u = w =\textup{ok},(x_1,x_2)=(0,1)) = 1$. We can alternatively consider $P(b = 1|u = w =\textup{ok},(x_1,x_2)=(1,1))$. Note that we cannot consider any setting choice with $x_2=0$ here because the prediction involves Bob's outcome $b=1$ and therefore must involve a non-trivial projector on Bob's side in its evaluation. 

\begin{align}
	\begin{split}
		&P(b = 1,u = w =\textup{ok}|(x_1,x_2)=(1,1)) = \frac{1}{12}\\
		=& \sum_{a\in\{0,1\}}|\bra{\textup{ok}}_{\R\A}\otimes \bra{\textup{ok}}_{\Ss\B}\Big(\pi^{\A}_{x_1=1,a}\otimes\pi^{\B}_{x_2=1,b=1}\Big)\ket{\psi^{t=2}}|^2\\     =& \sum_{a\in\{0,1\}}|\bra{\textup{ok}}_{\R\A}\!\otimes \bra{\textup{ok}}_{\Ss\B}\!\Big(\ket{aa}\!\bra{aa}_{\R\A}\otimes\ket{11}\bra{11}_{\Ss\B}\Big)\!\ket{\psi^{t=2}}|^2
	\end{split}
\end{align}

\begin{align}
	\begin{split}
		&P(u = w =\textup{ok}|(x_1,x_2)=(1,1)) = \frac{1}{4}\\
		=& \sum_{a,b\in\{0,1\}}|\bra{\textup{ok}}_{\R\A}\otimes \bra{\textup{ok}}_{\Ss\B}\Big(\pi^{\A}_{x_1=1,a}\otimes\pi^{\B}_{x_2=1,b}\Big)\ket{\psi^{t=2}}|^2\\     =& \sum_{a.b\in\{0,1\}}\!\!\!|\bra{\textup{ok}}_{\R\A}\!\otimes \!\bra{\textup{ok}}_{\Ss\B}\!\Big(\ket{aa}\bra{aa}_{\R\A}\otimes\ket{bb}\bra{bb}_{\Ss\B}\Big)\!\ket{\psi^{t=2}}|^2
	\end{split}
\end{align}

Putting this together using the rule for conditional probabilities, we have
\begin{equation}
	P(b = 1|u = w =\textup{ok},(x_1,x_2)=(1,1))=\frac{1}{3}\neq 1    
\end{equation}

The FR probability $P_{conv} (a = 1|b = 1) = 1$ corresponds to $P (a = 1|b = 1, (x_1,x_2)=(1,1)) = 1$ in our framework, but no alternative setting choices are possible here as the prediction refers to both Alice and Bob's classical outcomes. Finally, the FR probability $P_{conv} (w = \textup{fail}|a = 1) = 1$ corresponds to $P (w = \textup{fail}|a = 1,(x_1,x_2)=(1,0)) = 1$ in our framework and we consider the alternative setting choice $(x_1,x_2)=(1,1)$ in this case (again $x_1=0$ is not a possible setting choice here as the prediction involves Alice's classical outcome). The corresponding prediction $P (w = \textup{fail}|a = 1,(x_1,x_2)=(1,1))$ is calculated below, and is also not equal to 1. 

\begin{align}
	\begin{split}
		&P(a = 1,w =\textup{fail}|(x_1,x_2)=(1,1)) = \frac{1}{3}\\
		=& \sum_{b\in\{0,1\}}|\id_{\R\A}\otimes \bra{\textup{fail}}_{\Ss\B}\Big(\pi^{\A}_{x_1=1,a=1}\otimes\pi^{\B}_{x_2=1,b}\Big)\ket{\psi^{t=2}}|^2\\     =& \sum_{b\in\{0,1\}}|\id_{\R\A}\otimes \bra{\textup{fail}}_{\Ss\B}\Big(\ket{11}\bra{11}_{\R\A}\otimes\ket{bb}\bra{bb}_{\Ss\B}\Big)\ket{\psi^{t=2}}|^2
	\end{split}
\end{align}

\begin{align}
	\begin{split}
		&P(a = 1|(x_1,x_2)=(1,1)) = \frac{2}{3}\\
		=&\sum_{b\in\{0,1\}}|\Big(\pi^{\A}_{x_1=1,a=1}\otimes\pi^{\B}_{x_2=1,b}\Big)\ket{\psi^{t=2}}|^2\\     =& \sum_{b\in\{0,1\}}|\Big(\ket{11}\bra{11}_{\R\A}\otimes\ket{bb}\bra{bb}_{\Ss\B}\Big)\ket{\psi^{t=2}}|^2
	\end{split}
\end{align}

This gives us 
\begin{equation}
	P (w = \textup{fail}|a = 1,(x_1,x_2)=(1,1))=\frac{1}{2}\neq 1.  
\end{equation}

In summary, we have found that $P(b = 1|u = w =\textup{ok},(x_1,x_2)=(0,1))\neq P(b = 1|u = w =\textup{ok},(x_1,x_2)=(1,1))$ and $P (w = \textup{fail}|a = 1,(x_1,x_2)=(1,0))\neq P (w = \textup{fail}|a = 1,(x_1,x_2)=(1,1))$. That is the predictions do depend on the setting. This highlights that a core reason for the FR paradox in this scenario is that the FR reasoning involves ignoring the conditioning on setting values in a scenario where the predictions do depend on these value. This illustrates our general result of \Cref{corollary: QUCDSassumptions}, which shows that in any EWFS, any potential inconsistencies can only arise in this manner: ignoring the conditioning on settings, in predictions which are setting-dependent.

Reformulating this insight in a more framework-independent manner, this means that the choice of Heisenberg cuts under which we evaluate the predictions for different sets of outcomes does matter, the probabilities do depend on this choice (captured by the settings in our framework). 
Our framework which explicitly takes into account the Heisenberg cut, provides a natural resolution to any such paradox without forsaking unitary quantum theory or classical logic.

\section{Classical example reproducing certain features of the FR correlations}
\label{appendix: classical_example}

We noted earlier that the FR correlations are an example of collider bias, or in this case, they exhibit signalling under post-selection. Further, we have also seen that the assumption \hyperref[def: I_assump]{$\mathbf{I}$} is necessary for reproducing the apparent paradox. In this section, we provide an example of a classical protocol that reproduces these features of the FR correlations. We note however that the settings in our example do not have the same physical meaning as the settings of the FR augmented circuit (which relate to the choice of Heisenberg cut). Nevertheless, the example gives an intuition for our resolution. We will thus refer to the \hyperref[def: I_assump]{$\mathbf{I}$} applied to the settings of this classical example as $\mathbf{I}_{C}$ to distinguish it from the version of the assumption applied to the settings of our augmented circuit that model Heisenberg cuts. 


The augmented circuit \Cref{fig: FR_ent_circuit_aug} of the entanglement version of the FR protocol can be represented as shown in \Cref{fig:classical_circuit}(a).
Consider now the circuit of \Cref{fig:classical_circuit}(b), which has the same configuration as the circuit of \Cref{fig:classical_circuit}(a) for the FR protocol and consider the following classical operations in place of the boxes $\matholdcal{A}$, $\matholdcal{B}$, $\matholdcal{U}$ and $\matholdcal{W}$. The initial state is simply a uniformly distributed binary random variable $\Lambda$. The operation $\matholdcal{A}$ of Alice takes the given value $\lambda\in \{0,1\}$ of $\Lambda$ together with the binary setting $x_1$, and generates the outcome $a\in \{0,1\}$ by taking the XOR of the two $a=\lambda\oplus x_1$, and forwards $a$ to Ursula's operation $\matholdcal{U}$. $\matholdcal{U}$ takes $a$ from Alice's operation, internally generates a uniformly distributed bit $k_U\in\{0,1\}$ and outputs $u=a\cdot k_U$ where $\cdot$ denotes the logical AND. Bob's side is identical, his operation $\matholdcal{B}$ takes a value of $\Lambda$ along with $x_2$ and outputs $b=\lambda\oplus x_2$ while forwarding a copy of the classical bit $b$ to Wigner's operation $\matholdcal{W}$. $\matholdcal{W}$ takes $b$ from Bob, generates a uniform bit $k_W$ internally and outputs $w=b \cdot k_W$.

Given this simple model, it is easy to verify that $P(a,b|x_1,x_2)$ is non-signalling i.e., $P(a|x_1,x_2)=P(a|x_1)$ and $P(b|x_1,x_2)=P(b|x_2)$. However, $P(a,b|x_1,x_2,u)$ allows signalling from Alice to Bob i.e., $P(b|x_1,x_2,u)\neq P(b|x_2,u)$ and $P(a,b|x_1,x_2,w)$ allows signalling from Bob to Alice i.e., $P(a|x_1,x_2,w)\neq P(a|x_1,w)$. This is because, for instance, conditioned on the knowledge that $u=1$ (say), we know that since $u=a\cdot k_U$, $a=k_U=1$. Further, since $a=\lambda\oplus x_1=1$ and $b=\lambda\oplus x_2$, we know that $b=x_1\oplus x_2\oplus 1$ and Bob's outcome $b$ clearly depends on Alice's setting $x_1$ now. This means that when reasoning given the knowledge about some post-selections on $u$ and $w$, even in classical probability theory, the agents must be careful to account for any new correlations that this may introduce between their settings, else they may end up making wrong conclusions.

Interestingly, in \cite{Healey2018} Healey pointed out that whether or not Bob's state is collapsed influences the probabilities of Alice's outcome in the FR scenario due to the post-selection. Denying this dependence is what \cite{Healey2018} refers to as the assumption of \emph{intervention insensitivity}. Healey argues that FR implicitly assume intervention insensitivity in order obtain the logical contradiction. While Healey's intervention sensitivity may seem similar in spirit to the setting-dependence in our framework, \cite{Healey2018} appears to regard intervention sensitivity as a special non-local feature of quantum theory. 
In the FR case, the setting-dependence takes the particular form of enabling signalling under post-selection that we have explained here. However, the present classical example shows that this is not a special feature of the quantum correlations of the FR scenario but rather a common feature that arises in classical probability theory and causal inference \cite{Pearl2009} where it often goes by the name of \emph{collider bias}.

In the FR protocol, this feature is not immediately apparent as there are no settings in the actual protocol. However, as we have argued in the main text, these settings are implicit in the reasoning steps, and once made explicit, we see that there is indeed signalling under post-selection also in the FR scenario. 
Note that our general resolution and results do not depend on this aspect of signalling under post-selection or collider bias, although this is a feature of FR correlations. The general results show that imposing the \hyperref[def: I_assump]{$\mathbf{I}$} assumption in a scenario where it fails is necessary for recovering FR type apparent apparent paradoxes (independently of this feature), and we have seen that \hyperref[def: I_assump]{$\mathbf{I}$} does fail in the FR scenario.

This is also the case in our classical example, we can show that imposing $\mathbf{I}_C$ also leads to an apparent paradox here, as the predictions do depend on the settings (the scenario violates $\mathbf{I}_C$). For instance, take $u=1$ and $x_2=1$, then we have $a=k_u=1$ and hence $\Lambda=x_1\oplus 1$, which gives $b=x_1\oplus x_2\oplus 1=x_1$. Therefore we have $P(b=0|x_1=0,u=1,x_2=1)=1$ and $P(b=1|x_1=1,u=1,x_2=1)=1$. If we ignore the conditioning on the settings, as allowed by $\mathbf{I}_C$, we apparently have the paradoxical probability assignments $P(b=0|u=1)=1$ and $P(b=1|u=1)=1$ that correspond to $b$ deterministically being $0$ and $1$ (both with certainty) whenever $u$ is 1, which is an apparent violation of $\textup{S}$ as in the FR scenario. However we see that there is no real paradox once we correctly take into account all the settings and account for the correlations of the outcomes with the settings.

While the property of signalling under postselection holds in this simple classical circuit as it does in the quantum FR scenario, the exact correlations of the FR scenario are Bell non-local (or more generally, contextual) and cannot be reproduced using classical resources alone, once we fix the above circuit configuration where there is no information exchange between Alice and Bob's sides (except the shared initial state). In order to reproduce the FR correlations classically, one must necessarily modify the circuit configuration to allow for additional connections such as allow the settings $x_1$ and $x_2$ to depend on the state preparation, or allow for causal connections between the $\matholdcal{A}$, $\matholdcal{U}$ and $\matholdcal{B}$, $\matholdcal{W}$ operations, analogous to superdeterministic or non-local hidden variable explanations of Bell correlations (see for instance \cite{Wood2015}).
The correspondence between the FR argument and Hardy's logical argument for contextuality of quantum correlations is discussed in the next section.

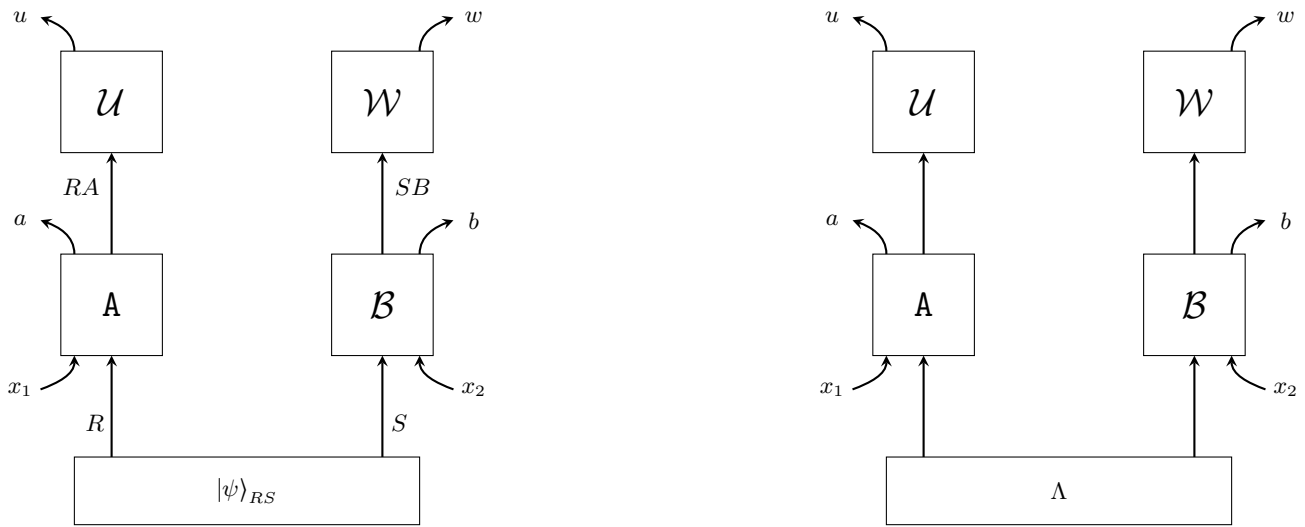
\begin{figure*}[t!]
    \centering
   \begin{tikzpicture}[scale=0.75, transform shape]
\draw (0,0) rectangle node{\Large{$\matholdcal{A}$}} (1.5,1.5);
\draw[thick, arrows={-stealth}] (0.75,-1.5)--(0.75,0);
\draw[thick, arrows={-stealth}] (0.75,1.5)--(0.75,3);
\draw[thick, arrows={-stealth}] (0.2,1.5) to[out=90,in=-20] (-0.3,2);
\draw[thick, arrows={-stealth}] (-0.3,-0.5) to[out=20,in=-90] (0.2,0);
\node[align=center] at (0.5,-1) {$R$};
\node[align=center] at (0.3,2.5) {$RA$};
\node[align=center] at (-0.6,2) {$a$};
\node[align=center] at (-0.6,-0.5) {$x_1$};

\begin{scope}[shift={(0,3)}]
\draw (0,0) rectangle node{\Large{$\matholdcal{U}$}} (1.5,1.5);
\draw[thick, arrows={-stealth}] (0.2,1.5) to[out=90,in=-20] (-0.3,2);
\node[align=center] at (-0.6,2) {$u$};
\end{scope}

\begin{scope}[shift={(4,0)}]
\draw (0,0) rectangle node{\Large{$\matholdcal{B}$}} (1.5,1.5);
\draw[thick, arrows={-stealth}] (0.75,-1.5)--(0.75,0);
\draw[thick, arrows={-stealth}] (0.75,1.5)--(0.75,3);
\draw[thick, arrows={-stealth}] (1.3,1.5) to[out=90,in=200] (1.8,2);
\draw[thick, arrows={-stealth}] (1.8,-0.5) to[out=160,in=-90] (1.3,0);
\node[align=center] at (1,-1) {$S$};
\node[align=center] at (1.2,2.5) {$SB$};
\node[align=center] at (2.1,2) {$b$};
\node[align=center] at (2.1,-0.5) {$x_2$};

\begin{scope}[shift={(0,3)}]
\draw (0,0) rectangle node{\Large{$\matholdcal{W}$}} (1.5,1.5);
\draw[thick, arrows={-stealth}] (1.3,1.5) to[out=90,in=200] (1.8,2);
\node[align=center] at (2.1,2) {$w$};
\end{scope}

\end{scope}

\draw (0.2,-2.5) rectangle node{$\ket{\psi}_{RS}$} (5.3,-1.5);

\end{tikzpicture}\hspace{5em}
\begin{tikzpicture}[scale=0.75, transform shape]
\draw (0,0) rectangle node{\Large{$\matholdcal{A}$}} (1.5,1.5);
\draw[thick, arrows={-stealth}] (0.75,-1.5)--(0.75,0);
\draw[thick, arrows={-stealth}] (0.75,1.5)--(0.75,3);
\draw[thick, arrows={-stealth}] (0.2,1.5) to[out=90,in=-20] (-0.3,2);
\draw[thick, arrows={-stealth}] (-0.3,-0.5) to[out=20,in=-90] (0.2,0);
\node[align=center] at (-0.6,2) {$a$};
\node[align=center] at (-0.6,-0.5) {$x_1$};

\begin{scope}[shift={(0,3)}]
\draw (0,0) rectangle node{\Large{$\matholdcal{U}$}} (1.5,1.5);
\draw[thick, arrows={-stealth}] (0.2,1.5) to[out=90,in=-20] (-0.3,2);
\node[align=center] at (-0.6,2) {$u$};
\end{scope}


\begin{scope}[shift={(4,0)}]
\draw (0,0) rectangle node{\Large{$\matholdcal{B}$}} (1.5,1.5);
\draw[thick, arrows={-stealth}] (0.75,-1.5)--(0.75,0);
\draw[thick, arrows={-stealth}] (0.75,1.5)--(0.75,3);
\draw[thick, arrows={-stealth}] (1.3,1.5) to[out=90,in=200] (1.8,2);
\draw[thick, arrows={-stealth}] (1.8,-0.5) to[out=160,in=-90] (1.3,0);
\node[align=center] at (2.1,2) {$b$};
\node[align=center] at (2.1,-0.5) {$x_2$};

\begin{scope}[shift={(0,3)}]
\draw (0,0) rectangle node{\Large{$\matholdcal{W}$}} (1.5,1.5);
\draw[thick, arrows={-stealth}] (1.3,1.5) to[out=90,in=200] (1.8,2);
\node[align=center] at (2.1,2) {$w$};
\end{scope}

\end{scope}

\draw (0.2,-2.5) rectangle node{$\Lambda$} (5.3,-1.5);

\end{tikzpicture}
    \caption{Left: A concise representation of the augmented circuit (\Cref{fig: FR_ent_circuit_aug}) corresponding to the entanglement version of the FR protocol. Here the boxes $\matholdcal{A}$ and $\matholdcal{B}$ correspond to Alice and Bob's measurements modelled as unitaries $M^{\A}_{unitary}$ and $M^{\B}_{unitary}$ followed by the setting-dependent projectors $\{\pi^A_{x_1}\}$ and $\{\pi^B_{x_2}\}$. The boxes $\matholdcal{U}$ and $\matholdcal{W}$ model to $\textup{ok}, \textup{fail}$ basis measurements of Ursula and Wigner on the joint systems $\R\A$ and $\Ss\B$ producing outcomes $u$, $w$.
    Right: A classical circuit with the same configuration as the left hand side circuit of the FR protocol. Here $\Lambda$, $a$, $b$, $u$, $w$, $x_1$ and $x_2$ are all associated with classical, binary variables and the operations $\matholdcal{A}$, $\matholdcal{B}$, $\matholdcal{U}$ and $\matholdcal{W}$ and initial distribution over $\Lambda$ are described in the text. Here as well, as in the FR scenario, $x_1$ and $b$ although initially independent (due to no-signalling) become correlated when conditioning on the future outcome $w$, and similarly $x_2$ and $a$ become correlated under post-selection on $u$. }
    \label{fig:classical_circuit}
\end{figure*}

\section{Relation to Hardy's logical proof of contextuality}
\label{appendix: Hardy}

In \cite{Hardy_Nonlocality_without_inequalities_1993} Lucien Hardy proves the (Bell) non-locality, and hence the contextuality of the correlations arising from a set of bipartite quantum states and measurements, through a logical argument that does not rely on the violation of Bell-type inequalities. The extremal state and measurements of Hardy's argument, as well as the chain of logical reasoning are in direct correspondence with those of the entanglement version of the FR scenario presented here. We explain the relationship here and also clarify how FR's construction can be seen as an alternative proof of Hardy's theorem, thereby establishing the contextuality of this scenario. 

However FR's claimed no-go theorem regarding a logical reasoning paradox between agents is a stronger statement, and we have shown in the main-text that such a paradox can always be avoided within our framework. In other words, while the FR chain of reasoning cannot lead to a logical paradox using quantum theory and classical logic (for observed outcomes) once implicit assumptions about settings are accounted for, the same chain of reasoning can be used in the FR setup to prove the contextuality of the scenario as Hardy did. The relationship between the FR scenario and Hardy's proof has been noted several times before in the literature (for instance, \cite{ScottAronson, Nurgalieva2018, Drezet2018, Vilasini_2019}), this section is to be considered as a concrete overview of this relationship along with additional insights provided by our framework.


Consider the bipartite Hardy state $$\ket{\psi_{Hardy}}_{\A\B}:=\frac{1}{\sqrt{3}}(\ket{00}+\ket{10}+\ket{11}_{\A\B})$$ shared between Alice and Bob and suppose that they perform a Bell type experiment on this state with the setting choices $x_1\in\{0,1\}$ for Alice and $x_2\in\{0,1\}$ for Bob and corresponding outcomes $a,b\in\{0,1\}$. Suppose the settings $x_1=0$, $x_2=0$ correspond to Hadamard basis ($\{\ket{+},\ket{-}\}$) measurements on the $\A$ and $\B$ subsystems respectively, in which case we take $a,b=0$ corresponding to the $+$ outcome and $a,b=1$ corresponding to the $-$ outcome. And let the settings $x_1=1$, $x_2=1$ correspond to computational basis ($\{\ket{0},\ket{1}\}$) measurements on the $\A$ and $\B$ subsystems. Hardy's argument establishes that the resulting distribution $P(a,b|x_1,x_2)$ is (Bell) non-local (and consequently, contextual), through a logical argument that does not involve a consideration of (Bell-like) inequalities. 

This is established as follows. If the correlations were Bell local, then we could simultaneously assign values to the outcomes of all the measurements. Explicitly, let $a_{x_1=0}$ and $a_{x_1=1}$ denote the outcome $a$ of Alice corresponding to the setting choice $x_1=0$ and $x_1=1$ respectively, and similarly let $b_{x_2=0}$ and $b_{x_2=1}$ denote the outcomes of Bob when his setting $x_2=0$ and $x_2=1$ respectively. Consider a run of the experiment where the settings $x_1=x_2=0$ are chosen and the outcomes $a_{x_1=0}=b_{x_2=0}=1$ are obtained. Now using the state $\ket{\psi_{Hardy}}_{\A\B}$, one can argue that whenever $a_{x_1=0}=1$, we must have $b_{x_2=1}=1$. This is because $a_{x_1=0}=1$ implies that the post-measurement state on $\A$ is $\ket{-}_\A$, and $\bra{-}_\A\bra{0}_\B \ket{\psi_{Hardy}}_{\A\B}=0$. 
We can also see that whenever $b_{x_2=1}=1$, we must have $a_{x_1=1}=1$ since $\bra{0}_\A\bra{1}_\B\ket{\psi_{Hardy}}_{\A\B}=0$. Finally, whenever $a_{x_1=1}=1$, we must have $b_{x_2=0}=0$ since $\bra{1}_\A\bra{-}_\B\ket{\psi_{Hardy}}_{\A\B}=0$. This contradicts the fact that $a_{x_1=0}=b_{x_2=0}=1$ was obtained in the said experimental run, establishing that in such an experimental run one cannot jointly assign values to the outcomes of all measurements. Such correlations are said to exhibit logical contextuality, and we refer the reader to \cite{Abramsky15} for further details on the same.

We now return to the FR scenario and explain how these directly correspond to the above correlations. For this, note that the $\{\ket{00},\ket{11}\}$ basis measurements on the $\R\A$ and $\Ss\B$ subsystems of  $\ket{\psi^{t=2}}_{\R\A\Ss\B}=\frac{1}{\sqrt{3}}(\ket{0000}+\ket{1100}+\ket{1111})_{\R\A\Ss\B}$ are operationally equivalent to the computational basis $\{\ket{0},\ket{1}\}$ measurements on the $\A$ and $\B$ subsystems of the bipartite (Hardy) state $\ket{\psi_{Hardy}}_{\A\B}:=\frac{1}{\sqrt{3}}(\ket{00}+\ket{10}+\ket{11}_{\A\B})$. Further, the $\{\ket{\textup{ok}},\ket{\textup{fail}}\}$ basis measurements on the $\R\A$ and $\Ss\B$ subsystems of $\ket{\psi^{t=2}}_{\R\A\Ss\B}$ are operationally equivalent to the Hadamard basis $\{\ket{+},\ket{-}\}$ measurements on the $\A$ and $\B$ subsystems of $\ket{\psi_{Hardy}}_{\A\B}$. In the FR scenario, $x_1=0$ ensures that the $\{\ket{\textup{ok}}_{\R\A},\ket{\textup{fail}}_{\R\A}\}$ basis measurement is performed directly on $\ket{\psi^{t=2}}_{\R\A\Ss\B}$ giving the outcome $u$, while $x_1=1$ ensures that the $\{\ket{00}_{\R\A},\ket{11}_{\R\A}\}$ basis measurement is performed on $\ket{\psi^{t=2}}_{\R\A\Ss\B}$, giving the outcome $a$, similarly on Bob's side. Thus if we generate a new outcome $a'$ locally on Alice's side such that $a'=0$ when $x_1=0$ and $u=\textup{fail}$, $a'=1$ when $x_1=0$ and $u=\textup{ok}$, and $a'=a$ when $x=1$, and similarly the outcome $b'$ on Bob's side such that $b'=0$ when $x_2=0$ and $w=\textup{fail}$, $b'=1$ when $x_2=0$ and $w=\textup{ok}$ and $b'=b$ when $x_2=1$, the distribution $P(a',b'|x_1,x_2)$ is the same as that of the Hardy construction.  This distribution is non-local and therefore cannot be explained by a local hidden variable model.

From the above construction, we can see that the choice of setting $x_1$ (or $x_2$) determines which of the two measurements (the one corresponding to the computational basis or that corresponding to the Hadamard basis) is performed directly on the subsystem $\R\A$ (or $\Ss\B$) of the state $\ket{\psi^{t=2}}_{\R\A\Ss\B}$ isomorphic to $\ket{\psi_{Hardy}}_{\A\B}$. These measurements are complimentary, and $(x_1,x_2)\in\{(0,0),(0,1),(1,0),(1,1)\}$ specify the four possible measurement contexts in this setup. The Hardy model, and therefore the FR model is logically contextual which means that we cannot jointly assign values to the outcomes of all these measurements \cite{Abramsky15}. The arguments for the logical proof of contextuality are also in direct correspondence with the logical reasoning steps leading to the paradoxical chain of \Cref{eq: chain4}, with $a_{x_1=0}=0/1$, $b_{x_2=0}=0/1$ in the Hardy case corresponding to $u=\textup{fail}/\textup{ok}$, $w=\textup{fail}/\textup{ok}$ in the FR case and $a_{x_1=1}=0/1$, $b_{x_2=1}=0/1$ corresponding to $a=0/1$, $b=0/1$.

In fact, the FR set-up can also be used to prove Hardy's theorem i.e., to provide a logical proof of contextuality of the underlying quantum states and measurements. The FR argument for the entanglement scenario establishes precisely this (by mirroring the Hardy argument as we have explained above). 
Therefore, even though the FR and Hardy theorems have a one-to-one mapping between the states, measurements and statements involved in the proofs, Hardy's conclusion regarding the non-locality/contextuality of the scenario holds true while the validity of FR's claim that ``Quantum theory cannot consistently justify the use of itself'' (which is a stronger statement than Hardy's) does not generally hold true for quantum theory as our results show, but can be understood as holding only for a specific version of quantum theory that additionally assumes independence of statements from choices of Heisenberg cuts (i.e., when additional assuming \hyperref[def: I_assump]{$\mathbf{I}$}), as we have discussed in \Cref{sec: resolution_entanglement}.

\section{Relation to previous works: a more unified picture}
\label{appendix:relation_prev_works}

There are several previous works discussing and analyzing the FR apparent paradox, which can be categorised into three broad groups. The first category includes works (eg. \cite{Relano2018,Relano2020,Kastner2020,Zukowski2021,Biagio2021}) which provide fundamental reasons for rejecting one of the assumptions of the FR no-go theorem. This is a natural response to any no-go theorem, as it necessitates identifying which assumptions to discard.

The second category consists of papers such as \cite{Nurgalieva2018,Narasimhachar2020,Renes2021} which (possibly after identifying implicit assumptions beyond $\textup{Q}$, $\textup{C}$, and $\textup{S}$) conclude that there is indeed a paradox in the FR thought experiment. Some of these papers suggest possible resolutions by proposing additional reasoning rules.

The third category includes works such as \cite{ScottAronson,Araujo,Drezet2018,Healey2018,Sudbery2019,Fortin2019,Losada2019} that question the correctness of the theorem. They argue that the reasoning in the FR paper includes implicit assumptions, one or more of which are invalid in quantum theory, and conclude that there is no real paradox between quantum theory and logic as FR claim.\footnote{The cited works are only representatives of the extensive research generated by the FR paper. For further references on previous responses to FR, see \cite{Nurgalieva2020}.}

These responses are often discussed independently and can be interpretation-specific. Here, we provide a more unified picture of these arguments within our framework, also discussing how different interpretations of quantum mechanics could apply our framework to resolve the apparent paradox in any EWFS.

\subsection{Previous works rejecting one of FR’s assumptions}


Here we consider previous arguments which fundamentally reject one of the assumptions $\textup{Q}$, $\textup{U}$, $\textup{C}$, $\textup{D}$, or $\textup{S}$ in the FR no-go theorem \cite{Frauchiger2018, Nurgalieva2018}. We discuss these cases in detail below, relating them to different interpretations of quantum mechanics and demonstrating how these arguments play out within our framework.

\textbf{Rejecting $\textup{Q}$: }  In FR's original paper, the authors interpret the violation of $\textup{Q}$ as a violation of unitary quantum theory. Following \cite{Nurgalieva2018}, we have separated these two assumptions $\textup{Q}$ and $\textup{U}$ explicitly, such that the $\textup{Q}$ assumption refers to the validity of the Born rule (or a weaker possibilistic version  thereof), which can be independently violated without giving up $\textup{U}$.

Rejecting $\textup{Q}$ means at least one prediction of the FR scenario (e.g., \Cref{eq: FR_conv_pred} for the entanglement version) does not comply with the Born rule. Not all predictions can be simultaneously tested in a single experiment, leaving room for observable compliance with the Born rule without satisfying $\textup{Q}$. As noted in \cite{schmid2023review}, certain versions of Bohmian mechanics, a non-local hidden variable interpretation, violate $\textup{Q}$ in this manner, by deviating from the Born rule for inaccessible predictions.

In our framework, predictions in an EWFS are given by $P(\vec{a}_j|\vec{a}_l,k)$ (\Cref{definition: prediction}), where $\vec{a}_j$ and $\vec{a}_l$ are sets of measurement outcomes and $k$ are a set of parameters describing the scenario. In hidden variable interpretations, in addition to quantum states, channels, and measurements (\Cref{def:LWFS}), $k$ can include descriptions of the hidden variables. $P(\vec{a}_j|\vec{a}_l,k)$ may be computed using this information, not necessarily following the Born rule.

Such predictions can violate assumption \hyperref[def: Q_assump]{$\mathbf{Q}$} formalised in our work, and they would not correspond to the setting-conditioned predictions of the augmented circuit (\Cref{definition: setting_prediction}) which are derived using the Born rule. However, the settings are still meaningful in such interpretations, they provide different descriptions of the quantum measurement channel that the hidden variables must ``emulate''. It is to be noted that within such theories, settings cannot be interpreted as choices of Heisenberg cuts since classical theories do not have a non-trivial notion of such a cut. But such a fully classical theory could ``emulate'' a unitary measurement channel or setting $x_i=0$ through non-local hidden mechanisms.

Moreover, by applying the general premise of our reasoning rules discussed in \Cref{sec: reasoning_rules}, 
FR type paradoxes can also avoided in any EWFS within such interpretations as well. This would involve using settings to clearly specify the measurement channels being assumed, as well as potentially additional parameters in $k$ and conditioning on these variables in the reasoning (even when the probabilities do not arise through the Born rule).

\textbf{Rejecting $\textup{U}$: }  The $\textup{U}$ assumption, first noted in \cite{Nurgalieva2018}, involves rejecting the idea that agents' labs evolve as unitary closed quantum systems. Interpretations involving objective mechanisms for ``wavefunction collapse" reject this, where unitary quantum theory breaks down at certain macroscopic scales. Such interpretations also violate the \hyperref[def: U_assump]{$\mathbf{U}$} assumption in our work. In this view, one would assign setting $x_i=1$ for all measurements, assuming they involve systems more macroscopic than the scale at which the collapse occurs. Whether such models always avoid FR-type paradoxes depends on whether the predicted objective collapse scale is smaller than the scale at which a quantum system can be regarded as an ``agent”, which remains an open question.

In our formalisation of the $\textup{U}$ assumption as \hyperref[def: U_assump]{$\mathbf{U}$}, we have made explicit an additional aspect that is often implicit in previous works, relating to quantum control over labs of other agents.  Interpretations compatible with unitary quantum theory can still violate \hyperref[def: U_assump]{$\mathbf{U}$} by arguing that full quantum control over an agent's lab is impractical. For instance, in decoherence-based interpretations, measurements are always associated with an inaccessible environment that decoheres the quantum superposition. To regard each $\matholdcal{M}^{\A_i}$ in our EWFS definition (\Cref{def:LWFS}) as a ``measurement" according to this view point, one must restrict to standard quantum scenarios (\Cref{def:std_q_exp}), where one agent cannot have complete quantum control over the labs of other agents. As shown by \Cref{corollary: std_QT_setting_indep}, in these scenarios, a joint distribution for all (non-trivial) measurement outcomes can be assigned, which would be equivalent to the distribution obtained by choosing $x_i=1$ for all settings.

Notably, a recent work \cite{Zukowski2021} makes an argument for objective decoherence based on the concept of pre-measurements. The notion of a pre-measurement coincides with the case of setting 0 in our framework i.e., the modelling of the measurement purely as a unitary evolution. They argue that pre-measurements cannot produce outcomes consistently in quantum theory, and to produce an outcome one requires an irreversible evolution. This aspect is captured within our framework in the fact that that no non-trivial measurement outcome can be assigned to a measurement modelled with setting 0, non-trivial outcomes require setting 1 (applying the projection postulate). The authors of \cite{Zukowski2021} then suggest that this observation resolves the FR apparent paradox as FR's reasoning assigns outcome values to pre-measurements. However, \cite{Zukowski2021} imposes an objective distinction between pre-measurements and measurements, considering the operations of the agents Alice and Bob only as pre measurements (setting 0) and those of the super-agents Ursula and Wigner as measurements (setting 1). Moreover, the criterion for objectively fixing this choice in general EWFSs is not explicitly discussed in prior works. 

The resolution proposed here is significantly more general. It applies to all EWFSs, allowing for generally subjective setting assignments (given by an explicit rule) while consistently accounting for relational interpretations as well.

\textbf{Rejecting $\textup{C}$ or $\textup{D}$: } The assumption $\textup{C}$ allows agents to inherit each other's knowledge. FR suggested that relational approaches, such as relational quantum mechanics \cite{Rovelli1996} and QBism \cite{Caves2002, Fuchs2014}, reject this assumption by allowing different subjective perspectives. However, since the original $\textup{C}$ assumption left unclear the formal modeling of agents' knowledge and choices of Heisenberg cuts, we have proposed a formalisation of this assumption in our framework, as \hyperref[def: assump_C]{$\mathbf{C}$}. 

In subjective/relational approaches to EWFSs, agents can have different choices of Heisenberg cuts. Super-agents model agents as ``inside the cut," assuming unitary evolution of their labs, while agents model themselves as ``outside the cut," associating classical outcomes to their measurements. This is captured in our framework through settings, our reasoning rule (see \Cref{theorem: main} and \Cref{sec: reasoning_rules}) allows the choices of these settings to be subjective and agent-dependent in a precise manner.

Despite this relational aspect and allowing dynamic updates of agents' knowledge (based on new observations), our framework still allows agents to freely inherit each other's knowledge as it satisfies the assumption \hyperref[def: assump_C]{$\mathbf{C}$}. Even when agents use different settings, if assumptions about setting choices are explicitly stated, \hyperref[def: assump_C]{$\mathbf{C}$} and other logical rules, such as the distributive axiom \hyperref[def: assump_D]{$\mathbf{D}$} and transitivity, are always satisfied as shown in \Cref{corollary: QUCDSassumptions}. Ignoring these setting choices leads to apparent breakdowns of these axioms, but, as argued in \Cref{sec: reasoning_rules}, similar logical breakdowns can occur in classical scenarios if assumptions about channels used in reasoning are ignored.

There also exist proposals, such as \cite{Samuel2022}, which reject classical logical axioms in light of FR's work. Our results highlight that such rejection is unnecessary for preserving the consistency of quantum theory or relational interpretations.

\textbf{Rejecting $\textup{S}$: } Previous discussions \cite{Frauchiger2018, Nurgalieva2018, Nurgalieva2020} suggest that many-worlds interpretations \cite{Everett1957} would tend to reject $\textup{S}$ because an outcome $a_i$ could be 0 in one ``branch" of the wavefunction and 1 in another. Treating $a_i$ as a quantum object associated with a quantum state $\ket{a_i}$, the violation of $\textup{S}$ seems less paradoxical since the quantum state not being $\ket{0}$ does not imply it is $\ket{1}$, but could be another non-orthogonal state \cite{Sudbery2019}. Violation of $\textup{S}$ is paradoxical only when $a_i$ is regarded as a classical random variable.

In our framework, there is no ambiguity between classical and quantum objects. Even with subjective points of view, agents' measurement outcomes are classical variables, with the trivial value $a_i=\perp$ when the measurement is unitary. Therefore, rejecting \hyperref[def: assump_S]{$\mathbf{S}$} within our framework would indeed constitute a paradox, leading to an invalid probability distribution $P(a_i)$. However, we have shown this does not happen in our framework.

Thus, many-worlds interpretations would not reject \hyperref[def: assump_S]{$\mathbf{S}$} in our framework. We do not believe any previous proposal or interpretation of quantum theory are pathological enough to fail this assumption. Moreover, it is important to note that selecting setting $x_i=1$ when computing probabilities of $a_i$ does not conflict with many-worlds interpretations where the universe undergoes unitary evolution. The projectors applied for the $x_i=1$ case need not be interpreted as a physical projection or wavefunction collapse, but can be seen as a mathematical tool necessary for computing probabilities through the Born rule. Alternatively, this can also be understood akin to classical probability theory, as conditioning on an agents' knowledge of seeing a particular outcome $a_i=\aaa_i$ in a particular ``branch'' of a many-worlds wavefunction, even when other branches can exist, from the perspective of other agents. It would be interesting to extend such ideas to formally define a perspectival version of the many-worlds interpretation, from an operational approach.




\subsection{Previous suggestions for consistent reasoning}

Another category of previous works is those that propose ways around FR's apparent paradox, either through conceptual discussions or by suggesting additional reasoning rules. 

We have shown that if the FR scenario and its conventional quantum predictions are equivalently described using our augmented circuit, there is no paradox, eliminating the need for additional reasoning rules.\footnote{Although additional steps can be employed to simplify the reasoning by dropping redundant settings and parameters, as is also the case in classical multi-agent scenarios (cf. \Cref{sec: reasoning_rules}).} Our formal version of each FR assumption is satisfied without contradictions. Nevertheless, it is insightful to interpret previous works' proposed reasoning rules.

{\bf Adding context labels} At a more pedagogical level, \cite{Nurgalieva2018} suggests an additional reasoning rule that could incorporated to avoid the paradox, which involves tagging all statements made by agents with certain ``context labels'' and only combining statements with matching labels. However no particular model for these labels or a rigorous formalisation of the rule was proposed there, and the question of efficiency and causal consistency (namely regarding the number of contexts to be checked at each step, and whether one needs to account for the contexts of future measurements) was left open. The settings of our framework can be interpreted as a concrete instantiation of these abstract context labels, but we have seen that no additional reasoning rules beyond quantum theory and classical logic are required once the settings are explicitly specified. 

Importantly, our framework does not impose that only statements derived from the same setting labels can be combined. In fact, we have shown that in certain situations the statements become independent of the certain setting labels and the choice of those setting labels no longer affects the validity of the statement (see \Cref{theorem: main} and \Cref{theorem: setting_independence}). We have also addressed the efficiency and causality issues.

{\bf Parsing rule for measurements} Another recent work \cite{Renes2021} proposes a parsing rule for quantum theory to determine when a unitary/isometry can be considered a measurement with a classical outcome. This rule deems an operation a measurement if no future non-commuting operations act on the memory system, ensuring consistency in the FR scenario by limiting agents' reasoning to such operations. This idea also aligns with the principle of superpositional solipsism in \cite{Narasimhachar2020}. Unlike our framework, which uses settings to describe measurements, \cite{Renes2021} maintains ambiguity in measurement modeling, but the additional parsing rule clarifies when the ambiguity can be safely ignored in agents’ reasoning. 

If an operation $\matholdcal{M}^{\A_i}$ in the EWFS description is deemed a measurement according to this parsing rule, then it would imply that there are no super-agents to the agent $\A_i$ in the scenario (also according to our definition of the non-superagent structure). As shown in \Cref{theorem: standardQT}, $\A_i$'s setting can then be safely ignored in the predictions, allowing all agents to reason about $\A_i$'s outcome without inconsistencies in that scenario. However, we note that our non-superagent structure is formalised without reference to commutativity of operations. In particular, a non-commuting operation acting solely on the memory $\M_i$ might fail the parsing rule of \cite{Renes2021} but would still allow $\A_i$'s setting to be safely ignored in our framework.

Moreover our framework does not restrict the ability to reason about $\A_i$'s outcome based on whether there are super-agents to $\A_i$ (i.e., someone ``Hadamarding'' $\A_i$'s brain) in the scenario. As we have seen, each of FR's statements can be reproduced in our formalism. This is in contrast to \cite{Renes2021} and another work \cite{Alexios2022} which suggest that certain statements of FR that refer to outcomes of agents' whose brains will later be Hadamarded, should not be allowed. While such additional rules or restrictions are sufficient to ensure logical consistency, the adherence to causality principles and efficiency of programming remain open question. Our work shows that such additional rules are not necessary for the purpose of ensuring consistency in any EWFS and that logical, causal consistency and efficiency of reasoning are possible with weaker restrictions on the reasoning, though it may be of interest to impose such rules based on other physical considerations or interpretations.

{\bf Consistent histories interpretation} Another notable approach is the consistent histories (CH) interpretation of quantum theory \cite{Griffiths1984, GriffithsCH}. 
In this approach, one specifies a set of possible histories for a given scenario, and a consistency criterion that tells us when a set of histories is consistent, allowing probabilities to be assigned to such consistent sets. The approach has beenn applied to explain a number of quantum paradoxes arising in standard (non-Wigner's Friend like) quantum scenarios, such as contextuality paradoxes, pre and post-selection paradoxes etc.  
In the context of Wigner's Friend scenarios, \cite{Losada2019} shows that the reasoning used in FR's derivation of the paradox requires computing probabilities in an inconsistent family of histories, and the authors then argue that such reasoning is therefore not valid in quantum theory.

At a high level, this bears resemblance to the general result of \Cref{corollary:setting_independence}, where we demonstrated that any EWF paradox is rooted in computing predictions across distinct setting choices and combining such statements while ignoring the setting labels (even though the predictions depend on these labels). However, our approach is distinct from the CH interpretation in key ways, and arguably offers a simpler and more minimal resolution to EWFS paradoxes. 

The CH approach requires considering commutation relations between families of projectors to determine consistency, while our settings are formalised without reference to commutation relations and do not involve such additional rules to ensure consistency. We have seen that conditioning on settings that model a measurement corresponds to conditioning on the choice of channel used in computing a probability, no additional rule is required to forbid combinations of statements made under different settings (as in needed in the CH approach to avoid combining inconsistent histories).

Moreover, our setting independence results, highlight that in certain cases, classical probability theory and classical logic ensure that even statements made under different settings can be consistently combined, as those statements are independent of the setting choice. While non-commuting projectors central to the CH approach are a non-classical aspect, we have discussed how our resolution of the paradox shares strong similarities with how analogous multi-agent inconsistencies are resolved in purely classical theories (see also the previous paragraph on parsing rules, for further discussion on the link between non-commutativity and our resolution).


Moreover, while CH's solution is sufficient to avoid logical inconsistencies, it does not provide an explicit reasoning rule for how to select the set of histories to be used when reasoning about predictions or agents' knowledge in an EWFS \cite{Nurgalieva2020}. 
In particular, \cite{Nurgalieva2020} noted that, if unitary quantum theory were universally valid, the perspectives and predictions of different agents would indeed correspond to different histories and even in a single experiment such as FR's, there is no single objective history of events that is realised.  

A natural question that arises is whether there is a unified framework with a concrete set of rules to construct it, where all these perspectives and predictions of an EWFS can be consistently incorporated, while recovering the predictions of real-world quantum experiments performed so far. Here, we have demonstrated that all EWFSs in quantum theory can be completely described within a single consistent quantum circuit framework that is capable of resolving general EWF paradoxes. We also provided an explicit rule for selecting the settings (modelling the Heisenberg cuts) in accordance with universal validity of unitary quantum theory and without assuming the existence of objective notion of observed measurement events, the observations and predictions in our formalism are relative to a choice of such settings.

\subsection{Previous works discussing the validity of FR's claim}

A third category of papers are those which question the validity of FR's theorem, due to additional implicit assumptions (other than $\textup{Q}$, $\textup{U}$, $\textup{C}$, $\textup{D}$ and $\textup{S}$ discussed in \cite{Frauchiger2018, Nurgalieva2018}) which are violated \cite{ScottAronson, Healey2018, Araujo, Sudbery2019, Fortin2019, Losada2019}. Some specific examples include Scott Aaronson's blog post that refers to an additional ``unformalised'' assumption of FR, Healey's assumption of intervention insensitivity \cite{Healey2018}, and the assumption regarding collapse/no-collapse that Araujo points out \cite{Araujo}. To quote Aaronson,

\begin{quote}
 But I reject an assumption that Frauchiger and Renner never formalise. That assumption is, basically: ``it makes sense to chain together statements that involve superposed agents measuring each other's brains in different incompatible bases, as if the statements still referred to a world where these measurements weren't being done."
\end{quote}


In our work, we have concretely shown that an additional assumption \hyperref[def: I_assump]{$\mathbf{I}$} (setting-independence) is violated in the FR scenario, but must be necessarily imposed for reproducing the apparent paradox. We have seen that the assumptions \hyperref[def: Q_assump]{$\mathbf{Q}$}, \hyperref[def: U_assump]{$\mathbf{U}$}, \hyperref[def: assump_C]{$\mathbf{C}$}, \hyperref[def: assump_D]{$\mathbf{D}$} and \hyperref[def: assump_S]{$\mathbf{S}$} about the validity of quantum theory and classical logic are always consistent in any EWFS in our formalism.  In our understanding, the assumption \hyperref[def: I_assump]{$\mathbf{I}$} formally embodies the spirit of the additional implicit assumptions noted in the above examples of previous works. All these previous works argue that the respective assumption is necessary to reproduce the apparent paradox of FR, but suggest that the assumption fails in the FR scenario, due to the incompatible measurements, and/or (Bell) non-locality of the correlations involved.


While the assumption \hyperref[def: I_assump]{$\mathbf{I}$} of our framework is formulated in a much more general manner, and is a priori independent of quantum features such as incompatible measurements and non-locality, when applied to the FR scenario, it captures the features of these previously noted assumptions. Aaronson's assumption is reflected in our framework by noting that the four possible settings $(x_1,x_2)\in\{(0,0),(0,1),(1,0),(1,1)\}$ for the FR protocol are in one-to-one correspondence with the four possible measurement contexts (i.e., sets of compatible measurements) of a bipartite Bell experiment with binary choice of measurements on each side, this correspondence is explained in \Cref{appendix: Hardy} where we discuss the relation between FR's protocol and Hardy's proof regarding (Bell) non-locality without inequalities.

The settings also tell us whether or not we have ``collapsed'' the state of an agents' system and memory (or rest of the lab) by applying a projector corresponding to their outcome, and provide a way to formalise Araujo's assumption. Araujo also noted issues with FR's treatment of post-selection, which are accounted for in our analysis by explicitly computing probabilities conditioned on the post-selection and avoiding collider bias. 
We have also seen that in the FR scenario (\Cref{ssec: resolution_prep}), the prediction $P(w=\textup{fail}|r=\textup{tails},(x_1,x_2))$ does depend on the setting $x_2$ of $\textup{F}$. This is analogous to the property that Healey calls intervention sensitivity.

While our results which establish the general consistency of quantum theory are certainly contrary to FR's popular summary that quantum theory cannot consistently justify the use of itself, the validity of FR's claimed theorem depends on how the assumptions are interpreted. In \Cref{sec:The choices of Heisenberg cuts do matter in FR}, we discussed a refined interpretation of FR's theorem in which it is correct, this would be to say that a version of quantum theory that additionally allows Heisenberg cuts to be freely ignored leads to contradictions with classical logic. However, in this interpretation, the result is not as surprising as the popular summary claims it to be, although it has undoubtedly fuelled an intriguing research program on EWFSs in quantum foundations inspiring other no-go theorems (such as \cite{Brukner2018, Bong2020}) based on a similar set-up where the underlying  assumptions are formalised more rigorously.

\section{Proofs of all results}
\label{appendix: proofs}

\subsection{Proofs of results from the main text}
\ConsistencyBasic*
\begin{proof}
   Let $S\in \Sigma $ be an arbitrary statement, by construction this of the form: ``If the outcomes $\vec{a}_l$ take values $\vec{\aaa}_l$ and the additional parameters of the scenario take the value $k=\kkk$, then the outcomes $\vec{a}_l$ take values $\vec{\aaa}_l$ with a probability $P(\vec{a}_j=\vec{\aaa}_j|\vec{a}_l=\vec{\aaa}_l,k=\kkk)$.'' Then the negation of $S$ is ``If the outcomes $\vec{a}_l$ take values $\vec{\aaa}_l$ and the additional parameters of the scenario take the value $k=\kkk$, then the outcomes $\vec{a}_l$ {\bf do not} take values $\vec{\aaa}_l$ with a probability $P(\vec{a}_j=\vec{\aaa}_j|\vec{a}_l=\vec{\aaa}_l,k=\kkk)$.'' More precisely, this is a set of statements:\\

   $\neg S:=\{$ ``If the outcomes $\vec{a}_l$ take values $\vec{\aaa}_l$ and the additional parameters of the scenario take the value $k=\kkk$, then the outcomes $\vec{a}_l$ take values $\vec{\aaa}_l$ with a probability $P'(\vec{a}_j=\vec{\aaa}_j|\vec{a}_l=\vec{\aaa}_l,k=\kkk)$.''$\}_{P\neq P'}$. \\

It is then immediate from \Cref{def: consistency_pred} that if $S\in \Sigma$ then no $S'\in \neg S$ can be such that $S'\in \Sigma$. This completes the proof. 
\end{proof}

\SameChoicesSamePredictions*
\begin{proof}

The statement immediately follows from noting that fixing the setting choice for each measurement fully specifies the circuit of \Cref{fig: genform_circuit} and therefore the joint state of all systems and memories $\Ss_1,\ldots\Ss_m,\M_1,\ldots,\M_N$ at each time-step. If $\A$ and $\B$ make the same choice of settings for all measurements, then they fully agree on the circuit (initial states and all channels) and therefore assign the same joint state to all systems at each time step.

Since all agents apply the Born rule to calculate the probabilities (\Cref{definition: setting_prediction}), and agree on the states and measurements for which the probabilities are calculated, they obtain the same probabilities and therefore make the same predictions for all measurements.

More explicitly, \Cref{appendix: prob_rule_general} shows that in any EWFS, given a choice of settings, every setting-conditioned prediction can be uniquely computed by applying the Born rule to our augmented circuit. Therefore if all agents in an EWFS use the augmented quantum circuit to reason, picking the same setting choice or prior distribution $P(\vec{x})$, then they agree on all the predictions made in that scenario. 
\end{proof}

\MainTheorem*

\begin{proof}

\begin{enumerate}
	\item \emph{Completeness: } 
Recall that a conventional prediction $P_{conv}(\vec{a}_j=\vec{\aaa}_j|\vec{a}_l=\vec{\aaa}_l)$ in an EWFS (\Cref{def: conv_prediction}) is computed by modelling the measurements $\matholdcal{M}^{\A_i}$ for all $i\in\{j_1,...,j_p,l_1,...,l_q\}$ as purely unitary evolutions of the agents' labs (\Cref{eq:M_unitary_gen}).  By construction of the augmented circuit for the EWFS, this corresponds to the case where $x_i=0$ for all $i\in\{j_1,...,j_p,l_1,...,l_q\}$. 

    For $i\not\in\{j_1,...,j_p,l_1,...,l_q\}$, in a conventional prediction, a projective measurement of $\{\mathtt{\Pi}^{\mathtt{S}_i}_{\aaa_i}=\ketbra{\aaa_i}{\aaa_i}_{\mathtt{S}_i}\}_{\aaa_i\in\mathtt{O}_i}$ is performed followed by a CNOT in the same basis with the system $\mathtt{S}_i$ as control and memory $\M_i$ as target (capturing the memory update after measurement). 
 This CNOT is precisely the unitary $\text{$\matholdcal{M}$}^{\A_i}_{unitary}$ (\Cref{eq:M_unitary_gen}).

Now, following \Cref{sec: mapping_augcircuit} of the main text, consider an initial state $\ket{\psi}_{\mathtt{S}_i}\otimes \ket{0}_{\M_i}$ of the system and memory  (on which the measurement $\matholdcal{M}^{\A_i}$ acts) where
$\ket{\psi}_{\mathtt{S}_i}=\sum_{\aaa_i\in \mathtt{O}_i}c_{\aaa_i}\ket{\aaa_i}_{\mathtt{S}_i}$. That section of the main text shows that the following two procedures are operationally equivalent: (1) a projective measurement $\{\pi_{\aaa_i}^{\mathtt{S}_i}=\ketbra{\aaa_i}{\aaa_i}_{\mathtt{S}_i}\}_{\aaa_i\in \mathtt{O}_i}$ is applied on the system and then the unitary channel $\text{$\matholdcal{M}$}^{\A_i}_{unitary}$ is applied on the system and memory (2) the unitary channel $\text{$\matholdcal{M}$}^{\A_i}_{unitary}$  is applied on the system and memory, and then the projective measurement $\{\pi_{\aaa_i}^{\mathtt{S}_i\M_i}=\ketbra{\aaa_i\aaa_i}{\aaa_i\aaa_i}_{\mathtt{S}_i\M_i}\}_{\aaa_i\in \mathtt{O}_i}$ is performed on the system and memory.  The operational equivalence of (1) and (2) entails that they yield the same transformation on the initial state, and also that the measurements in both cases yield the same probabilities for an outcome $\aaa_i$.

Generally by linearity, the argument about the equivalence of (1) and (2) extends to all initial states $\rho_{\mathtt{S}_i}\otimes \ketbra{0}{0}_{\M_i}$ of the system and memory. (1) is the procedure used for dealing with a measurement $\matholdcal{M}^{\A_i}$ when computing conventional predictions involving the outcome $\aaa_i$ (\Cref{def: conv_prediction}) while (2) is the procedure used in the augmented EWFS for calculating the setting conditioned predictions (\Cref{definition: setting_prediction}) involving the outcome $\aaa_i$ where by default we will have $x_i=1$.


This shows that the conventional prediction for the probability of $\vec{a}_j=\vec{\aaa}_j$ given $\vec{a}_l=\vec{\aaa}_l$ and the augmented circuit prediction for the same yield the same answer when using the setting assignment $\vec{x}=\vec{\xxx^*}$ defined as: $x_i=1$ for $i\in \{j_1,...,j_p,l_1,...,l_q\}$ and $x_i=0$ otherwise. The only difference being that the corresponding prediction in the augmented circuit explicitly conditions on the setting choice $\vec{x}=\xxx^*$, and we have $P_{conv}(\vec{a}_j=\vec{\aaa}_j|\vec{a}_l=\vec{\aaa}_l)=P(\vec{a}_j=\vec{\aaa}_j|\vec{a}_l=\vec{\aaa}_l,\vec{x}=\vec{\xxx^*})$.

This establishes the claim that all the conventional predictions of any given EWFS can be derived as particular cases of setting-conditioned predictions in the single augmented circuit of the EWFS, showing the completeness of our formalism.

\item {\emph Consistency: } The fact that the set $\Sigma^{aug}$ of statements associated with an EWFS in our framework satisfies the consistency condition of \Cref{def: consistency_pred} immediately follows from the procedure through which these statements are derived.

As defined in \Cref{def: all_aug_statements}, each statement $S\in \Sigma^{aug}$ is associated with a setting-conditioned prediction $P(\vec{a}_j=\vec{\aaa}_j|\vec{a}_l=\vec{\aaa}_l,\vec{x}=\vec{\xxx})$ of the given EWFS. Such predictions are computed by applying the Born rule to a single, well-defined quantum circuit (explicitly detailed in \Cref{appendix: prob_rule_general}), which implies that for any given setting choice $\vec{x}=\vec{\xxx}$ and sets of outcome values $\vec{a}_j=\vec{\aaa}_j$, $\vec{a}_l=\vec{\aaa}_l$ in a given EWFS, a unique setting-conditioned prediction $P(\vec{a}_j=\vec{\aaa}_j|\vec{a}_l=\vec{\aaa}_l,\vec{x}=\vec{\xxx})$ can be computed, which corresponds to a valid, well-defined and normalised conditional probability distribution.

Since $\Sigma^{aug}$ only contains statements associated with setting-conditioned predictions in the augmented circuit, and it is impossible to have $P(\vec{a}_j=\vec{\aaa}_j|\vec{a}_l=\vec{\aaa}_l,\vec{x}=\vec{\xxx})$ and $P'(\vec{a}_j=\vec{\aaa}_j|\vec{a}_l=\vec{\aaa}_l,\vec{x}=\vec{\xxx})$ in the same augmented EWFS for $P\neq P$, the consistency of $\Sigma^{aug}$ according to \Cref{def: consistency_pred} follows.

\item {\emph Causality: } Our framework provides a single well-defined quantum circuit (the augmented circuit) from which all 
 setting-conditioned predictions in an EWFS can be derived, and in which all the operations are applied in an acyclic order (given by the DAG $G$). From this, the desired result follows immediately by applying well-known results on quantum causal networks or causal models, such as the $d$-separation theorem \cite{Pearl2009,Tucci2007, Henson2014, Barrett2020A}. However, in the interest of not introducing new concepts, we describe the proof in terms of concepts and results introduced in this paper. 
 

Consider the setting-conditioned prediction $P(\vec{a}_j=\vec{\aaa}_j|\vec{a}_l=\vec{\aaa}_l,\vec{x}=\vec{\xxx})$, and the condition (C) $\A_i\not\prec \A_k$ for all $k\in\{j_1,...,j_p,l_1,...,l_q\}$. Notice that an $\A_i$ satisfies this condition, in particular when $t_i>$ max$(t_{j_1},...,t_{j_p},t_{l_1},...,t_{l_q})$. In \Cref{appendix: prob_rule_general}, we have established that any setting-conditioned prediction $P(\vec{a}_j=\vec{\aaa}_j|\vec{a}_l=\vec{\aaa}_l,\vec{x}=\vec{\xxx})$ in an augmented EWFS can be simplified to the form of \Cref{eq:general conditional prob rule simplicied 2}, where the components $x_i$ of the setting vector $\vec{x}$ which are associated with a time $t_i>$ max$(j_1,...,j_p,l_1,...,l_q)$ do not feature in the probability expression, which implies the required independence \Cref{eq: causality_indep} for all such settings $x_i$.

For cases where we have $t_i<t_k$ for some $k\in\{j_1,...,j_p,l_1,...,l_q\}$ but $\A_i\not\prec \A_k$, by construction, this would only happen when the circuit contains no directed path of wires from the measurement $\matholdcal{M}^{\A_i}$ at $t_i$ to the measurement $\matholdcal{M}^{\A_k}$ at $t_k>t_i$ (absence of directed paths between corresponding nodes in $G$). In this case, the two measurements act on disjoint sets of systems and the non-signalling property of quantum theory would then guarantee the required independence of the outcome of one measurement from the setting of the other.

Alternatively, in such cases, we can always transform to an equivalent circuit (with the same channels, systems, states and same connectivity between channels) where $\A_i$ and $\A_k$ are assigned time $t'_i$ and $t'_k$ with the opposite time order, $t'_i>t_k$, allowing us to apply the argument from the previous paragraph to establish the required independence.\footnote{Physically if we regard the circuit as embedded in space-time such that the absence of directed paths between $\A_i$ and $\A_k$ correspond to space-like separation, then the above transformation can be seen as transforming to another reference frame where the time order relative to the co-ordinate time is reversed.}

\end{enumerate}

\end{proof}

\QUCDSassumptions*

\begin{proof}

The proof proceeds by showing that all 5 assumptions are satisfied by statements $\Sigma^{aug}$ derived in our augmented circuit framework, and the consistency of our framework shown in \Cref{theorem: main} guarantees that all 5 can be consistently applied to reason without any contradictions. 

\hyperref[def: Q_assump]{$\mathbf{Q}$}, \hyperref[def: U_assump]{$\mathbf{U}$} are by construction satisfied for all the statements $\Sigma^{aug}$, since these are associated with predictions computed using the Born rule and consider all possible setting choices (including the unitary modelling). Moreover, all our results apply to general EWFS (\Cref{def:LWFS}) which allows agents to have full quantum control over the labs of others, in the precise sense described in the formalisation of \hyperref[def: U_assump]{$\mathbf{U}$} .

\hyperref[def: assump_S]{$\mathbf{S}$} holds for all statements $\Sigma^{aug}$ in our framework because the statements are obtained from setting-conditioned predictions, which are well-defined normalised probabilities (see \Cref{appendix: prob_rule_general}) and the set of statements $\Sigma^{aug}$ is consistent as proven in \Cref{theorem: main}. No single valid normalised probability distribution can assign $P(\vec{a}_j=\vec{\aaa}_j)=1$ and $P(\vec{a}_j=\vec{\aaa}'_j)=1$ and the consistency result forbids the possibility of two distributions in the same scenario with $P(\vec{a}_j=\vec{\aaa}_j)=1$ and $P'(\vec{a}_j=\vec{\aaa}'_j)=1$. 

\hyperref[def: assump_C]{$\mathbf{C}$} holds for all statements in our framework because if an agent $\A_i$ knows that an agent $\A_j$ knows a statement $S\in \Sigma^{aug}$, then $\A_i$ can directly compute the setting-conditioned prediction $P(\vec{a}_j=\vec{\aaa}_j|\vec{a}_l=\vec{\aaa}_l,\vec{x}=\vec{\xxx})$ that defines the statement $S$ and thereby inherit the knowledge. Consistency as shown in  \Cref{theorem: main} guarantees that there is only one such probability assignment any user of our framework can arrive at for a given choice $\vec{x}=\vec{\xxx}$ and outcome values $\vec{a}_j=\vec{\aaa}_j$, $\vec{a}_l=\vec{\aaa}_l$.\footnote{This fact is independent of whether or not different agents agree on the setting choices to model their perspective of the experiment, as for instance, even if Wigner models the Friend's lab as a unitarily evolving closed quantum system ($x_F=0$), both Wigner and the Friend can still use our augmented circuit to compute predictions for the case where $x_F=1$ and will arrive at the same answer.}


Finally the distributive axiom \hyperref[def: assump_D]{$\mathbf{D}$} is a rather basic axiom of logical inference, that it holds in our framework can be seen as follows. Firstly, since $\Sigma^{aug}_L\subseteq\Sigma^{aug}$, consistency of the superset as shown in \Cref{theorem: main}  implies consistency of the subset. Then consider a logical statement $S_1\in \Sigma_L^{aug}$ associated with a logical setting-conditioned prediction $P(\vec{a}_j=\vec{\aaa}_j| \vec{a}_l=\vec{\aaa}_l,\vec{x}=\vec{\xxx})\in \{0,1\}$, it can be of the following form where the set $\vec{a}_l$ of outcomes can be empty

\begin{align}
   \begin{split}
       \vec{a}_l=\vec{\aaa}_l \land \vec{x}=\vec{\xxx} &\Rightarrow \neg (\vec{a}_j=\vec{\aaa}_j),\\
       \vec{a}_l=\vec{\aaa}_l \land \vec{x}=\vec{\xxx}  &\Rightarrow \vec{a}_j=\vec{\aaa}_j.
   \end{split}
\end{align}

Let $S_2$ be another statement of the same form, but relative to a potentially different set of outcomes $\vec{a}_m$ and $\vec{a}_n$ and setting values $\vec{x}=\vec{\xxx'}$ i.e.,
\begin{align}
   \begin{split}
       \vec{a}_n=\vec{\aaa}_n \land \vec{x}=\vec{\xxx'} &\Rightarrow \neg (\vec{a}_m=\vec{\aaa}_m),\\
       \vec{a}_n=\vec{\aaa}_n \land \vec{x}=\vec{\xxx'}  &\Rightarrow \vec{a}_m=\vec{\aaa}_m.
   \end{split}
\end{align}
This corresponds to $P(\vec{a}_m=\vec{\aaa}_m| \vec{a}_n=\vec{\aaa}_n,\vec{x}=\vec{\xxx'})\in \{0,1\}$. Denote the probabilities associated with $S_1$ and $S_2$ as $P_1$ and $P_2$ in short. If $S_1\Rightarrow S_2$, then $P_2$ can be derived from $P_1$ through the rules of classical probability theory and usual manipulation of quantum circuits (in this case for the augmented circuit).

This implies that if one were to directly compute the probabilities for $\vec{a}_m$, $\vec{a}_n$ under the setting choice $\vec{x}=\vec{\xxx'}$ in the augmented circuit, one must arrive at $P_2$ (otherwise there can be two distinct probability assignments to the same outcomes given these settings, which is not possible due to consistency, \Cref{def: consistency_pred} and \Cref{theorem: main}). This shows that $S_2\in \Sigma_L^{aug}$. As this holds for all agents using our framework to reason, it follows that \Cref{eq: logic_axiom1} holds.

\end{proof}

\FRtheorem*

\begin{proof}

The existence claim follows from the main results of our framework \Cref{ssec: general_results} and \Cref{corollary: QUCDSassumptions}. The consistent description satisfying \hyperref[def: Q_assump]{$\mathbf{Q}$}, \hyperref[def: U_assump]{$\mathbf{U}$}, \hyperref[def: assump_C]{$\mathbf{C}$}, \hyperref[def: assump_D]{$\mathbf{D}$} and \hyperref[def: assump_S]{$\mathbf{S}$} is given by the augmented circuit. The violation of \hyperref[def: I_assump]{$\mathbf{I}$} for certain logical setting-conditioned prediction follows from the analysis of \Cref{appendix: setting_dep_FR}, where we have shown that the logical predictions  $P(b = 1|u = w =\textup{ok},(x_1,x_2)=(0,1))=1$ and $P(w = \textup{fail}|a=1,(x_1,x_2)=(1,0))=1$ are setting-dependent (thus violating \hyperref[def: I_assump]{$\mathbf{I}$}), since 

\begin{align}
	\begin{split}
		&P(b = 1|u = w =\textup{ok},(x_1,x_2)=(0,1)) \neq P(b = 1|u = w =\textup{ok},(x_1,x_2)=(1,1)),\\
		&P(w = \textup{fail}|a=1,(x_1,x_2)=(1,0))  \neq P(w = \textup{fail}|a=1,(x_1,x_2)=(1,1))
	\end{split}
\end{align}

In the prepare and measure version analysed in \Cref{ssec: resolution_prep}, the logical predictions $P(w=\textup{fail}|r=\textup{tails}, (x_1,x_2)=(1,0))=1$ and $P(z=+\frac{1}{2}|\overline{w}=\overline{\textup{ok}}, (x_1,x_2)=(0,1))=1$ are setting-dependent and violate \hyperref[def: I_assump]{$\mathbf{I}$}, since

\begin{align}
	\begin{split}
		&P(w=\textup{fail}|r=\textup{tails}, (x_1,x_2)=(1,0))\neq P(w=\textup{fail}|r=\textup{tails}, (x_1,x_2)=(1,1)),\\
		&P(z=+\frac{1}{2}|\overline{w}=\overline{\textup{ok}}, (x_1,x_2)=(0,1))\neq P(z=+\frac{1}{2}|\overline{w}=\overline{\textup{ok}}, (x_1,x_2)=(1,1)).
	\end{split}
\end{align}

These precisely correspond to the three statements $\textup{\tb F}^{n:02}$ and $\textup{\tb W}^{n:22}$ of FR that are combined to yield the apparent paradox (as shown in \Cref{table: resolution_prepmeas}).


The fact that assuming \hyperref[def: I_assump]{$\mathbf{I}$} is necessary to reproduce the apparent paradox in both versions of the protocol, follows from \Cref{corollary:setting_independence} and the sufficiency follows from noting that assuming \hyperref[def: I_assump]{$\mathbf{I}$} for all logical setting-conditioned predictions allows us to ignore the setting information on all such predictions and consequently on all associated logical statements. It is clear from \Cref{table: resolution_ent} and \Cref{table: resolution_prepmeas} that this recovers the original statements of FR (in both versions) and therefore the apparent paradox.
\end{proof}

\SettingIndep*

\begin{proof}

Given any set $\A_{\matholdcal{K}}$ of agents and another agent $\A_i$ such that $(\A_i,\A_k)\in n\matholdcal{SA}$ $\forall \A_k\in\A_{\matholdcal{K}}$, we will first show that the joint probability of the outcomes of all agents in $\A_{\matholdcal{K}}$ (these outcomes will be denoted using the vector $\vec{a}_{\matholdcal{K}}$) is independent of the setting $x_i$ i.e., 
\begin{align}
    \label{eq: setting_indep_proof1}
    \begin{split}
       P(\vec{a}_{\matholdcal{K}}|(x_1,...,x_N))=P(\vec{a}_{\matholdcal{K}}|(x_1,...,x_{i-1},x_{i+1},...x_N)).     
    \end{split}
\end{align}
Here we have shortened the usual notion using value assignments $a=\aaa$ in probabilities to just the variable $a$, since the meaning of the equation is clear from context. From this, the desired result will follow easily through the rules of conditional probability.

To establish \Cref{eq: setting_indep_proof1}, we consider the operationally equivalent augmented circuit (involving agents $\{\A'_1,...,\A'_N\}$) to the original augmented circuit (involving agents $\{\A_1,...,\A_N\}$), whose causal structure more explicitly reflects non-superagent structure $n\matholdcal{SA}$ of the original EWFS. Recall that such an equivalent circuit is guaranteed to exist by \Cref{definition: nSA}. The equivalence implies in particular, the equivalence of predictions in the two circuits, that is, for all disjoints sets of outcomes $\vec{a}_j$ and $\vec{a}_l$ in the original EWFS,  

\begin{align}
    \begin{split}
    \label{eq: setting_indep_proof1p}
        P(\vec{a}_j=\vec{\aaa}_j|\vec{a}_l=\vec{\aaa}_l,\vec{x}=\vec{\xxx})=P(\vec{a}'_j=\vec{\aaa}_j|\vec{a}'_l=\vec{\aaa}_l,\vec{x}'=\vec{\xxx}).
    \end{split}
\end{align}

We establish \Cref{eq: setting_indep_proof1} for the primed EWFS, and by the above equivalence of predictions, obtain the same for the original EWFS. We do so by dividing the problem into different cases, depending on the causal structure of the (equivalent) augmented circuit. According to \Cref{definition: nSA}, $(\A_i,\A_k)\in n\matholdcal{SA}$ (along with the fact that we have $i\neq k$) implies that in the equivalent EWFS, one of the following cases must hold

\begin{itemize}
    \item {\bf Case 1:} $\A'_i\not\prec\A'_k$
    \item {\bf Case 2:}  $\A'_i\prec \A'_k$, and $\matholdcal{E}'_i$ acts trivially on $\M'_i$ and $\A_i \not\prec^{\M_i'} \A_k$
\end{itemize}

{\bf Case 1} In this case \Cref{eq: setting_indep_proof1} immediately follows from the causality result of \Cref{theorem: main}.

{\bf Case 2} In this case we must have the following property: for any agent $\A'_j$ with $\A'_i \prec^{\M'_i} \A'_j$ we must have $\A'_j \not\prec \A'_k$ for all $k\in \matholdcal{K}$. Otherwise $\A'_i \prec^{\M'_i} \A'_j$ along with the directed path $\A'_j \prec \A'_k$ would immediately imply by \Cref{definition: causal_str} that $\A'_i \prec^{\M'_i} \A'_k$, which would contradict the condition of Case 2 and hence the fact that $(\A_i,\A_k)\in n\matholdcal{SA}$ \Cref{definition: nSA}.

Now, the prediction of interest from \Cref{eq: setting_indep_proof1} can be calculated explicitly by applying \Cref{eq:prob of events given x} as detailed in \Cref{appendix: prob_rule_general}. The above property guarantees that for all $\A'_j$ such that $\A'_i \prec^{\M'_i} \A'_j$, the operations of $\A'_j$ and those of any agent $\A'_k\in \A'_{\matholdcal{K}}$ must act of disjoint sets of systems \Cref{definition: causal_str}. Note that $\A'_i \prec^{\M'_i} \A'_j$ implies that $\A'_j$ acts after $\A'_i$ in the augmented circuit. Then by the non-signalling property of quantum theory, and as we can explicitly see from \Cref{eq:prob of events given x}, the trace in the probability calculation will act on all the outputs of $\A'_j$'s operation. Since $\A'_j$ is not in the set $\A'_{\matholdcal{K}}$, we sum over their outcomes in the probability, therefore all operations of such agents are completely positive and trace preserving maps (CPTPMs). For CPTPMs tracing the output after applying the map is equivalent to tracing the inputs directly $\tr\circ\matholdcal{E}=\tr$.

The above implies that all the operations of the agents $\A'_j$ such that $\A'_i \xrightarrow{\M_i'} \A'_j$ will drop out of the probability expression for \Cref{eq: setting_indep_proof1}. This probability expression will then have the trace over $\M'_i$ acting directly on the operation $\matholdcal{E}'_i\circ\text{$\matholdcal{M}$}'^{\A_i}$ of the agent $\A'_i$. From the definition of Case 2, we also know that $\matholdcal{E}'_i$ acts trivially on $\M'_i$, which implies that the trace over $\M'_i$ will act directly on the measurement $\matholdcal{M}^{\A_i}$.

It is easy to see that $\tr_{\M'_i}\circ \matholdcal{M}^{\A_i}$ is the same, independently of the setting $x_i\in \{0,1\}$, because the setting dictates whether we coherently ($x_i=0$) or incoherently ($x_i=1$) copy the system state onto the memory in the basis of the measurement, and the post-measurement state on the system alone (obtained by tracing out the memory) is the same in both cases. A more explicit proof of this fact can be found in the proof of \Cref{theorem: standardQT}. This is sufficient to establish the required \Cref{eq: setting_indep_proof1} for this case.

Having established \Cref{eq: setting_indep_proof1} for all sets $\A_{\matholdcal{K}}$ of agents of the form required by the theorem statement, it is immediate that the same setting independence of \Cref{eq: setting_indep_proof1} also holds for all subsets of agents  $\A_{\matholdcal{K}}$ (this can be seen by computing the relevant marginals of \Cref{eq: setting_indep_proof1}). Now, for any partition $A_{\matholdcal{K}}=\{\A_{j_1},...,\A_{j_p}\}\cup \{\A_{l_1},...,\A_{l_q}\}$ of $\A_{\matholdcal{K}}$, we can compute prediction $P(\vec{a}_j|\vec{a}_l,\vec{x})$ by applying the conditional probability rule
\begin{align}
    \begin{split}
        P(\vec{a}_j|\vec{a}_l,\vec{x})
        =\frac{P(\vec{a}_j,\vec{a}_l|\vec{x})}{P(\vec{a}_l|\vec{x})}.
    \end{split}
\end{align}

The fact that both the numerator and denominator of this expression are independent of the component $x_i$ of the setting vector $\vec{x}$ are then immediate from \Cref{eq: setting_indep_proof1}. This establishes the theorem.

\end{proof}

\subsection{Proofs of results from the Appendix}

\StandardQT*

\begin{proof}

In a standard quantum scenario, by \Cref{def:std_q_exp}, we must have $(\A_i,\A_j)\in n\matholdcal{SA}$ for all $i,j\in \{1,...,N\}$. This means that the augmented circuit of the given EWFS over agents $\{\A_1,...,\A_N\}$ can be reduced to an operationally equivalent augmented circuit over corresponding agents $\{\A'_1,...,\A'_N\}$ such that for any $\A'_i$ and $\A'_j$ we have: either $\A'_j$ acts before $\A'_i$ in time $t'_i>t'_j$ or in the alternative cases where $i=j$ or $t'_j>t'_i$, no operation that acts after the measurement $\matholdcal{M}^{\A'_i}$ in the circuit (including the operation $\matholdcal{E}'_i$) acts non-trivially on the memory $\M'_i$. 
Here we have used the fact that $\A'_i\prec\A'_j$ according to the operational causal structure (\Cref{definition: causal_str}) implies $t'_i<t'_j$.

Applying this argument to every pair of agents in the scenario, this implies that for each measurement $\matholdcal{M}^{\A'_i}$, the corresponding set of systems $\mathtt{S}'_i$ of systems (excluding the memory $\M'_i$) on which it acts non-trivially is a subset of the systems $\mathtt{S}'$ and does not include any of the memories $\M'_j\in \mathtt{M}'$ of other agents (in contrast to a general EWFS of \Cref{def:LWFS} where the ``system'' for one agent's measurement may include the `` memories'' of other agents). All the following arguments will refer to the equivalent augmented circuit over the primed agents.


    Let the measurement outcomes $a_i$ of $\A'_i$'s measurement 
    take values in the same set $\aaa_i\in \mathtt{O}_i$ as that of the original scenario w.l.o.g.\footnote{Since the scenarios are operationally equivalent with a one-one-one correspondence between the outcome sets, we can use the same labels without loss of generality.} Then $\A'_i$'s measurement on the systems $\mathtt{S}'_i$ is associated with the projectors $\{\ket{\aaa_i}\bra{\aaa_i}_{\mathtt{S}'_i}\}_{\aaa_i\in \mathtt{O}_i}$. Let $\rho_{\mathtt{S}'_i}$ be an arbitrary state, which corresponds to the state of the system $\mathtt{S}'_i$ just before $\A'_i$'s measurement. We can express this state is the basis of the measurement as follows, for some coefficients $c_{\aaa_i,\bar{\aaa}_i}$.
    $$\rho_{\mathtt{S}'_i}=\sum_{\aaa_i,\bar{\aaa}_i}c_{\aaa_i,\bar{\aaa}_i}\ket{\aaa_i}\bra{\bar{\aaa}_i}_{\mathtt{S}'_i},$$ 

where $\{\ket{\aaa}_i\}$ and $\{\ket{\bar{\aaa}}_i\}$ both correspond to the same measurement basis. Recall that the memory $\M'_i$ is initialised to $\ket{0}\bra{0}_{\M'_i}$. Now consider the case where the setting $x'_i=0$. Then we model $\A'_i$'s measurement as the unitary evolution $\text{$\matholdcal{M}$}^{\A'_i}_{unitary}$ which as we have seen before, is simply a unitary implementing a coherent copy in the measurement basis with $\mathtt{S}'_i$ being the control and $\M'_i$ being the target. The post-measurement state of $\mathtt{S}'_i\M'_i$ in this case is given as
    \begin{align}
        \begin{split}
        \label{eq: standardQT_proof1}
          \text{$\matholdcal{M}$}^{\A'_i}_{unitary} (\rho_{\mathtt{S}'_i}\otimes \ket{0}\bra{0}_{\M'_i})\text{$\matholdcal{M}$}^{\A'_i,\dagger}_{unitary} =\sum_{\aaa_i,\bar{\aaa}_i}c_{\aaa_i,\bar{\aaa}_i}\ket{\aaa_i\aaa_i}\bra{\bar{\aaa}_i\bar{\aaa}_i}_{\mathtt{S}'_i\M'_i}.
        \end{split}
    \end{align}
   Then it is easy to see that the following holds, where we use the notation $\pi_{\aaa_i}^{\mathtt{S}'_i}:=\ket{\aaa_i}\bra{\aaa_i}_{\mathtt{S}'_i}$, $\pi_{\aaa_i}^{\M'_i}:=\ket{\aaa_i}\bra{\aaa_i}_{\M'_i}$, $\pi_{\aaa_i}^{\mathtt{S}'_i\M'_i}=\ket{\aaa_i\aaa_i}\bra{\aaa_i\aaa_i}_{\mathtt{S}'_i\M'_i}$. 
 
    \begin{align}
        \begin{split}
            \label{eq: standardQT_proof2}   & (\id_{\mathtt{S}'_i}\otimes \pi_{\aaa_i}^{\M'_i})\big( \text{$\matholdcal{M}$}^{\A'_i}_{unitary} (\rho_{\mathtt{S}'_i}\otimes \ket{0}\bra{0}_{\M'_i})\text{$\matholdcal{M}$}^{\A'_i,\dagger}_{unitary}\big)(\id_{\mathtt{S}'_i}\otimes \pi_{\aaa_i}^{\M'_i}) \\
           =& \pi_{\aaa_i}^{\mathtt{S}'_i\M'_i} \text{$\matholdcal{M}$}^{\A'_i}_{unitary} (\rho_{\mathtt{S}'_i}\otimes \ket{0}\bra{0}_{\M'_i})\text{$\matholdcal{M}$}^{\A'_i,\dagger}_{unitary} \pi_{\aaa_i}^{\mathtt{S}'_i\M'_i}
        \end{split}
    \end{align}

We now show that the post-measurement on the system $\mathtt{S}'_i$ alone is independent of the setting $x'_i$, and this indeed the post-measurement state one would get by directly applying the projective measurement of $\matholdcal{C}^{sys}$ on $\mathtt{S}'_i$. This will allows us to reduce our augmented circuit to a  $\matholdcal{C}^{sys}$-form standard quantum circuit as required. For this, first consider this state under the setting $x'_i=0$. This is given as $\tr_{\M'_i}\big[\text{$\matholdcal{M}$}^{\A'_i}_{unitary} (\rho_{\mathtt{S}'_i}\otimes \ket{0}\bra{0}_{\M'_i})\text{$\matholdcal{M}$}^{\A'_i,\dagger}_{unitary} \big]$. Noting that $\sum_{\aaa_i\in \mathtt{O}_i}\pi_{\aaa_i}^{\M'_i}=\id_{\M'_i}$, we can expand this as follows

         \begin{align}
        \begin{split}
            \label{eq: standardQT_proof3}
          & \tr_{\M'_i}\big[\text{$\matholdcal{M}$}^{\A'_i}_{unitary} (\rho_{\mathtt{S}'_i}\otimes \ket{0}\bra{0}_{\M'_i})\text{$\matholdcal{M}$}^{\A'_i,\dagger}_{unitary} \big]  \\
           =& \sum_{\bar{\aaa}_i\in \mathtt{O}_i} (\id_{\mathtt{S}'_i}\otimes \bra{\bar{\aaa}_i}_{\M'_i})\big(\text{$\matholdcal{M}$}^{\A'_i}_{unitary} (\rho_{\mathtt{S}'_i}\otimes \ket{0}\bra{0}_{\M'_i})\text{$\matholdcal{M}$}^{\A'_i,\dagger}_{unitary}\big)(\id_{\mathtt{S}'_i}\otimes \ket{\bar{\aaa}_i}_{\M'_i})\\
           =& \sum_{\bar{\aaa}_i\in \mathtt{O}_i} (\id_{\mathtt{S}'_i}\otimes \bra{\bar{\aaa}_i}_{\M'_i})(\id_{\mathtt{S}'_i}\otimes \sum_{\aaa_i\in \mathtt{O}_i}\pi_{\aaa_i}^{\M'_i})\big(\text{$\matholdcal{M}$}'^{\A'_i}_{unitary} (\rho_{\mathtt{S}'_i}\otimes \ket{0}\bra{0}_{\M'_i})\text{$\matholdcal{M}$}^{\A'_i,\dagger}_{unitary}\big)(\id_{\mathtt{S}'_i}\otimes \sum_{\aaa_i\in \mathtt{O}_i}\pi_{\aaa_i}^{\M'_i})(\id_{\mathtt{S}'_i}\otimes \ket{\bar{\aaa}_i}_{\M'_i})\\
           =& \sum_{\bar{\aaa}_i\in \mathtt{O}_i} \sum_{\aaa_i\in \mathtt{O}_i}(\id_{\mathtt{S}'_i}\otimes \bra{\bar{\aaa}_i}_{\M'_i})(\id_{\mathtt{S}'_i}\otimes \pi_{\aaa_i}^{\M'_i})\big(\text{$\matholdcal{M}$}^{\A'_i}_{unitary} (\rho_{\mathtt{S}'_i}\otimes \ket{0}\bra{0}_{\M'_i})\text{$\matholdcal{M}$}^{\A'_i,\dagger}_{unitary}\big)(\id_{\mathtt{S}'_i}\otimes \pi_{\aaa_i}^{\M'_i})(\id_{\mathtt{S}'_i}\otimes \ket{\bar{\aaa}_i}_{\M'_i})\\
           =& \tr_{\M'_i}\big[\sum_{\aaa_i\in\mathtt{O}_i}(\id_{\mathtt{S}'_i}\otimes \pi_{\aaa_i}^{\M'_i})\big(\text{$\matholdcal{M}$}^{\A'_i}_{unitary} (\rho_{\mathtt{S}'_i}\otimes \ket{0}\bra{0}_{\M'_i})\text{$\matholdcal{M}$}^{\A'_i,\dagger}_{unitary}\big)(\id_{\mathtt{S}'_i}\otimes \pi_{\aaa_i}^{\M'_i})\big].
        \end{split}
    \end{align}

In the second equality above, we have summed over the same indices $\sum_{\aaa_i\in \mathtt{O}_i}$ in both occurences of the projectors $\pi_{\aaa_i}^{\M'_i}$ as the cross terms disappear due to the trace over $\M'_i$. Using this and \Cref{eq: standardQT_proof2}, we immediately obtain the following
\begin{align}
    \begin{split}
       & \tr_{\M'_i}\big[\text{$\matholdcal{M}$}^{\A'_i}_{unitary} (\rho_{\mathtt{S}'_i}\otimes \ket{0}\bra{0}_{\M'_i})\text{$\matholdcal{M}$}^{\A'_i,\dagger}_{unitary} \big] \\
       =& \tr_{\M'_i}\big[\sum_{\aaa_i\in \mathtt{O}_i}\pi_{\aaa_i}^{\mathtt{S}'_i\M'_i}\text{$\matholdcal{M}$}^{\A'_i}_{unitary} (\rho_{\mathtt{S}'_i}\otimes \ket{0}\bra{0}_{\M'_i})\text{$\matholdcal{M}$}^{\A'_i,\dagger}_{unitary} \pi_{\aaa_i}^{\mathtt{S}'_i\M'_i}\big]\\
       =& \sum_{\aaa_i\in \mathtt{O}_i}c_{\aaa_i,\aaa_i}\ket{\aaa_i}\bra{\aaa_i}_{\mathtt{S}'_i}.
    \end{split}
\end{align}
 The right hand side is precisely the expression for the post-measurement state of $\mathtt{S}'_i$ under the setting $x_i=1$ when using the trace preserving form of the evolution for this setting (c.f. \Cref{eq: proj_settings}). Thus we have shown that $\A'_i$'s measurement implements the following map on an arbitrary input state $\rho_{\mathtt{S}'_i}$ of the system $\mathtt{S}'_i$ irrespective of the setting $x'_i$.
\begin{equation}
  \rho_{\mathtt{S}'_i}= \sum_{\aaa_i,\bar{\aaa}_i\in \mathtt{O}_i}c_{\aaa_i,\bar{\aaa}_i}\ket{\aaa_i}\bra{\bar{\aaa}_i}_{\mathtt{S}'_i} \mapsto \sum_{\aaa_i\in \mathtt{O}_i}c_{\aaa_i,\aaa_i}\ket{\aaa_i}\bra{\aaa_i}_{\mathtt{S}'_i}
\end{equation}

This map can equivalently be described as
\begin{align}
    \begin{split}
      \rho_{\mathtt{S}'_i} \mapsto    \sum_{\aaa_i\in \mathtt{O}_i}\pi_{\aaa_i}^{\mathtt{S}'_i}(\rho_{\mathtt{S}'_i})\pi_{\aaa_i}^{\mathtt{S}'_i}.
    \end{split}
\end{align}

Or, for the case of a particular outcome, we have the following trace non-increasing map from pre to post measurement state
\begin{align}
    \begin{split}
            \rho_{\mathtt{S}'_i} \mapsto    \frac{\pi_{\aaa_i}^{\mathtt{S}'_i}(\rho_{\mathtt{S}'_i})\pi_{\aaa_i}^{\mathtt{S}'_i}}{\tr\Big[\pi_{\aaa_i}^{\mathtt{S}'_i}(\rho_{\mathtt{S}'_i})\pi_{\aaa_i}^{\mathtt{S}'_i}\Big]}.  
    \end{split}
\end{align}

These correspond precisely to the how the measurements $\text{$\matholdcal{M}$}^{\A'_i}=\{\pi_{\aaa_i}^{\mathtt{S}'_i}\}_{\aaa_i\in \mathtt{O}'_i}$ of a $\matholdcal{C}^{sys}$-form circuit (\Cref{definition: std_circuit_forms}) over $\{A'_1,...,A'_N\}$ act.


To show the full reduction of the augmented circuit to a $\matholdcal{C}^{sys}$-form circuit, let $\mathtt{S}_i^{'c}:=\mathtt{S}'\backslash \mathtt{S}'_i$ denote the complement of  $\mathtt{S}_i\subseteq \mathtt{S}'$ ( the systems that $\matholdcal{M}^{\A'_i}$ acts non-trivially on). The above argument, immediately generalises to the case where we consider $\rho_{\mathtt{S}'_i}$ to be a reduced state of a larger state $\rho_{\mathtt{S}_i^{'c}\mathtt{S}'_i}$ over all systems $\mathtt{S}'$ in the EWFS. Since under both settings, the measurement operation of $\A'_i$ only acts locally on $\mathtt{S}'_i$ by construction, the above argument implies that the joint state on  $\mathtt{S}_i^{'c}\mathtt{S}'_i$ is also independent of the settings. It follows that for all agents $\A'_i$, we can replace their measurement in the (primed) augmented circuit with the setting independent projective measurement $\{\ket{\aaa_i}\bra{\aaa_i}_{\mathtt{S}'_i}\}_{\aaa_i\in \mathtt{O}'_i}$ on the system $\mathtt{S}'_i\subseteq\mathtt{S}'$ alone while preserving all the predictions i.e., the augmented circuit reduces to an equivalent $\matholdcal{C}^{sys}$-form standard circuit. Since the primed circuit is operationally equivalent to the original augmented EWFS, it follows that the  $\matholdcal{C}^{sys}$-form standard circuit obtained here is also operationally equivalent to the original augmented circuit.

To obtain an equivalent $\matholdcal{C}^{sys+anc}$-form circuit, we can keep the memories (even if they are not acted upon after $\A'_i$'s measurement) and model all measurements with $x'_i=1$ (recalling that we have established full setting independence of predictions, therefore an arbitrary fixing of settings will not change the predictions). Then, it follows from \Cref{eq: standardQT_proof2} that this is equivalent to modelling each measurement in its unitary form $\text{$\matholdcal{M}$}^{\A'_i}_{unitary}$ at the time $t'_i$ of the measurement followed by a projective measurement $\{\pi^{\M'_i}_{\aaa_i}\}_{\aaa_i\in \mathtt{O}_i}$ on the memory $\M'_i$ alone. Since the memory is not subsequently acted upon until the time $t'_N$ at which the protocol ends, the measurement $\{\pi^{\M'_i}_{\aaa_i}\}_{\aaa_i\in \mathtt{O}_i}$ acting on the memory alone, can be equivalently performed at any time $t_f$ after $t'_N$. This immediately yields $\matholdcal{C}^{sys+anc}$-form standard circuit which is equivalent to the original augmented circuit, where the memories $\{\M'_1,...,\M'_N\}$ act as the ancillas. This completes the proof of the current theorem.

\end{proof}


\begin{thebibliography}{10}

\bibitem{Wigner1967}
E.~P. Wigner.
\newblock {\em Remarks on the Mind-Body Question}, pages 171--184.
\newblock Indiana University Press, 1967.
\newblock \url{https://link.springer.com/content/pdf/10.1007/978-3-642-78374-6_20.pdf}.

\bibitem{Frauchiger2018}
Daniela Frauchiger and Renato Renner.
\newblock Quantum theory cannot consistently describe the use of itself.
\newblock {\em Nature Communications}, 9:3711, 2018.
\newblock \url{https://doi.org/10.1038/s41467-018-05739-8}.

\bibitem{Brukner2018}
\v{C}aslav Brukner.
\newblock A no-go theorem for observer-independent facts.
\newblock {\em Entropy}, 20(5), 2018.
\newblock \url{https://www.mdpi.com/1099-4300/20/5/350}.

\bibitem{Bong2020}
Kok-Wei Bong, An{\'\i}bal Utreras-Alarc{\'o}n, Farzad Ghafari, Yeong-Cherng Liang, Nora Tischler, Eric~G. Cavalcanti, Geoff~J. Pryde, and Howard~M. Wiseman.
\newblock {A strong no-go theorem on the Wigner's friend paradox}.
\newblock {\em Nature Physics}, 16(12):1199--1205, 2020.
\newblock \url{https://doi.org/10.1038/s41567-020-0990-x}.

\bibitem{Nurgalieva2018}
Nuriya Nurgalieva and Lídia del Rio.
\newblock Inadequacy of modal logic in quantum settings.
\newblock {\em In Proceedings QPL 2018, EPTCS 287}, pages pp. 267--297, 2019.
\newblock \url{https://arxiv.org/abs/1804.01106v2}.

\bibitem{Cavalcanti2021}
Eric~G. Cavalcanti and Howard~M. Wiseman.
\newblock Implications of local friendliness violation for quantum causality.
\newblock {\em Entropy}, 23(8), 2021.
\newblock \url{https://www.mdpi.com/1099-4300/23/8/925}.

\bibitem{Ying2023}
Yìlè Yīng, Marina~Maciel Ansanelli, Andrea~Di Biagio, Elie Wolfe, and Eric~Gama Cavalcanti.
\newblock {Relating Wigner's Friend scenarios to nonclassical causal compatibility, monogamy relations, and fine tuning}, 2023.
\newblock \url{https://arxiv.org/abs/2309.12987}.

\bibitem{Pusey2018}
Matthew Pusey.
\newblock An inconsistent friend.
\newblock {\em Nature Communications}, 2018.
\newblock \url{https://www.nature.com/articles/s41567-018-0293-7}.

\bibitem{Nurgalieva2020}
Nuriya Nurgalieva and Renato Renner.
\newblock Testing quantum theory with thought experiments.
\newblock {\em Contemporary Physics}, 61(3):193--216, 2020.
\newblock \url{https://doi.org/10.1080/00107514.2021.1880075}.

\bibitem{Renes2021}
Joseph~M. Renes.
\newblock Consistency in the description of quantum measurement: Quantum theory can consistently describe the use of itself, 2021.
\newblock \url{https://arxiv.org/abs/2107.02193}.

\bibitem{Losada2019}
Marcelo Losada, Roberto Laura, and Olimpia Lombardi.
\newblock Frauchiger-{R}enner argument and quantum histories.
\newblock {\em Phys. Rev. A}, 100:052114, Nov 2019.
\newblock \url{https://link.aps.org/doi/10.1103/PhysRevA.100.052114}.

\bibitem{Alexios2022}
Alexios~P. Polychronakos.
\newblock Quantum mechanical rules for observed observers and the consistency of quantum theory, 2022.
\newblock \url{https://arxiv.org/abs/2202.04203}.

\bibitem{ScottAronson}
{Scott Aaronson}.
\newblock It’s hard to think when someone hadamards your brain.
\newblock \url{https://scottaaronson.blog/?p=3975}, 2018.

\bibitem{Healey2018}
Richard Healey.
\newblock Quantum theory and the limits of objectivity.
\newblock {\em Foundations of Physics}, 48(11):1568--1589, 2018.
\newblock \url{https://doi.org/10.1007/s10701-018-0216-6}.

\bibitem{Araujo}
Mateus Ara\'{u}jo.
\newblock The flaw in {F}rauchiger and {R}enner's argument.
\newblock \url{https://mateusaraujo.info/2018/10/24/the-flaw-in-frauchiger-and-renners-argument/}, 2018.

\bibitem{Sudbery2019}
Anthony Sudbery.
\newblock {The hidden assumptions of Frauchiger and Renner}, 2019.
\newblock \url{https://arxiv.org/abs/1905.13248}.

\bibitem{Narasimhachar2020}
Varun Narasimhachar.
\newblock Agents governed by quantum mechanics can use it intersubjectively and consistently, 2020.
\newblock \url{https://arxiv.org/abs/2010.01167}.

\bibitem{LF_Vilasini_Woods}
V.~Vilasini and Mischa~P. Woods.
\newblock In preparation.

\bibitem{barrett2006}
Jonathan Barrett.
\newblock Information processing in generalized probabilistic theories, 2006.
\newblock \url{https://arxiv.org/abs/quant-ph/0508211}.

\bibitem{Chiribella_2010}
Giulio Chiribella, Giacomo~Mauro D’Ariano, and Paolo Perinotti.
\newblock Probabilistic theories with purification.
\newblock {\em Physical Review A}, 81(6), 2010.
\newblock \url{http://dx.doi.org/10.1103/PhysRevA.81.062348}.

\bibitem{coecke2016}
Bob Coecke and Aleks Kissinger.
\newblock Categorical quantum mechanics {I}: Causal quantum processes, 2016.
\newblock \url{https://arxiv.org/abs/1510.05468}.

\bibitem{Deutsch1985}
David Deutsch.
\newblock Quantum theory as a universal physical theory.
\newblock {\em International Journal of Theoretical Physics}, 24(1):1–41, 1985.
\newblock \url{http://dx.doi.org/10.1007/BF00670071}.

\bibitem{VilasiniRennerPRA}
V.~Vilasini and Renato Renner.
\newblock Embedding cyclic information-theoretic structures in acyclic space-times: No-go results for indefinite causality.
\newblock {\em Phys. Rev. A}, 110:022227, 2024.
\newblock \url{https://link.aps.org/doi/10.1103/PhysRevA.110.022227}.

\bibitem{VilasiniRennerPRL}
V.~Vilasini and Renato Renner.
\newblock Fundamental limits for realizing quantum processes in spacetime.
\newblock {\em Phys. Rev. Lett.}, 133:080201, 2024.
\newblock \url{https://link.aps.org/doi/10.1103/PhysRevLett.133.080201}.

\bibitem{Renner_Challenge}
L{\'\i}dia del Rio and Renato Renner.
\newblock Reply to: Quantum mechanical rules for observed observers and the consistency of quantum theory.
\newblock {\em Nature Communications}, 15(1):3024, 2024.
\newblock \url{https://doi.org/10.1038/s41467-024-47172-0}.

\bibitem{Griffiths1984}
Robert~B. Griffiths.
\newblock Consistent histories and the interpretation of quantum mechanics.
\newblock {\em Journal of Statistical Physics}, 36:219–272, 1984.
\newblock \url{https://doi.org/10.1007/BF01015734}.

\bibitem{GriffithsCH}
Robert~B. Griffiths.
\newblock {The Consistent Histories Approach to Quantum Mechanics}.
\newblock The Stanford Encyclopedia of Philosophy (Summer 2024 Edition), Edward N. Zalta and Uri Nodelman (eds.).
\newblock \url{https://plato.stanford.edu/entries/qm-consistent-histories/#Bib}.

\bibitem{Zukowski2021}
Marek \ifmmode~\dot{Z}\else \.{Z}\fi{}ukowski and Marcin Markiewicz.
\newblock Physics and metaphysics of {W}igner's friends: Even performed premeasurements have no results.
\newblock {\em Phys. Rev. Lett.}, 126:130402, 2021.
\newblock \url{https://link.aps.org/doi/10.1103/PhysRevLett.126.130402}.

\bibitem{Pearl2009}
Judea Pearl.
\newblock Causality: Models, reasoning, and inference.
\newblock {\em Second edition, Cambridge University Press}, 2009.

\bibitem{Henson2014}
Joe Henson, Raymond Lal, and Matthew~F Pusey.
\newblock Theory-independent limits on correlations from generalized bayesian networks.
\newblock {\em New Journal of Physics}, 16(11):p. 113043, 2014.
\newblock \url{https://iopscience.iop.org/article/10.1088/1367-2630/16/11/113043}.

\bibitem{Barrett2020A}
Jonathan Barrett, Robin Lorenz, and Ognyan Oreshkov.
\newblock Quantum causal models, 2020.
\newblock \url{https://arxiv.org/abs/1906.10726}.

\bibitem{Hardy2005}
Lucien Hardy.
\newblock {Probability Theories with Dynamic Causal Structure: A New Framework for Quantum Gravity}, 2005.
\newblock \url{http://arxiv.org/abs/gr-qc/0509120}.

\bibitem{Oreshkov2012}
Ognyan Oreshkov, Fabio Costa, and {\v{C}}aslav Brukner.
\newblock {Quantum correlations with no causal order}.
\newblock {\em Nature Communications}, 3:1092, 2012.
\newblock \url{https://www.nature.com/articles/ncomms2076}.

\bibitem{Chiribella2013}
Giulio Chiribella, Giacomo~Mauro D'Ariano, Paolo Perinotti, and Benoit Valiron.
\newblock {Quantum computations without definite causal structure}.
\newblock {\em Physical Review A}, 88(2):022318, 2013.
\newblock \url{https://link.aps.org/doi/10.1103/PhysRevA.88.022318}.

\bibitem{Barrett2020}
Jonathan Barrett, Robin Lorenz, and Ognyan Oreshkov.
\newblock Cyclic quantum causal models.
\newblock {\em Nature Communications}, 12(1), 2021.
\newblock \url{http://dx.doi.org/10.1038/s41467-020-20456-x}.

\bibitem{Vilasini_2019}
V~Vilasini, Nuriya Nurgalieva, and L{\'{\i}}dia del Rio.
\newblock Multi-agent paradoxes beyond quantum theory.
\newblock {\em New Journal of Physics}, 21(11):113028, 2019.
\newblock \url{https://doi.org/10.1088/1367-2630/ab4fc4}.

\bibitem{PopescuRohrlich1994}
S.~Popescu and D.~Rohrlich.
\newblock Quantum nonlocality as an axiom.
\newblock {\em Foundations of Physics}, 24:379--385, 1994.
\newblock \url{https://doi.org/10.1007/BF02058098}.

\bibitem{ormrod2023}
Nick Ormrod, V.~Vilasini, and Jonathan Barrett.
\newblock {Which theories have a measurement problem?}, 2023.
\newblock \url{https://arxiv.org/abs/2303.03353}.

\bibitem{NurgalievaVilasini}
Nuriya Nurgalieva and V.~Vilasini.
\newblock {Any theory that admits an extended {W}igner's Friend-type paradox is logically contextual}.
\newblock In preparation based on talk at Quantum Physics and Logic 2023, and unpublished results included in NN's PhD Thesis \url{https://www.research-collection.ethz.ch/handle/20.500.11850/649851}.

\bibitem{walleghem2024}
Laurens Walleghem, Rafael Wagner, Yìlè Yīng, and David Schmid.
\newblock Extended {W}igner's friend paradoxes do not require nonlocal correlations, 2024.
\newblock \url{https://arxiv.org/abs/2310.06976}.

\bibitem{AsherPeres}
Asher Peres.
\newblock {\em Quantum Theory: Concepts and Methods}, volume~72.
\newblock Kluwer Academic Publishers, New York, 1993.
\newblock Fundamental Theories of Physics.

\bibitem{Wood2015}
Christopher~J Wood and Robert~W Spekkens.
\newblock {The lesson of causal discovery algorithms for quantum correlations: causal explanations of {B}ell-inequality violations require fine-tuning}.
\newblock {\em New Journal of Physics}, 17(3):p. 33002, 2015.
\newblock \url{https://iopscience.iop.org/article/10.1088/1367-2630/17/3/033002}.

\bibitem{Hardy_Nonlocality_without_inequalities_1993}
Lucien Hardy.
\newblock Nonlocality for two particles without inequalities for almost all entangled states.
\newblock {\em Phys. Rev. Lett.}, 71:1665--1668, 1993.
\newblock \url{https://link.aps.org/doi/10.1103/PhysRevLett.71.1665}.

\bibitem{Drezet2018}
Aur\'{e}lien Drezet.
\newblock {About Wigner Friend's and Hardy's paradox in a Bohmian approach: a comment of `Quantum theory cannot' consistently describe the use of itself'}, 2018.
\newblock \url{https://arxiv.org/abs/1810.10917}.

\bibitem{Abramsky15}
Samson Abramsky, Rui~Soares Barbosa, Kohei Kishida, Raymond Lal, and Shane Mansfield.
\newblock {Contextuality, Cohomology and Paradox}.
\newblock {\em 24th EACSL Annual Conference on Computer Science Logic (CSL 2015)}, 41:Ed. Stephan Kreutzer, pp. 211--228, 2015.
\newblock \url{https://drops.dagstuhl.de/entities/document/10.4230/LIPIcs.CSL.2015.211}.

\bibitem{Relano2018}
Armando Rela\~no.
\newblock Decoherence allows quantum theory to describe the use of itself, 2018.
\newblock \url{https://arxiv.org/abs/1810.07065}.

\bibitem{Relano2020}
Armando Rela\~no.
\newblock {Decoherence framework for Wigner's-friend experiments}.
\newblock {\em Phys. Rev. A}, 101:032107, 2020.
\newblock \url{https://link.aps.org/doi/10.1103/PhysRevA.101.032107}.

\bibitem{Kastner2020}
R.~E. Kastner.
\newblock Unitary-only quantum theory cannot consistently describe the use of itself: On the {F}rauchiger--{R}enner paradox.
\newblock {\em Foundations of Physics}, 50(5):441--456, 2020.
\newblock \url{https://doi.org/10.1007/s10701-020-00336-6}.

\bibitem{Biagio2021}
Andrea Di~Biagio and Carlo Rovelli.
\newblock Stable facts, relative facts.
\newblock {\em Foundations of Physics}, 51(1):30, 2021.
\newblock \url{https://doi.org/10.1007/s10701-021-00429-w}.

\bibitem{Fortin2019}
Sebastian Fortin and Olimpia Lombardi.
\newblock Wigner and his many friends: A new no-go result? 2019.
\newblock \url{https://arxiv.org/abs/1904.07412}.

\bibitem{schmid2023review}
David Schmid, Yìlè Yīng, and Matthew Leifer.
\newblock {A review and analysis of six extended Wigner's friend arguments}.
\newblock \url{https://arxiv.org/abs/2308.16220}, 2023.

\bibitem{Rovelli1996}
Carlo Rovelli.
\newblock Relational quantum mechanics.
\newblock {\em International Journal of Theoretical Physics}, 35(8):1637--1678, 1996.
\newblock \url{https://doi.org/10.1007/BF02302261}.

\bibitem{Caves2002}
Carlton~M. Caves, Christopher~A. Fuchs, and R\"udiger Schack.
\newblock Quantum probabilities as {B}ayesian probabilities.
\newblock {\em Phys. Rev. A}, 65:022305, 2002.
\newblock \url{https://link.aps.org/doi/10.1103/PhysRevA.65.022305}.

\bibitem{Fuchs2014}
Christopher~A. Fuchs, N.~David Mermin, and Rüdiger Schack.
\newblock An introduction to qbism with an application to the locality of quantum mechanics.
\newblock {\em American Journal of Physics}, 82(8):749--754, 2014.

\bibitem{Samuel2022}
Stuart Samuel.
\newblock The {F}rauchiger-{R}enner gedanken experiment: an interesting laboratory for exploring some topics in quantum mechanics, 2022.
\newblock \url{https://arxiv.org/abs/2208.00060}.

\bibitem{Everett1957}
Hugh Everett.
\newblock ``{R}elative state" formulation of quantum mechanics.
\newblock {\em Rev. Mod. Phys.}, 29:454--462, 1957.
\newblock \url{https://link.aps.org/doi/10.1103/RevModPhys.29.454}.

\bibitem{Tucci2007}
Robert~R. Tucci.
\newblock {Factorization of quantum density matrices according to {B}ayesian and {M}arkov networks}, 2007.
\newblock \url{https://arxiv.org/abs/quant-ph/0701201}.

\end{thebibliography}

\end{document}